\theoremstyle{plain}
\newtheorem{theorem}{Theorem}
\newtheorem{lemma}[theorem]{Lemma}
\newtheorem{proposition}[theorem]{Proposition}
\newtheorem{definition}{Definition}
\theoremstyle{remark}
\newtheorem{remark}[theorem]{Remark}
\title[Finite time energy cascade]{Finite time energy cascade for mixed $3-$ and $4-$wave kinetic equations}
\author[G. Staffilani]{Gigliola Staffilani
}
\address{Department of Mathematics, Massachusetts Institute of Technology, Cambridge, MA 02139, USA}
\email{gigliola@math.mit.edu} 
\thanks{G.S. is  funded in part by  the NSF grants DMS-2052651, DMS-2306378 and the Simons Foundation through the Simons Collaboration on Wave Turbulence.}
\author[M.-B. Tran]{Minh-Binh Tran}
\address{Department of Mathematics, Texas A\&M University, College Station, TX 77843, USA}
\email{minhbinh@tamu.edu} 
\thanks{M.-B. T is  funded in part by  a   Humboldt Fellowship,   NSF CAREER  DMS-2303146, and NSF Grants DMS-2204795, DMS-2305523,  DMS-2306379.}
\begin{document}
\date{\today}

\begin{abstract} 
In this work we study a kinetic equation whose collision operator comprises three distinct wave interaction mechanisms: one representing a 3-wave process, and two corresponding to 4-wave processes. This wave kinetic equation describes the temporal evolution of the density function of the thermal cloud of a finite temperature trapped Bose gas. We establish that, for a broad class of initial data, solutions exhibit an immediate cascade of energy towards arbitrarily large frequencies. Furthermore, for other classes of initial conditions, we demonstrate that the energy is transferred to infinity in finite time.

\end{abstract}

\maketitle

 \tableofcontents

\section{Introduction}\label{intro} 

The experimental realization of Bose-Einstein condensation (BEC) in dilute atomic gases~\cite{WiemanCornell, Ketterle, bradley1995evidence} marked a  milestone in quantum statistical physics, leading to extensive theoretical and more experimental investigations. In these experiments, rapid evaporative cooling is employed to lower the temperature of a Bose gas below its critical threshold, the BEC transition temperature, thereby initiating the emergence of a  condensate. One of the central challenges that has since emerged concerns the nonequilibrium dynamics of the condensate, including the condensed  and the non-condensed components,  at finite temperatures. \footnote{Below the BEC transition temperature, the system's temperature is very low. However, in practical implementations, the temperature cannot reach absolute zero. Therefore, the finite-temperature case is of greater importance.
 }

In this work, we study the  evolution of the thermal cloud (the non-condensed component), modeled by the density function \( f(t,k) \), which satisfies the kinetic equation
\begin{equation}
	\label{4wave}
	\partial_t f(t,k) = \mathbb{Q}[f](t,k) := C_{12}[f](t,k) + C_{22}[f](t,k) + C_{31}[f](t,k), \qquad f(0,k) = f_0(k),
\end{equation}
where the operators \( C_{12} \), \( C_{22} \), and \( C_{31} \) account for different types of collisions between particles, described as follows.

\medskip
\noindent\textit{3-wave collision operator:}
\begin{align}
	\label{C12}
	\begin{split}
		C_{12}[f] :=\ &\mathfrak{c}_{12} \iint_{\mathbb{R}^3 \times \mathbb{R}^3} \mathrm{d}k_1\, \mathrm{d}k_2 \; \mathcal{W}_{12}(|k|,|k_1|,|k_2|)\, \delta(\omega - \omega_1 - \omega_2)\, \delta(k - k_1 - k_2) \left[ f_1 f_2 - (f_1 + f_2)f \right] \\
		& - 2 \iint_{\mathbb{R}^3 \times \mathbb{R}^3} \mathrm{d}k_1\, \mathrm{d}k_2 \; \mathcal{W}_{12}(|k_1|,|k|,|k_2|)\, \delta(\omega_1 - \omega - \omega_2)\, \delta(k_1 - k - k_2) \left[ f f_2 - (f + f_2)f_1 \right],
	\end{split}
\end{align}

\medskip
\noindent\textit{4-wave collision operator (pair interactions):}
\begin{equation}
	\label{C22}
	\begin{aligned}
		C_{22}[f] :=\ &\mathfrak{c}_{22} \iiint_{\mathbb{R}^{3 \times 3}} \mathrm{d}k_1\, \mathrm{d}k_2\, \mathrm{d}k_3 \; \mathcal{W}_{22}(|k|,|k_1|,|k_2|,|k_3|)\, \delta(k + k_1 - k_2 - k_3)\, \delta(\omega + \omega_1 - \omega_2 - \omega_3) \\
		& \quad \times \left[ f_2 f_3 (f_1 + f) - f f_1 (f_2 + f_3) \right],
	\end{aligned}
\end{equation}

\medskip
\noindent\textit{4-wave collision operator (triplet interactions):}
\begin{align}
	\label{C31}
	\begin{split}
		C_{31}[f] :=\ &\mathfrak{c}_{31} \iiint_{\mathbb{R}^{3 \times 3}} \mathrm{d}k_1\, \mathrm{d}k_2\, \mathrm{d}k_3 \; \mathcal{W}_{31}(|k|,|k_1|,|k_2|,|k_3|)\, \delta(\omega - \omega_1 - \omega_2 - \omega_3)\, \delta(k - k_1 - k_2 - k_3) \\
		& \quad \times \left[ f_1 f_2 f_3 - (f_1 f_2 + f_2 f_3 + f_1 f_3)f \right] \\
		& - 3 \iiint_{\mathbb{R}^{3 \times 3}} \mathrm{d}k_1\, \mathrm{d}k_2\, \mathrm{d}k_3 \; \mathcal{W}_{31}(|k_1|,|k|,|k_2|,|k_3|)\, \delta(\omega_1 - \omega - \omega_2 - \omega_3)\, \delta(k_1 - k - k_2 - k_3) \\
		& \quad \times \left[ f f_2 f_3 - (f f_2 + f_2 f_3 + f f_3)f_1 \right],
	\end{split}
\end{align}

where, \( \mathfrak{c}_{12}, \mathfrak{c}_{22}, \mathfrak{c}_{31} > 0 \) are constants, \( t \in \mathbb{R}_+ \) denotes time, \( k \in \mathbb{R}^3 \) is the momentum variable, and \( \omega(k) = \omega(|k|) \) is the dispersion relation of the quasiparticles. The initial data \( f_0(k) \geq 0 \) denotes the initial thermal distribution.

\textbf{Assumption $X$:}

The interaction kernels are assumed to take the following factorized forms:
\begin{equation}
	\label{W21}
	\mathcal{W}_{12}(|k|,|k_1|,|k_2|) = \bar{\mathfrak{P}}(\omega)\, \bar{\mathfrak{P}}(\omega_1)\, \bar{\mathfrak{P}}(\omega_2),
\end{equation}
\begin{equation}
	\label{W22}
	\mathcal{W}_{22}(|k|,|k_1|,|k_2|,|k_3|) = \bar{\mathfrak{R}}(\omega)\, \bar{\mathfrak{R}}(\omega_1)\, \bar{\mathfrak{R}}(\omega_2)\, \bar{\mathfrak{R}}(\omega_3)\, \bar{\mathfrak{R}}_o(\omega,\omega_1,\omega_2,\omega_3),
\end{equation}
\begin{equation}
	\label{W31}
	\mathcal{W}_{31}(|k|,|k_1|,|k_2|,|k_3|) = \frac{1}{\left|\, |k| - |k_1| - |k_2| - |k_3| \,\right|} \bar{\mathfrak{Q}}(\omega)\, \bar{\mathfrak{Q}}(\omega_1)\, \bar{\mathfrak{Q}}(\omega_2)\, \bar{\mathfrak{Q}}(\omega_3).
\end{equation}

Detailed assumptions on the functions \( \bar{\mathfrak{P}}, \bar{\mathfrak{Q}}, \bar{\mathfrak{R}} \) are provided below.
\begin{itemize}

		\item[(X1)] There exist constants \( 1 <   1/\theta   \) and \( 0 < C_{\omega} \) such that the dispersion relation satisfies the bounds
		\begin{equation}\label{X1}
			  \omega(k) = C_{\omega} |k|^{1/\theta}.
		\end{equation}
		The function \(\omega(|k|)\) is continuous and non-decreasing, i.e., \(\omega'(|k|) \ge 0\) for all \(k \in \mathbb{R}^3\). In addition, it satisfies the  condition
		\begin{equation}\label{X1:1}
			\omega(|k_1| + |k_2|) \ge \omega(|k_1|) + \omega(|k_2|), \quad \forall\, k_1, k_2 \in \mathbb{R}^3,
		\end{equation}
		and $\omega(0)=0$. 
		Moreover, for each value of \(\omega\), there exists a unique corresponding \(|k|\) such that \(\omega(|k|) = \omega\). We denote this inverse relationship explicitly as \(|k| = |k|(\omega)\).

	\item[(X2)] We assume that the solution \( f \) is radial, meaning \( f(k) = f(|k|) \). Using a change of variables, the integral of $f$ becomes
	\begin{equation}\label{X2}
		\int_{\mathbb{R}^3} \mathrm{d}k\, f(k) = 2\pi^2 \int_{[0,\infty)} \mathrm{d}\omega\, \frac{|k|^2}{\omega'(|k|)} f(\omega).
	\end{equation}
	Hence, we can identify \( f(k) \) with \( f(\omega) \), and write \( f(k) = f(\omega) \) accordingly.

	Assuming that $f(k)$ is radial, we  adopt the shorthand notation:
	\begin{equation}
		\label{Shorthand}
		\begin{aligned}
			& f = f(k) =  f(\omega),\quad f_1 = f(k_1) = f(\omega_1),\quad f_2 = f(k_2) = f(\omega_2), \quad f_3 = f(k_3) = f(\omega_3)\\
			& \omega = \omega(k),\quad \omega_1 = \omega(k_1),\quad \omega_2 = \omega(k_2), \quad \omega_3 = \omega(k_3).	\end{aligned}
	\end{equation}
	\item[(X3)] We define
	\begin{equation}\label{X3}
	 \Gamma(k) =	\Gamma(\omega) := \frac{|k|^2}{\omega'(|k|)},
	\end{equation}
	\begin{equation}\label{X3:1}
		\mathfrak{P}(\omega) := \Gamma(\omega)\bar{\mathfrak{P}}(\omega), \quad
		\mathfrak{Q}(\omega) := \Gamma(\omega)\bar{\mathfrak{Q}}(\omega), \quad
		\mathfrak{R}(\omega) := \Gamma(\omega)\bar{\mathfrak{R}}(\omega)/|k|,
	\end{equation}
	and
	\begin{equation}\label{X3:2}
		\tilde{\mathfrak{P}}(\omega) := \frac{\bar{\mathfrak{P}}(\omega)}{\omega}, \quad
		\tilde{\mathfrak{Q}}(\omega) := \frac{\bar{\mathfrak{Q}}(\omega)}{\omega}, \quad
		\tilde{\mathfrak{R}}(\omega) := \frac{\bar{\mathfrak{R}}(\omega)}{\omega}.
	\end{equation}
	
Denoting the derivative  of   \( {\mathfrak{R}}(\omega) \) by    \( {\mathfrak{R}}'(\omega) \),  we assume that \( {\mathfrak{P}}(\omega) \), \( {\mathfrak{Q}}(\omega) \), \( {\mathfrak{R}}(\omega) \),     \( \bar{\mathfrak{P}}(\omega) \), \( \bar{\mathfrak{Q}}(\omega) \), and \( \bar{\mathfrak{R}}(\omega) \) are continuous and non-decreasing functions of \( \omega \), and that there exist constants \( C_{\mathfrak{P}}, C_{\mathfrak{Q}}, C_{\mathfrak{R}} > 0 \) and exponents \( -1\le \varpi_1, \varpi_2, \varpi_3 \le 0 \) such that

\begin{equation}\label{X3:3}
	\begin{aligned}
 C_{\mathfrak{P}}' \omega^{\varpi_1} 	\ge\ 	& \tilde{\mathfrak{P}}(\omega) \ge C_{\mathfrak{P}} \omega^{\varpi_1}, \\
	 C_{\mathfrak{R}}'  \omega^{\varpi_2}	\ge	\ & \tilde{\mathfrak{R}}(\omega) \ge C_{\mathfrak{R}} \omega^{\varpi_2}, \\
 C_{\mathfrak{Q}}' \omega^{\varpi_3}	\ge	\	 & \tilde{\mathfrak{Q}}(\omega) \ge C_{\mathfrak{Q}} \omega^{\varpi_3}, \quad \forall \omega \ge 1.
	\end{aligned}
\end{equation}
Moreover, \(  \bar{\mathfrak{P}}, \bar{\mathfrak{Q}}, \bar{\mathfrak{R}} \) are continuous functions, \( \mathfrak{P}(0) = \mathfrak{Q}(0) = \mathfrak{R}(0) = 0 \) and there exist constants \( C_{\mathfrak{P}'}, C_{\mathfrak{Q}'}, C_{\mathfrak{R}'} > 0 \) and exponents \( -1\le \kappa_2 \) such that
\begin{equation}\label{X3:4}
	\begin{aligned}
C_{\mathfrak{R}'}' \omega^{\kappa_2}	\ge	& {\mathfrak{R}'}(\omega) \ge C_{\mathfrak{R}'} \omega^{\kappa_2}.
	\end{aligned}
\end{equation}

Moreover,
\begin{equation}\label{X3:5}
\bar{\mathfrak{R}}_o(\omega,\omega_1,\omega_2,\omega_3) \ = \ \max\{\omega,\omega_1,\omega_2,\omega_3\}^{\gamma},
\end{equation}
for $0\le \gamma\le 1$.
	
Fix $\alpha \in(0,1)$. The parameters satisfy the following relation:
	\begin{equation}\label{X4}
	 \begin{aligned}
			0\  < & \ 4\varpi_{3}+3\theta+\alpha,\\ 	
		0\  < & \  3\varpi_{1}+3\theta+\alpha,\\ 	
		0\  < & \ 4\varpi_2+3+ \alpha+\gamma,\\  	
		0\  < & \ 3\varpi_2 + 2 - 2\theta\\
		\gamma+\kappa_2 \ \ge &\ 0,\\
		3\theta + 2\varpi_1\  \le & \ 0, \\ 2\theta + 2\varpi_2 \ \le &\ 0,\\ \ 3\theta + 2\varpi_3\ \le &\ 0, \\
		2\varpi_2+\theta+\gamma\ \ge & \ 0.\end{aligned}
	\end{equation}

Similar to~\eqref{Shorthand}, we also employ the shorthands $\mathfrak{P}_1$, $\mathfrak{P}_2$, $\mathfrak{R}_1$, $\mathfrak{R}_2$, $\mathfrak{R}_3$, $\mathfrak{Q}_1$, $\mathfrak{Q}_2$, and $\mathfrak{Q}_3$.

	 We define the function
	\begin{equation} \label{FDefinition}
		\mathfrak{F}(t,k) := f(t,k)\, \Gamma(k), \quad t \geq 0.
	\end{equation}
\end{itemize}

\begin{remark}
As an example, we consider the dispersion relation \( \omega(k) = |k|^4 \).  
In this case, we have \( \theta =  \tfrac{1}{4} \).  
We choose \( \kappa_2 = 2\theta + \varpi_2 = \varpi_2 + \tfrac{1}{2} \).  
Therefore, condition~\eqref{X4} becomes
\begin{equation}\label{X4:1}
	\begin{aligned}
		0 &< 4\varpi_{3} + \tfrac{3}{4} + \alpha, \quad 
		0 < 3\varpi_{1} + \tfrac{3}{4} + \alpha,\\ 	
		0 &< 4\varpi_2 +  \alpha + \gamma, \quad  	
		0 < 3\varpi_2 + \tfrac{3}{2},\\
		\tfrac{3}{4} + 2\varpi_1 &\le 0, \quad 
		\tfrac{1}{2} + 2\varpi_2 \le 0, \quad 
		\tfrac{3}{4} + 2\varpi_3 \le 0,\quad \gamma +  \varpi_2 + \tfrac{1}{2}  \ge 0, \quad	2\varpi_2+\tfrac{1}{4} +\gamma\ \ge  \ 0.
	\end{aligned}
\end{equation}
We can choose \( 1>\alpha > \tfrac34,\ \varpi_{1} = -\tfrac{3}{8}, \ \varpi_{2} = -\tfrac{1}{4}, \ \varpi_{3} = -\tfrac{3}{8}, \  \kappa_2 = \tfrac{1}{4}, \gamma= \tfrac{1}{4} \) and
\begin{equation*}
	\begin{aligned}
	& \tilde{\mathfrak{P}}(\omega) =  \omega^{\varpi_1}, \ \
	 \tilde{\mathfrak{R}}(\omega) = \omega^{\varpi_2}, \ \
	 \tilde{\mathfrak{Q}}(\omega) = \omega^{\varpi_3}.
	\end{aligned}
\end{equation*}

\end{remark}
We define the energy of the solution as
$$\int_{[0,\infty)} \, \mathrm{d}\omega  f(t,\omega)\, \omega \Gamma(\omega),$$
and set the maximal time of energy conservation by
\begin{equation} \label{T0}
	T^* := \sup\left\{ T \,\middle|\, \int_{[0,\infty)} \, \mathrm{d}\omega  f(t,\omega)\, \omega \Gamma(\omega)
	= \int_{[0,\infty)} \, \mathrm{d}\omega f(0,\omega)\, \omega \Gamma(\omega) \quad \text{for all } t \in [0, T) \right\}.
\end{equation}

{\it The primary objective of this work is to demonstrate that, under certain assumptions on the initial conditions, the  time \( T^* \) can be finite or even equal to zero. This result indicates that the energy of the solution is transferred to arbitrarily large frequencies either instantaneously or within a finite time interval. We refer to Remark~\ref{RemarkMainTheo:1} for the discussion on \( T^* \) and the energy cascade phenomenon.
}

\subsection{Physical context }\label{Subs:PhysicalContext}
The kinetic theory of Bose gases at finite temperatures was first systematically developed by Kirkpatrick and Dorfman~\cite{KD1, KD2}, who laid the groundwork for deriving kinetic equations governing the behavior of particles outside the condensate. Building on this foundation, Zaremba, Nikuni, and Griffin formulated in~\cite{ZarembaNikuniGriffin:1999:DOT} a more comprehensive model that couples a quantum Boltzmann equation, describing the dynamics of the thermal (non-condensed) component, with a Gross-Pitaevskii equation that governs the evolution of the Bose-Einstein condensate (BEC) itself. Independently, a similar kinetic framework was proposed by Pomeau, Brachet, Metens, and Rica in~\cite{PomeauBrachetMetensRica}.

These kinetic models incorporate two primary types of collision processes, each represented by an  operator whose detailed forms will be introduced later:
\begin{itemize}
	\item \( \mathscr{C}_{22} \): models binary (\(2 \leftrightarrow 2\)) collisional events between excited particles.
	\item \( \mathscr{C}_{12} \): describes interactions of the form \(1 \leftrightarrow 2\), involving the condensate and two thermal excitations.
\end{itemize}

A significant development in this theory came with the work of Reichl and Gust~\cite{ReichlGust:2012:CII}, who proposed a third collision operator, \( \mathscr{C}_{31} \), capturing \(1 \leftrightarrow 3\) collisional events among excitations. However,  the  mathematical confirmation of \( \mathscr{C}_{31} \) was achieved only recently in~\cite{tran2020boltzmann}. Its physical relevance has since been supported by experimental observations reported in~\cite{reichl2019kinetic}.

For a broader overview of these models and their roles in the study of Bose-Einstein condensation dynamics, we refer the reader to the reviews in~\cite{GriffinNikuniZaremba:BCG:2009, PomeauBinh, tran2021thermal}.

The complete quantum kinetic model for a Bose gas hence includes three distinct collision operators and is described by the equation
\begin{equation}
	\label{KineticFinal}
	\partial_t f(t,k) = \mathscr{C}_{12}[f](t,k) + \mathscr{C}_{22}[f](t,k) + \mathscr{C}_{31}[f](t,k), \quad f(0,k) = f_0(k),
\end{equation}
where the operators $\mathscr{C}_{12}$, $\mathscr{C}_{22}$, and $\mathscr{C}_{31}$, as mentioned above, represent different physical interaction mechanisms and are defined explicitly as follows:

\paragraph{\textit{The $\mathscr{C}_{12}$ Collision Operator:}}
\begin{equation}
	\label{C12Discrete}
	\begin{aligned}
		\mathscr{C}_{12}[f](t,k) =\; & 4\pi g^2 n \iiint_{\mathbb{R}^3} \mathrm{d}k_1\, \mathrm{d}k_2\, \mathrm{d}k_3\;
		\left[\delta(k - k_1) - \delta(k - k_2) - \delta(k - k_3)\right] \\
		&\times \delta\big(\omega(k_1) - \omega(k_2) - \omega(k_3)\big) \, \left(K_{1,2,3}^{1,2}\right)^2 \, \delta(k_1 - k_2 - k_3) \\
		&\times \left[ f(k_2) f(k_3) (f(k_1) + 1) - f(k_1)(f(k_2) + 1)(f(k_3) + 1) \right].
	\end{aligned}
\end{equation}

\paragraph{\textit{The $\mathscr{C}_{22}$ Collision Operator:}}
\begin{equation}
	\label{C22Discrete}
	\begin{aligned}
		\mathscr{C}_{22}[f](t,k) =\; & \pi g^2 \iiiint_{\mathbb{R}^3} \mathrm{d}k_1\, \mathrm{d}k_2\, \mathrm{d}k_3\, \mathrm{d}k_4 \;
		\left[\delta(k - k_1) + \delta(k - k_2) - \delta(k - k_3) - \delta(k - k_4)\right] \\
		&\times \left(K_{1,2,3,4}^{2,2}\right)^2 \, \delta(k_1 + k_2 - k_3 - k_4)\, \delta\big(\omega(k_1) + \omega(k_2) - \omega(k_3) - \omega(k_4)\big) \\
		&\times \left[ f(k_3) f(k_4) (f(k_1) + 1)(f(k_2) + 1) - f(k_1) f(k_2) (f(k_3) + 1)(f(k_4) + 1) \right].
	\end{aligned}
\end{equation}

\paragraph{\textit{The $\mathscr{C}_{31}$ Collision Operator:}}
\begin{equation}
	\label{C31Discrete}
	\begin{aligned}
		\mathscr{C}_{31}[f](t,k) =\; & \pi g^2 \iiiint_{\mathbb{R}^3} \mathrm{d}k_1\, \mathrm{d}k_2\, \mathrm{d}k_3\, \mathrm{d}k_4 \;
		\left[\delta(k - k_1) - \delta(k - k_2) - \delta(k - k_3) - \delta(k - k_4)\right] \\
		&\times \left(K_{1,2,3,4}^{3,1}\right)^2 \, \delta(k_1 - k_2 - k_3 - k_4)\, \delta\big(\omega(k_1) - \omega(k_2) - \omega(k_3) - \omega(k_4)\big) \\
		&\times \left[ f(k_2) f(k_3) f(k_4)(f(k_1) + 1) - f(k_1)(f(k_2) + 1)(f(k_3) + 1)(f(k_4) + 1) \right].
	\end{aligned}
\end{equation}

Here, \( n \) denotes the condensate density and \( g \) is the interaction constant. The functions \( \left(K_{1,2,3}^{1,2}\right)^2 \), \( \left(K_{1,2,3,4}^{2,2}\right)^2 \), and \( \left(K_{1,2,3,4}^{3,1}\right)^2 \) are known as the collision kernels and are explicit.

In a full coupled framework, the evolution of the condensate is governed by the Gross-Pitaevskii equation, which dynamically determines the condensate density \( n = n(t) \). However, in this study, to simplify the analysis, we assume that the condensate is sufficiently dense so that \( n \) can be treated as a fixed constant. For the more general time-dependent setting, we refer the reader to \cite{PomeauBinh}; see also Figure~\ref{fig2}.

To streamline notation in what follows, we define
\begin{equation} \label{Simpl1}
	\mathfrak{c}_{12} := 4\pi g^2 n, \qquad \mathfrak{c}_{22} := \pi g^2, \qquad \mathfrak{c}_{31} := \pi g^2.
\end{equation}

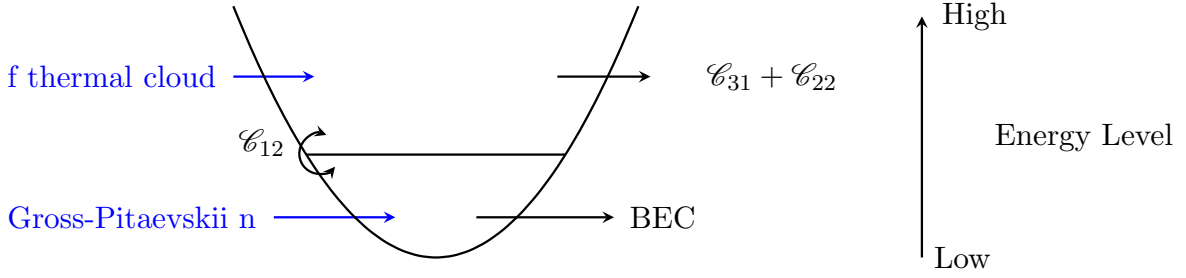
\begin{figure}
	\centering
	\resizebox{\textwidth}{!}{%
		\begin{tikzpicture}[>=stealth]
			\draw[thick, ->, blue] (-2.5,2.25) -- ++(1,0) node[xshift=-2.5cm] {f thermal cloud};
			\draw[thick, ->] (1.5,2.25) -- ++(1.15,0) node[xshift=1.5cm] { $\mathscr C_{31}+\mathscr C_{22}$};
			\draw[thick, ->, blue] (-2,.5) -- ++(1.5,0) node[xshift=-3.25cm] {Gross-Pitaevskii n };
			\draw[thick, ->] (0.5,.5) -- ++(1.7,0) node[xshift=.6cm] { BEC};
			\draw[thick, ->] (6,0) node [xshift=.5cm]{Low} -- ++(0, 1.5)  node[xshift=2.0cm] {Energy Level} -- ++(0,1.5) node[xshift=.65cm] {High};
			\draw[ thick,domain=-2.5:2.5,smooth,variable=\x,black] plot ({\x},{.5*\x*\x});
			\draw [thick, black] (-1.6, 1.28) -- (1.6, 1.28);
			
			\draw[thick, <->] (-1.35,1.525) arc (70:320:.25);
			\node at (-2.15, 1.425) {$\mathscr C_{12}$};
		\end{tikzpicture}
	}%
	\caption{The Bose--Einstein Condensate (BEC) and the thermal cloud.}\label{fig2}
\end{figure}

To simplify the collision terms, a common approximation is to retain only the dominant higher-order contributions while neglecting lower-order terms. As thus, we approximate the nonlinear expressions as follows:
\begin{itemize}
	\item The $\mathscr{C}_{12}$ term:
	\[
	f(k_2) f(k_3) - f(k_1)\left(f(k_2) + f(k_3) + 1\right)
	\quad \longrightarrow \quad
	f(k_2) f(k_3) - f(k_1)\left(f(k_2) + f(k_3)\right),
	\]
	\item The $\mathscr{C}_{22}$ term:
	\[
	f(k_3) f(k_4) (f(k_1) + 1)(f(k_2) + 1) - f(k_1) f(k_2) (f(k_3) + 1)(f(k_4) + 1)
	\]
	\[
	\longrightarrow \quad f(k_3) f(k_4)\left(f(k_1) + f(k_2)\right) - f(k_1) f(k_2)\left(f(k_3) + f(k_4)\right),
	\]
	\item The $\mathscr{C}_{31}$ term:
	\[
	f(k_2) f(k_3) f(k_4)(f(k_1) + 1) - f(k_1)(f(k_2) + 1)(f(k_3) + 1)(f(k_4) + 1)
	\]
	\[
	\longrightarrow \quad f(k_2) f(k_3) f(k_4) - \left[f(k_2)f(k_3) + f(k_2)f(k_4) + f(k_3)f(k_4)\right].
	\]
\end{itemize}

Under these approximations, the collision terms become:
\begin{align}
	& f(k_2) f(k_3) - f(k_1)\left(f(k_2) + f(k_3)\right), \label{Simpl2} \\
	& f(k_3) f(k_4)\left(f(k_1) + f(k_2)\right) - f(k_1) f(k_2)\left(f(k_3) + f(k_4)\right), \label{Simpl3} \\
	& f(k_2) f(k_3) f(k_4) - \left[ f(k_2)f(k_3) + f(k_2)f(k_4) + f(k_3)f(k_4) \right]. \label{Simpl4}
\end{align}

Furthermore, we replace the complicated collision kernels $\left(K_{1,2,3}^{1,2}\right)^2$, $\left(K_{1,2,3,4}^{2,2}\right)^2$, and $\left(K_{1,2,3,4}^{3,1}\right)^2$ with  effective coefficients $\mathcal{W}_{12}$, $\mathcal{W}_{22}$, and $\mathcal{W}_{31}$, respectively, and defined in \eqref{W21}-\eqref{W31}. These reductions, together with the simplifications in \eqref{Simpl2}--\eqref{Simpl4} and the constant definitions in \eqref{Simpl1}, yield the reduced kinetic equation \eqref{4wave}.

\medskip

In our model, we adopt the following structural assumption:

\bigskip

\noindent\textbf{Assumption Y.} \textit{We assume the constants satisfy} 
\[
\mathfrak{c}_{12} + \mathfrak{c}_{22} + \mathfrak{c}_{31} > 0.
\]

\subsection{Wave kinetic equations}

Wave turbulence theory, which investigates the behavior of nonlinear wave systems far from thermal equilibrium, has its origins in the pioneering contributions of Peierls~\cite{Peierls:1993:BRK,Peierls:1960:QTS}, Brout and Prigogine~\cite{brout1956statistical}, Zaslavskii and Sagdeev~\cite{zaslavskii1967limits}, Hasselmann~\cite{hasselmann1962non,hasselmann1974spectral}, and the works of Benney, Saffman, Newell~\cite{benney1966nonlinear,benney1969random}, and Zakharov~\cite{zakharov2012kolmogorov}. This framework has since become a cornerstone in both theoretical studies and practical applications involving weakly nonlinear wave systems.

Within the context of wave kinetic theory, interactions among wave modes are typically categorized into two main classes based on the number of participating waves: 3-wave and 4-wave processes. In this classification, the operator \( C_{12} \) represents a 3-wave interaction, while \( C_{22} \) and \( C_{31} \) correspond to 4-wave interactions.

The time evolution of a dilute Bose gas at room temperature is classically governed by the Boltzmann-Nordheim equation~\cite{Nordheim}, given by
\begin{equation}
	\label{Nordheim}
	\partial_t F(t,k) = \mathscr{C}_{22}[F](t,k), \qquad \text{with } \left(K_{1,2,3,4}^{2,2}\right)^2 = 1,
\end{equation}
which reduces to the simplified 4-wave kinetic equation
\begin{equation}
	\label{4wavepre}
	\partial_t F(t,k) = C_{22}[F](t,k), \qquad \text{with } \left(K_{1,2,3,4}^{2,2}\right)^2 = 1.
\end{equation}
It is well established in the physics literature~\cite{josserand2006self, PomeauBinh, Spohn:2010:KOT} that solutions to~\eqref{Nordheim} and~\eqref{4wavepre} can develop finite-time singularities, which are interpreted as the onset of Bose-Einstein condensation.

In the special case where \( \omega(k) = |k|^2 \) and the solution is isotropic, i.e., \( F(t,k) = F(t,|k|) = F(t,\omega) \), the formation of condensates for the 4-wave kinetic equation has been rigorously analyzed in~\cite{EscobedoVelazquez:2015:FTB, EscobedoVelazquez:2015:OTT}. These results have recently been extended to a broader class of dispersion relations and initial data in~\cite{staffilani2024energy, staffilani2024condensation}.

One key conclusion of these studies is that the total energy strictly vanishes on any bounded domain in the long-time limit:
\begin{equation}\label{CascadePre}
\lim_{t\to\infty} \int_{[0,R)} f(t,\omega)\, \omega \Gamma(\omega)\, \mathrm{d}\omega 
=0,  \ \ \ \ \quad \forall R,\   \infty>R>0.
\end{equation}

The present work generalizes the work~\cite{staffilani2024energy, staffilani2024condensation} to investigate the maximal time of energy conservation, \( T^* \), defined in~\eqref{T0}, for the much more general and complicated equation \eqref{4wave}. We demonstrate that under suitable conditions, \( T^* \) must be either finite or zero. We refer to Remark~\ref{RemarkMainTheo:1} for the discussion on \( T^* \) and the energy cascade phenomenon.

Our approach is built upon   a domain decomposition method (DDM) based on partitioning the half-line \( \mathbb{R}_+ \) into small intervals \cite{halpern2009nonlinear, Lions:1989:OSA, toselli2004domain}. This allows for a detailed divide-and-conquer analysis of the energy distribution within each subinterval, from which we derive precise estimates on the outward flow of energy. Moreover, this framework facilitates a quantitative comparison of energy transport toward infinity under the collective effects of all three collision operators: \( C_{12} \), \( C_{22} \), and \( C_{31} \).

The conclusions of our main theorem remain valid even when two out of the three collision operators vanish, as specified by Assumption~Y. In particular, the finite-time energy cascade results continue to hold in each of the following degenerate cases where only one collision operator is active:
\begin{itemize}
	\item \( \mathfrak{c}_{12} \ne 0 \), while \( \mathfrak{c}_{22} = \mathfrak{c}_{31} = 0 \),
	\item \( \mathfrak{c}_{22} \ne 0 \), while \( \mathfrak{c}_{12} = \mathfrak{c}_{31} = 0 \),
	\item \( \mathfrak{c}_{31} \ne 0 \), while \( \mathfrak{c}_{12} = \mathfrak{c}_{22} = 0 \).
\end{itemize}
That is, even when the dynamics are governed by a single collision mechanism among \( C_{12} \), \( C_{22} \), or \( C_{31} \), the energy cascade to infinity still occurs immediately or in finite time under the assumptions of our theorem.  {\it To the best of our knowledge, this is the first result of its kind for wave kinetic equations.} We remark that some numerical simulations of the system concerning $C_{12}$ and $C_{22}$ have been obtained in \cite{das2025energy}.

We now provide a brief overview of the current state of  results concerning the analysis of both 4-wave and 3-wave kinetic equations:

\begin{itemize}

		\item \textit{4-wave kinetic equations of $2 \leftrightarrow 2$ type:} Convergence rates for discrete approximations and local well-posedness results have been obtained for the MMT model (a one-dimensional 4-wave equation), in~\cite{dolce2024convergence, germain2023local}. Stability near equilibrium, scattering theory as well as the behavior and instability of the Kolmogorov-Zakharov (KZ) spectrum (a class of stationary solutions), have been studied in~\cite{menegaki20222,ampatzoglou2024scattering, escobedo2024instability, collot2024stability}. Further results on  local well-posed and ill-posed results and propagation of moments for 4-wave kinetic equations, particularly in polynomially weighted \( L^\infty \) spaces, are available in~\cite{ampatzoglou2025ill,ampatzoglou2025optimal,GermainIonescuTran, ampatzoglou2025inhomogeneous}.
		
			\item \textit{4-wave kinetic models on the torus:} Recent work on spatially periodic domains includes studies of entropy maximizers and spectral stability~\cite{escobedo2024entropy, germain2024stability}.
	
	\item \textit{Quantum kinetic equations with \( \mathscr{C}_{12} \) and \( \mathscr{C}_{22} \):} The global existence of classical solutions for models involving both of these collision operators has been established in~\cite{soffer2018dynamics}.  The Nordheim equation has been the subject of extensive investigation, with significant contributions documented in \cite{escobedo2007fundamental,Lu2000_ModifiedBoltzmannBE,Lu2004_IsotropicDistributionalBE,Lu2016_LongTimeBEC,Lu2018_LongTimeStrongBE}.
	
		\item \textit{3-wave kinetic equations:} These have been studied extensively across various physical contexts. Applications include stratified ocean flows~\cite{GambaSmithBinh}, Bose-Einstein condensates~\cite{cortes2020system, EPV, escobedo2023linearized1, escobedo2023linearized, escobedo2025local, nguyen2017quantum}, phonon interactions in crystal lattices~\cite{AlonsoGambaBinh, CraciunBinh, EscobedoBinh, GambaSmithBinh, tran2020reaction}, capillary waves~\cite{das2024numerical, nguyen2017quantum, soffer2020energy, walton2022deep, walton2023numerical, walton2024numerical}, and beam-wave interactions~\cite{rumpf2021wave}. The formation of condensates in the setting of non-radial solutions has been addressed in~\cite{staffilani2025formation}.

\end{itemize}


\section{Main results}\label{Sec:Setting}

We denote by $\mathscr{R}_+([0,\infty))$ the space of all non-negative Radon measures $f$ on $[0,\infty)$ such that
\begin{equation}\label{Radon}
	\|f\|_{\mathscr{R}_+} := \int_{[0,\infty)} \,\mathrm{d}\omega f(\omega)
	= \int_{[0,\infty)} f(\mathrm{d}\omega) < \infty.
\end{equation}

\begin{definition}\label{def}
	Under Assumption X and Assumption Y, we say that \( f(t,k) = f(t,\omega) \) is a \emph{global mild radial solution} to \eqref{4wave} with radial initial data \( f_0(k) = f_0(|k|) \ge 0 \) if the following hold:
	\begin{itemize}
		\item \( f(t,k) \ge 0 \) for all \( t \ge 0 \),
		\item the map \( t \mapsto f(t,\omega) \, \Gamma(\omega) \) lies in \( C^1([0,\infty), \mathscr{R}_+([0,\infty))) \),
		\item and for every test function \( \Xi \in C_c^2([0,\infty)) \) with compact support and satisfying \( \Xi'(0) = 0 \), the integral identity
		\begin{equation}\label{4wavemild}
			\begin{aligned}
				\int_{[0,\infty)} \mathrm{d}\omega\, f(t,\omega)\, \Gamma(\omega)\, \Xi(\omega)
				=\ & \int_{[0,\infty)} \mathrm{d}\omega\, f_0(\omega)\, \Gamma(\omega)\, \Xi(\omega) \\
				& + \int_0^t \mathrm{d}s \int_{[0,\infty)} \mathrm{d}\omega\, \mathbb{Q}[f](s,\omega)\, \Xi(\omega)\, \Gamma(\omega)
			\end{aligned}
		\end{equation}
		holds for all \( t \ge 0 \).
	\end{itemize}
\end{definition}

We have the main theorem, whose proof is given in Section \ref{Sec:Proof}.

\begin{theorem}
	\label{Theorem1} 
Assume Assumption~X and Assumption~Y hold.

Let the initial data \( f_0(k) = f_0(|k|) \geq 0 \) be radially symmetric and satisfy the mass and energy equations:
\begin{equation} \label{MassEnergy}
	\int_{\mathbb{R}^3} \, \mathrm{d}k f_0(k)= \mathfrak{M}, \quad 
	\int_{\mathbb{R}^3} \, \mathrm{d}k f_0(k)\, \omega(k)= \mathfrak{E},
\end{equation}
for some fixed constants \( \mathfrak{M}, \mathfrak{E} > 0 \).

Then there exists at least one global mild radial solution \( f(t,k) \) to equation~\eqref{4wave}, in the sense of Definition \ref{def}, such that
\begin{equation} \label{Theorem1:2}
	\int_{\mathbb{R}^3}\, \mathrm{d}k f(t,k) \leq \mathfrak{M}, \quad \text{for all } t \geq 0.
\end{equation}

Suppose further that the initial data satisfies a lower bound on high-frequency tails: there exist constants \( c_{\mathrm{in}} > 0 \), 
\begin{equation} \label{Theorem1:5}
	0 < c_{\mathrm{in}} <  3\varpi_2 + 2 - 2\theta + \kappa_2  + \gamma,
\end{equation}
 and \( r_0 > 0 \) such that
\begin{equation} \label{Theorem1:4}
	\int_{[R,\infty)}\, \mathrm{d}\omega\omega \mathfrak{F}(0,\omega) \geq C_{\mathrm{in}}\, R^{-c_{\mathrm{in}}}, \quad \text{for all } R > r_0,
\end{equation}
where the initial condition is defined as \( \mathfrak{F}(0,\omega) := f_0(k)\, \Gamma(k)\) (see \eqref{FDefinition}).

Then the following statements hold:

\begin{itemize}
	\item[(i)] \textbf{Immediate Energy Cascade.} If the exponent \( c_{\mathrm{in}} \) satisfies
	\begin{equation} \label{Theorem1:3}
		0 < c_{\mathrm{in}} < 	\frac15\min\left\{  \frac{4\varpi_{3}+3\theta+\alpha}{3},\ \frac{3\varpi_{1}+3\theta+\alpha}{2},\ \frac{4\varpi_2+ \alpha+\gamma}{6},\   3\varpi_2 + 2 - 2\theta + \kappa_2 - c_{\mathrm{in}} + \gamma  \right\},
	\end{equation}
	then the maximal time of energy conservation defined in~\eqref{T0} is zero:
	\[
	T^* = 0.
	\]
	
	\item[(ii)] \textbf{Finite-Time Energy Cascade.} If we assume only \eqref{Theorem1:5} 
		 then the maximal time \( T^* \) is finite:
	\[
	T^* < \infty.
	\]
	
\end{itemize}
\end{theorem}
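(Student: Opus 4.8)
I would first construct $f$ by a regularisation-compactness procedure. Replace the three kernels $\mathcal{W}_{12},\mathcal{W}_{22},\mathcal{W}_{31}$ by bounded, compactly supported truncations that remove both the small-frequency singularity (in particular the factor $|\,|k|-|k_1|-|k_2|-|k_3|\,|^{-1}$ in \eqref{W31}) and the large-frequency growth, solve the truncated equation in $L^1_+$ by an ODE/fixed-point argument, and record the uniform a priori bounds: the total mass stays at most $\mathfrak{M}$ and the energy stays equal to $\mathfrak{E}$ (by testing the truncated equation against $1$ and against $\omega$). These bounds give tightness of the measures $f^{(\varepsilon)}(t,\cdot)\,\Gamma(\cdot)$, hence narrow convergence along a subsequence for each $t$, and the $C^1$-in-time bound gives the equicontinuity needed to extract a common subsequence. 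Passing to the limit in \eqref{4wavemild} is where the lower bounds in \eqref{X4} enter: $0<4\varpi_3+3\theta+\alpha$, $0<3\varpi_1+3\theta+\alpha$ and $0<4\varpi_2+3+\alpha+\gamma$ make the limiting triple and quadruple collision integrals absolutely convergent near $\omega=0$, while $3\theta+2\varpi_1\le0$, $2\theta+2\varpi_2\le0$, $3\theta+2\varpi_3\le0$ control them at large $\omega$; the limit is a global mild radial solution satisfying \eqref{Theorem1:2}, and lower semicontinuity gives $\int\omega\,\mathfrak{F}(t)\le\mathfrak{E}$ for all $t$, with the energy non-increasing and lost only ``at infinity''.

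\textbf{Reduction to a lower bound on the energy flux.} For $R>0$ set $\mathcal{E}_{<R}(t):=\int_{[0,R)}\mathrm{d}\omega\,\omega\,\mathfrak{F}(t,\omega)$ and let $\Phi(t):=-\lim_{R\to\infty}\frac{d}{dt}\mathcal{E}_{<R}(t)\ge0$ be the net rate at which energy is transported to infinity (obtained by testing \eqref{4wavemild} against cut-offs approximating $\omega\wedge R$, subtracting the formal energy identity, and letting $R\to\infty$), so that $\mathfrak{E}-\int\omega\,\mathfrak{F}(t)=\int_0^t\Phi(s)\,\mathrm{d}s$. By the definition \eqref{T0} of $T^*$, both assertions reduce to a quantitative positivity of $\Phi$: for (i), under the more restrictive hypothesis \eqref{Theorem1:3}, there should exist $c_0,\delta_0>0$ depending only on the data with $\Phi(t)\ge c_0$ for all $t\in(0,\delta_0)\cap(0,T^*)$, which gives $\int\omega\,\mathfrak{F}(t)\le\mathfrak{E}-c_0t<\mathfrak{E}$ for such $t$ and is incompatible with $t<T^*$ unless $T^*=0$; for (ii), under \eqref{Theorem1:5}, there should exist $c_1>0$ and $t_1\ge0$ with $\Phi(t)\ge c_1$ for all $t\in[t_1,T^*)$, which gives $\int\omega\,\mathfrak{F}(t)\le\mathfrak{E}-c_1(t-t_1)$ and forces $T^*\le t_1+\mathfrak{E}/c_1<\infty$.

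\textbf{The flux lower bound (core step).} After symmetrising the weak forms, the contributions of $C_{12}$, $C_{22}$, $C_{31}$ to $\Phi$ become multilinear integrals such as $\iint\mathfrak{P}(\omega_0)\mathfrak{P}(\omega_1)\mathfrak{P}(\omega_2)\,\bigl[f_1f_2-f_0(f_1+f_2)\bigr]\,\bigl[(\omega_1\wedge R)+(\omega_2\wedge R)-(\omega_0\wedge R)\bigr]\,\delta(\omega_0-\omega_1-\omega_2)$ and its four-frequency analogues built from $\mathfrak{R},\mathfrak{R}',\mathfrak{Q}$. The concavity of $\omega\wedge R$, together with the structure of the collision manifolds on which \eqref{X1:1} guarantees consistency of the momentum and frequency constraints, pins the sign of the ``defect'' factors, so each operator drives a \emph{nonnegative} flux to infinity; since $\mathfrak{c}_{12}+\mathfrak{c}_{22}+\mathfrak{c}_{31}>0$ (Assumption~Y), at least one contributes. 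To make this flux quantitatively large I would (a) prove a \emph{persistence lemma}: the tail lower bound \eqref{Theorem1:4}, $\int_{[R,\infty)}\omega\,\mathfrak{F}(t,\omega)\,\mathrm{d}\omega\gtrsim R^{-c_{\mathrm{in}}}$, survives on $[0,T^*)$ (the gain parts only enlarge the tail, while the loss parts are absorbed using $\int f\le\mathfrak{M}$ and the kernel bounds together with the constraints $\gamma+\kappa_2\ge0$ and $2\varpi_2+\theta+\gamma\ge0$ of \eqref{X4}); and (b) decompose $[0,\infty)$ into dyadic-in-$\omega$ blocks and restrict the multilinear integrals to the sub-region where two colliding frequencies lie deep in the tail and the produced frequency is even larger, so that $f_1f_2$ dominates $f_0(f_1+f_2)$ (and its analogues) while the defect stays bounded below. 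The kernel lower bounds \eqref{X3:3}--\eqref{X3:4} then turn this into $\Phi(t)\gtrsim$ a power of $t$ whose exponent is controlled exactly by $3\varpi_2+2-2\theta+\kappa_2+\gamma-c_{\mathrm{in}}$ and by $\tfrac{4\varpi_3+3\theta+\alpha}{3}$, $\tfrac{3\varpi_1+3\theta+\alpha}{2}$, $\tfrac{4\varpi_2+\alpha+\gamma}{6}$; hypothesis \eqref{Theorem1:3} makes that exponent nonpositive, yielding the uniform bound on $(0,\delta_0)$ needed for (i), while under only \eqref{Theorem1:5} the cascade must first self-amplify --- energy moving up further fattens the tail --- which delivers the bound for $t\ge t_1$ needed for (ii).

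\textbf{Main obstacle.} The genuine difficulty is step (b): unlike mass and energy, a sublinear frequency moment is not monotone for an arbitrary nonnegative density, so one cannot simply sign the collision contributions to $\dot{\mathcal{E}}_{<R}$; one must truly exploit the spread of energy provided by \eqref{Theorem1:4} and propagated by step (a) to show that the ``good'' sub-region dominates the remainder, and this has to be done simultaneously for three operators of different homogeneity --- which is why \eqref{X4} carries so many balanced inequalities and why the dyadic domain decomposition is the right bookkeeping device. The persistence lemma (a) is the secondary obstacle, since the loss terms act precisely on the tail one is trying to preserve; it is handled by pitting them against the bounded total mass and the upper kernel bounds. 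Granting (a) and (b), the two integrations in the reduction step close the argument: case (i) gives $T^*=0$ and case (ii) gives $T^*<\infty$.
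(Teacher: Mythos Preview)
Your existence sketch matches the paper's approach (truncated kernels, fixed point, uniform mass/energy bounds, Arzel\`a--Ascoli compactness). The cascade reduction, however, contains a genuine gap. You define $\Phi$ by $\int_0^t\Phi(s)\,ds=\mathfrak{E}-\int\omega\,\mathfrak{F}(t)$; by the very definition of $T^*$ the right side vanishes on $[0,T^*)$, so with $\Phi\ge 0$ you get $\Phi=0$ a.e.\ there, and no lower bound $\Phi\ge c_0>0$ is available to contradict. If instead you work with the pre-limit quantity $\Psi_R(t):=-\tfrac{d}{dt}\int(\omega\wedge R)\,\mathfrak{F}(t)$ at fixed $R$, a uniform pointwise lower bound $\Psi_R(t)\ge c_0(R)$ still fails whenever the tail mass at time $t$ concentrates near a single frequency: for $C_{22}$ the factor $(\omega_{\mathrm{Med}}-\omega_{\mathrm{Inf}})^2$ collapses to zero, and your step (b) cannot exclude such times from the tail bound alone. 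This is not a technical nuisance but the heart of the matter.

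The paper bypasses pointwise flux estimates entirely. It pairs two test functions of opposite convexity: the convex $(\omega-R)_+$ gives tail persistence (your step (a), Lemma~\ref{Lemma:Apriori2}), while the concave $((\omega-R)_++1)^\alpha$ yields a \emph{time-integrated} a priori bound $\mathfrak{M}+\mathfrak{E}\ge\int_0^T(\text{nonnegative collision quantities})$ (Lemma~\ref{lemma:Apriori}). The DDM then bounds Lebesgue \emph{measures} of time sets rather than fluxes: on $\mathscr{N}_\ell^T$ (tail spread across many subintervals) the collision quantity in Lemma~\ref{lemma:Apriori} is large, forcing $|\mathscr{N}_\ell^T|\lesssim\Omega_\ell^{-c_{\mathscr{N}}}$; on $\mathscr{P}_\ell^T$ (tail concentrated on one subinterval) a separate growth argument with the convex test $(\omega-\tfrac74\Omega_\ell)_+$ forces $|\mathscr{P}_\ell^T|\lesssim\Omega_\ell^{-c_{\mathscr{P}}}$. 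Combining, $|\mathscr{M}_\ell^T|\lesssim\Omega_\ell^{-c}$, while persistence makes $\mathscr{M}_\ell^T=[0,T]$ for every $T<T^*$. Under \eqref{Theorem1:3} this holds for all large $\ell$, so $T^*=0$; under only \eqref{Theorem1:5} it holds at a single scale $\ell=\mathcal{N}_0$, giving $T^*<\infty$ --- the distinction is not a self-amplification mechanism but whether the estimate applies at every scale or just one.
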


\begin{remark}\label{RemarkMainTheo:1}

A standard symmetry argument shows that
\begin{equation}
	\label{RemarkMainTheo:E1}
	\partial_t \int_{\mathbb{R}^3} \mathrm{d}k\, f(t,k)\omega(k)
	= \int_{\mathbb{R}^3} \mathrm{d}k\, \mathbb{Q}[f](t,k)\omega(k)
	= 0.
\end{equation}
This means that, formally, the energy is conserved.
However, when we choose the solution defined on the space $\mathscr{R}_+([0,\infty))$ in~\eqref{Radon}, 
the integrals in $\int_{\mathbb{R}^3} \mathrm{d}k\, \mathbb{Q}[f](t,k)\omega(k)$ are not well defined. 
Consequently, the energy is not conserved and as thus $T^* < \infty$. 
In fact, our proof shows that the energy is lost to infinity.

As explained in \cite{soffer2019energy}, the energy cascade is analogous to the phenomenon of \emph{gelation} in coagulation--fragmentation equations \cite{banasiak2019analytic}, where the gelation time represents the first instant at which mass is lost. The loss of mass occurs because particles merge rapidly to form a giant particle of infinite size $\omega = \infty$  within the system. Similarly, in the energy cascade, a giant wave forms in the system, corresponding to an infinite wave number $\omega = \infty$. Therefore, the energy cascade time $T^*$ is the analogue of the gelation time $T_{\mathrm{gel}}$, commonly studied in coagulation--fragmentation equations \cite[Definition 9.1.1 and  Equation 
(9.1.1)]{banasiak2019analytic}.

 Obtaining the corresponding energy cascade rate for our equation \eqref{4wave} requires additional analysis, which will be the subject of forthcoming work.
\end{remark}

\begin{remark}\label{RemarkMainTheo}
	Part~(i) of the above theorem implies that, under the constraint \eqref{Theorem1:3}, the density function of the thermal cloud experiences an immediate loss of energy. 
	
	Part~(ii) indicates that, when starting from a regular initial condition with   weaker assumptions (see \eqref{Theorem1:5}), the energy still  cascades to infinity within a finite time interval.

In a companion paper~\cite{staffilani2025condensate}, we investigate the question of condensate growth for the kinetic equation~\eqref{4wave}. A key assumption in~\cite{staffilani2025condensate} is that the initial condition is supported near the origin, whereas in the present work, we consider the opposite scenario namely that the support of the initial condition can be supported outside of the origin.

\end{remark}

\section{Useful notations and a priori estimates}

\subsection{Some useful notations}
In this section, we introduce several notational conventions that will be used consistently throughout the remainder of the paper.

First, we define the set of times at which the total energy is conserved:
\begin{equation} \label{Theta}
	\Theta := \left\{ t \in [0,\infty) \,\middle|\,  \int_{[0,\infty)} \, \mathrm{d}\omega  f(t,\omega)\, \omega \Gamma(\omega)
 = \mathfrak{E} \right\}.
\end{equation}

Next, for any three real numbers \( x, y, z \in \mathbb{R} \), we define the median value among them by
\begin{equation} \label{Mid}
	\mathrm{mid}\{x, y, z\} := \text{the unique element in } \{x, y, z\} \setminus \{\min\{x, y, z\}, \max\{x, y, z\}\}.
\end{equation}

Given three energy variables \( \omega, \omega_1, \omega_2 \), we also define the following shorthand notations for their maximum, minimum, and median values:
\begin{equation} \label{Sec:DDM:6}
	\begin{aligned}
		\omega_{\mathrm{Sup}}(\omega, \omega_1, \omega_2) &:= \max\{\omega, \omega_1, \omega_2\}, \\
		\omega_{\mathrm{Inf}}(\omega, \omega_1, \omega_2) &:= \min\{\omega, \omega_1, \omega_2\}, \\
		\omega_{\mathrm{Med}}(\omega, \omega_1, \omega_2) &:= \mathrm{mid}\{\omega, \omega_1, \omega_2\}.
	\end{aligned}
\end{equation}

\subsection{Some useful formulas} In this subsection, we list several useful weak formulations for the collision operators.  
The proofs of these identities are standard (see~\cite{staffilani2024condensation, staffilani2024energy}); however, we include them here for the sake of completeness.

\begin{lemma}
	\label{lemma:C12}  Assume that Assumption X holds. 
For any suitable test function \( \Xi(\omega) \), we have the following identity:
\begin{equation} \label{Lemma:C12:1}
	\begin{aligned}
		\int_{\mathbb{R}_+} \mathrm{d}\omega\, C_{12}[f](\omega)\, \Xi(\omega)\, \Gamma(\omega)
		=\ & c_{12} \iiint_{\mathbb{R}_+^3} \mathrm{d}\omega_1\, \mathrm{d}\omega_2\, \mathrm{d}\omega\,
		\delta(\omega - \omega_1 - \omega_2)\, \mathfrak{P}\, \mathfrak{P}_1\, \mathfrak{P}_2 \\
		& \times [f_1 f_2 - f(f_1 + f_2)]\, [\Xi(\omega) - \Xi(\omega_1) - \Xi(\omega_2)],
	\end{aligned}
\end{equation}
where \( c_{12} \) is a constant independent of \( f \) and \( \Xi \), and we have used the shorthand notation introduced in~\eqref{Shorthand}.

\end{lemma}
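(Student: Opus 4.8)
The plan is to derive the weak formulation of $C_{12}$ by symmetrizing the two integrals in \eqref{C12} over the exchange of momentum labels, so that the loss and gain terms collapse into a single delta-constrained integral against the combination $\Xi(\omega)-\Xi(\omega_1)-\Xi(\omega_2)$. First I would substitute Assumption $X$ into \eqref{C12}: using \eqref{W21}, the kernel $\mathcal{W}_{12}(|k|,|k_1|,|k_2|)$ factorizes as $\bar{\mathfrak{P}}(\omega)\bar{\mathfrak{P}}(\omega_1)\bar{\mathfrak{P}}(\omega_2)$, and similarly the second integral's kernel $\mathcal{W}_{12}(|k_1|,|k|,|k_2|)$ factorizes as $\bar{\mathfrak{P}}(\omega_1)\bar{\mathfrak{P}}(\omega)\bar{\mathfrak{P}}(\omega_2)$ — the same symmetric product. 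Then I would pass from the $\mathrm{d}k_1\,\mathrm{d}k_2$ integration over $\mathbb{R}^3\times\mathbb{R}^3$ to integration over the energy variables $\omega_1,\omega_2$ using the radial reduction \eqref{X2}: each factor $\mathrm{d}k_i$ contributes a Jacobian $2\pi^2\,|k_i|^2/\omega'(|k_i|)\,\mathrm{d}\omega_i = 2\pi^2\Gamma(\omega_i)\,\mathrm{d}\omega_i$, and the spatial delta $\delta(k-k_1-k_2)$ integrates (against the angular variables) to a factor that, combined with the remaining Jacobians, is absorbed into the constant $c_{12}$ together with $\mathfrak{c}_{12}$ and the $2\pi^2$ factors; here the definition $\mathfrak{P}(\omega)=\Gamma(\omega)\bar{\mathfrak{P}}(\omega)$ from \eqref{X3:1} is exactly what turns $\Gamma(\omega_1)\Gamma(\omega_2)\bar{\mathfrak{P}}(\omega)\bar{\mathfrak{P}}(\omega_1)\bar{\mathfrak{P}}(\omega_2)$ together with the outer $\Gamma(\omega)$ from the left-hand side into $\mathfrak{P}\,\mathfrak{P}_1\,\mathfrak{P}_2$.

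The key algebraic step is the relabeling that merges the two terms. In the first integral of \eqref{C12} the bracket is $[f_1f_2-(f_1+f_2)f]$ and it is tested against $\Xi(\omega)$ (after multiplication by $\Gamma(\omega)$), with the constraints $\omega=\omega_1+\omega_2$, $k=k_1+k_2$. In the second integral the constraints are $\omega_1=\omega+\omega_2$, $k_1=k+k_2$ and the bracket is $[ff_2-(f+f_2)f_1]$, tested against $\Xi(\omega)$. I would rename the integration variables in the second integral: swap the roles so that the "big" energy $\omega_1$ plays the role of $\omega$ in the first integral. Concretely, in the term $\int \mathrm{d}\omega\,\mathrm{d}\omega_1\,\mathrm{d}\omega_2\,\delta(\omega_1-\omega-\omega_2)\,\mathfrak{P}\,\mathfrak{P}_1\,\mathfrak{P}_2\,[ff_2-(f+f_2)f_1]\,\Xi(\omega)$, relabel $(\omega,\omega_1,\omega_2)\mapsto(\omega_1,\omega,\omega_2)$ to get $\int\delta(\omega-\omega_1-\omega_2)\,\mathfrak{P}\,\mathfrak{P}_1\,\mathfrak{P}_2\,[f_1f_2-(f_1+f_2)f]\,\Xi(\omega_1)$; the product $\mathfrak{P}\,\mathfrak{P}_1\,\mathfrak{P}_2$ and the bracket $[f_1f_2-(f_1+f_2)f]$ are invariant under $\omega_1\leftrightarrow\omega$ in the sense needed. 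One further relabeling $\omega_1\leftrightarrow\omega_2$ shows the two copies of the "$-2\int\cdots$" term combine to produce exactly $-[\Xi(\omega_1)+\Xi(\omega_2)]$ times the same integrand with a factor that matches the coefficient, so that the total collapses to $c_{12}\int\delta(\omega-\omega_1-\omega_2)\,\mathfrak{P}\,\mathfrak{P}_1\,\mathfrak{P}_2\,[f_1f_2-f(f_1+f_2)]\,[\Xi(\omega)-\Xi(\omega_1)-\Xi(\omega_2)]$, which is \eqref{Lemma:C12:1}. I would also verify that, under the support constraint $\omega=\omega_1+\omega_2$, the energies lie in the range where all quantities are finite (using $\omega(0)=0$ and monotonicity from (X1)), so that all integrals and the Fubini/relabeling manipulations are justified for the class of test functions and Radon-measure-valued solutions in Definition \ref{def}.

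The main obstacle I anticipate is bookkeeping the combinatorial factor of $2$ correctly: the second integral in \eqref{C12} carries the prefactor $-2$, which after the two relabelings ($\omega\leftrightarrow\omega_1$, then $\omega_1\leftrightarrow\omega_2$) must split into the two summands $-\Xi(\omega_1)$ and $-\Xi(\omega_2)$ without an extra factor of $2$ surviving; getting this right requires care because the original gain term $\mathfrak{c}_{12}\int\cdots$ has prefactor $1$ while the loss-type term has prefactor $2$, and these are precisely balanced only after one also accounts for the $\omega_1\leftrightarrow\omega_2$ symmetry of the first integral's integrand. A secondary technical point is that the reduction \eqref{X2} introduces the measure $\Gamma(\omega_i)\,\mathrm{d}\omega_i$ rather than $\mathrm{d}k_i$, and one must check that the two delta functions $\delta(k-k_1-k_2)$ and $\delta(\omega-\omega_1-\omega_2)$ — one momentum, one energy — reduce correctly under the radial ansatz; since $\omega$ is a strictly monotone function of $|k|$ by (X1), the energy delta is equivalent to $\delta(|k|-|k|(\omega_1+\omega_2))$ up to a Jacobian $1/\omega'$, and the remaining angular integration against the vector delta produces a bounded factor on the triangle-inequality region, all of which is folded into the constant $c_{12}$; I would simply note that this computation is identical to the ones in \cite{staffilani2024condensation, staffilani2024energy} and defer the explicit constant to those references.
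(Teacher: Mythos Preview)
Your proposal is correct and follows essentially the same approach as the paper: symmetrize the two integrals in \eqref{C12} by relabeling variables so that the loss and gain terms combine against $[\Xi(\omega)-\Xi(\omega_1)-\Xi(\omega_2)]$, then reduce to radial/energy variables and absorb the angular integration of $\delta(k-k_1-k_2)$ into the constant $c_{12}$. The only difference is that the paper carries out the angular reduction explicitly---writing $\delta(k-k_1-k_2)$ via its Fourier representation and evaluating the resulting integral $\int_0^\infty y^{-1}\sin(|k|y)\sin(|k_1|y)\sin(|k_2|y)\,\mathrm{d}y=\pi/4$ on the triangle region $|k|<|k_1|+|k_2|$---whereas you defer this computation to the references; both routes land on the same identity.
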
 

\begin{proof}
By employing a standard symmetrization technique (see, e.g., \cite{soffer2020energy}), and exchanging the integration variables \( k \leftrightarrow k_1 \) and \( k \leftrightarrow k_2 \), the integral involving the collision operator \( C_{12}[f] \) can be expressed as
\[
\int_{\mathbb{R}^3}\!\mathrm{d}k\, C_{12}[f](k) \,\Xi(\omega) 
= \mathfrak{c}_{12} \iiint_{\mathbb{R}^3 \times \mathbb{R}^3 \times \mathbb{R}^3}\!\mathrm{d}k\, \mathrm{d}k_1\, \mathrm{d}k_2\,  \delta(k - k_1 - k_2) \, \delta(\omega - \omega_1 - \omega_2) \, \bar{\mathfrak{P}}\, \bar{\mathfrak{P}}_1\, \bar{\mathfrak{P}}_2 \,
\]
\[
\times \left[ f(k_1)f(k_2) - f(k_1)f(k) - f(k_2)f(k) \right] 
\left[ \Xi(\omega) - \Xi(\omega_1) - \Xi(\omega_2) \right].
\]

Note that $\bar{\mathfrak{P}}, \bar{\mathfrak{P}}_1, \bar{\mathfrak{P}}_2$ are defined in Assumption X.

Following  \cite{staffilani2024condensation, staffilani2024energy}, we rewrite this integral by separating radial and angular components. Switching to spherical coordinates and integrating over the angular variables yields:
\[
\int_{\mathbb{R}^3}\!\mathrm{d}k\, C_{12}[f](k) \, \Xi(k)
= \mathfrak{c}_{12} \iiint_{\mathbb{R}_+^3}\!\mathrm{d}|k|\, \mathrm{d}|k_1|\, \mathrm{d}|k_2|\,
|k|^2 |k_1|^2 |k_2|^2 \, \delta(\omega - \omega_1 - \omega_2) \left[f_1 f_2 - f_1 f - f_2 f \right]
\]
\[
\times \bar{\mathfrak{P}} \, \bar{\mathfrak{P}}_1 \, \bar{\mathfrak{P}}_2 \left[\Xi(\omega) - \Xi(\omega_1) - \Xi(\omega_2) \right] 
\iiint_{(\mathbb{S}^2)^3}\!\mathrm{d}\mathcal{U} \, \mathrm{d}\mathcal{U}_1 \, \mathrm{d}\mathcal{U}_2 
\left[ \frac{1}{(2\pi)^3} \int_{\mathbb{R}^3}\!\mathrm{d}z\, e^{i z \cdot (k - k_1 - k_2)} \right].
\]

Evaluating the Fourier integral and the angular integrals explicitly, we reduce the expression to  
\[
\int_{\mathbb{R}^3}\!\mathrm{d}k\, C_{12}[f](k) \Xi(k)
= \mathfrak{c}_{12} \iiint_{\mathbb{R}_+^3}\!\mathrm{d}|k|\, \mathrm{d}|k_1|\, \mathrm{d}|k_2|\,
|k|^2 |k_1|^2 |k_2|^2 \, \delta(\omega - \omega_1 - \omega_2) \left[f_1 f_2 - f_1 f - f_2 f \right]
\]
\[
\times \bar{\mathfrak{P}} \, \bar{\mathfrak{P}}_1 \, \bar{\mathfrak{P}}_2 \left[\Xi(\omega) - \Xi(\omega_1) - \Xi(\omega_2) \right]
\cdot 32 \pi \int_0^\infty \frac{\sin(|k_1| y) \sin(|k_2| y) \sin(|k| y)}{y} \, \mathrm{d}y.
\]

Observing that \( |k| < |k_1| + |k_2| \), we apply the integral identity
\[
\int_0^\infty \frac{\sin(|k_1| y) \, \sin(|k_2| y) \, \sin(|k| y)}{y} \, \mathrm{d}y = \frac{\pi}{4},
\]
and obtain
\begin{equation}\label{Le:C12:E2}
	\begin{aligned}
		\int_{\mathbb{R}_+}\!\mathrm{d}\omega\, C_{12}[f](\omega) \, \Xi(\omega) \, \Gamma(\omega)
		=\ & c_{12} \iiint_{\mathbb{R}_+^3}\!\mathrm{d}|k|\, \mathrm{d}|k_1|\, \mathrm{d}|k_2|\,
		|k|^2 |k_1|^2 |k_2|^2 \, \delta(\omega - \omega_1 - \omega_2) \, \bar{\mathfrak{P}} \, \bar{\mathfrak{P}}_1 \, \bar{\mathfrak{P}}_2 \\
		& \times \big[ f_1 f_2 - f_1 f - f_2 f \big] \big[ \Xi(\omega) - \Xi(\omega_1) - \Xi(\omega_2) \big],
	\end{aligned}
\end{equation}
where \( c_{12} > 0 \) is a universal constant.

By performing the change of variables \( |k| \mapsto \omega \), \( |k_1| \mapsto \omega_1 \), and \( |k_2| \mapsto \omega_2 \), the integral in \eqref{Le:C12:E2} becomes
\begin{equation*}
	\begin{aligned}
		\int_{\mathbb{R}_+}\!\mathrm{d}\omega\, C_{12}[f](\omega) \, \Xi(\omega) \, \Gamma(\omega)
		= &\ c_{12} \iiint_{\mathbb{R}_+^3}\!\mathrm{d}\omega\, \mathrm{d}\omega_1\, \mathrm{d}\omega_2\, \delta(\omega - \omega_1 - \omega_2) \, {\mathfrak{P}} \, {\mathfrak{P}}_1 \, {\mathfrak{P}}_2 \\
		& \times \big[ f_1 f_2 - f (f_1 + f_2) \big] \big[ \Xi(\omega) - \Xi(\omega_1) - \Xi(\omega_2) \big],
	\end{aligned}
\end{equation*}
which completes the proof.

\end{proof}

\begin{lemma}
	\label{lemma:C22} Assume that Assumption X holds. 
	For any suitable test function \(\Xi(\omega)\), the following identity holds:
	\begin{equation}\label{Lemma:C22:1}
		\begin{aligned}
			\int_{\mathbb{R}_+} \mathrm{d}\omega\, C_{22}[f](\omega)\, \Xi(\omega)\, \Gamma(\omega) 
			=\, & c_{22} \iiiint_{\mathbb{R}_+^4} \mathrm{d}\omega_1\, \mathrm{d}\omega_2\, \mathrm{d}\omega_3\, \mathrm{d}\omega\, \delta(\omega + \omega_1 - \omega_2 - \omega_3) \\
			& \times {\mathfrak{R}} \, {\mathfrak{R}}_1 \, {\mathfrak{R}}_2 \, {\mathfrak{R}}_3 \, \min\{|k|, |k_1|, |k_2|, |k_3|\} \bar{\mathfrak{R}}_o(\omega,\omega_1,\omega_2,\omega+\omega_1-\omega_2)\\
			& \times f(\omega) f(\omega_1) f(\omega_2) \left[ -\Xi(\omega) - \Xi(\omega_1) + \Xi(\omega_2) + \Xi(\omega_3) \right],
		\end{aligned}
	\end{equation}
	where \(c_{22} > 0\) is a constant independent of \(f\) and \(\Xi\), and the shorthand notation follows \eqref{Shorthand}.

\end{lemma}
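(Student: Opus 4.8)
The plan is to reproduce, step by step, the scheme used in the proof of Lemma~\ref{lemma:C12}, replacing the three‑wave symmetrization and the triple‑sine identity by their four‑wave analogues. \textbf{First (symmetrization).} Multiply $C_{22}[f]$ by $\Xi(\omega)$ and integrate over $k\in\mathbb{R}^3$. The product of the kernel $\mathcal{W}_{22}$ with the two Dirac factors $\delta(k+k_1-k_2-k_3)\,\delta(\omega+\omega_1-\omega_2-\omega_3)$ is invariant under each of the relabellings $k\leftrightarrow k_1$, $k_2\leftrightarrow k_3$, and the pair‑exchange $(k,k_1)\leftrightarrow(k_2,k_3)$ — here one uses that $\bar{\mathfrak{R}}_o=\max\{\omega,\omega_1,\omega_2,\omega_3\}^{\gamma}$ is fully symmetric. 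The cubic bracket $B:=f_2f_3(f_1+f)-ff_1(f_2+f_3)$ is invariant under the first two and changes sign under the pair‑exchange. Averaging the four relabellings therefore turns $\Xi(\omega)$ into $\tfrac14[\Xi(\omega)+\Xi(\omega_1)-\Xi(\omega_2)-\Xi(\omega_3)]$, the two minus signs being forced by the sign change of $B$; and, applying the same three symmetries once more, each of the four cubic monomials $f_1f_2f_3,\ ff_2f_3,\ ff_1f_2,\ ff_1f_3$ weighted by this test difference yields the same integral, so $B$ reduces under the integral to $-4\,ff_1f_2$. Absorbing the $\tfrac14$, this gives
\begin{equation*}
\int_{\mathbb{R}^3}\mathrm{d}k\, C_{22}[f](k)\,\Xi(\omega) = \mathfrak{c}_{22}\iiiint \mathrm{d}k\,\mathrm{d}k_1\,\mathrm{d}k_2\,\mathrm{d}k_3\; \mathcal{W}_{22}\,\delta(k+k_1-k_2-k_3)\,\delta(\omega+\omega_1-\omega_2-\omega_3)\,f f_1 f_2\,[-\Xi(\omega)-\Xi(\omega_1)+\Xi(\omega_2)+\Xi(\omega_3)].
\end{equation*}

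\textbf{Second (angular reduction).} Pass to spherical coordinates, $\mathrm{d}k_i=|k_i|^2\,\mathrm{d}|k_i|\,\mathrm{d}\mathcal{U}_i$, and write $\delta(k+k_1-k_2-k_3)=(2\pi)^{-3}\int_{\mathbb{R}^3}\mathrm{d}z\,e^{iz\cdot(k+k_1-k_2-k_3)}$. Carrying out the four angular integrations using $\int_{\mathbb{S}^2}e^{\pm i z\cdot|k_i|\mathcal{U}_i}\,\mathrm{d}\mathcal{U}_i=4\pi\,(|z||k_i|)^{-1}\sin(|k_i||z|)$ and then integrating $z$ radially leaves, up to an explicit universal constant, the factor $|k|^2|k_1|^2|k_2|^2|k_3|^2/(|k||k_1||k_2||k_3|)$ times the scalar integral $\int_0^\infty r^{-2}\sin(|k|r)\sin(|k_1|r)\sin(|k_2|r)\sin(|k_3|r)\,\mathrm{d}r$. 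The momentum constraint $k+k_1=k_2+k_3$ forces each of $|k|,|k_1|,|k_2|,|k_3|$ to be at most the sum of the other three, so the classical identity $\int_0^\infty r^{-2}\sin(ar)\sin(br)\sin(cr)\sin(dr)\,\mathrm{d}r=\tfrac{\pi}{4}\min\{a,b,c,d\}$ applies and replaces the scalar integral by $\tfrac{\pi}{4}\min\{|k|,|k_1|,|k_2|,|k_3|\}$ — the four‑wave analogue of the $\tfrac{\pi}{4}$ used for $C_{12}$.

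\textbf{Third (back to $\omega$).} Change variables $|k_i|\mapsto\omega_i$ via $\mathrm{d}|k_i|=\mathrm{d}\omega_i/\omega'(|k_i|)$ and $|k_i|^2/\omega'(|k_i|)=\Gamma(\omega_i)$; divide by $2\pi^2$ to rewrite $\int_{\mathbb{R}^3}\mathrm{d}k$ as $\int_{\mathbb{R}_+}\mathrm{d}\omega\,\Gamma(\omega)$; and regroup, so that each index contributes $\Gamma(\omega_i)\bar{\mathfrak{R}}(\omega_i)/|k_i|=\mathfrak{R}(\omega_i)$ while the $\tfrac{\pi}{4}\min$ and the accumulated constants fold into a single $c_{22}>0$. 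Finally, using $\delta(\omega+\omega_1-\omega_2-\omega_3)$ to eliminate $\omega_3$ turns $\bar{\mathfrak{R}}_o(\omega,\omega_1,\omega_2,\omega_3)$ into $\bar{\mathfrak{R}}_o(\omega,\omega_1,\omega_2,\omega+\omega_1-\omega_2)$, which is exactly~\eqref{Lemma:C22:1}.

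I expect the genuinely delicate point to be the first step: making the symmetrization bookkeeping honest, i.e. verifying that the sign changes of the four cubic monomials and of the test difference $\Xi(\omega)+\Xi(\omega_1)-\Xi(\omega_2)-\Xi(\omega_3)$ under the pair‑exchange are mutually compatible so that the full cubic nonlinearity really does collapse to the single term $ff_1f_2$. A secondary technical point — justifying the angular and $z$‑integrations and the four‑sine identity when $f$ is only a Radon measure — is handled exactly as for Lemma~\ref{lemma:C12} and as in \cite{staffilani2024condensation, staffilani2024energy}.
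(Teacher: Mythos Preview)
Your proof is correct and follows essentially the same route as the paper's: symmetrization, angular reduction via the Fourier representation of the momentum delta and the four-sine identity, and the change of variables $|k_i|\mapsto\omega_i$. The only cosmetic differences are order --- you symmetrize first and reduce afterwards, while the paper reduces first and symmetrizes at the end --- and that in justifying the quadrilateral condition needed for $\int_0^\infty r^{-2}\prod\sin(|k_i|r)\,\mathrm{d}r=\tfrac{\pi}{4}\min\{|k_i|\}$ you invoke the momentum constraint $k+k_1=k_2+k_3$, whereas the paper (since at that stage the momentum delta has already been dissolved into the Fourier integral and only the energy delta remains) uses $\omega+\omega_1=\omega_2+\omega_3$ together with the superadditivity~\eqref{X1:1}; both arguments yield the same conclusion.
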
 
\begin{proof}
Building on the approach in \cite{staffilani2024condensation, staffilani2024energy}, we express the collision operator \( C_{22}[f] \) as
\[
\begin{aligned}
	C_{22}[f](k) 
	=\, & \mathfrak{c}_{22} \iiint_{\mathbb{R}^9} \mathrm{d}k_1\, \mathrm{d}k_2\, \mathrm{d}k_3 \, \delta(\omega + \omega_1 - \omega_2 - \omega_3) \, \delta(k + k_1 - k_2 - k_3) \\
	& \times \left[- f f_1 (f_2 + f_3) + f_2 f_3 (f_1 + f) \right] \bar{\mathfrak{R}} \bar{\mathfrak{R}}_1 \bar{\mathfrak{R}}_2 \bar{\mathfrak{R}}_3\bar{\mathfrak{R}}_o(\omega,\omega_1,\omega_2,\omega+\omega_1-\omega_2),
\end{aligned}
\]
where \(\omega = \omega(k)\) and \(\omega_i = \omega(k_i)\). Note that $\bar{\mathfrak{R}}, \bar{\mathfrak{R}}_1, \bar{\mathfrak{R}}_2, \bar{\mathfrak{R}}_3$ are defined in Assumption X.

Passing to spherical coordinates and separating radial and angular variables, we write
\[
\begin{aligned}
	C_{22}[f](k) 
	=\, & \mathfrak{c}_{22} \iiint_{\mathbb{R}_+^3} \mathrm{d}|k_1|\, \mathrm{d}|k_2|\, \mathrm{d}|k_3| \, |k_1|^2 |k_2|^2 |k_3|^2 \, \delta(\omega + \omega_1 - \omega_2 - \omega_3) \\
	& \times \iiint_{(\mathbb{S}^2)^3} \mathrm{d}\mathcal{U}_1\, \mathrm{d}\mathcal{U}_2\, \mathrm{d}\mathcal{U}_3 \, \left[- f f_1 (f_2 + f_3) + f_2 f_3 (f_1 + f) \right] \\
	& \times \left[ \frac{1}{(2\pi)^3} \int_{\mathbb{R}^3} \mathrm{d}z\, e^{i z \cdot (k + k_1 - k_2 - k_3)} \right] \bar{\mathfrak{R}} \bar{\mathfrak{R}}_1 \bar{\mathfrak{R}}_2 \bar{\mathfrak{R}}_3\bar{\mathfrak{R}}_o.
\end{aligned}
\]

Evaluating the angular integrals and the Fourier transform yields
\[
\begin{aligned}
	C_{22}[f](k) 
	=\, & \mathfrak{c}_{22} \iiint_{\mathbb{R}_+^3} \mathrm{d}|k_1|\, \mathrm{d}|k_2|\, \mathrm{d}|k_3| \, \frac{32 \pi}{|k|} \int_0^\infty \mathrm{d}y \, \frac{\sin(|k| y) \sin(|k_1| y) \sin(|k_2| y) \sin(|k_3| y)}{y^2} \\
	& \times \delta(\omega + \omega_1 - \omega_2 - \omega_3) \, \left[- f f_1 (f_2 + f_3) + f_2 f_3 (f_1 + f) \right] |k_1| |k_2| |k_3| \\
	& \times \bar{\mathfrak{R}} \bar{\mathfrak{R}}_1 \bar{\mathfrak{R}}_2 \bar{\mathfrak{R}}_3\bar{\mathfrak{R}}_o.
\end{aligned}
\]

Using the energy conservation relation \(\omega + \omega_1 = \omega_2 + \omega_3\) and \eqref{X1:1}, the oscillatory integral evaluates explicitly as (see, e.g., \cite{staffilani2024energy})
\[
\int_0^\infty \frac{\sin(|k| y) \sin(|k_1| y) \sin(|k_2| y) \sin(|k_3| y)}{y^2} \, \mathrm{d}y = \frac{\pi}{4} \min\{ |k|, |k_1|, |k_2|, |k_3| \}.
\]

Substituting this back, we obtain
\[
\begin{aligned}
	\int_{\mathbb{R}_+} \mathrm{d}\omega\, C_{22}[f](\omega) \, \Xi(\omega) \, \Gamma(\omega) 
	=\, & c_{22} \iiiint_{\mathbb{R}_+^4} \mathrm{d}|k|\, \mathrm{d}|k_1|\, \mathrm{d}|k_2|\, \mathrm{d}|k_3| \, |k| |k_1| |k_2| |k_3| \min\{ |k|, |k_1|, |k_2|, |k_3| \} \\
	& \times \delta(\omega + \omega_1 - \omega_2 - \omega_3) \, \bar{\mathfrak{R}} \bar{\mathfrak{R}}_1 \bar{\mathfrak{R}}_2 \bar{\mathfrak{R}}_3\bar{\mathfrak{R}}_o \\
	& \times \left[- f f_1 (f_2 + f_3) + f_2 f_3 (f_1 + f) \right] \Xi(\omega),
\end{aligned}
\]
for some universal constant \( c_{22} > 0 \).

Finally, applying the change of variables \( |k| \mapsto \omega \), \( |k_i| \mapsto \omega_i \) for \(i=1,2,3\), and performing a standard symmetrization argument by swapping variables \(k \leftrightarrow k_1\), \(k \leftrightarrow k_2\), and \(k \leftrightarrow k_3\) (cf. \cite{soffer2018dynamics}), we arrive at the weak formulation stated in \eqref{Lemma:C22:1}.

\end{proof}

\begin{lemma}
	\label{lemma:C31}  Assume that Assumption X holds. 
For any suitable test function \(\Xi(\omega)\), we have the following equation:
\begin{equation}\label{Lemma:C31:1}
	\begin{aligned}
		& \int_{\mathbb{R}_+} \mathrm{d}\omega\, C_{31}[f](\omega)\, \Xi(\omega)\, \Gamma(\omega) \\
		=\, & c_{31} \iiiint_{\mathbb{R}_+^4} \mathrm{d}\omega_1\, \mathrm{d}\omega_2\, \mathrm{d}\omega_3\, \mathrm{d}\omega \,
		\delta(\omega - \omega_1 - \omega_2 - \omega_3) \, 
		{\mathfrak{Q}} \, {\mathfrak{Q}}_1 \, {\mathfrak{Q}}_2 \, {\mathfrak{Q}}_3 \\
		& \quad \times [f_1 f_2 f_3 - f(f_1 f_2 + f_2 f_3 + f_1 f_3)] \, 
		[\Xi(\omega) - \Xi(\omega_1) - \Xi(\omega_2) - \Xi(\omega_3)],
	\end{aligned}
\end{equation}
where \( c_{31} \) is a constant independent of \(f\) and \(\Xi\), and the shorthand notation follows from \eqref{Shorthand}.

\end{lemma}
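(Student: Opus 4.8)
The plan is to prove \eqref{Lemma:C31:1} by the same three–step reduction already used for Lemmas~\ref{lemma:C12} and~\ref{lemma:C22}: symmetrize the two pieces of $C_{31}$ into a single symmetric integral, pass to spherical coordinates and integrate out the angular variables against the Fourier representation of the momentum $\delta$, and finally change variables from momentum magnitudes to energies.

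\textbf{Step 1: Symmetrization.} The operator $C_{31}$ in \eqref{C31} is the difference of a gain-type integral over $\delta(\omega-\omega_1-\omega_2-\omega_3)\,\delta(k-k_1-k_2-k_3)$ and a loss-type integral (with coefficient $3$) over $\delta(\omega_1-\omega-\omega_2-\omega_3)\,\delta(k_1-k-k_2-k_3)$. Pairing $C_{31}[f]$ with $\Xi\,\Gamma$ and integrating in $k$, I would relabel the integration variables $k\leftrightarrow k_1$, $k\leftrightarrow k_2$, $k\leftrightarrow k_3$ in the loss term. Using that $\mathcal{W}_{31}$ in \eqref{W31} and the cubic bracket are invariant under permutations of the three ``outgoing'' momenta, the resulting copies collapse into one integral in which the test function enters only through $\Xi(\omega)-\Xi(\omega_1)-\Xi(\omega_2)-\Xi(\omega_3)$ and the collision bracket becomes $f_1 f_2 f_3-f(f_1 f_2+f_2 f_3+f_1 f_3)$, matching the right-hand side of \eqref{Lemma:C31:1}. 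This is the routine bookkeeping already done in~\cite{staffilani2024condensation,staffilani2024energy} for $C_{12}$ and $C_{22}$.

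\textbf{Step 2: Angular integration.} Using \eqref{X2} to turn the $k$-integral into an $\omega$-integral against $\Gamma$, I would pass to spherical coordinates and write $\delta(k-k_1-k_2-k_3)=(2\pi)^{-3}\int_{\mathbb{R}^3}\mathrm{d}z\,e^{iz\cdot(k-k_1-k_2-k_3)}$. Since $f$, $\Xi$ and the factorized kernel \eqref{W31} depend only on the magnitudes, the four angular variables decouple and each contributes $\int_{\mathbb{S}^2}e^{\pm i z\cdot k_j}\,\mathrm{d}\mathcal{U}_j=4\pi\,|z|^{-1}|k_j|^{-1}\sin(|z|\,|k_j|)$. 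Integrating in $z$ then reduces the angular/Fourier factor to the one-dimensional oscillatory integral $\int_0^\infty y^{-2}\sin(|k|y)\sin(|k_1|y)\sin(|k_2|y)\sin(|k_3|y)\,\mathrm{d}y$. Because $k=k_1+k_2+k_3$ forces $|k|\le|k_1|+|k_2|+|k_3|$ and, symmetrically, each $|k_j|$ is bounded by the sum of the other three magnitudes, the four numbers satisfy the quadrilateral inequality, so this integral equals $\tfrac{\pi}{4}\min\{|k|,|k_1|,|k_2|,|k_3|\}$, exactly the identity used in the proof of Lemma~\ref{lemma:C22}. The energy $\delta$-function $\delta(\omega-\omega_1-\omega_2-\omega_3)$ and the sign pattern of the bracket are untouched throughout.

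\textbf{Step 3: Collecting radial factors and changing variables.} This is the step I expect to be the main obstacle. I would now collect every purely radial factor: the Jacobians $|k_j|^2$ from the spherical decomposition, the $\prod_j|k_j|^{-1}$ from the four angular integrals, the $\min\{|k|,|k_1|,|k_2|,|k_3|\}$ produced by Step~2, and — crucially — the singular prefactor $\bigl|\,|k|-|k_1|-|k_2|-|k_3|\,\bigr|^{-1}$ of $\mathcal{W}_{31}$. This prefactor is precisely the feature distinguishing $C_{31}$ from $C_{22}$: in Lemma~\ref{lemma:C22} the $\min$ factor and the extra $\bar{\mathfrak{R}}_o$ persist into the weak formulation \eqref{Lemma:C22:1}, whereas here the singular prefactor is what is needed so that, after the change of variables $|k_j|\mapsto\omega_j$ (which turns $|k_j|^2\omega'(|k_j|)^{-1}\,\mathrm{d}|k_j|$ into $\Gamma(\omega_j)\,\mathrm{d}\omega_j$, hence $\Gamma(\omega_j)\bar{\mathfrak{Q}}(\omega_j)=\mathfrak{Q}(\omega_j)$), the surviving momentum-magnitude dependence assembles into $\mathfrak{Q}\,\mathfrak{Q}_1\,\mathfrak{Q}_2\,\mathfrak{Q}_3$ times the universal constant $c_{31}$. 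The delicate part is to track how $\min\{|k|,|k_1|,|k_2|,|k_3|\}$ interacts with $\bigl|\,|k|-|k_1|-|k_2|-|k_3|\,\bigr|$ and with the powers of $|k_j|$ coming from the Jacobians, so as to verify that nothing $|k_j|$-dependent is left over; once this is done, reassembling the three steps yields \eqref{Lemma:C31:1}.
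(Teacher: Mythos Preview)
Your Step~1 (symmetrization) is correct and matches the paper. The gap is in Step~2, and it is exactly the reason Step~3 looks ``delicate'' to you.

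You claim that because $|k|\le|k_1|+|k_2|+|k_3|$ and symmetrically (the quadrilateral inequality), the oscillatory integral equals $\tfrac{\pi}{4}\min\{|k|,|k_1|,|k_2|,|k_3|\}$, ``exactly the identity used in the proof of Lemma~\ref{lemma:C22}.'' This is false: the quadrilateral inequality alone does \emph{not} force the $\tfrac{\pi}{4}\min$ value. That identity in Lemma~\ref{lemma:C22} relies on additional pair--sum constraints coming from the $2\leftrightarrow2$ resonance $\omega+\omega_1=\omega_2+\omega_3$. On the $C_{31}$ resonance $\omega=\omega_1+\omega_2+\omega_3$, one has $|k|>|k_j|$ for each $j$ (since $\omega>\omega_j$) together with $|k|<|k_1|+|k_2|+|k_3|$ (by \eqref{X1:1}); in this regime the piecewise--linear formula
\[
\int_0^\infty \frac{\sin(|k|y)\sin(|k_1|y)\sin(|k_2|y)\sin(|k_3|y)}{y^2}\,\mathrm{d}y
=\frac{\pi}{16}\sum_{x_1,\dots,x_4=1}^{2}(-1)^{x_1+\cdots+x_4+1}\bigl|{\textstyle\sum_j}(-1)^{x_j}|k_j|\bigr|
\]
collapses to $\tfrac{\pi}{8}\bigl[|k_1|+|k_2|+|k_3|-|k|\bigr]$, \emph{not} to $\tfrac{\pi}{4}\min$. (A quick check: $|k|=2.9$, $|k_1|=|k_2|=|k_3|=1$ satisfies the quadrilateral inequality, yet the integral equals $\pi/80$, not $\pi/4$.)

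This is precisely why the kernel $\mathcal{W}_{31}$ in \eqref{W31} carries the singular prefactor $\bigl|\,|k|-|k_1|-|k_2|-|k_3|\,\bigr|^{-1}$: it cancels \emph{exactly} against the factor $|k_1|+|k_2|+|k_3|-|k|$ produced by the angular integration, leaving only the constant $\tfrac{\pi}{8}$. Once you use the correct value of the oscillatory integral, your Step~3 becomes trivial---there is no leftover $\min$ factor to track, and the remaining bookkeeping ($|k_j|^2/\omega'(|k_j|)\cdot\bar{\mathfrak{Q}}(\omega_j)=\mathfrak{Q}(\omega_j)$ after the change of variables) is immediate.
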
 
\begin{proof}
Analogously to Lemma~\ref{lemma:C12}, by applying the variable exchanges \( k \leftrightarrow k_1 \), \( k \leftrightarrow k_2 \), and \( k \leftrightarrow k_3 \), we obtain the representation
\[
\begin{aligned}
	\int_{\mathbb{R}^3} \mathrm{d}k\, C_{31}[f](k)\, \Xi(k) 
	= \mathfrak{c}_{31} \iiiint_{\mathbb{R}^3 \times \mathbb{R}^3 \times \mathbb{R}^3 \times \mathbb{R}^3} & \mathrm{d}k\, \mathrm{d}k_1\, \mathrm{d}k_2\, \mathrm{d}k_3\, \delta(k - k_1 - k_2 - k_3) \\
	& \times \bar{\mathfrak{Q}} \bar{\mathfrak{Q}}_1 \bar{\mathfrak{Q}}_2 \bar{\mathfrak{Q}}_3 \left| |k_1| + |k_2| + |k_3| - |k| \right|^{-1} \\
	& \times \left[ f_1 f_2 f_3 - f(f_1 f_2 + f_2 f_3 + f_3 f_1) \right] \\
	& \times \delta(\omega - \omega_1 - \omega_2 - \omega_3) \\
	& \times \left[ \Xi(\omega) - \Xi(\omega_1) - \Xi(\omega_2) - \Xi(\omega_3) \right].
\end{aligned}
\]

We then express this integral in spherical coordinates and separate the angular parts:
\[
\begin{aligned}
	\int_{\mathbb{R}^3} \mathrm{d}k\, C_{31}[f](k)\, \Xi(k) 
	= \mathfrak{c}_{31} \iiiint_{\mathbb{R}_+^4} & \mathrm{d}|k|\, \mathrm{d}|k_1|\, \mathrm{d}|k_2|\, \mathrm{d}|k_3|\, |k|^2 |k_1|^2 |k_2|^2 |k_3|^2 \, \bar{\mathfrak{Q}} \bar{\mathfrak{Q}}_1 \bar{\mathfrak{Q}}_2 \bar{\mathfrak{Q}}_3 \\
	& \times \delta(k - k_1 - k_2 - k_3) \, \delta(\omega - \omega_1 - \omega_2 - \omega_3) \, \left| |k_1| + |k_2| + |k_3| - |k| \right|^{-1} \\
	& \times \left[ f_1 f_2 f_3 - f(f_1 f_2 + f_2 f_3 + f_3 f_1) \right] \\
	& \times \left[ \Xi(\omega) - \Xi(\omega_1) - \Xi(\omega_2) - \Xi(\omega_3) \right] \\
	& \times \iiint_{(\mathbb{S}^2)^3} \mathrm{d}\mathcal{U}_1 \, \mathrm{d}\mathcal{U}_2 \, \mathrm{d}\mathcal{U}_3 \left[ \frac{1}{(2\pi)^3} \int_{\mathbb{R}^3} \mathrm{d}z\, e^{iz \cdot (k - k_1 - k_2 - k_3)} \right].
\end{aligned}
\]

Evaluating the Fourier integral and integrating over angles yields a constant \(\mathfrak{c}_{31}'>0\) such that
\[
\begin{aligned}
	\int_{\mathbb{R}^3} \mathrm{d}k\, C_{31}[f](k)\, \Xi(k)
	= \mathfrak{c}_{31}' \iiiint_{\mathbb{R}_+^4} & \mathrm{d}|k|\, \mathrm{d}|k_1|\, \mathrm{d}|k_2|\, \mathrm{d}|k_3|\, |k| |k_1| |k_2| |k_3| \, \bar{\mathfrak{Q}} \bar{\mathfrak{Q}}_1 \bar{\mathfrak{Q}}_2 \bar{\mathfrak{Q}}_3 \\
	& \times \left| |k_1| + |k_2| + |k_3| - |k| \right|^{-1} \, \delta(\omega - \omega_1 - \omega_2 - \omega_3) \\
	& \times \left[ f_1 f_2 f_3 - f(f_1 f_2 + f_2 f_3 + f_3 f_1) \right] \\
	& \times \left[ \Xi(\omega) - \Xi(\omega_1) - \Xi(\omega_2) - \Xi(\omega_3) \right] \\
	& \times \int_0^\infty \mathrm{d}y\, \frac{\sin(|k_1| y) \sin(|k_2| y) \sin(|k_3| y) \sin(|k| y)}{y^2}.
\end{aligned}
\]

By the resonance condition \(\omega = \omega_1 + \omega_2 + \omega_3\), it follows that
\[
|k| < |k_1| + |k_2| + |k_3|.
\]

The oscillatory integral can be computed explicitly as
\[
\begin{aligned}
	\int_0^\infty \mathrm{d}y\, \frac{\sin(|k_1| y) \sin(|k_2| y) \sin(|k_3| y) \sin(|k| y)}{y^2} 
	= & \frac{\pi}{16} \sum_{x_1,x_2,x_3,x_4=1}^2  (-1)^{x_1 + x_2 + x_3 + x_4 + 1} \\
	& \times \left| (-1)^{x_1} |k_1| + (-1)^{x_2} |k_2| + (-1)^{x_3} |k_3| + (-1)^{x_4} |k| \right| \\
	= & \frac{\pi}{8} \left[ |k_1| + |k_2| + |k_3| - |k| \right].
\end{aligned}
\]

Substituting back, we arrive at
\[
\begin{aligned}
	\int_{\mathbb{R}_+} \mathrm{d}\omega\, C_{31}[f](\omega)\, \Xi(\omega)\, \Gamma(\omega)
	= c_{31} \iiiint_{\mathbb{R}_+^4} & \mathrm{d}|k|\, \mathrm{d}|k_1|\, \mathrm{d}|k_2|\, \mathrm{d}|k_3|\, |k| |k_1| |k_2| |k_3| \, \bar{\mathfrak{Q}} \bar{\mathfrak{Q}}_1 \bar{\mathfrak{Q}}_2 \bar{\mathfrak{Q}}_3 \\
	& \times \delta(\omega - \omega_1 - \omega_2 - \omega_3) \\
	& \times \left[ f_1 f_2 f_3 - f(f_1 f_2 + f_2 f_3 + f_3 f_1) \right] \\
	& \times \left[ \Xi(\omega) - \Xi(\omega_1) - \Xi(\omega_2) - \Xi(\omega_3) \right],
\end{aligned}
\]
for some universal positive constant \(c_{31} > 0\). Applying the change of variables \(|k| \mapsto \omega\), \(|k_1| \mapsto \omega_1\), \(|k_2| \mapsto \omega_2\), and \(|k_3| \mapsto \omega_3\) recovers the weak formulation \eqref{Lemma:C31:1}.

\end{proof}
\subsection{Some a priori estimates}

\begin{lemma}
	\label{lemma:Apriori}  Assume that Assumption X holds.  
Let $f$ be a radial solution, in the sense of \eqref{4wavemild}, to the wave kinetic equation \eqref{4wave}. Then, for all $T^*>T > 0$, and for all $R>0$, the following estimate holds, for $\alpha$ defined in Assumption X:
\begin{equation}\label{Lemma:Apriori:1}
	\begin{aligned}
		\mathfrak{M} + \mathfrak{E} \ \ge\ & c_{12} \int_0^T \mathrm{d}t \iint_{[R,\infty)^2} \mathrm{d}\omega_1\, \mathrm{d}\omega_2\, 
		\mathfrak{P}_1\, \mathfrak{P}_2\, f_1 f_2\, \mathfrak{P}(\omega_1 + \omega_2)\, 
		\frac{\alpha(1-\alpha)\omega_1 \omega_2}{(\omega_1 + \omega_2)^{2-\alpha}} \\
		& + c_{22} \int_0^T \mathrm{d}t \iiint_{[R,\infty)^3} \mathrm{d}\omega_1\, \mathrm{d}\omega_2\, \mathrm{d}\omega\, 
		f_1 f_2 f\, \frac{\alpha(1-\alpha)(\omega_{\mathrm{Med}} - \omega_{\mathrm{Inf}})^2}{(2\omega_{\mathrm{Med}} - \omega_{\mathrm{Inf}})^{2-\alpha}}\, |k_{\mathrm{Inf}}| \\
		& \quad \times 	\mathfrak{R}(\omega_{\mathrm{Sup}})\, 	\mathfrak{R}(\omega_{\mathrm{Inf}})\, 	\mathfrak{R}(\omega_{\mathrm{Med}})\, 
			\mathfrak{R}(\omega_{\mathrm{Sup}} - \omega_{\mathrm{Inf}} + \omega_{\mathrm{Med}}) \\
		& \quad \times 	\mathfrak{R}_o(\omega_{\mathrm{Sup}},\, 	\omega_{\mathrm{Inf}}, 	\omega_{\mathrm{Med}}, 
		\omega_{\mathrm{Sup}} - \omega_{\mathrm{Inf}} + \omega_{\mathrm{Med}}) \\
		& + c_{31} \int_0^T \mathrm{d}t \iiint_{[R,\infty)^3} 
		\mathrm{d}\omega_1\, \mathrm{d}\omega_2\, \mathrm{d}\omega_3\, 
		\mathfrak{Q}(\omega_1 + \omega_2 + \omega_3)\, \mathfrak{Q}(\omega_1)\, \mathfrak{Q}(\omega_2)\, \mathfrak{Q}(\omega_3) \\
		& \quad \times f(\omega_1) f(\omega_2) f(\omega_3)\,
		\alpha(1-\alpha)\frac{\omega_1 \omega_2 + \omega_2 \omega_3 + \omega_3 \omega_1}{3(\omega_1 + \omega_2 + \omega_3)^{2-\alpha}}.
	\end{aligned}
\end{equation}

Here, $k_{\mathrm{Inf}}$ is the wavenumber associated with $\omega_{\mathrm{Inf}}$. All other notations are consistent with those introduced in \eqref{Sec:DDM:6}.

\end{lemma}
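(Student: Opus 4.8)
The plan is to test the weak formulation \eqref{4wavemild} against the concave weight $\Xi(\omega)=\omega^{\alpha}$, with $\alpha\in(0,1)$ as in Assumption~X, and to use that $\omega^{\alpha}\le 1+\omega$, so that $\int f(t)\Gamma\,\omega^{\alpha}$ is controlled by $\mathfrak{M}+\mathfrak{E}$. Since \eqref{4wavemild} only admits $\Xi\in C^{2}_{c}$ with $\Xi'(0)=0$, I would first apply it to a family $\Xi_{n}\nearrow\omega^{\alpha}$ (mollified near the origin so that $\Xi_{n}'(0)=0$ and truncated at scale $n$) and let $n\to\infty$ by monotone convergence, using that $f(t)\Gamma$ is a finite measure, the mass bound \eqref{Theorem1:2}, and the integrability of the collision contributions established below. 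This yields the telescoping identity
\[
\int_{[0,\infty)}\! f(T)\,\Gamma\,\omega^{\alpha}\,\mathrm{d}\omega - \int_{[0,\infty)}\! f_{0}\,\Gamma\,\omega^{\alpha}\,\mathrm{d}\omega = \int_{0}^{T}\!\!\int_{[0,\infty)}\!\mathbb{Q}[f]\,\omega^{\alpha}\,\Gamma\,\mathrm{d}\omega\,\mathrm{d}s ,
\]
and, since the left-hand side is $\ge -\int f_{0}\Gamma\,\omega^{\alpha}\ge -C(\mathfrak{M}+\mathfrak{E})$, we obtain $-\int_{0}^{T}\!\int\mathbb{Q}[f]\,\omega^{\alpha}\Gamma \le C(\mathfrak{M}+\mathfrak{E})$ for all $T<T^{*}$.

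Next I would insert the weak formulas of Lemmas~\ref{lemma:C12}, \ref{lemma:C22}, \ref{lemma:C31} and bound $-\int\mathbb{Q}[f]\,\omega^{\alpha}\Gamma$ from below by the coercive integrands of \eqref{Lemma:Apriori:1}. The elementary input is the quantitative concavity inequality
\[
a^{\alpha}+b^{\alpha}-(a+b)^{\alpha}=\alpha(1-\alpha)\int_{0}^{a}\!\!\int_{s}^{s+b}u^{\alpha-2}\,\mathrm{d}u\,\mathrm{d}s\ge\frac{\alpha(1-\alpha)\,ab}{(a+b)^{2-\alpha}},\qquad a,b>0,
\]
which on the resonance $\omega=\omega_{1}+\omega_{2}$ yields the $C_{12}$ gain term, and which, iterated on $\omega=\omega_{1}+\omega_{2}+\omega_{3}$, yields the $C_{31}$ gain term with the factor $\tfrac{\omega_{1}\omega_{2}+\omega_{2}\omega_{3}+\omega_{3}\omega_{1}}{3(\omega_{1}+\omega_{2}+\omega_{3})^{2-\alpha}}$. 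For $C_{22}$ I would use the combinatorial fact that on $\omega+\omega_{1}=\omega_{2}+\omega_{3}$ the largest and smallest of the four frequencies always lie on the same side of the identity while the two medians lie on the other; splitting the $(\omega,\omega_{1},\omega_{2})$-domain according to the ordering of $\{\omega,\omega_{1},\omega_{2}\}$ and relabeling through the symmetries of the kernel then turns $-\Xi(\omega)-\Xi(\omega_{1})+\Xi(\omega_{2})+\Xi(\omega_{3})$ into a sign-definite spread-reducing combination whose size is $\ge \alpha(1-\alpha)(\omega_{\mathrm{Med}}-\omega_{\mathrm{Inf}})^{2}(2\omega_{\mathrm{Med}}-\omega_{\mathrm{Inf}})^{\alpha-2}$ by the same identity, while $\min\{|k|,|k_{1}|,|k_{2}|,|k_{3}|\}$ is replaced by $|k_{\mathrm{Inf}}|$ because $\omega\mapsto|k|(\omega)$ is nondecreasing. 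Restricting the three resulting integrals to $[R,\infty)^{2}$, resp. $[R,\infty)^{3}$, only discards nonnegative terms and is harmless for a lower bound.

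The main obstacle is that Lemmas~\ref{lemma:C12} and \ref{lemma:C31} also carry the ``reverse'' (splitting) contributions --- the $-f(f_{1}+f_{2})$ piece in \eqref{Lemma:C12:1} and the $-f(f_{1}f_{2}+f_{2}f_{3}+f_{1}f_{3})$ piece in \eqref{Lemma:C31:1} --- which are nonnegative and hence enter the lower bounds for $-\int C_{12}\omega^{\alpha}\Gamma$, $-\int C_{31}\omega^{\alpha}\Gamma$ with the unfavorable sign (and likewise for the ``bad'' ordering region of $C_{22}$); these must be absorbed rather than discarded. Using $\Xi(\omega_{1})+\Xi(\omega_{2})-\Xi(\omega_{1}+\omega_{2})\le\min\{\omega_{1},\omega_{2}\}^{\alpha}$, isolating the high-frequency factor via the change of variables $\omega\mapsto\omega_{1}+\omega_{2}$, and inserting the homogeneities $\Gamma(\omega)\simeq\omega^{3\theta-1}$, $\bar{\mathfrak{P}}(\omega)\simeq\omega^{1+\varpi_{1}}$ (and the analogues for $\bar{\mathfrak{Q}},\bar{\mathfrak{R}},\bar{\mathfrak{R}}_{o}$ from \eqref{X3:3}--\eqref{X3:5}), one shows that the time integrals of these loss contributions are again absorbed into $\mathfrak{M}+\mathfrak{E}$ --- and this is exactly where the exponent inequalities of \eqref{X4} ($3\theta+2\varpi_{1}\le0$, $3\theta+2\varpi_{3}\le0$, $2\theta+2\varpi_{2}\le0$, $0<3\varpi_{1}+3\theta+\alpha$, $0<4\varpi_{3}+3\theta+\alpha$, $0<4\varpi_{2}+3+\alpha+\gamma$, $2\varpi_{2}+\theta+\gamma\ge0$, and the remaining ones) are used, being the homogeneity balances that make the relevant integrals converge when tested against $f\Gamma$ and $f\Gamma\,\omega$. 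Carrying out this bookkeeping --- identifying the ``large'' frequency in each loss term and checking the homogeneity counts --- is the heaviest part of the argument, with the $C_{22}$ case split a close second; combining the three lower bounds with the telescoping inequality of the first paragraph then gives \eqref{Lemma:Apriori:1}.
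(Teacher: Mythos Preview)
Your plan has a genuine gap at exactly the point you flag as ``the heaviest part of the argument'': the claim that the time-integrated loss contributions (the $-f(f_1+f_2)$ piece of $C_{12}$, the $-f(f_1f_2+f_2f_3+f_1f_3)$ piece of $C_{31}$, and the bad-ordering region of $C_{22}$) can be absorbed into $\mathfrak M+\mathfrak E$ via the exponent inequalities of \eqref{X4}. Those inequalities control the \emph{pointwise} growth of the kernels; they do \emph{not} furnish any bound on quantities of the form $\int_0^T\!\!\iint f(\omega_1+\omega_2)\,f(\omega_1)\,(\cdots)\,\mathrm d\omega_1\mathrm d\omega_2\,\mathrm dt$. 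The conserved/monotone quantities available here are the single-time moments $\int f\Gamma$ and $\int f\Gamma\,\omega$, not time integrals of products of $f$ with itself, so there is no mechanism to make ``loss $\le C(\mathfrak M+\mathfrak E)$'' uniformly in $T$.

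The paper's proof avoids this issue altogether by a structural cancellation rather than absorption. Two differences from your outline are essential. First, the test function is $\Xi(\omega)=((\omega-R)_++1)^\alpha$, not $\omega^\alpha$; this keeps $\Xi''$ bounded and identically zero on $[0,R]$. Second --- and this is the missing idea --- in each collision term the loss piece is brought onto the same footing as the gain piece by a change of variables (e.g.\ $(\omega_1+\omega_2,\omega_1)\mapsto(\omega_1,\omega_1-\omega_2)$ for $C_{12}$, and the analogue with $\omega_2+\omega_3$ for $C_{31}$), so that both carry the \emph{same} product $f(\omega_1)f(\omega_2)$ (resp.\ $f_1f_2f_3$). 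One is then left with
\[
f_1f_2\Big[\mathfrak P(\omega_1+\omega_2)\big(\Xi(\omega_1+\omega_2)-\Xi_1-\Xi_2\big)-\mathfrak P(\omega_1-\omega_2)\big(\Xi_1-\Xi(\omega_1-\omega_2)-\Xi_2\big)\Big],
\]
and the bracket is shown to be $\le 0$ using only $\Xi''\le 0$ together with the monotonicity of $\mathfrak P$ (and similarly $\mathfrak Q$, $\mathfrak R$), via the double-integral representation of second differences. No exponent bookkeeping is needed; in fact the conditions \eqref{X4} play no role in this lemma and are used elsewhere (global existence and the multiscale estimates). Your treatment of the $C_{22}$ spread combination and the $|k_{\mathrm{Inf}}|$ replacement is along the right lines, but the sign-definite recombination for $C_{12}$ and $C_{31}$ is what you are missing.
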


\begin{proof} 
We choose the test function to be  
\[
\Xi(\omega) = ((\omega - R)_+ + 1)^\alpha = (\max\{\omega - R, 0\} + 1)^\alpha,
\]  with $\alpha\in(0,1)$. 
By Lemmas \ref{lemma:C12}, \ref{lemma:C22}, and \ref{lemma:C31}, we compute

\begin{equation}\label{Lemma:Apriori:E1}
	\begin{aligned}
		& \int_{\mathbb{R}_+} \mathrm{d}\omega\, \partial_t f(\omega)\, \Xi(\omega)\Gamma \\
		=\ & \int_{\mathbb{R}_+} \mathrm{d}\omega\, C_{12}[f](\omega)\, \Xi(\omega) 	\Gamma 
		+ \int_{\mathbb{R}_+} \mathrm{d}\omega\, C_{22}[f](\omega)\, \Xi(\omega) \Gamma + \int_{\mathbb{R}_+} \mathrm{d}\omega\, C_{31}[f](\omega)\, \Xi(\omega) 	\Gamma \\
		=\ & c_{12} \iiint_{\mathbb{R}_+^3} \mathrm{d}\omega_1 \mathrm{d}\omega_2 \mathrm{d}\omega\, 
		\delta(\omega - \omega_1 - \omega_2)\, \mathfrak{P}\,\mathfrak{P}_1\, \mathfrak{P}_2\, \\
		& \qquad \times [f_1 f_2 - f(f_1 + f_2)]\, [\Xi - \Xi_1 - \Xi_2] \\
		& + c_{22} \iiiint_{\mathbb{R}_+^4} \mathrm{d}\omega_1 \mathrm{d}\omega_2 \mathrm{d}\omega_3 \mathrm{d}\omega\, 
		\delta(\omega + \omega_1 - \omega_2 - \omega_3) \\
		& \qquad \times 	\mathfrak{R}\,\mathfrak{R}_1\, \mathfrak{R}_2\,\mathfrak{R}_3\,
		\min\{|k|, |k_1|, |k_2|, |k_3|\}\, f_1 f_2 f\, [-\Xi - \Xi_1 + \Xi_2 + \Xi_3] \\
		& + c_{31} \iiiint_{\mathbb{R}_+^4} \mathrm{d}\omega_1 \mathrm{d}\omega_2 \mathrm{d}\omega_3 \mathrm{d}\omega\, 
		\delta(\omega - \omega_1 - \omega_2 - \omega_3) 		\mathfrak{Q}\,\mathfrak{Q}_1\, \mathfrak{Q}_2\,\mathfrak{Q}_3\, \\
		& \qquad \times 
	 \,  [f_1 f_2 f_3 - f(f_1 f_2 + f_2 f_3 + f_1 f_3)]\, 
		[\Xi - \Xi_1 - \Xi_2 - \Xi_3]\,.
	\end{aligned}
\end{equation}

	We  divide the remainder of the proof into several steps.

	{\it Step 1:} 
First, we rewrite the first term on the right-hand side of \eqref{Lemma:Apriori:E1} as

\begin{equation*}
	\begin{aligned}
		& \int_{\mathbb{R}_+} \mathrm{d}\omega\, C_{12}[f]\, \Xi(\omega)\, \Gamma(\omega) \\
		=\, & c_{12} \iint_{\mathbb{R}_+^2} \mathrm{d}\omega_1\, \mathrm{d}\omega_2\, 
		\mathfrak{P}(\omega_1 + \omega_2)\, \mathfrak{P}(\omega_1)\, \mathfrak{P}(\omega_2) \\
		& \quad \times \left[ f(\omega_1) f(\omega_2) - f(\omega_1 + \omega_2)\big(f(\omega_1) + f(\omega_2)\big) \right]  \left[ \Xi(\omega_1 + \omega_2) - \Xi(\omega_1) - \Xi(\omega_2) \right],
	\end{aligned}
\end{equation*}

which, after rearranging terms and performing the change of variables 
$(\omega_1 + \omega_2, \omega_1) \to (\omega_1, \omega_1 - \omega_2)$, can be re-expressed as

\begin{equation}\label{Lemma:Apriori:E8a}
	\begin{aligned}
		& \int_{\mathbb{R}_+} \mathrm{d}\omega\, C_{12}[f]\, \Xi(\omega)\, \Gamma(\omega) \\
		=\, & 2c_{12} \iint_{\omega_1 > \omega_2} \mathrm{d}\omega_1\, \mathrm{d}\omega_2\, 
		\mathfrak{P}(\omega_1)\, \mathfrak{P}(\omega_2)\, f(\omega_1) f(\omega_2) \\
		& \quad \times \left[ \mathfrak{P}(\omega_1 + \omega_2) \left( \Xi(\omega_1 + \omega_2) - \Xi(\omega_1) - \Xi(\omega_2) \right)\ -\ \mathfrak{P}(\omega_1 - \omega_2) \left( \Xi(\omega_1) - \Xi(\omega_1 - \omega_2) - \Xi(\omega_2) \right) \right] \\
		& + c_{12} \iint_{\omega_1 = \omega_2} \mathrm{d}\omega_1\, \mathrm{d}\omega_2\, 
		\mathfrak{P}(\omega_1)\, \mathfrak{P}(\omega_2)\, f(\omega_1) f(\omega_2)\, 
		\mathfrak{P}(2\omega_1) \left[ \Xi(2\omega_1) - 2\Xi(\omega_1) \right],
	\end{aligned}
\end{equation}

where we have used the fact that $\mathfrak{P}(0) = 0$.

Since $\Xi(0) = 0$, we write
\[
\mathfrak{P}(\omega_1 + \omega_2) \left( \Xi(\omega_1 + \omega_2) - \Xi(\omega_1) - \Xi(\omega_2) \right) 
= \mathfrak{P}(\omega_1 + \omega_2) \int_0^{\omega_2} \int_0^{\omega_1} \mathrm{d}\sigma\, \mathrm{d}\sigma_0\, \Xi''(\sigma + \sigma_0),
\]
and
\[
\mathfrak{P}(\omega_1 - \omega_2) \left( \Xi(\omega_1) - \Xi(\omega_1 - \omega_2) - \Xi(\omega_2) \right) 
= \mathfrak{P}(\omega_1 - \omega_2) \int_0^{\omega_2} \int_0^{\omega_1 - \omega_2} \mathrm{d}\sigma\, \mathrm{d}\sigma_0\, \Xi''(\sigma + \sigma_0),
\]
where $\Xi''$ is defined in the weak sense. We then deduce
\begin{equation}\label{Lemma:Apriori:E8b}
	\begin{aligned}
		& \mathfrak{P}(\omega_1 + \omega_2) \left( \Xi(\omega_1 + \omega_2) - \Xi(\omega_1) - \Xi(\omega_2) \right) \  -\ \mathfrak{P}(\omega_1 - \omega_2) \left( \Xi(\omega_1) - \Xi(\omega_1 - \omega_2) - \Xi(\omega_2) \right) \\
		=\ & \left[\mathfrak{P}(\omega_1 + \omega_2) - \mathfrak{P}(\omega_1 - \omega_2)\right] 
		\int_0^{\omega_2} \int_0^{\omega_1 - \omega_2} \mathrm{d}\sigma\, \mathrm{d}\sigma_0\, \Xi''(\sigma + \sigma_0) \\
		& +\ \mathfrak{P}(\omega_1 + \omega_2) \int_0^{\omega_2} \int_{\omega_1 - \omega_2}^{\omega_1} \mathrm{d}\sigma\, \mathrm{d}\sigma_0\, \Xi''(\sigma + \sigma_0),
	\end{aligned}
\end{equation}
which implies
	\begin{equation}\label{Lemma:Apriori:E9}
		\begin{aligned}
			& \int_{\mathbb{R}_+} \mathrm{d}\omega\, C_{12}[f]\, \Xi(\omega)\, \Gamma(\omega) \\
			=\ & 2c_{12} \iint_{\omega_1 > \omega_2} \mathrm{d}\omega_1\, \mathrm{d}\omega_2\, \mathfrak{P}(\omega_1)\, \mathfrak{P}(\omega_2)\, f(\omega_1) f(\omega_2) \\
			& \quad \times \left\{ [\mathfrak{P}(\omega_1 + \omega_2) - \mathfrak{P}(\omega_1 - \omega_2)] \int_0^{\omega_2} \int_0^{\omega_1 - \omega_2} \mathrm{d}s\, \mathrm{d}s_0\, \Xi''(\sigma + \sigma_0) \right. \\
			& \qquad \left. + \ \mathfrak{P}(\omega_1 + \omega_2) \int_0^{\omega_2} \int_{\omega_1 - \omega_2}^{\omega_1} \mathrm{d}s\, \mathrm{d}s_0\, \Xi''(\sigma + \sigma_0) \right\} \\
			& + c_{12} \iint_{\omega_1 = \omega_2} \mathrm{d}\omega_1\, \mathrm{d}\omega_2\, \mathfrak{P}(\omega_1)\, \mathfrak{P}(\omega_2)\, f(\omega_1) f(\omega_2)\, \mathfrak{P}(2\omega_1) \int_0^{\omega_1} \int_0^{\omega_1} \mathrm{d}\sigma\, \mathrm{d}\sigma_0\, \Xi''(\sigma + \sigma_0).
		\end{aligned}
	\end{equation}
Noting that 
\[
\Xi''(\sigma) = -\frac{\alpha(1-\alpha)}{\big((\sigma - R)_+ + 1\big)^{2-\alpha}} \quad \text{for } \sigma \in (R, \infty),
\]
and
\[
\Xi''(\sigma) = 0\quad \text{for } \sigma < R,
\]
we deduce from \eqref{Lemma:Apriori:E9} that

\begin{equation}\label{Lemma:Apriori:E10}
	\begin{aligned}
	&	  \int_{\mathbb{R}_+} \mathrm{d}\omega\, C_{12}[f]\, \Xi(\omega)\, \Gamma(\omega) \\ 
		\le\ & -2c_{12} \iint_{\omega_1 > \omega_2} \mathrm{d}\omega_1\, \mathrm{d}\omega_2\, \mathfrak{P}_1\, \mathfrak{P}_2\, f_1 f_2  \left\{ \left[\mathfrak{P}(\omega_1 + \omega_2) - \mathfrak{P}(\omega_1 - \omega_2)\right] 
		\cdot \frac{\alpha(1-\alpha)\omega_2(\omega_1 - \omega_2)}{(\omega_1)^{2-\alpha}}\, \chi_{(R,\infty)}(\omega_1)\right. \\
		& \quad\quad \left. +\ \mathfrak{P}(\omega_1 + \omega_2) 
		\cdot \frac{\omega_2 \omega_1}{(\omega_1 + \omega_2)^{2}}\, \chi_{(R,\infty)}(\omega_1 + \omega_2) \right\} \\
		& - c_{12} \iint_{\omega_1 = \omega_2} \mathrm{d}\omega_1\, \mathrm{d}\omega_2\, \mathfrak{P}_1\, \mathfrak{P}_2\, f_1 f_2\, 
		\mathfrak{P}(2\omega_1) \cdot \frac{\alpha(1-\alpha)\omega_1^2}{(2\omega_1)^{2-\alpha}} \chi_{(R,\infty)}(2\omega_1)\\ 
		\le\ & - c_{12} \iint_{\omega_1 \ge \omega_2\ge R} \mathrm{d}\omega_1\, \mathrm{d}\omega_2\, \mathfrak{P}_1\, \mathfrak{P}_2\, f_1 f_2  \left\{ \left[\mathfrak{P}(\omega_1 + \omega_2) - \mathfrak{P}(\omega_1 - \omega_2)\right] 
		\cdot \frac{\alpha(1-\alpha)\omega_2(\omega_1 - \omega_2)}{\omega_1^{2-\alpha}}\,\right. \\
		& \quad\quad \left. +\ \mathfrak{P}(\omega_1 + \omega_2) 
		\cdot \frac{\alpha(1-\alpha)\omega_2 \omega_1}{(\omega_1 + \omega_2)^{2-\alpha}}\, \right\},
	\end{aligned}
\end{equation}
where \(\mathfrak{P}_i := \mathfrak{P}(\omega_i)\) and \(f_i := f(\omega_i)\).

	{\it Step 2:} 
The second term on the right-hand side of \eqref{Lemma:Apriori:E1} can be rewritten as
\begin{equation*}
	\begin{aligned}
		& \int_{\mathbb{R}_+} \mathrm{d}\omega\, C_{22}[f]\, \Xi(\omega)\, \Gamma(\omega) \\
		=\, & c_{22} \iiint_{\mathbb{R}_+^3} \mathrm{d}\omega_1\, \mathrm{d}\omega_2\, \mathrm{d}\omega\, 
			\mathfrak{R}(\omega)	\mathfrak{R}(\omega_1)	\mathfrak{R}(\omega_2)	\mathfrak{R}(\omega + \omega_1 - \omega_2) \bar{\mathfrak{R}}_o(\omega,\omega_1,\omega_2,\omega+\omega_1-\omega_2)
			\\
		& \quad \times \min\big\{ |k|(\omega), |k|(\omega_1), |k|(\omega_2), |k|(\omega + \omega_1 - \omega_2) \big\}\,
		\mathbf{1}_{\omega + \omega_1 - \omega_2 \ge 0}\, f_1 f_2 f \\
		& \quad \times \big[-\Xi(\omega) - \Xi(\omega_1) + \Xi(\omega_2) + \Xi(\omega + \omega_1 - \omega_2)\big], \end{aligned}
\end{equation*}
which implies
\begin{equation}\label{Lemma:Apriori:E2}
	\begin{aligned}
	& \int_{\mathbb{R}_+} \mathrm{d}\omega\, C_{22}[f]\, \Xi(\omega)\, \Gamma(\omega) \\
		=\, & c_{22} \iiint_{\mathbb{R}_+^3} \mathrm{d}\omega_1\, \mathrm{d}\omega_2\, \mathrm{d}\omega\, f_1 f_2 f \\
		& \quad \times \Big\{ 
		\left[-\Xi(\omega_{\text{Sup}}) - \Xi(\omega_{\text{Inf}}) + \Xi(\omega_{\text{Med}}) + \Xi(\omega_{\text{Sup}} + \omega_{\text{Inf}} - \omega_{\text{Med}})\right] \\
		& \qquad \times 	\mathfrak{R}(\omega_{\text{Sup}})	\mathfrak{R}(\omega_{\text{Inf}})	\mathfrak{R}(\omega_{\text{Med}})
			\mathfrak{R}(\omega_{\text{Sup}} + \omega_{\text{Inf}} - \omega_{\text{Med}})\, |k_{\text{Inf}}| \bar{\mathfrak{R}}_o
			\\
		& \quad + \left[-\Xi(\omega_{\text{Sup}}) - \Xi(\omega_{\text{Med}}) + \Xi(\omega_{\text{Inf}}) + \Xi(\omega_{\text{Sup}} + \omega_{\text{Med}} - \omega_{\text{Inf}})\right] \\
		& \qquad \times 	\mathfrak{R}(\omega_{\text{Sup}})	\mathfrak{R}(\omega_{\text{Med}})	\Gamma(\omega_{\text{Inf}})
			\mathfrak{R}(\omega_{\text{Sup}} + \omega_{\text{Med}} - \omega_{\text{Inf}})\, |k_{\text{Inf}}|\bar{\mathfrak{R}}_o\\
		& \quad + \left[-\Xi(\omega_{\text{Inf}}) - \Xi(\omega_{\text{Med}}) + \Xi(\omega_{\text{Sup}}) + \Xi(\omega_{\text{Inf}} + \omega_{\text{Med}} - \omega_{\text{Sup}})\right] \\
		& \qquad \times 	\mathfrak{R}(\omega_{\text{Sup}})	\mathfrak{R}(\omega_{\text{Med}})	\mathfrak{R}(\omega_{\text{Inf}})\,
			\mathfrak{R}(\omega_{\text{Inf}} + \omega_{\text{Med}} - \omega_{\text{Sup}})\, 
		\mathbf{1}_{\omega_{\text{Inf}} + \omega_{\text{Med}} - \omega_{\text{Sup}} \ge 0}\bar{\mathfrak{R}}_o \\
		& \qquad \times \min\big\{ |k|(\omega_{\text{Sup}}), |k|(\omega_{\text{Inf}}), |k|(\omega_{\text{Med}}), 
		|k|(\omega_{\text{Inf}} + \omega_{\text{Med}} - \omega_{\text{Sup}}) \big\} 
		\Big\},
	\end{aligned}
\end{equation}
in which \( |k_{\text{Inf}}| \) is associated with \( \omega_{\text{Inf}} \), \( |k_{\text{Sup}}| \) with \( \omega_{\text{Sup}} \), and \( |k_{\text{Med}}| \) with \( \omega_{\text{Med}} \).

We compute
\begin{equation}\label{Lemma:Apriori:E3}
	\begin{aligned}
		& \left[-\Xi(\omega_{\text{Inf}}) - \Xi(\omega_{\text{Med}}) + \Xi(\omega_{\text{Sup}}) + \Xi(\omega_{\text{Inf}} + \omega_{\text{Med}} - \omega_{\text{Sup}})\right] \\
		=\, & \int_0^{\omega_{\text{Sup}} - \omega_{\text{Inf}}} \mathrm{d}\sigma_1 \int_0^{\omega_{\text{Sup}} - \omega_{\text{Med}}} \mathrm{d}\sigma_2\, 
		\Xi''(\sigma_1 + \sigma_2 + \omega_{\text{Inf}}) \ 
		\le\,  0,
	\end{aligned}
\end{equation}
where we have used the fact that \( \Xi'' \le 0 \).

\vspace{1em}

Next, we observe that
\begin{equation}\label{Lemma:Apriori:E4}
	\begin{aligned}
		& \left[-\Xi(\omega_{\text{Sup}}) - \Xi(\omega_{\text{Inf}}) + \Xi(\omega_{\text{Med}}) + \Xi(\omega_{\text{Sup}} + \omega_{\text{Inf}} - \omega_{\text{Med}})\right] 	\mathfrak{R}(\omega_{\text{Sup}} + \omega_{\text{Inf}} - \omega_{\text{Med}}) \\
		& + \left[-\Xi(\omega_{\text{Sup}}) - \Xi(\omega_{\text{Med}}) + \Xi(\omega_{\text{Inf}}) + \Xi(\omega_{\text{Sup}} + \omega_{\text{Med}} - \omega_{\text{Inf}})\right] 	\mathfrak{R}(\omega_{\text{Sup}} + \omega_{\text{Med}} - \omega_{\text{Inf}}) \\
		=\, & -\int_0^{\omega_{\text{Med}} - \omega_{\text{Inf}}} \mathrm{d}\sigma \int_0^{\omega_{\text{Sup}} - \omega_{\text{Med}}} \mathrm{d}\sigma_0\, 
			\mathfrak{R}(\omega_{\text{Sup}} + \omega_{\text{Inf}} - \omega_{\text{Med}})\, 
		\Xi''(\omega_{\text{Inf}} + \sigma + \sigma_0) \\
		& + \int_0^{\omega_{\text{Med}} - \omega_{\text{Inf}}} \mathrm{d}\sigma \int_0^{\omega_{\text{Sup}} - \omega_{\text{Inf}}} \mathrm{d}\sigma_0\, 
			\mathfrak{R}(\omega_{\text{Sup}} - \omega_{\text{Inf}} + \omega_{\text{Med}})\, 
		\Xi''(\omega_{\text{Inf}} + \sigma + \sigma_0) \\
		=\, & \int_0^{\omega_{\text{Med}} - \omega_{\text{Inf}}} \mathrm{d}\sigma \int_0^{\omega_{\text{Sup}} - \omega_{\text{Med}}} \mathrm{d}\sigma_0\, 
		\Xi''(\omega_{\text{Inf}} + \sigma + \sigma_0) \\
		& \quad \times \left[ 	\mathfrak{R}(\omega_{\text{Sup}} - \omega_{\text{Inf}} + \omega_{\text{Med}}) - 	\mathfrak{R}(\omega_{\text{Sup}} + \omega_{\text{Inf}} - \omega_{\text{Med}}) \right] \\
		& + \int_0^{\omega_{\text{Med}} - \omega_{\text{Inf}}} \mathrm{d}\sigma \int_0^{\omega_{\text{Med}} - \omega_{\text{Inf}}} \mathrm{d}\sigma_0\, 
			\mathfrak{R}(\omega_{\text{Sup}} - \omega_{\text{Inf}} + \omega_{\text{Med}})\, 
		\Xi''(\omega_{\text{Inf}} + \sigma + \sigma_0) \\
		\le\, & \int_0^{\omega_{\text{Med}} - \omega_{\text{Inf}}} \mathrm{d}\sigma \int_0^{\omega_{\text{Med}} - \omega_{\text{Inf}}} \mathrm{d}\sigma_0\, 
			\mathfrak{R}(\omega_{\text{Sup}} - \omega_{\text{Inf}} + \omega_{\text{Med}})\, 
		\Xi''(\omega_{\text{Inf}} + \sigma + \sigma_0) \\
		\le\, & -\int_0^{\omega_{\text{Med}} - \omega_{\text{Inf}}} \mathrm{d}\sigma \int_0^{\omega_{\text{Med}} - \omega_{\text{Inf}}} \mathrm{d}\sigma_0\, 
			\mathfrak{R}(\omega_{\text{Sup}} - \omega_{\text{Inf}} + \omega_{\text{Med}})\, 
		\frac{\alpha(1-\alpha)}{(\omega_{\text{Inf}} + \sigma + \sigma_0 )^{2-\alpha}} \chi_{(R,\infty)}(\omega_{\text{Inf}} + \sigma + \sigma_0) \\
		\le\, & -	\mathfrak{R}(\omega_{\text{Sup}} - \omega_{\text{Inf}} + \omega_{\text{Med}})\, 
		\frac{\alpha(1-\alpha)(\omega_{\text{Med}} - \omega_{\text{Inf}})^2}{(2\omega_{\text{Med}} - \omega_{\text{Inf}})^{2-\alpha }}\, 
		\chi_{(R,\infty)}(2\omega_{\text{Med}} - \omega_{\text{Inf}}).
	\end{aligned}
\end{equation}

Combining \eqref{Lemma:Apriori:E2}, \eqref{Lemma:Apriori:E3}, and \eqref{Lemma:Apriori:E4}, we obtain
\begin{equation}\label{Lemma:Apriori:E5}
	\begin{aligned}
		\int_{\mathbb{R}_+} \mathrm{d}\omega\, C_{22}[f]\, \Xi(\omega)\, \Gamma(\omega)
		\le\ & -c_{22} \iiint_{\mathbb{R}_+^3} \mathrm{d}\omega_1\, \mathrm{d}\omega_2\, \mathrm{d}\omega\, 
		f_1 f_2 f\, \frac{\alpha(1-\alpha)(\omega_{\text{Med}} - \omega_{\text{Inf}})^2}{(2\omega_{\text{Med}} - \omega_{\text{Inf}})^{2-\alpha}} 
		|k_{\text{Inf}}|\bar{\mathfrak{R}}_o \\
		& \times 	\mathfrak{R}(\omega_{\text{Sup}})\, 	\mathfrak{R}(\omega_{\text{Inf}})\, 	\mathfrak{R}(\omega_{\text{Med}})\,
			\mathfrak{R}(\omega_{\text{Sup}} - \omega_{\text{Inf}} + \omega_{\text{Med}})\,
		\chi_{(R,\infty)}(2\omega_{\text{Med}} - \omega_{\text{Inf}}).
	\end{aligned}
\end{equation}

	{\it Step 3:} 
Next, the last term on the right-hand side of \eqref{Lemma:Apriori:E1} can be rewritten as
\begin{equation*}\label{Lemma:Apriori:E11}
	\begin{aligned}
		& \int_{\mathbb{R}_+} \mathrm{d}\omega\, C_{31}[f]\, \Xi(\omega)\, \Gamma(\omega) \\
		=\, & c_{31} \iiint_{\mathbb{R}_+^3} \mathrm{d}\omega_1\, \mathrm{d}\omega_2\, \mathrm{d}\omega_3\, 
		\mathfrak{Q}(\omega_1 + \omega_2 + \omega_3)\, \mathfrak{Q}(\omega_1)\, \mathfrak{Q}(\omega_2)\, \mathfrak{Q}(\omega_3) \\
		& \quad \times \left[ f(\omega_1) f(\omega_2) f(\omega_3) 
		- f(\omega_1 + \omega_2 + \omega_3) \left( f(\omega_1) f(\omega_2) + f(\omega_2) f(\omega_3) + f(\omega_3) f(\omega_1) \right) \right] \\
		& \quad \times \left[ \Xi(\omega_1 + \omega_2 + \omega_3) - \Xi(\omega_1) - \Xi(\omega_2) - \Xi(\omega_3) \right].
	\end{aligned}
\end{equation*}

This expression can be rearranged into the form
\begin{equation*}\label{Lemma:Apriori:E12}
	\begin{aligned}
		& \int_{\mathbb{R}_+} \mathrm{d}\omega\, C_{31}[f]\, \Xi(\omega)\, \Gamma(\omega) \\
		=\, & c_{31}  \iiint_{\mathbb{R}_+^3} \mathrm{d}\omega_1\, \mathrm{d}\omega_2\, \mathrm{d}\omega_3\, 
		\mathfrak{Q}(\omega_1 + \omega_2 + \omega_3)\, \mathfrak{Q}(\omega_1)\, \mathfrak{Q}(\omega_2)\, \mathfrak{Q}(\omega_3)\, 
		f(\omega_1) f(\omega_2) f(\omega_3) \\
		& \quad \times \left[ \Xi(\omega_1 + \omega_2 + \omega_3) - \Xi(\omega_1) - \Xi(\omega_2) - \Xi(\omega_3) \right] \\
		& \quad - 3c_{31} \iiint_{\mathbb{R}_+^3} \mathrm{d}\omega_1\, \mathrm{d}\omega_2\, \mathrm{d}\omega_3\, 
		\mathfrak{Q}(\omega_1 + \omega_2 + \omega_3)\, \mathfrak{Q}(\omega_1)\, \mathfrak{Q}(\omega_2)\, \mathfrak{Q}(\omega_3) \\
		& \qquad \times f(\omega_1 + \omega_2 + \omega_3)\, f(\omega_2)\, f(\omega_3)\, 
		\left[ \Xi(\omega_1 + \omega_2 + \omega_3) - \Xi(\omega_1) - \Xi(\omega_2) - \Xi(\omega_3) \right].
	\end{aligned}
\end{equation*}

Performing the change of variables \( \omega_1 + \omega_2 + \omega_3 \to \omega_1 \), we find
\begin{equation}\label{Lemma:Apriori:E12aa}
	\begin{aligned}
		& \int_{\mathbb{R}_+} \mathrm{d}\omega\, C_{31}[f]\, \Xi(\omega)\, \Gamma(\omega) \\
		=\, & c_{31} \iiint_{\mathbb{R}_+^3} \mathrm{d}\omega_1\, \mathrm{d}\omega_2\, \mathrm{d}\omega_3\, 
		\mathfrak{Q}(\omega_1 + \omega_2 + \omega_3)\, \mathfrak{Q}(\omega_1)\, \mathfrak{Q}(\omega_2)\, \mathfrak{Q}(\omega_3)\, 
		f(\omega_1) f(\omega_2) f(\omega_3) \\
		& \quad \times \left[ \Xi(\omega_1 + \omega_2 + \omega_3) - \Xi(\omega_1) - \Xi(\omega_2) - \Xi(\omega_3) \right] \\
		& \quad - 3c_{31} \iiint_{\omega_1 - \omega_2 - \omega_3 \ge 0} \mathrm{d}\omega_1\, \mathrm{d}\omega_2\, \mathrm{d}\omega_3\, 
		\mathfrak{Q}(\omega_1 - \omega_2 - \omega_3)\, \mathfrak{Q}(\omega_1)\, \mathfrak Q(\omega_2)\, \mathfrak{Q}(\omega_3) \\
		& \qquad \times f(\omega_1) f(\omega_2) f(\omega_3)\, 
		\left[ \Xi(\omega_1) - \Xi(\omega_1 - \omega_2 - \omega_3) - \Xi(\omega_2) - \Xi(\omega_3) \right].
	\end{aligned}
\end{equation}

This implies
\begin{equation}\label{Lemma:Apriori:E12a}
	\begin{aligned}
		& \int_{\mathbb{R}_+} \mathrm{d}\omega\, C_{31}[f]\, \Xi(\omega)\, \Gamma(\omega) \\
		=\, & c_{31} \iiint_{	\{ \omega_1 \ge \omega_2 + \omega_3 \} \cup 
			\{ \omega_2 \ge \omega_1 + \omega_3 \} \cup 
			\{ \omega_3 \ge \omega_1 + \omega_2 \} } \mathrm{d}\omega_1\, \mathrm{d}\omega_2\, \mathrm{d}\omega_3\, 
		\mathfrak{Q}(\omega_1 + \omega_2 + \omega_3)\, \mathfrak{Q}(\omega_1)\, \mathfrak{Q}(\omega_2)\, \mathfrak{Q}(\omega_3)\, 
	\\ & \quad + c_{31} \iiint_{\mathbb{R}_+^3 \setminus \left( 
		\{ \omega_1 \ge \omega_2 + \omega_3 \} \cup 
		\{ \omega_2 \ge \omega_1 + \omega_3 \} \cup 
		\{ \omega_3 \ge \omega_1 + \omega_2 \} 
		\right) } \mathrm{d}\omega_1\, \mathrm{d}\omega_2\, \mathrm{d}\omega_3\, 
	\mathfrak{Q}(\omega_1 + \omega_2 + \omega_3)\, \mathfrak{Q}(\omega_1)\, \mathfrak{Q}(\omega_2)\, \mathfrak{Q}(\omega_3)\, 
	\\
		& \quad \times	f(\omega_1) f(\omega_2) f(\omega_3)  \left[ \Xi(\omega_1 + \omega_2 + \omega_3) - \Xi(\omega_1) - \Xi(\omega_2) - \Xi(\omega_3) \right] \\
		& \quad - 3c_{31} \iiint_{\omega_1 - \omega_2 - \omega_3 \ge 0} \mathrm{d}\omega_1\, \mathrm{d}\omega_2\, \mathrm{d}\omega_3\, 
		\mathfrak{Q}(\omega_1 - \omega_2 - \omega_3)\, \mathfrak{Q}(\omega_1)\, \mathfrak Q(\omega_2)\, \mathfrak{Q}(\omega_3) \\
		& \qquad \times f(\omega_1) f(\omega_2) f(\omega_3)\, 
		\left[ \Xi(\omega_1) - \Xi(\omega_1 - \omega_2 - \omega_3) - \Xi(\omega_2) - \Xi(\omega_3) \right]\\
		=\, & 3c_{31} \iiint_{\omega_1 > \omega_2 + \omega_3} \mathrm{d}\omega_1\, \mathrm{d}\omega_2\, \mathrm{d}\omega_3\, 
		\mathfrak{Q}(\omega_1)\, \mathfrak{Q}(\omega_2)\, \mathfrak{Q}(\omega_3)\, f(\omega_1) f(\omega_2) f(\omega_3) \\
		& \quad \times \left[ 
		\mathfrak{Q}(\omega_1 + \omega_2 + \omega_3) 
		\left( \Xi(\omega_1 + \omega_2 + \omega_3) - \Xi(\omega_1) - \Xi(\omega_2) - \Xi(\omega_3) \right) \right. \\
		& \qquad \left. -\ \mathfrak{Q}(\omega_1 - \omega_2 - \omega_3)\, 
		\left( \Xi(\omega_1) - \Xi(\omega_1 - \omega_2 - \omega_3) - \Xi(\omega_2) - \Xi(\omega_3) \right) 
		\right] \\
		& \quad + 3c_{31} \iiint_{\omega_1 = \omega_2 + \omega_3} \mathrm{d}\omega_1\, \mathrm{d}\omega_2\, \mathrm{d}\omega_3\, 
		\mathfrak{Q}(\omega_1)\, \mathfrak{Q}(\omega_2)\, \mathfrak{Q}(\omega_3)\, f(\omega_1) f(\omega_2) f(\omega_3) \\
		& \qquad \times \mathfrak{Q}(2\omega_1)\, 
		\left[ \Xi(2\omega_1) - \Xi(\omega_1) - \Xi(\omega_2) - \Xi(\omega_3) \right] \\
		& \quad + c_{31} \iiint_{\mathbb{R}_+^3 \setminus \left( 
			\{ \omega_1 \ge \omega_2 + \omega_3 \} \cup 
			\{ \omega_2 \ge \omega_1 + \omega_3 \} \cup 
			\{ \omega_3 \ge \omega_1 + \omega_2 \} 
			\right)} \mathrm{d}\omega_1\, \mathrm{d}\omega_2\, \mathrm{d}\omega_3\, 
		\mathfrak{Q}(\omega_1 + \omega_2 + \omega_3) \\
		& \qquad \times \mathfrak{Q}(\omega_1)\, \mathfrak{Q}(\omega_2)\, \mathfrak{Q}(\omega_3)\, 
		f(\omega_1) f(\omega_2) f(\omega_3)\, 
		\left[ \Xi(\omega_1 + \omega_2 + \omega_3) - \Xi(\omega_1) - \Xi(\omega_2) - \Xi(\omega_3) \right],
	\end{aligned}
\end{equation}
since \( \mathfrak{Q}(0) = 0 \).

Next, since \( \Xi(0) = 0 \), we compute

\begin{equation*}
	\begin{aligned}
		& \mathfrak{Q}(\omega_1 - \omega_2 - \omega_3) 
		\left( \Xi(\omega_1) - \Xi(\omega_1 - \omega_2 - \omega_3) - \Xi(\omega_2) - \Xi(\omega_3) \right) \\
		=\, & \mathfrak{Q}(\omega_1 - \omega_2 - \omega_3) \left( 
		\int_0^{\omega_1 - \omega_2 - \omega_3} \int_0^{\omega_2 + \omega_3} \mathrm{d}\sigma\, \mathrm{d}\sigma_0\, \Xi''(\sigma + \sigma_0)
		+ \int_0^{\omega_2} \int_0^{\omega_3} \mathrm{d}\sigma\, \mathrm{d}\sigma_0\, \Xi''(\sigma + \sigma_0)
		\right),
	\end{aligned}
\end{equation*}

and

\begin{equation*}
	\begin{aligned}
		& \mathfrak{Q}(\omega_1 + \omega_2 + \omega_3) 
		\left( \Xi(\omega_1 + \omega_2 + \omega_3) - \Xi(\omega_1) - \Xi(\omega_2) - \Xi(\omega_3) \right) \\
		=\, & \mathfrak{Q}(\omega_1 + \omega_2 + \omega_3) \left( 
		\int_0^{\omega_1} \int_0^{\omega_2 + \omega_3} \mathrm{d}\sigma\, \mathrm{d}\sigma_0\, \Xi''(\sigma + \sigma_0)
		+ \int_0^{\omega_2} \int_0^{\omega_3} \mathrm{d}\sigma\, \mathrm{d}\sigma_0\, \Xi''(\sigma + \sigma_0)
		\right),
	\end{aligned}
\end{equation*}
which implies
\begin{equation}\label{Lemma:Apriori:E13a}
	\begin{aligned}
		& \mathfrak{Q}(\omega_1 + \omega_2 + \omega_3) 
		\left( \Xi(\omega_1 + \omega_2 + \omega_3) - \Xi(\omega_1) - \Xi(\omega_2) - \Xi(\omega_3) \right) \\
		& \quad - \mathfrak{Q}(\omega_1 - \omega_2 - \omega_3) 
		\left( \Xi(\omega_1) - \Xi(\omega_1 - \omega_2 - \omega_3) - \Xi(\omega_2) - \Xi(\omega_3) \right) \\
		=\, & \left[ \mathfrak{Q}(\omega_1 + \omega_2 + \omega_3) - \mathfrak{Q}(\omega_1 - \omega_2 - \omega_3) \right] \\
		& \quad\times	
		\left( \int_0^{\omega_1 - \omega_2 - \omega_3} \int_0^{\omega_2 + \omega_3} \mathrm{d}\sigma\, \mathrm{d}\sigma_0\, \Xi''(\sigma + \sigma_0) 
		+ \int_0^{\omega_2} \int_0^{\omega_3} \mathrm{d}\sigma\, \mathrm{d}\sigma_0\, \Xi''(\sigma + \sigma_0) \right) \\
		& \quad + \mathfrak{Q}(\omega_1 + \omega_2 + \omega_3) 
		\int_{\omega_1 - \omega_2 - \omega_3}^{\omega_1} \int_0^{\omega_2 + \omega_3} \mathrm{d}\sigma\, \mathrm{d}\sigma_0\, \Xi''(\sigma + \sigma_0).
	\end{aligned}
\end{equation}

	Therefore, we can compute
	\begin{equation}\label{Lemma:Apriori:E13}
		\begin{aligned}
			& \int_{\mathbb{R}_+} \mathrm{d}\omega\, C_{31}[f]\, \Xi(\omega)\, \Gamma(\omega) \\
			=\, & 3c_{31} \iiint_{\omega_1 > \omega_2 + \omega_3} \mathrm{d}\omega_1\, \mathrm{d}\omega_2\, \mathrm{d}\omega_3\, 
			\mathfrak{Q}(\omega_1)\, \mathfrak{Q}(\omega_2)\, \mathfrak{Q}(\omega_3)\, f(\omega_1) f(\omega_2) f(\omega_3) \\
			& \quad \times \Big\{ \left[ \mathfrak{Q}(\omega_1 + \omega_2 + \omega_3) - \mathfrak{Q}(\omega_1 - \omega_2 - \omega_3) \right] \\
			& \qquad \times \left( \int_0^{\omega_1 - \omega_2 - \omega_3} \int_0^{\omega_2 + \omega_3} 
			\mathrm{d}\sigma\, \mathrm{d}\sigma_0\, \Xi''(\sigma + \sigma_0) 
			+ \int_0^{\omega_2} \int_0^{\omega_3} 
			\mathrm{d}\sigma\, \mathrm{d}\sigma_0\, \Xi''(\sigma + \sigma_0) \right) \\
			& \qquad +\ \mathfrak{Q}(\omega_1 + \omega_2 + \omega_3) 
			\int_{\omega_1 - \omega_2 - \omega_3}^{\omega_1} \int_0^{\omega_2 + \omega_3} 
			\mathrm{d}\sigma\, \mathrm{d}\sigma_0\, \Xi''(\sigma + \sigma_0) \Big\} \\
			& + 3c_{31} \iiint_{\omega_1 = \omega_2 + \omega_3} \mathrm{d}\omega_1\, \mathrm{d}\omega_2\, \mathrm{d}\omega_3\, 
			\mathfrak{Q}(\omega_1)\, \mathfrak{Q}(\omega_2)\, \mathfrak{Q}(\omega_3)\, f(\omega_1) f(\omega_2) f(\omega_3)\, \mathfrak{Q}(2\omega_1) \\
			& \quad \times \left( \int_0^{\omega_1} \int_0^{\omega_2 + \omega_3} 
			\mathrm{d}\sigma\, \mathrm{d}\sigma_0\, \Xi''(\sigma + \sigma_0) 
			+ \int_0^{\omega_2} \int_0^{\omega_3} 
			\mathrm{d}\sigma\, \mathrm{d}\sigma_0\, \Xi''(\sigma + \sigma_0) \right) \\
			& + c_{31} \iiint_{\mathbb{R}_+^3 \setminus \left( \{\omega_1 \ge \omega_2 + \omega_3\} \cup \{\omega_2 \ge \omega_1 + \omega_3\} \cup \{\omega_3 \ge \omega_1 + \omega_2\} \right)} 
			\mathrm{d}\omega_1\, \mathrm{d}\omega_2\, \mathrm{d}\omega_3\, 
			\mathfrak{Q}(\omega_1 + \omega_2 + \omega_3) \\
			& \quad \times \mathfrak{Q}(\omega_1)\, \mathfrak{Q}(\omega_2)\, \mathfrak{Q}(\omega_3)\, f(\omega_1) f(\omega_2) f(\omega_3) \\
			& \quad \times \left( \int_0^{\omega_1} \int_0^{\omega_2 + \omega_3} 
			\mathrm{d}\sigma\, \mathrm{d}\sigma_0\, \Xi''(\sigma + \sigma_0) 
			+ \int_0^{\omega_2} \int_0^{\omega_3} 
			\mathrm{d}\sigma\, \mathrm{d}\sigma_0\, \Xi''(\sigma + \sigma_0) \right).
		\end{aligned}
	\end{equation}
	
We now provide an upper bound for
\begin{equation}\label{Lemma:Apriori:E14}
	\begin{aligned}
		& \int_{\mathbb{R}_+} \mathrm{d}\omega\, C_{31}[f]\, \Xi(\omega)\, \Gamma(\omega) \\
		\le\ & -3c_{31}  \iiint_{\omega_1 > \omega_2 + \omega_3} 
		\mathrm{d}\omega_1\, \mathrm{d}\omega_2\, \mathrm{d}\omega_3\, 
		\mathfrak{Q}_1\, \mathfrak{Q}_2\, \mathfrak{Q}_3\, f_1 f_2 f_3 \\
		& \quad \times \left\{ \left[ \mathfrak{Q}(\omega_1 + \omega_2 + \omega_3) 
		- \mathfrak{Q}(\omega_1 - \omega_2 - \omega_3) \right] 
		\frac{\alpha(1-\alpha)(\omega_2+\omega_3)(\omega_1 - \omega_2 - \omega_3)}{\omega_1^{2-\alpha}}\, 
		\chi_{(R,\infty)}(\omega_1) \right. \\
		& \qquad + \left[ \mathfrak{Q}(\omega_1 + \omega_2 + \omega_3) 
		- \mathfrak{Q}(\omega_1 - \omega_2 - \omega_3) \right] 
		\frac{\alpha(1-\alpha)\omega_2 \omega_3}{(\omega_2+\omega_3)^{2-\alpha}}\, 
		\chi_{(R,\infty)}(\omega_2 + \omega_3) \\
		& \qquad \left. +\ \mathfrak{Q}(\omega_1 + \omega_2 + \omega_3)\, 
		\frac{\alpha(1-\alpha)(\omega_2+\omega_3)\, \omega_1}{(\omega_1 + \omega_2 + \omega_3)^{2-\alpha}}\, 
		\chi_{(R,\infty)}(\omega_1 + \omega_2 + \omega_3) \right\} \\
		& \ - 3c_{31} \iiint_{\omega_1 = \omega_2 + \omega_3} 
		\mathrm{d}\omega_1\, \mathrm{d}\omega_2\, \mathrm{d}\omega_3\, 
		\mathfrak{Q}_1\, \mathfrak{Q}_2\, \mathfrak{Q}_3\, f_1 f_2 f_3\, \mathfrak{Q}(2\omega_1)\, 
		\frac{\alpha(1-\alpha)}{(2\omega_1)^\alpha} \chi_{(R,\infty)}(2\omega_1) \\
		& - c_{31} \iiint_{\mathbb{R}_+^3 \setminus \left( \{\omega_1 \ge \omega_2 + \omega_3\} \cup \{\omega_2 \ge \omega_1 + \omega_3\} \cup \{\omega_3 \ge \omega_1 + \omega_2\} \right)} 
		\mathrm{d}\omega_1\, \mathrm{d}\omega_2\, \mathrm{d}\omega_3\, 
		\mathfrak{Q}(\omega_1 + \omega_2 + \omega_3) \\
		& \quad \times \mathfrak{Q}(\omega_1)\, \mathfrak{Q}(\omega_2)\, \mathfrak{Q}(\omega_3)\, f(\omega_1) f(\omega_2) f(\omega_3)
		\frac{\alpha(1-\alpha)(\omega_1\omega_2+\omega_2\omega_3+\omega_3\omega_1)}{(\omega_1 + \omega_2 + \omega_3)^{2-\alpha}}\chi_{(R,\infty)}(\omega_3+\omega_2+\omega_1).
	\end{aligned}
\end{equation}

	Combining \eqref{Lemma:Apriori:E1}, \eqref{Lemma:Apriori:E10}, \eqref{Lemma:Apriori:E5} and \eqref{Lemma:Apriori:E14}, we find
	\begin{equation}\label{Lemma:Apriori:E15}
		\begin{aligned}
			& \int_{\mathbb{R}} \mathrm{d}k\, f(T,\omega)\, \Xi(\omega)
			- \int_{\mathbb{R}} \mathrm{d}k\, f(0,\omega)\, \Xi(\omega) \\
			\le\ & -c_{12} \int_0^T \mathrm{d}t \iint_{\omega_1 \ge \omega_2} \mathrm{d}\omega_1\, \mathrm{d}\omega_2\, 
			\mathfrak{P}_1\, \mathfrak{P}_2\, f_1 f_2  \left\{ 
			[\mathfrak{P}(\omega_1 + \omega_2) - \mathfrak{P}(\omega_1 - \omega_2)] 
			\frac{\alpha(1-\alpha)\omega_2 (\omega_1 - \omega_2)\, }{\omega_1^{2-\alpha}} \chi_{(R,\infty)}(\omega_1)\right.\\
			&\left.\quad
			+ \mathfrak{P}(\omega_1 + \omega_2) 
			\frac{\alpha(1-\alpha)\omega_2 \omega_1\, }{(\omega_1 + \omega_2)^{2-\alpha}} 
			\chi_{(R,\infty)}(\omega_2+\omega_1)\right\} \\
			& - c_{22} \int_0^T \mathrm{d}t \iiint_{\mathbb{R}_+^3} \mathrm{d}\omega_1\, \mathrm{d}\omega_2\, \mathrm{d}\omega\,
			f_1 f_2 f\, \frac{\alpha(1-\alpha)(\omega_{\text{Med}} - \omega_{\text{Inf}})^2\,}{(2\omega_{\text{Med}} - \omega_{\text{Inf}})^{2-\alpha }}\, |k_{\text{Inf}}|\chi_{(R,\infty)}(2\omega_{\text{Med}} - \omega_{\text{Inf}}) \\
			& \quad \times	\mathfrak{R}(\omega_{\text{Sup}})\,	\mathfrak{R}(\omega_{\text{Inf}})\,	\mathfrak{R}(\omega_{\text{Med}})\,	\mathfrak{R}(\omega_{\text{Sup}} - \omega_{\text{Inf}} + \omega_{\text{Med}})   \bar{\mathfrak{R}}_o\\
			& - c_{31} \int_0^T \mathrm{d}t \iiint_{\omega_1 \ge \omega_2 + \omega_3} \mathrm{d}\omega_1\, \mathrm{d}\omega_2\, \mathrm{d}\omega_3\,
			\mathfrak{Q}_1\, \mathfrak{Q}_2\, \mathfrak{Q}_3\, f_1 f_2 f_3 \\
			& \quad \times \left\{ 
			[\mathfrak{Q}(\omega_1 + \omega_2 + \omega_3) - \mathfrak{Q}(\omega_1 - \omega_2 - \omega_3)] 
			\frac{\alpha(1-\alpha)(\omega_2+\omega_3)(\omega_1 - \omega_2 - \omega_3)\, }{\omega_1^{2-\alpha}} \chi_{(R,\infty)}(\omega_1)\right. \\
			& \qquad + [\mathfrak{Q}(\omega_1 + \omega_2 + \omega_3) - \mathfrak{Q}(\omega_1 - \omega_2 - \omega_3)] 
			\frac{\alpha(1-\alpha)\omega_2 \omega_3\, }{(\omega_2+\omega_3)^{2-\alpha}}\chi_{(R,\infty)}(\omega_2+\omega_3) \\
			& \qquad \left. + \mathfrak{Q}(\omega_1 + \omega_2 + \omega_3)
			\frac{\alpha(1-\alpha)(\omega_2+\omega_3)\omega_1\, }{(\omega_1 + \omega_2 + \omega_3)^{2-\alpha}} 
			\chi_{(R,\infty)}(\omega_1+\omega_2+\omega_3)	\right\} \\
			& - c_{31} \iiint_{\mathbb{R}_+^3 \setminus \left( \{\omega_1\ge\omega_2 + \omega_3\} \cup \{\omega_2 \ge \omega_1 + \omega_3\} \cup \{\omega_3 \ge \omega_1 + \omega_2\} \right)} 
			\mathrm{d}\omega_1\, \mathrm{d}\omega_2\, \mathrm{d}\omega_3\, 
			\mathfrak{Q}(\omega_1 + \omega_2 + \omega_3)\alpha(1-\alpha) \\
			& \quad \times \mathfrak{Q}(\omega_1)\, \mathfrak{Q}(\omega_2)\, \mathfrak{Q}(\omega_3)\, f(\omega_1) f(\omega_2) f(\omega_3)
			\frac{\omega_1\omega_2+\omega_2\omega_3+\omega_3\omega_1}{(\omega_1 + \omega_2 + \omega_3)^{2-\alpha}}\chi_{(R,\infty)}(\omega_3+\omega_2+\omega_1).
		\end{aligned}
	\end{equation}
Since \((1-
\alpha)(1 + \omega) \ge \omega^\alpha\), we bound  
\[
\int_{\mathbb{R}} \mathrm{d}k\, f(0,\omega)\, \Xi(\omega)
\ \le\ (1 + \omega) \int_{\mathbb{R}} \mathrm{d}k\, f(0,\omega)\, (1 + \omega)
\ \le\ (1 + \omega) (\mathfrak{M} + \mathfrak{E}),
\]

	which, by \eqref{Lemma:Apriori:E15}, yields \begin{equation}\label{Lemma:Apriori:E15}
		\begin{aligned}
			\mathfrak{M} + \mathfrak{E} \ \ge\ & c_{12} \int_0^T \mathrm{d}t \iint_{\omega_1 \ge \omega_2} \mathrm{d}\omega_1\, \mathrm{d}\omega_2\, 
			\mathfrak{P}_1\, \mathfrak{P}_2\, f_1 f_2 \\
			& \times \left\{ 
			[\mathfrak{P}(\omega_1 + \omega_2) - \mathfrak{P}(\omega_1 - \omega_2)] 
			\frac{\alpha(1-\alpha)\omega_2 (\omega_1 - \omega_2)}{\omega_1^{2-\alpha}} \chi_{(R,\infty)}(\omega_1)
			\right.\\
			&\left.\quad+ \mathfrak{P}(\omega_1 + \omega_2) 
			\frac{\alpha(1-\alpha)\omega_2 \omega_1 }{(\omega_1 + \omega_2)^{2-\alpha}} \chi_{(R,\infty)}(\omega_1+\omega_2)
			\right\}\\
		& + c_{22} \int_0^T \mathrm{d}t \iiint_{\mathbb{R}_+^3} \mathrm{d}\omega_1\, \mathrm{d}\omega_2\, \mathrm{d}\omega\,
		f_1 f_2 f\, \frac{\alpha(1-\alpha)(\omega_{\mathrm{Med}} - \omega_{\text{Inf}})^2}{(2\omega_{\mathrm{Med}} - \omega_{\text{Inf}})^{2 -\alpha}}\, |k_{\mathrm{Inf}}| \chi_{(R,\infty)}(2\omega_{\text{Med}} - \omega_{\text{Inf}})\\
		& \quad \times	\mathfrak{R}(\omega_{\mathrm{Sup}})\,	\mathfrak{R}(\omega_{\mathrm{Inf}})\,	\mathfrak{R}(\omega_{\mathrm{Med}})\,	\mathfrak{R}(\omega_{\mathrm{Sup}} - \omega_{\mathrm{Inf}} + \omega_{\mathrm{Med}}) \bar{\mathfrak{R}}_o \end{aligned}
	\end{equation} \begin{equation*}
		\begin{aligned}
					& +  c_{31} \int_0^T \mathrm{d}t \iiint_{\omega_1 \ge \omega_2 + \omega_3} \mathrm{d}\omega_1\, \mathrm{d}\omega_2\, \mathrm{d}\omega_3\,
			\mathfrak{Q}_1\, \mathfrak{Q}_2\, \mathfrak{Q}_3\, f_1 f_2 f_3 \\
			& \quad \times \left\{ 
			[\mathfrak{Q}(\omega_1 + \omega_2 + \omega_3) - \mathfrak{Q}(\omega_1 - \omega_2 - \omega_3)] 
			\frac{\alpha(1-\alpha)(\omega_2+\omega_3)(\omega_1 - \omega_2 - \omega_3)}{\omega_1^{2-\alpha}} \chi_{(R,\infty)}(\omega_1)\right. \\
			& \qquad + [\mathfrak{Q}(\omega_1 + \omega_2 + \omega_3) - \mathfrak{Q}(\omega_1 - \omega_2 - \omega_3)] 
			\frac{\alpha(1-\alpha)\omega_2 \omega_3}{(\omega_2+\omega_3)^{2-\alpha}} \chi_{(R,\infty)}(\omega_2+\omega_3)\\
			& \qquad \left. + \mathfrak{Q}(\omega_1 + \omega_2 + \omega_3)
			\frac{\alpha(1-\alpha)(\omega_2+\omega_3)\omega_1}{(\omega_1 + \omega_2 + \omega_3)^{2-\alpha}} \chi_{(R,\infty)}(\omega_1+\omega_2+\omega_3)
			\right\} \\
			& + c_{31} \iiint_{\mathbb{R}_+^3 \setminus \left( \{\omega_1 > \omega_2 + \omega_3\} \cup \{\omega_2 > \omega_1 + \omega_3\} \cup \{\omega_3 > \omega_1 + \omega_2\} \right)} 
			\mathrm{d}\omega_1\, \mathrm{d}\omega_2\, \mathrm{d}\omega_3\, 
			\mathfrak{Q}(\omega_1 + \omega_2 + \omega_3) \\
			& \quad \times \mathfrak{Q}(\omega_1)\, \mathfrak{Q}(\omega_2)\, \mathfrak{Q}(\omega_3)\, f(\omega_1) f(\omega_2) f(\omega_3)
			\alpha(1-\alpha)\frac{\omega_1\omega_2+\omega_2\omega_3+\omega_3\omega_1}{(\omega_1 + \omega_2 + \omega_3)^{2-\alpha}}\chi_{(R,\infty)}(\omega_3+\omega_2+\omega_1).
		\end{aligned}
	\end{equation*}
	Inequality \eqref{Lemma:Apriori:1} follows from \eqref{Lemma:Apriori:E15}.

\end{proof}

\begin{lemma}
	\label{Lemma:Apriori2} We assume Assumptions X and Y.
Let \( f \) be a radial solution, in the sense of Definition \ref{def}, to the wave kinetic equation~\eqref{4wave}. Then, for all $t$ such  that \( T^*> t > 0 \), the following a priori estimate holds:
\begin{equation} \label{Lemma:Apriori2:1}
	\partial_t \int_{[R,\infty)} \mathrm{d}\omega\, f(t,\omega)\, \Gamma(\omega)\, (\omega - R) \ge 0, \quad \forall\, R > 0.
\end{equation}

Furthermore, the total mass is uniformly bounded in time:
\begin{equation} \label{Lemma:Apriori2:2}
	\int_{\mathbb{R}^3} \mathrm{d}k\, f(t,k) \le \mathfrak{M}.
\end{equation}

\end{lemma}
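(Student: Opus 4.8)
The plan is to test the weak formulation \eqref{4wavemild} against two specially chosen weights: the constant $\Xi\equiv1$ for the mass bound \eqref{Lemma:Apriori2:2}, and $\Xi_R(\omega):=(\omega-R)_+=\max\{\omega-R,0\}$ for the monotonicity \eqref{Lemma:Apriori2:1}. Both are admissible in the same sense as the concave weight $((\omega-R)_++1)^\alpha$ used in the proof of Lemma \ref{lemma:Apriori}, since they are dominated by $1+\omega$, against which $f(t,\omega)\Gamma(\omega)$ is integrable for $t<T^*$ (then $\int f(t,\omega)\,\omega\Gamma(\omega)\,d\omega=\mathfrak E$); so I would plug them into \eqref{Lemma:Apriori:E1} directly. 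The whole argument then reduces to determining the sign of $\int C_{12}[f]\,\Xi\,\Gamma$, $\int C_{22}[f]\,\Xi\,\Gamma$ and $\int C_{31}[f]\,\Xi\,\Gamma$ for these two weights, using the weak formulations of Lemmas \ref{lemma:C12}--\ref{lemma:C31} together with the symmetrised identities \eqref{Lemma:Apriori:E8a}, \eqref{Lemma:Apriori:E2} and \eqref{Lemma:Apriori:E12a} already established in the proof of Lemma \ref{lemma:Apriori}.

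For \eqref{Lemma:Apriori2:2} take $\Xi\equiv1$, so $\partial_t\int_{\mathbb R_+}f(t,\omega)\Gamma(\omega)\,d\omega=\int C_{12}[f]\,\Gamma+\int C_{22}[f]\,\Gamma+\int C_{31}[f]\,\Gamma$. Being a $2\leftrightarrow2$ operator, $C_{22}$ conserves mass: in \eqref{Lemma:Apriori:E2} each bracket $-\Xi(\omega_{\mathrm{Sup}})-\Xi(\omega_{\mathrm{Inf}})+\Xi(\omega_{\mathrm{Med}})+\Xi(\cdot)$ equals $-1-1+1+1=0$ on the resonant configurations, and the third contribution is annihilated by its indicator $\mathbf 1_{\omega_{\mathrm{Inf}}+\omega_{\mathrm{Med}}-\omega_{\mathrm{Sup}}\ge0}$ elsewhere, so $\int C_{22}[f]\,\Gamma=0$. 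For $C_{12}$, \eqref{Lemma:Apriori:E8a} with $\Xi\equiv1$ collapses to $\int C_{12}[f]\,\Gamma=-2c_{12}\iint_{\omega_1>\omega_2}\mathfrak P_1\mathfrak P_2\,f_1 f_2\big[\mathfrak P(\omega_1+\omega_2)-\mathfrak P(\omega_1-\omega_2)\big]\,d\omega_1\,d\omega_2-c_{12}\iint_{\omega_1=\omega_2}\mathfrak P_1\mathfrak P_2\,f_1 f_2\,\mathfrak P(2\omega_1)\le0$, and \eqref{Lemma:Apriori:E12a} gives in the same way $\int C_{31}[f]\,\Gamma\le0$; both bounds use only that $\mathfrak P$ and $\mathfrak Q$ are non-negative and non-decreasing. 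Therefore $\partial_t\int f(t,\omega)\Gamma(\omega)\,d\omega\le0$, and integrating from $0$ yields \eqref{Lemma:Apriori2:2}.

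For \eqref{Lemma:Apriori2:1} take $\Xi_R(\omega)=(\omega-R)_+$: it is non-decreasing, vanishes at $0$ with $\Xi_R'(0)=0$, is convex ($\Xi_R''\ge0$ as a measure), hence superadditive, $\Xi_R(a+b)\ge\Xi_R(a)+\Xi_R(b)$. I would run the computation of the proof of Lemma \ref{lemma:Apriori} verbatim with $\Xi_R$ in place of $((\omega-R)_++1)^\alpha$: every inequality there that used $\Xi''\le0$ now reverses because $\Xi_R''\ge0$. For $C_{12}$, each symmetrised integrand in \eqref{Lemma:Apriori:E8a} equals $\mathfrak P(\omega_1+\omega_2)A-\mathfrak P(\omega_1-\omega_2)B$ with $A:=\Xi_R(\omega_1+\omega_2)-\Xi_R(\omega_1)-\Xi_R(\omega_2)$ and $B:=\Xi_R(\omega_1)-\Xi_R(\omega_1-\omega_2)-\Xi_R(\omega_2)$; here $B\ge0$ by superadditivity, and $A-B=\Xi_R(\omega_1+\omega_2)+\Xi_R(\omega_1-\omega_2)-2\Xi_R(\omega_1)\ge0$ since $\omega_1$ is the midpoint of $\omega_1\pm\omega_2$ and $\Xi_R$ is convex, so the integrand is $\ge\mathfrak P(\omega_1-\omega_2)(A-B)\ge0$; the diagonal term is $\ge0$ by superadditivity. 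For $C_{31}$ the same mechanism applied to \eqref{Lemma:Apriori:E12a} gives $\int C_{31}[f]\,\Xi_R\,\Gamma\ge0$, superadditivity of $\Xi_R$ handling also the non-resonant region and the diagonal. For $C_{22}$, the third symmetrised term in \eqref{Lemma:Apriori:E2} is $\ge0$ by \eqref{Lemma:Apriori:E3} (with $\Xi_R''\ge0$, and it vanishes where $\omega_{\mathrm{Inf}}+\omega_{\mathrm{Med}}-\omega_{\mathrm{Sup}}<0$), while the first two combine into a non-negative expression by the manipulation of \eqref{Lemma:Apriori:E4} read with $\Xi_R''\ge0$. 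Summing, $\partial_t\int_{[R,\infty)}f(t,\omega)\Gamma(\omega)(\omega-R)\,d\omega=\int\mathbb Q[f]\,\Xi_R\,\Gamma\ge0$.

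The main obstacle is the $C_{22}$ term in \eqref{Lemma:Apriori2:1}: unlike $C_{12}$ and $C_{31}$ it is not non-negative integrand by integrand, and one must carry out the pairing of \eqref{Lemma:Apriori:E4} with reversed sign. Concretely, group the two symmetrised contributions whose merged fourth frequency is $\omega_{\mathrm{Sup}}-(\omega_{\mathrm{Med}}-\omega_{\mathrm{Inf}})$ and $\omega_{\mathrm{Sup}}+(\omega_{\mathrm{Med}}-\omega_{\mathrm{Inf}})$ respectively: the first has bracket $\le0$ and weight $\mathfrak R(\omega_{\mathrm{Sup}}+\omega_{\mathrm{Inf}}-\omega_{\mathrm{Med}})\,\bar{\mathfrak R}_o$ with $\bar{\mathfrak R}_o=\omega_{\mathrm{Sup}}^\gamma$, while the second has bracket $\ge0$ of larger magnitude and weight $\mathfrak R(\omega_{\mathrm{Sup}}+\omega_{\mathrm{Med}}-\omega_{\mathrm{Inf}})\,\bar{\mathfrak R}_o$ with $\bar{\mathfrak R}_o=(\omega_{\mathrm{Sup}}+\omega_{\mathrm{Med}}-\omega_{\mathrm{Inf}})^\gamma$; monotonicity of $\mathfrak R$ and of $t\mapsto t^\gamma$ (the form $\bar{\mathfrak R}_o=\max\{\cdot\}^\gamma$ from \eqref{X3:5}) then makes the positive part dominate. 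The only other point is to make precise, exactly as in the proof of Lemma \ref{lemma:Apriori}, that \eqref{Lemma:Apriori:E1} may be used with the unbounded weight $\Xi_R$ (and with the non-compactly supported $\Xi\equiv1$); this is routine given $\Xi_R\le\omega$, $\int f(t,\omega)\,\omega\Gamma(\omega)\,d\omega=\mathfrak E$ for $t<T^*$, and the time-integrated bound of Lemma \ref{lemma:Apriori}.
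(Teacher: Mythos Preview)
Your proposal is correct and follows essentially the same route as the paper: test with $\Xi\equiv1$ for the mass bound and with $\Xi_R(\omega)=(\omega-R)_+$ for the monotonicity, then read off the signs of $\int C_{12}[f]\,\Xi\,\Gamma$, $\int C_{22}[f]\,\Xi\,\Gamma$, $\int C_{31}[f]\,\Xi\,\Gamma$ from the symmetrised identities \eqref{Lemma:Apriori:E8a}, \eqref{Lemma:Apriori:E2}, \eqref{Lemma:Apriori:E12a} using convexity ($\Xi_R''\ge0$) in place of the concavity used in Lemma~\ref{lemma:Apriori}. Your treatment of the $C_{22}$ pairing is in fact more explicit than the paper's---you track the two distinct $\bar{\mathfrak R}_o$ factors (with maxima $\omega_{\mathrm{Sup}}$ and $\omega_{\mathrm{Sup}}+\omega_{\mathrm{Med}}-\omega_{\mathrm{Inf}}$) and use their monotonicity together with that of $\mathfrak R$, whereas the paper simply refers back to ``the approach in~\eqref{Lemma:Apriori:E5}''.
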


\begin{proof} 
We choose the test function to be  
\[
\Xi(\omega) = (\omega - R)_+ = \max\{\omega - R, 0\}.
\]
Then, similarly to equation~\eqref{Lemma:Apriori:E1}, we compute
\begin{equation} \label{Lemma:Apriori2:E1}
	\begin{aligned}
		& \int_{\mathbb{R}_+} \mathrm{d}\omega\, \partial_t f(\omega)\, \Xi(\omega)\, \Gamma(\omega) \\
		=\ & \int_{\mathbb{R}_+} \mathrm{d}\omega\, C_{12}[f](\omega)\, \Xi(\omega)\, \Gamma(\omega)
		+ \int_{\mathbb{R}_+} \mathrm{d}\omega\, C_{22}[f](\omega)\, \Xi(\omega)\, \Gamma(\omega) \\
		& \quad + \int_{\mathbb{R}_+} \mathrm{d}\omega\, C_{31}[f](\omega)\, \Xi(\omega)\, \Gamma(\omega) \\
		=\ & c_{12} \iiint_{\mathbb{R}_+^3} \mathrm{d}\omega_1\, \mathrm{d}\omega_2\, \mathrm{d}\omega\,
		\delta(\omega - \omega_1 - \omega_2)\, \mathfrak{P}\, \mathfrak{P}_1\, \mathfrak{P}_2\,
		[f_1 f_2 - f(f_1 + f_2)]\, [\Xi - \Xi_1 - \Xi_2] \\
		& + c_{22} \iiiint_{\mathbb{R}_+^4} \mathrm{d}\omega_1\, \mathrm{d}\omega_2\, \mathrm{d}\omega_3\, \mathrm{d}\omega\,
		\delta(\omega + \omega_1 - \omega_2 - \omega_3)\,
		\mathfrak{R}\, \mathfrak{R}_1\, \mathfrak{R}_2\, \mathfrak{R}_3\, \\
		& \quad \times \min\{|k|, |k_1|, |k_2|, |k_3|\}\, f_1 f_2 f\, [-\Xi - \Xi_1 + \Xi_2 + \Xi_3] \\
		& + c_{31} \iiiint_{\mathbb{R}_+^4} \mathrm{d}\omega_1\, \mathrm{d}\omega_2\, \mathrm{d}\omega_3\, \mathrm{d}\omega\,
		\delta(\omega - \omega_1 - \omega_2 - \omega_3)\,
		\mathfrak{Q}\, \mathfrak{Q}_1\, \mathfrak{Q}_2\, \mathfrak{Q}_3\, \\
		& \quad \times [f_1 f_2 f_3 - f(f_1 f_2 + f_2 f_3 + f_1 f_3)]\, [\Xi - \Xi_1 - \Xi_2 - \Xi_3].
	\end{aligned}
\end{equation}

Similarly to the estimate in equation~\eqref{Lemma:Apriori:E9}, we obtain the lower bound
\begin{equation} \label{Lemma:Apriori2:E2}
	\int_{\mathbb{R}_+} \mathrm{d}\omega\, C_{12}[f](\omega)\, \Xi(\omega)\, \Gamma(\omega) \ge 0,
\end{equation}
since $\Xi(\omega)$ is convex.

Likewise, following the approach in equation~\eqref{Lemma:Apriori:E5}, we estimate
\begin{equation} \label{Lemma:Apriori2:E3}
	\int_{\mathbb{R}_+} \mathrm{d}\omega\, C_{22}[f](\omega)\, \Xi(\omega)\, \Gamma(\omega) \ge 0.
\end{equation}

Finally, as for equation~\eqref{Lemma:Apriori:E14}, we obtain
\begin{equation} \label{Lemma:Apriori2:E4}
	\int_{\mathbb{R}_+} \mathrm{d}\omega\, C_{31}[f](\omega)\, \Xi(\omega)\, \Gamma(\omega) \ge 0.
\end{equation}

Combining equations~\eqref{Lemma:Apriori2:E1}, \eqref{Lemma:Apriori2:E2}, \eqref{Lemma:Apriori2:E3}, and \eqref{Lemma:Apriori2:E4}, we obtain
\begin{equation} \label{Lemma:Apriori2:E5}
	\int_{\mathbb{R}_+} \mathrm{d}\omega\, \partial_t f(\omega)\, \Xi(\omega)\, \Gamma(\omega) \ge 0,
\end{equation}
which yields the desired result~\eqref{Lemma:Apriori2:1}.

The proof of \eqref{Lemma:Apriori2:2} is straightforward by employing the constant test function \( \Xi = 1 \).

Using equation~\eqref{Lemma:Apriori:E8a} with \( \Xi = 1 \), we obtain:
\begin{equation} \label{Lemma:Ma:E1}
	\int_{\mathbb{R}_+} \mathrm{d}\omega\, C_{12}[f](\omega)\, \Gamma(\omega) \le 0.
\end{equation}

Using equation~\eqref{Lemma:Apriori:E2} with \( \Xi = 1 \), we find:
\begin{equation} \label{Lemma:Ma:E2}
	\int_{\mathbb{R}_+} \mathrm{d}\omega\, C_{22}[f](\omega)\, \Gamma(\omega) = 0.
\end{equation}

Using equation~\eqref{Lemma:Apriori:E12a} with \( \Xi = 1 \), we have:
\begin{equation} \label{Lemma:Ma:E3}
	\int_{\mathbb{R}_+} \mathrm{d}\omega\, C_{31}[f](\omega)\, \Gamma(\omega) \le 0.
\end{equation}

Combining \eqref{Lemma:Ma:E1}--\eqref{Lemma:Ma:E3}, we deduce:
\begin{equation*}
	\begin{aligned}
		\partial_t \int_{\mathbb{R}^3} \mathrm{d}k\, f(t,k) 
		&= \int_{\mathbb{R}_+} \mathrm{d}\omega\, C_{12}[f](\omega)\, \Gamma(\omega) 
		+ \int_{\mathbb{R}_+} \mathrm{d}\omega\, C_{22}[f](\omega)\, \Gamma(\omega) \\
		&\quad + \int_{\mathbb{R}_+} \mathrm{d}\omega\, C_{31}[f](\omega)\, \Gamma(\omega) \le 0,
	\end{aligned}
\end{equation*}
which implies the mass is non-increasing:
\begin{equation*}
	\int_{\mathbb{R}^3} \mathrm{d}k\, f(t,k) \le \int_{\mathbb{R}^3} \mathrm{d}k\, f(0,k) = \mathfrak{M}.
\end{equation*}

As a consequence, the support of \( \mathfrak{F}(t,\omega) \) remains contained in the interval \( [C_*, \infty) \) for all \(T^*> t \ge 0 \).

\end{proof}

\section{Global existence}\label{Sec:Global}
The proposition below shows that equation~\eqref{4wavemild} admits a global mild solution. The proof follows a standard argument developed in~\cite{staffilani2024energy}.

\begin{proposition}
	\label{Propo:Glo} Assume that Assumption X holds. Let \( f_0(k) = f_0(|k|) \geq 0 \) be a radial, nonnegative initial datum satisfying
	\begin{equation} \label{Propo:Global:1}
		\int_{\mathbb{R}^3} f_0(k)\, \mathrm{d}k = \mathfrak{M},  \ \ \ \ \int_{\mathbb{R}^3} f_0(k)\,\omega(k) \mathrm{d}k = \mathfrak{E}.
	\end{equation}
	
	Then there exists a global in time mild solution \( f(t,k) = f(t,|k|) \) to the equation \eqref{4wave}, radial in \( k \), in the sense of  \eqref{4wavemild}. Moreover, 
	\begin{equation} \label{Propo:Global:2}
		\int_{\mathbb{R}^3} f(t,k)\, \mathrm{d}k \leq \mathfrak{M}, \quad \text{for all } t \geq 0.
	\end{equation}
	
\end{proposition}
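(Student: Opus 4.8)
\emph{Strategy.} The plan is to run the truncation and weak--compactness scheme of \cite{staffilani2024energy, staffilani2024condensation}. For each $n\in\mathbb{N}$ introduce a regularized equation obtained by (i) replacing the kernels $\bar{\mathfrak{P}},\bar{\mathfrak{Q}},\bar{\mathfrak{R}}$ by the bounded, compactly supported $\min\{\bar{\mathfrak{P}},n\}\chi_{[0,n]}$, $\min\{\bar{\mathfrak{Q}},n\}\chi_{[0,n]}$, $\min\{\bar{\mathfrak{R}},n\}\chi_{[0,n]}$, (ii) restricting every collisional integral to the region where all the frequencies involved lie in $[0,n]$, and (iii) truncating the datum to $f_{0,n}:=f_0\,\chi_{[0,n]}$. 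Denote the resulting operator $\mathbb{Q}_n$. Because its kernels are now bounded and continuous and the domain is bounded, the (weakly defined) operator $\mathbb{Q}_n$ is a bounded, locally Lipschitz vector field on the Banach space of finite signed measures on $[0,n]$, so a Cauchy--Lipschitz/Picard iteration yields a unique local mild solution $f_n$.

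\emph{Step 1: global solvability and a priori bounds for the regularized problems.} Writing $\mathbb{Q}_n[g]=\mathcal{G}_n[g]-g\,\mathcal{L}_n[g]$ with $\mathcal{G}_n[g],\mathcal{L}_n[g]\ge 0$ whenever $g\ge 0$, a standard Duhamel--Gronwall argument shows $f_n(t,\cdot)\ge 0$. The truncated analogues of the identities behind \eqref{Lemma:Ma:E1}--\eqref{Lemma:Ma:E3} remain valid — they use only monotonicity of $\mathfrak{P},\mathfrak{Q},\mathfrak{R}$, which survives the truncation, and the restriction is symmetric under the variable swaps used in Lemmas \ref{lemma:C12}--\ref{lemma:C31} — so $\partial_t\int_{[0,n]} f_n\,\Gamma\,\mathrm{d}\omega\le 0$ and hence $\int_{[0,n]} f_n(t,\omega)\Gamma(\omega)\,\mathrm{d}\omega\le\mathfrak{M}$ for all $t$; the local solution is therefore global. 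Moreover, taking $\Xi(\omega)=\omega$ (admissible on the bounded domain) annihilates the brackets $[\Xi-\Xi_1-\Xi_2]$, $[-\Xi-\Xi_1+\Xi_2+\Xi_3]$, $[\Xi-\Xi_1-\Xi_2-\Xi_3]$ under the respective frequency $\delta$'s, so energy is exactly conserved for the regularized problem and $\int_{[0,n]}\omega\, f_n(t,\omega)\Gamma(\omega)\,\mathrm{d}\omega\le\mathfrak{E}$. Finally, the uniform space--time estimate of Lemma \ref{lemma:Apriori} holds for the $f_n$ (its proof only drops nonnegative terms), as does the monotonicity of Lemma \ref{Lemma:Apriori2}.

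\emph{Step 2: compactness and passage to the limit.} Extend $\mathfrak{F}_n(t,\cdot):=f_n(t,\cdot)\Gamma(\cdot)$ by zero to $[0,\infty)$. The decisive point is that the exponent relations $3\theta+2\varpi_1\le 0$, $2\theta+2\varpi_2\le 0$, $3\theta+2\varpi_3\le 0$, $2\varpi_2+\theta+\gamma\ge 0$ of \eqref{X4}, together with \eqref{X3:3}, $\Gamma(\omega)=c_\Gamma\,\omega^{3\theta-1}$, $|k|(\omega)\sim\omega^{\theta}$ and $\bar{\mathfrak{R}}_o=\max\{\cdots\}^{\gamma}$, make the weak forms of Lemmas \ref{lemma:C12}--\ref{lemma:C31} controllable: for a fixed $\Xi\in C^2_c([0,\infty))$ with $\Xi'(0)=0$, writing $f_i\mathfrak{P}_i=\bar{\mathfrak{P}}_i\mathfrak{F}_i$ (and the $\mathfrak{Q},\mathfrak{R}$ analogues) turns $\int C_{12}[f_n]\Xi\Gamma$, $\int C_{22}[f_n]\Xi\Gamma$, $\int C_{31}[f_n]\Xi\Gamma$ into integrals of $\mathfrak{F}_n^{\otimes 2}$, resp. $\mathfrak{F}_n^{\otimes 3}$, on the resonant manifold against explicit continuous weights which, on the support of the $\Xi$-bracket, grow at most linearly in the remaining free frequency; with the mass and energy bounds this gives
\[
\Big|\int_{\mathbb{R}_+}\mathrm{d}\omega\,\mathbb{Q}_n[f_n](\omega)\,\Xi(\omega)\,\Gamma(\omega)\Big|\ \le\ C(\Xi)\big((\mathfrak{M}+\mathfrak{E})^2+(\mathfrak{M}+\mathfrak{E})^3\big),
\]
uniformly in $n$. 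Since bounded subsets of $\mathscr{R}_+([0,\infty))$ are sequentially weak-$\ast$ precompact and metrizable, the uniform mass bound plus this equicontinuity of $t\mapsto\langle\mathfrak{F}_n(t),\Xi\rangle$ let us invoke Arzel\`a--Ascoli in $C([0,T];(\mathscr{R}_+,\text{weak-}\ast))$ for each $T$; a diagonal subsequence (not relabelled) converges to $\mathfrak{F}\in C([0,\infty);(\mathscr{R}_+,\text{weak-}\ast))$ with $\|\mathfrak{F}(t)\|_{\mathscr{R}_+}\le\mathfrak{M}$, which is \eqref{Propo:Global:2}. The two boundary terms of \eqref{4wavemild} pass to the limit by weak-$\ast$ convergence; for the collision term one splits each resonant integral into the part with all frequencies $\le M$, where weak-$\ast$ convergence $\mathfrak{F}_n^{\otimes 2}\to\mathfrak{F}^{\otimes 2}$ (resp. $\mathfrak{F}_n^{\otimes 3}\to\mathfrak{F}^{\otimes 3}$) against a continuous compactly supported weight applies, and a tail that is $o_M(1)$ uniformly in $n$ by the first-moment bound $\int\omega\,\mathrm{d}\mathfrak{F}_n\le\mathfrak{E}$ (and \eqref{Lemma:Apriori:1} where a stronger tail is needed); letting $n\to\infty$ then $M\to\infty$ and integrating in $t$ by dominated convergence identifies the limit. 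The kernel truncation is then removed by monotone convergence since $\min\{\bar{\mathfrak{P}},n\}\uparrow\bar{\mathfrak{P}}$, etc., and a standard bootstrap from \eqref{4wavemild} upgrades $t\mapsto\mathfrak{F}(t,\cdot)$ to $C^1$, so $\mathfrak{F}$ is a global mild solution in the sense of Definition \ref{def}.

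\emph{Main obstacle.} The delicate part is Step 2: weak-$\ast$ convergence of $\mathfrak{F}_n$ does not by itself control the quadratic and cubic collision integrals, whose weights do \emph{not} vanish at infinity, so one must trade the exact exponent budget of \eqref{X4} (plus the energy bound) for just enough tightness of $\mathfrak{F}_n^{\otimes 2}$ and $\mathfrak{F}_n^{\otimes 3}$ against those weights; the $C_{22}$ term, which carries the extra factors $\min\{|k|,|k_1|,|k_2|,|k_3|\}$ and $\bar{\mathfrak{R}}_o=\max\{\cdots\}^{\gamma}$, is the one forcing the more restrictive conditions relating $\varpi_2,\theta,\gamma$, and the borderline cases (equality in an exponent relation of \eqref{X4}) require a slightly more careful splitting exploiting the compact support of $\Xi$ to recover the missing decay.
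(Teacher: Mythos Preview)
Your approach is essentially the same as the paper's: truncate the kernels and frequency domain, obtain local (then global) solutions by a Picard iteration with Lipschitz bounds coming from the uniform mass--energy control, and pass to the limit via Arzel\`a--Ascoli and weak-$\ast$ compactness in $\mathscr{R}_+$.

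The one substantive difference is in where the technical weight falls. You locate the main obstacle at infinity --- controlling the tails of $\mathfrak{F}_n^{\otimes 2}$ and $\mathfrak{F}_n^{\otimes 3}$ against weights that do not decay --- and this is indeed handled by the energy bound together with the exponent budget \eqref{X4}, as you say. The paper, however, spends the bulk of its Step~1 on the \emph{opposite} end: showing that the $C_{22}$ weight
\[
\frac{\min\{|k|,|k_1|,|k_2|,|k|(\omega+\omega_1-\omega_2)\}}{|k|\,|k_1|\,|k_2|}\;\bar{\mathfrak{R}}_o\;\bigl[-\Xi(\omega)-\Xi(\omega_1)+\Xi(\omega_2)+\Xi(\omega+\omega_1-\omega_2)\bigr]
\]
remains bounded as one, two, or all three of $|k|,|k_1|,|k_2|$ tend to zero. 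The prefactor has a genuine singularity there (only one of the three denominators is cancelled by the $\min$), and the compensating vanishing comes from a second-order Taylor expansion of the $\Xi$-bracket, which is where the hypothesis $\Xi'(0)=0$ in Definition~\ref{def} is actually used. The paper works through a case analysis (one small frequency; two small with the third bounded away from zero; all three small, with further subcases according to which of $|k|,|k_1|,|k_2|$ realizes the minimum). Your assertion that the weights are ``continuous'' on the support of the $\Xi$-bracket hides exactly this work. You have all the ingredients --- you assume $\Xi'(0)=0$ and flag $C_{22}$ as the hard term --- but be aware that the origin analysis, not the tail, is the step requiring the most care, and it is independent of the exponent relations in \eqref{X4}.
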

\begin{proof}  Our proof adopts the cut-off strategy approach introduced in~\cite{staffilani2024energy}.
Let $\Xi$ be a suitable test function. Applying equation ~\eqref{Lemma:Apriori2:E1}, with $\mathfrak{F}(t, \omega)=\Gamma(\omega)f(t,\omega)$, we arrive at the identity
\begin{equation}\label{Propo:Glo:E1}
	\begin{aligned}
		\int_{\mathbb{R}_+} \mathrm{d}\omega\, \partial_t \mathfrak{F}(t, \omega)\, \Xi(\omega) 		=\, \mathscr{L}_1[\Xi] + \mathscr{L}_2[\Xi] + \mathscr{L}_3[\Xi],
	\end{aligned}
\end{equation}
where the terms on the right-hand side are given by
\begin{equation}\label{Propo:Glo:E1a}
	\begin{aligned}
		\mathscr{L}_1[\Xi] :=\ 
		& c_{12} \iint_{\mathbb{R}_+^2} \mathrm{d}\omega_1\, \mathrm{d}\omega_2\,\bar{\mathfrak{P}}(\omega_1)\, \bar{\mathfrak{P}}(\omega_2)\, \Gamma(\omega_1) f(\omega_1)  \Gamma(\omega_2)f(\omega_2)\,  \Gamma(\omega_1+\omega_2)\bar{\mathfrak{P}}(\omega_1 + \omega_2) \\
		& \quad \times \left[ \Xi(\omega_1 + \omega_2) - \Xi(\omega_1) - \Xi(\omega_2) \right] \\
		& - 2c_{12} \iint_{\omega_1 \ge \omega_2} \mathrm{d}\omega_1\, \mathrm{d}\omega_2\, \bar{\mathfrak{P}}(\omega_1)\, \bar{\mathfrak{P}}(\omega_1 - \omega_2)\, \bar{\mathfrak{P}}(\omega_2)\, \Gamma(\omega_1) f(\omega_1)  \Gamma(\omega_1)f(\omega_2) \Gamma(\omega_1-\omega_2) \\
		& \quad \times \left[ \Xi(\omega_1) - \Xi(\omega_1 - \omega_2) - \Xi(\omega_2) \right],
	\end{aligned}
\end{equation}

\begin{equation}\label{Propo:Glo:E1b}
	\begin{aligned}
		\mathscr{L}_2[\Xi] :=\ 
		& c_{22} \iiint_{\mathbb{R}_+^3} \mathrm{d}\omega_1\, \mathrm{d}\omega_2\, \mathrm{d}\omega\, \bar{\mathfrak{R}}(\omega)\, \bar{\mathfrak{R}}(\omega_1)\, \bar{\mathfrak{R}}(\omega_2)\, \bar{\mathfrak{R}}(\omega+\omega_1-\omega_2)\bar{\mathfrak{R}}_o(\omega,\omega_1,\omega_2,\omega+\omega_1-\omega_2) \\
		& \quad \times \min\left\{ |k|(\omega), |k|(\omega_1), |k|(\omega_2), |k|(\omega+\omega_1-\omega_2) \right\}\frac{\Gamma(\omega+\omega_1-\omega_2)}{ |k|(\omega+\omega_1-\omega_2)} \\
		& \quad \times \frac{\Gamma(\omega)}{|k|} f(\omega)  \frac{\Gamma(\omega_1)}{|k_1|}f(\omega_1)  \frac{\Gamma(\omega_2)}{|k_2|}f(\omega_2) \left[ -\Xi(\omega) - \Xi(\omega_1) + \Xi(\omega_2) + \Xi(\omega + \omega_1 - \omega_2) \right],
	\end{aligned}
\end{equation}

\begin{equation}\label{Propo:Glo:E1c}
	\begin{aligned}
		\mathscr{L}_3[\Xi] :=\ 
		& c_{31} \iiint_{\mathbb{R}_+^3} \mathrm{d}\omega_1\, \mathrm{d}\omega_2\, \mathrm{d}\omega_3\, \bar{\mathfrak{Q}}(\omega_1 + \omega_2 + \omega_3)\, \bar{\mathfrak{Q}}(\omega_1)\, \bar{\mathfrak{Q}}(\omega_2)\, \bar{\mathfrak{Q}}(\omega_3)\Gamma(\omega_1+\omega_2+\omega_3) \\
		& \quad \times \Gamma(\omega_1) f(\omega_1) \Gamma(\omega_2)f(\omega_2)\Gamma(\omega_3) f(\omega_3)\, \left[ \Xi(\omega_1 + \omega_2 + \omega_3) - \Xi(\omega_1) - \Xi(\omega_2) - \Xi(\omega_3) \right] \\
		& - 3c_{31} \iiint_{\omega_1 \ge \omega_2 + \omega_3} \mathrm{d}\omega_1\, \mathrm{d}\omega_2\, \mathrm{d}\omega_3\, \bar{\mathfrak{Q}}(\omega_1 - \omega_2 - \omega_3)\, \bar{\mathfrak{Q}}(\omega_1)\, \bar{\mathfrak{Q}}(\omega_2)\, \bar{\mathfrak{Q}}(\omega_3)\Gamma(\omega_1-\omega_2-\omega_3) \\
		& \quad \times\Gamma(\omega_1) f(\omega_1) \Gamma(\omega_2)f(\omega_2)\Gamma(\omega_3) f(\omega_3)\, \left[ \Xi(\omega_1) - \Xi(\omega_1 - \omega_2 - \omega_3) - \Xi(\omega_2) - \Xi(\omega_3) \right].
	\end{aligned}
\end{equation}

We rewrite the above terms as follows:
\begin{equation}\label{Propo:Glo:E2a}
	\begin{aligned}
		\mathscr{L}_1 =\ 
		& c_{12} \iint_{\mathbb{R}_+^2} \mathrm{d}\omega_1\, \mathrm{d}\omega_2\, \mathfrak{F}_1 \mathfrak{F}_2\, \bar{\mathfrak{P}}(\omega_1)\, \bar{\mathfrak{P}}(\omega_2)\, \mathfrak{P}(\omega_1 + \omega_2) \\
		& \quad \times \left[ \Xi(\omega_1 + \omega_2) - \Xi(\omega_1) - \Xi(\omega_2) \right] \\
		& - 2c_{12} \iint_{\omega_1 \ge \omega_2} \mathrm{d}\omega_1\, \mathrm{d}\omega_2\, \bar{\mathfrak{P}}(\omega_1)\, \bar{\mathfrak{P}}(\omega_2)\, \mathfrak{P}(\omega_1 - \omega_2)\, \mathfrak{F}_1 \mathfrak{F}_2 \\
		& \quad \times \left[ \Xi(\omega_1) - \Xi(\omega_1 - \omega_2) - \Xi(\omega_2) \right],
	\end{aligned}
\end{equation}

\begin{equation}\label{Propo:Glo:E2b}
	\begin{aligned}
		\mathscr{L}_2 =\ 
		& c_{22} \iiint_{\mathbb{R}_+^3} \mathrm{d}\omega_1\, \mathrm{d}\omega_2\, \mathrm{d}\omega\, \mathfrak{R}(\omega + \omega_1 - \omega_2)\, \bar{\mathfrak{R}}(\omega)\, \bar{\mathfrak{R}}(\omega_1)\, \bar{\mathfrak{R}}(\omega_2) \\
		& \quad \times \mathfrak{F}_1 \mathfrak{F}_2 \mathfrak{F}\,\frac{\min\left\{ |k|, |k_1|, |k_2|, |k|(\omega + \omega_1 - \omega_2) \right\}}{|k|\, |k_1|\, |k_2|}\, \mathbf{1}_{\omega + \omega_1 \ge \omega_2}  \\
		& \quad \times \left[ -\Xi(\omega) - \Xi(\omega_1) + \Xi(\omega_2) + \Xi(\omega + \omega_1 - \omega_2) \right],
	\end{aligned}
\end{equation}

\begin{equation}\label{Propo:Glo:E2c}
	\begin{aligned}
		\mathscr{L}_3 =\ 
		& c_{31} \iiint_{\mathbb{R}_+^3} \mathrm{d}\omega_1\, \mathrm{d}\omega_2\, \mathrm{d}\omega_3\, \mathfrak{Q}(\omega_1 + \omega_2 + \omega_3)\,   \bar{\mathfrak{Q}}(\omega_1)\, \bar{\mathfrak{Q}}(\omega_2)\, \bar{\mathfrak{Q}}(\omega_3)\, \mathfrak{F}_1 \mathfrak{F}_2 \mathfrak{F}_3 \\
		& \quad \times \left[ \Xi(\omega_1 + \omega_2 + \omega_3) - \Xi(\omega_1) - \Xi(\omega_2) - \Xi(\omega_3) \right] \\
		& - 3c_{31} \iiint_{\omega_1 \ge \omega_2 + \omega_3} \mathrm{d}\omega_1\, \mathrm{d}\omega_2\, \mathrm{d}\omega_3\, \mathfrak{Q}(\omega_1 - \omega_2 - \omega_3)\,  \bar{\mathfrak{Q}}(\omega_1)\, \bar{\mathfrak{Q}}(\omega_2)\, \bar{\mathfrak{Q}}(\omega_3)\, \mathfrak{F}_1 \mathfrak{F}_2 \mathfrak{F}_3 \\
		& \quad \times \left[ \Xi(\omega_1) - \Xi(\omega_1 - \omega_2 - \omega_3) - \Xi(\omega_2) - \Xi(\omega_3) \right].
	\end{aligned}
\end{equation}

We now introduce the following notations:
\begin{equation*}
	\begin{aligned}
		\mathfrak{V}^a_{12} &= \bar{\mathfrak{P}}(\omega_1)\, \bar{\mathfrak{P}}(\omega_2)\, \mathfrak{P}(\omega_1 + \omega_2), \qquad
		\mathfrak{V}^b_{12} = \bar{\mathfrak{P}}(\omega_1)\, \bar{\mathfrak{P}}(\omega_2)\, \mathfrak{P}(\omega_1 - \omega_2),
	\end{aligned}
\end{equation*}

\begin{equation*}
	\begin{aligned}
		\mathfrak{V}_{22} =\ 
		& \mathfrak{R}(\omega + \omega_1 - \omega_2)\, \bar{\mathfrak{R}}(\omega)\, \bar{\mathfrak{R}}(\omega_1)\, \bar{\mathfrak{R}}(\omega_2) \\
		& \times \frac{\min\left\{ |k|, |k_1|, |k_2|, |k|(\omega + \omega_1 - \omega_2) \right\}}{|k|\, |k_1|\, |k_2|}\, \mathbf{1}_{\omega + \omega_1 \ge \omega_2},
	\end{aligned}
\end{equation*}

\begin{equation*}
	\begin{aligned}
		\mathfrak{V}^a_{31} &= \mathfrak{Q}(\omega_1 + \omega_2 + \omega_3)\, \bar{\mathfrak{Q}}(\omega_1)\, \bar{\mathfrak{Q}}(\omega_2)\, \bar{\mathfrak{Q}}(\omega_3), \\
		\mathfrak{V}^b_{31} &= \mathfrak{Q}(\omega_1 - \omega_2 - \omega_3)\, \bar{\mathfrak{Q}}(\omega_1)\, \bar{\mathfrak{Q}}(\omega_2)\, \bar{\mathfrak{Q}}(\omega_3).
	\end{aligned}
\end{equation*}

We then rewrite equation~\eqref{Propo:Glo:E1} as
\begin{equation}\label{Propo:Glo:E6}
	\int_{\mathbb{R}_+} \mathrm{d}\omega\, \partial_t \mathfrak{F}(t, \omega)\, \Xi(\omega) ,
\end{equation}
where
\begin{equation}\label{Propo:Glo:E6o}
\left\langle \mathfrak{Z}[\mathfrak{F}], \Xi \right\rangle := \mathscr{L}_1[\Xi] + \mathscr{L}_2[\Xi] + \mathscr{L}_3[\Xi],
\end{equation}
and
\begin{equation}\label{Propo:Glo:E6a}
	\begin{aligned}
		\mathscr{L}_1[\Xi] =\ 
		& c_{12} \iint_{\mathbb{R}_+^2} \mathrm{d}\omega_1\, \mathrm{d}\omega_2\, \mathfrak{F}_1 \mathfrak{F}_2\, \mathfrak{V}^a_{12} \left[ \Xi(\omega_1 + \omega_2) - \Xi(\omega_1) - \Xi(\omega_2) \right] \\
		& - 2c_{12} \iint_{\omega_1 \ge \omega_2} \mathrm{d}\omega_1\, \mathrm{d}\omega_2\, \mathfrak{F}_1 \mathfrak{F}_2\, \mathfrak{V}^b_{12} \left[ \Xi(\omega_1) - \Xi(\omega_1 - \omega_2) - \Xi(\omega_2) \right],
	\end{aligned}
\end{equation}

\begin{equation}\label{Propo:Glo:E6b}
	\begin{aligned}
		\mathscr{L}_2[\Xi] =\ 
		& c_{22} \iiint_{\mathbb{R}_+^3} \mathrm{d}\omega_1\, \mathrm{d}\omega_2\, \mathrm{d}\omega\, \delta(\omega + \omega_1 - \omega_2 - \omega_3)\, \mathfrak{V}_{22} \\
		& \quad \times \mathfrak{F} \mathfrak{F}_1 \mathfrak{F}_2 \left[ -\Xi(\omega) - \Xi(\omega_1) + \Xi(\omega_2) + \Xi(\omega + \omega_1 - \omega_2) \right],
	\end{aligned}
\end{equation}

\begin{equation}\label{Propo:Glo:E6c}
	\begin{aligned}
		\mathscr{L}_3[\Xi] =\ 
		& c_{31} \iiint_{\mathbb{R}_+^3} \mathrm{d}\omega_1\, \mathrm{d}\omega_2\, \mathrm{d}\omega_3\, \mathfrak{V}^a_{31}\, \mathfrak{F}_1 \mathfrak{F}_2 \mathfrak{F}_3 \\
		& \quad \times \left[ \Xi(\omega_1 + \omega_2 + \omega_3) - \Xi(\omega_1) - \Xi(\omega_2) - \Xi(\omega_3) \right] \\
		& - 3c_{31} \iiint_{\omega_1 \ge \omega_2 + \omega_3} \mathrm{d}\omega_1\, \mathrm{d}\omega_2\, \mathrm{d}\omega_3\, \mathfrak{V}^b_{31}\, \mathfrak{F}_1 \mathfrak{F}_2 \mathfrak{F}_3 \\
		& \quad \times \left[ \Xi(\omega_1) - \Xi(\omega_1 - \omega_2 - \omega_3) - \Xi(\omega_2) - \Xi(\omega_3) \right].
	\end{aligned}
\end{equation}

with initial condition
\[
\mathfrak{F}(0, \omega) = \Gamma f_0(\omega).
\]


	We now divide the proof into several smaller steps.

\textit{Step 1.}

We first observe that, on the resonance manifold associated with \( \mathfrak{V}^a_{12} \), and for \( \omega_1, \omega_2 \ge 1 \),
\begin{equation}\label{Propo:Glo:E6c:1}
	\begin{aligned}
		\mathfrak{V}^a_{12}
		&\le \Gamma(\omega_1 + \omega_2)\big(C_{\mathfrak{P}}'\big)^3(\omega_1 + \omega_2)^{\varpi_1+1}\omega_1^{\varpi_1+1}\omega_2^{\varpi_1+1} \
		&\lesssim \frac{|k_1+k_2|^2}{|k_1+k_2|^{\frac{1}{\theta}-1}}(\omega_1 + \omega_2)^{\varpi_1+1}\omega_1^{\varpi_1+1}\omega_2^{\varpi_1+1} \\
		&\lesssim \frac{(\omega_1+\omega_2)^{2\theta}}{(\omega_1+\omega_2)^{1-\theta}}(\omega_1 + \omega_2)^{\varpi_1+1}\omega_1^{\varpi_1+1}\omega_2^{\varpi_1+1} \\
		&\lesssim (\omega_1+\omega_2)^{3\theta+\varpi_1}\omega_1^{\varpi_1+1}\omega_2^{\varpi_1+1} \\
		&\lesssim \omega_1^{3\theta+2\varpi_1+1}\omega_2^{\varpi_1+1} + \omega_2^{3\theta+2\varpi_1+1}\omega_1^{\varpi_1+1}\ 
		\lesssim \ \omega_1\omega_2.
	\end{aligned}
\end{equation}

Similarly, on the corresponding resonance manifolds, we also have
\begin{equation}\label{Propo:Glo:E6c:2}
	\begin{aligned}
		\mathfrak{V}^b_{12} &\lesssim \omega_1^{3\theta+2\varpi_1+1}\omega_2^{\varpi_1+1} + \omega_2^{3\theta+2\varpi_1+1}\omega_1^{\varpi_1+1}\ 
		  \lesssim \ \omega_1\omega_2, \\
		\mathfrak{V}^a_{31} &\lesssim \omega_1^{3\theta+2\varpi_3+1} \omega_2^{\varpi_3+1} \omega_3^{\varpi_3+1} + \omega_2^{3\theta+2\varpi_3+1} \omega_1^{\varpi_3+1} \omega_3^{\varpi_3+1}\\
		&\quad + \omega_3^{3\theta+2\varpi_3+1} \omega_1^{\varpi_3+1} \omega_2^{\varpi_1+1}\ 
		  \lesssim\ \omega_1\omega_2\omega_3, \\
		\mathfrak{V}^b_{31} &\lesssim \omega_1^{3\theta+2\varpi_3+1} \omega_2^{\varpi_3+1} \omega_3^{\varpi_3+1} + \omega_2^{3\theta+2\varpi_3+1} \omega_1^{\varpi_3+1} \omega_3^{\varpi_3+1}\\
		&\quad + \omega_3^{3\theta+2\varpi_3+1} \omega_1^{\varpi_3+1} \omega_2^{\varpi_3+1}\ 
		  \lesssim\ \omega_1\omega_2\omega_3, \\
		\mathfrak{R}(\omega + \omega_1 - \omega_2)\, 
		\bar{\mathfrak{R}}(\omega)\, 
		\bar{\mathfrak{R}}(\omega_1)\, 
		\bar{\mathfrak{R}}(\omega_2) 
		&\lesssim \omega_1^{2\theta+2\varpi_2+1} \omega_2^{\varpi_2+1} \omega^{\varpi_2+1} + \omega_2^{2\theta+2\varpi_2+1} \omega_1^{\varpi_2+1} \omega^{\varpi_2+1}\\
		&\quad + \omega^{2\theta+2\varpi_2+1} \omega_1^{\varpi_2+1} \omega_2^{\varpi_2+1}\ 
		\lesssim\ \omega_1\omega_2\omega,
	\end{aligned}
\end{equation}
since \( \varpi_1 \le 0 \), \( \varpi_2 \le 0 \),  \( \varpi_3 \le 0 \), \( 3\theta + 2\varpi_1 \le 0 \), \( 2\theta + 2\varpi_2 \le 0 \), and \( 3\theta + 2\varpi_3 \le 0 \) by \eqref{X4}.

Let \( \Xi \in C^2(\mathbb{R}_+) \) be a test function with compact support.

We now turn our attention to the quantity
\begin{equation}
	\begin{aligned}
		\mathfrak{V}_*(\omega,\omega_1,\omega_2;k,k_1,k_2)
		&= \frac{\min\{\,|k|,|k_1|,|k_2|,\,|k|(\omega+\omega_1-\omega_2)\,\}}{|k|\,|k_1|\,|k_2|}
		\,\mathbf{1}_{\{\omega+\omega_1\ge \omega_2\}}
		\\
		&\quad \times \bar{\mathfrak{R}}_o\!\left(\omega,\omega_1,\omega_2,\omega+\omega_1-\omega_2\right)\,
		\bigl[-\Xi(\omega)-\Xi(\omega_1)+\Xi(\omega_2)+\Xi(\omega+\omega_1-\omega_2)\bigr].
	\end{aligned}
\end{equation}

with the goal of showing that \( \mathfrak{V} \) remains bounded on the support of the term in brackets. To this end, it suffices to verify that \( \mathfrak{V} \) is uniformly bounded in neighborhoods of the potential singularities at \( \omega = 0 \), \( \omega_1 = 0 \), and \( \omega_2 = 0 \). Near the singularities, we have the bound
\begin{equation}
	\begin{aligned}
		\mathfrak{V}_* \le \mathfrak{V}
		&:= \frac{\min\{\,|k|,|k_1|,|k_2|,\,|k|(\omega+\omega_1-\omega_2)\,\}}{|k|\,|k_1|\,|k_2|}
		\,\mathbf{1}_{\{\omega+\omega_1 \ge \omega_2\}}
		\\
		&\qquad \times \bigl[-\Xi(\omega)-\Xi(\omega_1)+\Xi(\omega_2)+\Xi(\omega+\omega_1-\omega_2)\bigr].
	\end{aligned}
\end{equation}

We begin by considering the case where exactly one of the quantities \( |k|, |k_1|, |k_2| \) approaches zero. To be more precise, we suppose \( |k| \to 0 \) while \( |k_1| \) and \( |k_2| \) remain bounded below by a fixed constant \( C_1 > 0 \) since he exact same conclusion happens if  \( |k_1| \to 0 \) or  \( |k_2| \to 0 \). In this regime, we estimate:
\[
\mathfrak{V} 
\le \frac{1 }{|k_1||k_2|} \,  
\lesssim \frac{1}{C_1^2} \,  \mathfrak{R}(\omega + \omega_1 - \omega_2)\, ,
\]
which remains bounded on the support of the expression
\[
- \Xi(\omega) - \Xi(\omega_1) + \Xi(\omega_2) + \Xi(\omega + \omega_1 - \omega_2).
\]

We now consider the case in which exactly two of the quantities \( |k|, |k_1|, |k_2| \) tend to zero, while the remaining one is bounded below by a fixed constant \( C_1 > 0 \). Due to the constraint \( \omega + \omega_1 \ge \omega_2 \) on the integration domain, the case where both \( |k| \) and \( |k_1| \) are small while \( |k_2| \ge C_1 \) is not admissible. Consequently, the only relevant case is when \( |k| \to 0 \) and \( |k_2| \to 0 \), while \( |k_1| \ge C_1 \).

To capture the behavior of the expression in brackets, we employ a second-order Taylor expansion. Specifically, we write:
\[
\begin{aligned}
	\mathfrak{V}
	=\, &\frac{\min\left\{ |k|, |k_1|, |k_2|, |k|(\omega + \omega_1 - \omega_2) \right\}}{|k|\, |k_1|\, |k_2|}\, \mathbf{1}_{\omega + \omega_1 \ge \omega_2} \Bigg[
	- \int_0^{\omega - \omega_2} \int_0^{\omega_2 + \upsilon} \Xi''(\upsilon') \, \mathrm{d}\upsilon'\, \mathrm{d}\upsilon 
	- \Xi'(0)(\omega - \omega_2) \\
	& \quad + \int_0^{\omega - \omega_2} \int_0^{\upsilon} \Xi''(\omega_1 + \upsilon') \, \mathrm{d}\upsilon'\, \mathrm{d}\upsilon 
	+ \Xi'(\omega_1)(\omega - \omega_2) 
	\Bigg].
\end{aligned}
\]

We now estimate \( \mathfrak{V} \) on the compact support of \( \Xi \). Using the smoothness of \( \Xi \), we obtain the bound
\[
\begin{aligned}
	|\mathfrak{V}| 
	\lesssim\ & \frac{\min\{ |k|, |k_1|, |k_2|,|k|(\omega + \omega_1 - \omega_2) \}}{|k|\,|k_1|\,|k_2|} \, \mathbf{1}_{\omega + \omega_1 \geq \omega_2} \,\left( \left| \Xi'(\omega_1)(\omega - \omega_2) \right| + \left| \Xi'(0)(\omega - \omega_2) \right| + C_2 (\omega^2 + \omega_2^2) \right),
\end{aligned}
\]
where the constant \( C_2 > 0 \) depends only on \( \Xi \), \( \Xi' \), and \( \Xi'' \). Simplifying the expression further, we obtain
\[
|\mathfrak{V}| 
\lesssim\ \frac{\min\{ |k|, |k_1|, |k_2|, |k|(\omega + \omega_1 - \omega_2) \}}{|k|\,|k_1|\,|k_2|} \, \mathbf{1}_{\omega + \omega_1 \geq \omega_2} \left( |\Xi'(\omega_1)|\, |\omega - \omega_2| + C_2 (\omega^2 + \omega_2^2) \right).
\]

Observe that
\[
\min\{ |k_1|, |k_2|, |k|(\omega + \omega_1 - \omega_2), |k| \} \leq \min\{ |k|, |k_2| \}.
\]
Using this observation along with {Assumption X}, we derive the estimate
\begin{equation}\label{Propo:Glo:E8}
	|\mathfrak{V}| \lesssim \frac{1}{\max\{ |k|, |k_2| \}\, |k_1|} \, \mathbf{1}_{\omega + \omega_1 \geq \omega_2} \left( |\omega - \omega_2|\, |\Xi'(\omega_1)| + \big( \max\{ |k|, |k_2| \} \big)^{\frac{2}{\theta'}} \right).
\end{equation}

We now estimate the second term on the right-hand side of~\eqref{Propo:Glo:E8}. Since \( |k_1| \geq C_1 > 0 \) by assumption, we obtain
\begin{equation}\label{Propo:Glo:E9}
	\frac{\left( \max\{ |k|, |k_2| \} \right)^{\frac{2}{\theta'}}}{\max\{ |k|, |k_2| \}\, |k_1|} 
	= \frac{ \max\{ |k|, |k_2| \}^{\frac{2}{\theta'} - 1} }{ |k_1| } 
	\le \frac{ \max\{ |k|, |k_2| \}^{\frac{2}{\theta'} - 1} }{C_1} 
	\le \frac{1}{C_1},
\end{equation}
provided that \( 1/\theta' \ge \frac12 \), so that the exponent satisfies \( \frac{2}{\theta'} - 1 \ge 0 \).

We now estimate the first term on the right-hand side of~\eqref{Propo:Glo:E8}. Using elementary bounds, we write:
\begin{equation}\label{Propo:Glo:E10}
	\begin{aligned}
		\frac{1}{\max\{ |k|, |k_2| \}\, |k_1|} \, |\omega - \omega_2| \, |\Xi'(\omega_1)| 
		&\le \frac{\max\{ \omega, \omega_2 \}}{\max\{ |k|, |k_2| \}} \cdot \frac{\omega_1}{|k_1|} \cdot \left| \frac{\Xi'(\omega_1)}{\omega_1} \right| \\
		&\le \frac{\omega \Big( \max\{ |k|, |k_2| \}\Big)}{\max\{ |k|, |k_2| \}} \cdot \frac{\omega_1}{|k_1|} \cdot \left| \frac{\Xi'(\omega_1)}{\omega_1} \right|.
	\end{aligned}
\end{equation}

We now examine the boundedness of the terms in this expression. Since $1/\theta>1$,  \( \omega / |k| \) remains bounded for \( 0 \leq |k| \leq 1 \), and because
\[
\lim_{\omega_1 \to 0} \frac{\Xi'(\omega_1)}{\omega_1} = \Xi''(0),
\]
it follows that the quantity \( \frac{\Xi'(\omega_1)}{\omega_1} \) is uniformly bounded on the support of \( \Xi \).   Therefore, the entire expression in~\eqref{Propo:Glo:E10} is uniformly bounded.
	Combining the estimates from \eqref{Propo:Glo:E8}, \eqref{Propo:Glo:E9}, and \eqref{Propo:Glo:E10}, we conclude that \( |\mathfrak{V}| \) remains uniformly bounded on the support of the integrand. This completes the bound for \( \mathfrak{V} \) in the regime where two of the frequencies are small and the third is bounded away from zero.

We now address the final case in which all three quantities \( |k|, |k_1|, |k_2| \) are simultaneously small. In this regime, we analyze the expression
\[
\begin{aligned}
	\mathfrak{V}
	=\, &  \frac{\min\{ |k|, |k_1|, |k_2| \}}{ |k|\, |k_1|\, |k_2| } \, \mathbf{1}_{\omega + \omega_1 \ge \omega_2} \int_0^{\omega - \omega_2} \int_0^{\omega_2 - \omega_1} \left| \Xi''(\omega_1 + \upsilon + \upsilon') \right| \, \mathrm{d}\upsilon' \mathrm{d}\upsilon.
\end{aligned}
\]

We estimate the magnitude of \( \mathfrak{V} \) as
\begin{equation}\label{Propo:Glo:E11}
	\left| \mathfrak{V} \right| \lesssim \frac{ \min\{ |k|, |k_1|, |k_2| \} }{ |k|\, |k_1|\, |k_2| } \, \mathbf{1}_{\omega + \omega_1 \ge \omega_2} \, |\omega - \omega_2|\, |\omega_2 - \omega_1| \, \|\Xi''\|_{L^\infty}.
\end{equation}

To estimate \( |\mathfrak{K}| \) as given in~\eqref{Propo:Glo:E11}, we consider three separate scenarios based on which among the three is the smallest.

\textit{Case 1: \( |k_2| = \min\{ |k|, |k_1|, |k_2| \} \).}  
In this setting, the inequalities \( \omega - \omega_2 \ge 0 \) and \( \omega_1 - \omega_2 \ge 0 \) hold. Using~\eqref{Propo:Glo:E11}, we estimate:
\[
\begin{aligned}
	|\mathfrak{V}|
	&\lesssim \frac{1}{|k|\, |k_1|} \cdot |\omega - \omega_2|\, |\omega_2 - \omega_1|\, \|\Xi''\|_{L^\infty} \\
	&\lesssim \frac{\omega - \omega_2}{|k|} \cdot \frac{\omega_1 - \omega_2}{|k_1|} \cdot \|\Xi''\|_{L^\infty} \\
	&\lesssim \frac{\omega}{|k|} \cdot \frac{\omega_1}{|k_1|} \cdot \|\Xi''\|_{L^\infty},
\end{aligned}
\]
which confirms that \( |\mathfrak{V}| \) remains bounded in this case.

\medskip

\textit{Case 2: \( |k_1| = \min\{ |k|, |k_1|, |k_2| \} \).}  
Again from~\eqref{Propo:Glo:E11}, we write:
\[
\begin{aligned}
	|\mathfrak{V}|
	&\lesssim \frac{1}{|k|\, |k_2|} \cdot |\omega - \omega_2|\, |\omega_2 - \omega_1|\, \|\Xi''\|_{L^\infty} \\
	&\lesssim \frac{|\omega - \omega_2|}{|k|} \cdot \frac{\omega_2 - \omega_1}{|k_2|} \cdot \|\Xi''\|_{L^\infty}.
\end{aligned}
\]
We distinguish two subcases:
\begin{itemize}
	\item If \( \omega \ge \omega_2 \), then \( \frac{|\omega - \omega_2|}{|k|} \le \frac{\omega}{|k|} \), which is bounded.
	\item If \( \omega < \omega_2 \), then \( \omega_2 - \omega \le \omega_1 \) by the function $\mathbf{1}_{\omega + \omega_1 \geq \omega_2}$, and hence \( \frac{\omega_2 - \omega}{|k|} \le \frac{\omega_1}{|k|} \), which remains controlled.
\end{itemize}
Thus, in both subcases, \( |\mathfrak{V}| \) is bounded.

\medskip

\textit{Case 3: \( |k| = \min\{ |k|, |k_1|, |k_2| \} \).}  
This time we estimate:
\[
\begin{aligned}
	|\mathfrak{V}|
	&\lesssim \frac{1}{|k_1|\, |k_2|} \cdot |\omega - \omega_2|\, |\omega_2 - \omega_1|\, \|\Xi''\|_{L^\infty} \\
	&\lesssim \frac{\omega_2 - \omega}{|k_2|} \cdot \frac{|\omega_2 - \omega_1|}{|k_1|} \cdot \|\Xi''\|_{L^\infty}.
\end{aligned}
\]
We again split into two subcases:
\begin{itemize}
	\item If \( \omega_1 \ge \omega_2 \), then \( \frac{|\omega_2 - \omega_1|}{|k_1|} \le \frac{\omega_1}{|k_1|} \) as above, which is bounded.
	\item If \( \omega_1 < \omega_2 \), then \( \omega_2 - \omega_1 \le \omega \), implying \( \frac{\omega_2 - \omega_1}{|k_1|} \le \frac{\omega}{|k_1|} \), which is also bounded.
\end{itemize}
Hence, in all cases, we conclude that \( |\mathfrak{V}| \) remains uniformly bounded.

	\textit{Step 2: Cut-off versions of \eqref{Propo:Glo:E6}.}
	
For each \( n > 0 \), we introduce truncated  versions of the collision kernels by defining:
\begin{equation}\label{Propo:Glo:E3n}
	\begin{aligned}
		\mathfrak{K}^a_{12n} &:=  \mathfrak{V}^a_{12}\, \chi_{\{\omega_1,\omega_2\le n\}}, \qquad
		\mathfrak{K}^b_{12n} :=   \mathfrak{V}^b_{12}\, \chi_{\{\omega_1,\omega_2\le n\}},
	\end{aligned}
\end{equation}

\begin{equation}\label{Propo:Glo:E4n}
	\begin{aligned}
		\mathfrak{K}_{22n} :=  \mathfrak{K}_{22}\, \chi_{\{\omega,\omega_1,\omega_2\le n\}},
	\end{aligned}
\end{equation}

\begin{equation}\label{Propo:Glo:E5n}
	\begin{aligned}
		\mathfrak{K}^a_{31n} &:=   \mathfrak{K}^a_{31}\, \chi_{\{\omega_1,\omega_2,\omega_3\le n\}}, \qquad
		\mathfrak{K}^b_{31n} :=  \mathfrak{K}^b_{31}\, \chi_{\{\omega_1,\omega_2,\omega_3\le n\}}.
	\end{aligned}
\end{equation}

We now consider the regularized (truncated) equation associated with the cut-off kernels. For each \( n > 0 \), we define the approximated equation:
\begin{equation}\label{Propo:Glo:E6n}
	\int_{\mathbb{R}_+} \mathrm{d}\omega\, \partial_t \mathfrak{F}^n(t,\omega)\, \Xi(\omega) 
	= \left\langle \mathfrak{Z}^n[\mathfrak{F}^n],\, \Xi \right\rangle 
	:= \mathscr{L}_1^n[\Xi] + \mathscr{L}_2^n[\Xi] + \mathscr{L}_3^n[\Xi],
\end{equation}
with components given by
\begin{equation}\label{Propo:Glo:E6an}
	\begin{aligned}
		\mathscr{L}_1^n[\Xi] =\ 
		& c_{12} \iint_{\mathbb{R}_+^2} \mathrm{d}\omega_1\, \mathrm{d}\omega_2\, \mathfrak{F}_1^n \mathfrak{F}_2^n\, \mathfrak{K}^a_{12n} \left[ \Xi(\omega_1 + \omega_2) - \Xi(\omega_1) - \Xi(\omega_2) \right] \\
		& - 2c_{12} \iint_{\omega_1 \ge \omega_2} \mathrm{d}\omega_1\, \mathrm{d}\omega_2\, \mathfrak{F}_1^n \mathfrak{F}_2^n\, \mathfrak{K}^b_{12n} \left[ \Xi(\omega_1) - \Xi(\omega_1 - \omega_2) - \Xi(\omega_2) \right],
	\end{aligned}
\end{equation}

\begin{equation}\label{Propo:Glo:E6bn}
	\begin{aligned}
		\mathscr{L}_2^n[\Xi] =\ 
		& c_{22} \iiint_{\mathbb{R}_+^3} \mathrm{d}\omega\, \mathrm{d}\omega_1\, \mathrm{d}\omega_2\, \delta(\omega + \omega_1 - \omega_2 - \omega_3)\, \mathfrak{K}_{22n} \\
		& \quad \times \mathfrak{F}^n \mathfrak{F}_1^n \mathfrak{F}_2^n \left[ -\Xi(\omega) - \Xi(\omega_1) + \Xi(\omega_2) + \Xi(\omega + \omega_1 - \omega_2) \right],
	\end{aligned}
\end{equation}

\begin{equation}\label{Propo:Glo:E6cn}
	\begin{aligned}
		\mathscr{L}_3^n[\Xi] =\ 
		& c_{31} \iiint_{\mathbb{R}_+^3} \mathrm{d}\omega_1\, \mathrm{d}\omega_2\, \mathrm{d}\omega_3\, \mathfrak{K}^a_{31n}\, \mathfrak{F}_1^n \mathfrak{F}_2^n \mathfrak{F}_3^n \left[ \Xi(\omega_1 + \omega_2 + \omega_3) - \Xi(\omega_1) - \Xi(\omega_2) - \Xi(\omega_3) \right] \\
		& - 3c_{31} \iiint_{\omega_1 \ge \omega_2 + \omega_3} \mathrm{d}\omega_1\, \mathrm{d}\omega_2\, \mathrm{d}\omega_3\, \mathfrak{K}^b_{31n}\, \mathfrak{F}_1^n \mathfrak{F}_2^n \mathfrak{F}_3^n \left[ \Xi(\omega_1) - \Xi(\omega_1 - \omega_2 - \omega_3) - \Xi(\omega_2) - \Xi(\omega_3) \right].
	\end{aligned}
\end{equation}

The initial condition for the regularized problem is given by
\[
\mathfrak{F}^n(0,\omega) = \Gamma f_0(\omega)\chi_{\{\omega\le n\}}.
\]

We begin by noting that the following bounds  hold  for the approximated solution \( \mathfrak{F}^n \) 
\begin{equation}\label{Propo:Glo:E12n}
	\int_{\mathbb{R}_+} \mathrm{d}\omega\, \mathfrak{F}^n(t) \,(1+\omega)
	\le \int_{\mathbb{R}_+} \mathrm{d}\omega\, \mathfrak{F}^n(0) \,(1+\omega)
	= \mathfrak{M} + \mathfrak{E}.
\end{equation}

Next, we estimate the nonlinear term \( \mathfrak{Z}^n[\mathfrak{F}^n] \) by testing it against functions \( \Xi \in L^\infty(\mathbb{R}_+) \) with \( \|\Xi\|_{L^\infty} = 1 \). We obtain
\begin{equation}\label{Propo:Glo:E12nx}
	\begin{aligned}
		\sup_{\|\Xi\|_{L^\infty} = 1} \left| \left\langle \mathfrak{Z}^n[\mathfrak{F}^n], \Xi \right\rangle \right|
		\lesssim\ 
		& \iiint_{\mathbb{R}_+^3} \mathrm{d}\omega\, \mathrm{d}\omega_1\, \mathrm{d}\omega_2\, \left| \mathfrak{F}^n(\omega)\, \mathfrak{F}^n(\omega_1)\, \mathfrak{F}^n(\omega_2) \right| \\
		& + \iint_{\mathbb{R}_+^2} \mathrm{d}\omega\, \mathrm{d}\omega_1\, \left| \mathfrak{F}^n(\omega)\, \mathfrak{F}^n(\omega_1) \right| \\
		\lesssim\ 
		& \left( \int_{\mathbb{R}_+} \mathrm{d}\omega\, |\mathfrak{F}^n(\omega)| \right)^3 
		+ \left( \int_{\mathbb{R}_+} \mathrm{d}\omega\, |\mathfrak{F}^n(\omega)| \right)^2,
	\end{aligned}
\end{equation}
where the constants on the right hand side depend on $n$.

Next, we note that the operator \( \mathfrak{Z}^n \) is Lipschitz continuous on the set
\[
S_{\mathfrak{Z}} := \left\{ F \mid F \geq 0, \quad \|F\|_{\mathscr{R}_+} \leq \mathfrak{M} \right\},
\]
where the Radon norm is defined in \eqref{Radon}.

More precisely, for any \( F, \bar{F} \in S_{\mathfrak{Z}} \), we estimate
\begin{equation*}
	\begin{aligned}
		& \sup_{\|\Xi\|_{L^\infty} = 1} \left| \left\langle \mathfrak{Z}^n[F] - \mathfrak{Z}^n[\bar{F}], \Xi \right\rangle \right| \\
		\lesssim\ & \left| c_{12} \iint_{\mathbb{R}_+^2} \mathrm{d}\omega_1 \mathrm{d}\omega_2 \left( F_1 F_2 - \bar{F}_1 \bar{F}_2 \right) \mathfrak{V}^a_{12n} \left[ \Xi(\omega_1 + \omega_2) - \Xi(\omega_1) - \Xi(\omega_2) \right] \right| \\
		& + \left| 2 c_{12} \iint_{\omega_1 \ge \omega_2} \mathrm{d}\omega_1 \mathrm{d}\omega_2\left( F_1 F_2 - \bar{F}_1 \bar{F}_2 \right) \mathfrak{V}^b_{12n} \left[ \Xi(\omega_1) - \Xi(\omega_1 - \omega_2) - \Xi(\omega_2) \right]  \right| \\
		& + \left| c_{22} \iiint_{\mathbb{R}_+^3} \delta(\omega + \omega_1 - \omega_2 - \omega_3)  \mathrm{d}\omega \mathrm{d}\omega_1 \mathrm{d}\omega_2\mathfrak{V}_{22n} \left( F F_1 F_2 - \bar{F} \bar{F}_1 \bar{F}_2 \right) \right. \\
		& \quad \times \left. \left[ -\Xi(\omega) - \Xi(\omega_1) + \Xi(\omega_2) + \Xi(\omega + \omega_1 - \omega_2) \right] \right| \\
		& + \left| c_{31} \iiint_{\mathbb{R}_+^3}\mathrm{d}\omega_1 \mathrm{d}\omega_2 \mathrm{d}\omega_3 \left( F F_1 F_2 - \bar{F} \bar{F}_1 \bar{F}_2 \right) \mathfrak{V}^a_{31n} \left[ \Xi(\omega_1 + \omega_2 + \omega_3) - \Xi(\omega_1) - \Xi(\omega_2) - \Xi(\omega_3) \right]  \right| \\
		& + 3 c_{31} \left| \iiint_{\omega_1 \ge \omega_2 + \omega_3} \mathrm{d}\omega_1 \mathrm{d}\omega_2 \mathrm{d}\omega_3 \left( F F_1 F_2 - \bar{F} \bar{F}_1 \bar{F}_2 \right) \mathfrak{V}^b_{31n} \right. \\
		& \quad \times \left. \left[ \Xi(\omega_1) - \Xi(\omega_1 - \omega_2 - \omega_3) - \Xi(\omega_2) - \Xi(\omega_3) \right] \right|,
	\end{aligned}
\end{equation*}
which implies the estimate, by \eqref{Propo:Glo:E6c:1}, \eqref{Propo:Glo:E6c:2} and  \eqref{Propo:Glo:E12n},
\begin{equation*}
	\begin{aligned}
		\sup_{\|\Xi\|_{L^\infty} = 1} \left| \left\langle \mathfrak{Z}^n[F] - \mathfrak{Z}^n[\bar{F}], \Xi \right\rangle \right|
		\lesssim\ 
		& \iiint_{\mathbb{R}_+^3} \mathrm{d}\omega_1 \mathrm{d}\omega_2 \mathrm{d}\omega  \left| F_1 F_2 F - \bar{F}_1 \bar{F}_2 \bar{F} \right|\\
		& + \iint_{\mathbb{R}_+^2} \mathrm{d}\omega_1 \mathrm{d}\omega_2\left| F_1 F_2 - \bar{F}_1 \bar{F}_2 \right|  \\
		\lesssim\ 
		& \left( (\mathfrak{M} +\mathfrak{E})  + (\mathfrak{M} +\mathfrak{E})^2 \right) \| F - \bar{F} \|_{\mathscr{R}_+}.
	\end{aligned}
\end{equation*}

The local existence of a solution \( \mathfrak{F}^n \in C^1([0, T]; \mathscr{R}_+([0, \infty))) \) to \eqref{Propo:Glo:E6n} follows from a standard fixed-point argument, valid for sufficiently small times \( T > 0 \), depending on $\mathfrak{E}+\mathfrak{M}$. 

Utilizing the a priori estimate  \eqref{Propo:Glo:E12n}, we extend the local existence result by iterating the fixed-point argument successively over intervals \( [0, T] \), \( [T, 2T] \), \( [2T, 3T] \), and so forth. Consequently, this procedure guarantees the global existence of a solution 
\[
\mathfrak{F}^n \in C^1([0, \infty); \mathscr{R}_+([0, \infty))).
\]

\textit{Step 3: Global Existence.}

Consider the sequence of solutions constructed in Step 2, \( \{\mathfrak{F}^n\} \subset C^1([0, \infty); \mathscr{R}_+([0,\infty))) \). For any test function \( \Xi \in C^2_c(\mathbb{R}_+) \), we estimate, using \eqref{Propo:Glo:E6c:1}, \eqref{Propo:Glo:E6c:2} and  \eqref{Propo:Glo:E12n},
\[
\left| \int_0^\infty\, \mathrm{d}\omega  \mathfrak{F}^n(t_2, \omega)\, \Xi(\omega)- \int_0^\infty\, \mathrm{d}\omega  \mathfrak{F}^n(t_1, \omega)\, \Xi(\omega)\right| \leq C |t_2 - t_1|,
\]
for all \( t_1, t_2 \geq 0 \), where the constant \( C > 0 \) is uniform in \( n \).

By the standard Arzela-Ascoli argument (see \cite{giri2011continuous,stewart1989global}), there exists a subsequence \( \{\mathfrak{F}^{i_n}\} \) and a limit function 
\[
\mathfrak{F}^\infty \in C^1([0, \infty); \mathscr{R}_+([0,\infty)))
\]
such that \( \mathfrak{F}^{i_n}(t) \rightharpoonup \mathfrak{F}^\infty(t) \) weakly in \( \mathscr{R}_+([0,\infty)) \) for each \( t \geq 0 \), and the convergence is uniform in \( t \) on compact intervals.

The limit function \( \mathfrak{F}^\infty \) is a mild solution of \eqref{Propo:Glo:E6} with \( n = \infty \). Moreover, the estimate  \eqref{Propo:Global:2} follows directly from the uniform bounds established for \( \mathfrak{F}^n \).

\end{proof}

	\section{The DDM framework}
\label{Sec:DDM}

In this section, we set the foundation for our domain decomposition method (DDM), which is based on partitioning the half-line \( \mathbb{R}_+ \) into small intervals~\cite{halpern2009nonlinear, Lions:1989:OSA, toselli2004domain}. This method enables a detailed divide-and-conquer analysis of the energy distribution within each subinterval, from which we derive precise estimates of the outward flow of energy.  

In contrast to the companion paper~\cite{staffilani2025condensate}, where the DDM approach is designed to study the concentration of mass near the origin, the method considered in this work is aimed at investigating the cascade of energy toward infinity.

Let \( \ell \) be a natural number such that \( \ell > 10^{10} \). Let the parameter \( \sigma \) satisfy the condition
\begin{equation}\label{sigma}\begin{aligned}
&	\frac15\min\left\{  \frac{4\varpi_{3}+3\theta+\alpha}{3},\ \frac{3\varpi_{1}+3\theta+\alpha}{2},\ \frac{4\varpi_2+\alpha+\gamma}{6},\   3\varpi_2 + 2 - 2\theta + \kappa_2 - c_{\mathrm{in}} + \gamma  \right\} > \sigma > 0,\end{aligned}
\end{equation}
where we have used \eqref{X4}.

In the sequel, we will also use another parameter, \( \epsilon >0 \), which is chosen such that
\begin{equation}\label{epsilon}
	\begin{aligned}
		10(\epsilon + \sigma)  &< 3\theta + 3\varpi_1 + 1, \\
		 10(\epsilon + \sigma)   &< 3\varpi_2 + 2 - 2\theta + \kappa_2 - c_{\mathrm{in}} +\gamma, \\
		10(\epsilon + \sigma)   &< 3\theta + 4\varpi_3 + 1.
	\end{aligned}
\end{equation}

We define  
\begin{equation} \label{Sec:DD:0}
	\Omega_{\ell} = 2^{\ell}, \quad 
	\Upsilon_{\ell} = \left\lfloor \min\left\{ \frac{\ell}{4} \left(4\varpi_2+\alpha +\gamma\right),\ \frac{\ell \epsilon}{16} \right\} \right\rfloor, \quad 
	\Delta_{\ell} = \frac{\Omega_{\ell}}{2^{\Upsilon_{\ell}}},
\end{equation}
and construct our multiscale domain decomposition for the interval \([\Omega_{\ell}, \infty)\) as follows:
\begin{equation}\label{Sec:DD:1}
	\begin{aligned}
		\text{(I) Number of subdomains:} \quad & \mathscr{O}_{\Omega_{\ell}} = 2^{\Upsilon_{\ell}}, \\[5pt]
		\text{(II) Non-overlapping subdomains:} \quad & \mathcal{D}_{2^{\Upsilon_{\ell}} -1 - i}^{\Omega_{\ell}} = \left[ \Omega_{\ell}+i \Delta_\ell, \Omega_{\ell}+(i+1) \Delta_\ell \right), \quad i = 0, \ldots, 2^{\Upsilon_{\ell}} - 2, \\[2pt]
		& \mathcal{D}_{0}^{\Omega_{\ell}} = \left[ 2\Omega_{\ell}-\Delta_\ell, \infty \right), \\[5pt]
		\text{(III) Overlapping subdomains:} \quad & \mathscr{D}^{\Omega_{\ell}}_{2^{\Upsilon_{\ell}}  - i} = \left[ \Omega_{\ell}+(i-1) \Delta_\ell, \Omega_{\ell}+(i+2) \Delta_\ell \right), \quad i = 1, \ldots, 2^{\Upsilon_\ell} - 3, \\[2pt]
		& \mathscr{D}_{0}^{\Omega_{\ell}} = \left[ 2\Omega_{\ell} - 2\Delta_\ell, \infty \right), \quad \mathscr{D}_{1}^{\Omega_{\ell}} = \left[ 2\Omega_{\ell} - 3\Delta_\ell, \infty \right), \\[2pt]
		& \mathscr{D}_{2^{\Upsilon_{\ell}} - 2}^{\Omega_{\ell}} = \left[ \Omega_{\ell},\Omega_{\ell}+ 3\Delta_\ell \right), \\[2pt]
		& \mathscr{D}_{2^{\Upsilon_{\ell}} -1}^{\Omega_{\ell}} = \left[\Omega_{\ell}, \Omega_{\ell}+2\Delta_\ell \right).
	\end{aligned}
\end{equation}

We now define, for a fixed constant \( c_o > 0 \) and \(T^*>T\ge0\)
\begin{equation}\label{Sec:DD:2}
	\mathscr{M}_\ell^T := \left\{ t \in [0, T] : 
	\int_{[\Omega_{\ell}, \infty)} \mathrm{d}\omega\, \mathfrak{F}(t, \omega) 
	\ge c_o\, \Omega_{\ell}^{-\sigma} \right\},
\end{equation}
\begin{equation}\label{Sec:DD:3}
	\mathscr{M}_{\ell,i}^T := \left\{ t \in [0, T] :
	\int_{\mathscr{D}_{i}^{\Omega_{\ell}}} \mathrm{d}\omega\, \mathfrak{F}(t, \omega)
	\ge c_o\, \Omega_{\ell+1}^{-\sigma} \right\},
\end{equation}
for \( i = 0, \ldots, 2^{\Upsilon_{\ell}} - 1 \).

We then define:
\begin{equation}\label{Sec:DD:4}
	\begin{aligned}
		\mathscr{N}_{\ell}^T &:= \mathscr{M}_{\ell}^T \setminus 
		\bigcup_{i=0}^{2^{\Upsilon_{\ell}} - 1} \mathscr{M}_{\ell,i}^T, \\
		\mathscr{P}_{\ell}^T &:= \bigcup_{i=2^{\Upsilon_{\ell} - 1} - 1}^{2^{\Upsilon_{\ell}} - 1}
		\mathscr{M}_{\ell,i}^T, \\
		\mathscr{Q}_{\ell}^T &:= \bigcup_{i=0}^{2^{\Upsilon_{\ell} - 1} - 2}
		\mathscr{M}_{\ell,i}^T.
	\end{aligned}
\end{equation}

We now outline the approach used to establish the multiscale estimates presented in Sections~\ref{Sec:First}, \ref{Sec:Second}, and \ref{Sec:Third}, all in within the context of our DDM framework.

\begin{itemize}
	\item In Section~\ref{Sec:First}, we derive an estimate for the Lebesgue measure of the set $\mathscr{P}_{\ell}^T$,   as stated in equation~\eqref{Lemma:Mutis1:1}.
	
	\item In Section~\ref{Sec:Second}, we estimate the size of the set $\mathscr{N}_{\ell}^T$, as given in~\eqref{Lemma:Mutiscale2:1}. This result relies on the bound established in~\eqref{Propo:Collision:1}. Combining the estimates for $\mathscr{P}_{\ell}^T$ and $\mathscr{N}_{\ell}^T$, we obtain a corresponding estimate for the size of the set $\mathscr{M}_{\ell}^T$.
	
	\item In Section~\ref{Sec:Third}, based on the bounds for the size of \( \mathscr{M}_{\ell}^T \), Proposition~\ref{Propo:Cascade1} demonstrates the occurrence of an immediate energy cascade. Moreover, it shows that, even under weaker assumptions on the initial data, an energy cascade still arises within finite time.
\end{itemize}

	\section{Estimating $\mathscr{P}_{\ell}^T$}\label{Sec:First}

	\begin{proposition}
		\label{Lemma:Mutis1} We assume Assumptions X and Y. Let \( T^* \) be as defined in~\eqref{T0}. We choose $0\le T<T^*$ if $T^*>0$ and $ T=0$ if $T^*=0$. By the definitions in~Assumption X,~\eqref{sigma}, and~\eqref{epsilon}, there exists a constant \( \mathfrak{C}_{\mathscr{P}} \in \mathbb{N} \) such that for all \( \ell > \mathfrak{C}_{\mathscr{P}} \) with \( \ell \in \mathbb{N} \), the following bound holds:
		\begin{equation} \label{Lemma:Mutis1:1}
			\left| \mathscr{P}_{\ell}^T \right| \le C_{\mathscr{P}}\, \Omega_{\ell}^{-c_{\mathscr{P}}},
		\end{equation}
			for some universal constant \( C_{\mathscr{P}} > 0 \), independent of $ \epsilon,\sigma,\theta, \varpi_{1},\varpi_{2},\varpi_{3},  T, T^*$ and \( c_{\mathscr{P}} \) is given by:
		
		\begin{itemize}
			\item If \( c_{12} > 0 \), \( c_{22} = c_{31} = 0 \), then
			\begin{equation} \label{Lemma:Mutis1:1_case}
				c_{\mathscr{P}} := -(\epsilon + 2\sigma -  3\theta - 3\varpi_1 - 1).
			\end{equation}
			
			\item If \( c_{22} > 0 \),   then
			\begin{equation} \label{Lemma:Mutis1:2_case}
				c_{\mathscr{P}} := -({\epsilon + \sigma- 3\varpi_2 - 2 + 2\theta - \kappa_2 + c_{\mathrm{in}} -\gamma}).
			\end{equation}
			
			\item If \( c_{31} > 0 \), \( c_{12} = c_{22} = 0 \), then
			\begin{equation} \label{Lemma:Mutis1:3_case}
				c_{\mathscr{P}} := -(\epsilon + 3\sigma -  3\theta - 4\varpi_3 - 1).
			\end{equation}

			\item If \( c_{12} > 0 \), \( c_{31} > 0 \), \( c_{22} = 0 \), then
			\begin{equation} \label{Lemma:Mutis1:7_case}
				c_{\mathscr{P}} := -\min\left\{\epsilon + 2\sigma -  3\theta - 3\varpi_1 - 1,\epsilon + 3\sigma -  3\theta - 4\varpi_3 - 1
				\right\}.
			\end{equation}

		\end{itemize}

	\end{proposition}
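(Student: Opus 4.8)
\emph{Plan.} The plan is to feed the mass concentration encoded in $\mathscr{P}_\ell^T$ into the a priori dissipation inequality~\eqref{Lemma:Apriori:1} of Lemma~\ref{lemma:Apriori}, discarding the nonnegative terms of any inactive collision operator. A second ingredient is a uniform-in-time lower bound for the high-frequency energy tail of $\mathfrak{F}$: combining the monotonicity $\partial_t\int_{[R,\infty)}\mathfrak{F}(t,\omega)(\omega-R)\,\mathrm{d}\omega\ge0$ of Lemma~\ref{Lemma:Apriori2} with the initial lower bound~\eqref{Theorem1:4} gives, for every $R>r_0$ and $t\in[0,T^*)$,
\[
\int_{[R,\infty)}\mathfrak{F}(t,\omega)(\omega-R)\,\mathrm{d}\omega\ \ge\ \int_{[R,\infty)}\mathfrak{F}(0,\omega)(\omega-R)\,\mathrm{d}\omega\ \ge\ \tfrac12\!\int_{[2R,\infty)}\!\omega\,\mathfrak{F}(0,\omega)\,\mathrm{d}\omega\ \gtrsim\ R^{-c_{\mathrm{in}}};
\]
together with conservation of energy on $[0,T^*)$ and the mass bound $\int_{\mathbb{R}_+}\mathfrak{F}\,\mathrm{d}\omega\lesssim\mathfrak{M}$, a weighted dyadic pigeonhole over $[R,\infty)$ then supplies, at each time $t$, a scale $\Lambda\ge R$ carrying a quantified amount of $\mathfrak{F}$-mass (up to a harmless $(\log)^{-O(1)}$ factor).

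\emph{Core mechanism ($C_{12}$ and $C_{31}$).} Fix $\ell$ large enough (this fixes $\mathfrak{C}_{\mathscr{P}}$) that $\Omega_\ell>\max\{1,r_0\}$ and $\Upsilon_\ell\ge 3$. If $t\in\mathscr{P}_\ell^T$ then, by~\eqref{Sec:DD:3}--\eqref{Sec:DD:4}, one of the overlapping subdomains $\mathscr{D}_i^{\Omega_\ell}$ with $i\ge 2^{\Upsilon_\ell-1}-1$ — each of width $\le 3\Delta_\ell$ and contained in $[\Omega_\ell,2\Omega_\ell)$ — satisfies $\int_{\mathscr{D}_i^{\Omega_\ell}}\mathfrak{F}(t,\omega)\,\mathrm{d}\omega\ge c_o\,\Omega_{\ell+1}^{-\sigma}$; since $\Gamma(\omega)\simeq\omega^{3\theta-1}\simeq\Omega_\ell^{3\theta-1}$ there by~\eqref{X1},~\eqref{X3}, this forces $\int_{[\Omega_\ell,2\Omega_\ell)}f(t,\omega)\,\mathrm{d}\omega\gtrsim\Omega_\ell^{1-3\theta-\sigma}$. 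Taking $R=\Omega_\ell$ in~\eqref{Lemma:Apriori:1} and restricting the $C_{12}$ integral to $\omega_1,\omega_2\in[\Omega_\ell,2\Omega_\ell)$ (resp.\ the $C_{31}$ integral to $\omega_1,\omega_2,\omega_3\in[\Omega_\ell,2\Omega_\ell)$) — a region on which every kernel factor is comparable to an explicit power of $\Omega_\ell$, using $\mathfrak{P}(\omega)\simeq\omega^{3\theta+\varpi_1}$, $\mathfrak{Q}(\omega)\simeq\omega^{3\theta+\varpi_3}$, $\tfrac{\omega_1\omega_2}{(\omega_1+\omega_2)^{2-\alpha}}\simeq\Omega_\ell^{\alpha}$, etc. — I obtain, pointwise in $t\in\mathscr{P}_\ell^T$,
\[
\bigl[C_{12}\text{ integrand}\bigr](t)\ \gtrsim\ \Omega_\ell^{\,3\theta+3\varpi_1+\alpha+2-2\sigma},\qquad \bigl[C_{31}\text{ integrand}\bigr](t)\ \gtrsim\ \Omega_\ell^{\,3\theta+4\varpi_3+\alpha+3-3\sigma},
\]
both exponents positive by~\eqref{X4} and~\eqref{sigma}. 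Integrating over $t\in\mathscr{P}_\ell^T$ and dropping the remaining nonnegative terms in~\eqref{Lemma:Apriori:1} gives $\mathfrak{M}+\mathfrak{E}\gtrsim c_{12}\,|\mathscr{P}_\ell^T|\,\Omega_\ell^{3\theta+3\varpi_1+\alpha+2-2\sigma}$ whenever $c_{12}>0$, hence $|\mathscr{P}_\ell^T|\lesssim\Omega_\ell^{-(3\theta+3\varpi_1+\alpha+2-2\sigma)}$, and likewise $|\mathscr{P}_\ell^T|\lesssim\Omega_\ell^{-(3\theta+4\varpi_3+\alpha+3-3\sigma)}$ whenever $c_{31}>0$, with constant depending only on $\mathfrak{M}+\mathfrak{E}$, the active coefficient, and absolute constants. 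Since these exponents exceed the $c_{\mathscr{P}}$ of~\eqref{Lemma:Mutis1:1_case},~\eqref{Lemma:Mutis1:3_case} (by $\alpha+1+\epsilon$, resp.\ $\alpha+2+\epsilon$) and $\Omega_\ell\ge 1$, the stated bounds follow a fortiori; in the mixed case $c_{12},c_{31}>0$, $c_{22}=0$ one keeps whichever of the two estimates is stronger, which is precisely the $-\min\{\,\cdot\,,\,\cdot\,\}$ of~\eqref{Lemma:Mutis1:7_case}.

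\emph{The $C_{22}$ case --- the main obstacle.} Here the naive restriction fails, since the $C_{22}$ integrand in~\eqref{Lemma:Apriori:1} carries the degenerate weight $(\omega_{\mathrm{Med}}-\omega_{\mathrm{Inf}})^2$, which has no lower bound when all three frequencies lie in one tiny subdomain (and $\mathfrak{F}(t,\cdot)$ may be atomic there): one cannot place a ``low pair'' at a single scale. The plan is instead to pair the concentrated mass near $\Omega_\ell$ (supplying $\omega_{\mathrm{Inf}}$) with a genuinely larger scale $\Lambda$ produced by the dyadic pigeonhole above (supplying $\omega_{\mathrm{Med}},\omega_{\mathrm{Sup}}$), so that $\omega_{\mathrm{Med}}-\omega_{\mathrm{Inf}}\simeq\Lambda$; and to retain the finer term discarded through monotonicity of $\mathfrak{R}$ in the passage~\eqref{Lemma:Apriori:E4}, namely $\mathfrak{R}(\omega_{\mathrm{Sup}}-\omega_{\mathrm{Inf}}+\omega_{\mathrm{Med}})-\mathfrak{R}(\omega_{\mathrm{Sup}}+\omega_{\mathrm{Inf}}-\omega_{\mathrm{Med}})\simeq(\omega_{\mathrm{Med}}-\omega_{\mathrm{Inf}})\,\mathfrak{R}'(\,\cdot\,)$, which by~\eqref{X3:4} brings in the exponent $\kappa_2$; the factor $\bar{\mathfrak{R}}_o=(\max)^{\gamma}$ brings in $\gamma$, while the tail bound costs a factor $\simeq\Omega_\ell^{-c_{\mathrm{in}}}$. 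Power-counting (with $\mathfrak{R}(\omega)\simeq\omega^{2\theta+\varpi_2}$, $|k_{\mathrm{Inf}}|\simeq\omega_{\mathrm{Inf}}^{\theta}$), and absorbing the $\Lambda$- and $\alpha$-dependence together with the pigeonhole logarithms into the slack parameter $\epsilon$, I expect $\bigl[C_{22}\text{ integrand}\bigr](t)\gtrsim\Omega_\ell^{\,3\varpi_2+2-2\theta+\kappa_2-c_{\mathrm{in}}+\gamma-\sigma-\epsilon}$ for $t\in\mathscr{P}_\ell^T$; its positivity, and that of the auxiliary $\Lambda$-exponent governing the pigeonhole sum, is exactly what the inequalities $4\varpi_2+3+\alpha+\gamma>0$, $3\varpi_2+2-2\theta>0$, $\gamma+\kappa_2\ge0$, $2\varpi_2+\theta+\gamma\ge0$ of~\eqref{X4}, the hypothesis~\eqref{Theorem1:5} on $c_{\mathrm{in}}$, and the smallness of $\sigma,\epsilon$ in~\eqref{sigma},~\eqref{epsilon} are designed to provide. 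Then $\mathfrak{M}+\mathfrak{E}\gtrsim c_{22}\,|\mathscr{P}_\ell^T|\,\Omega_\ell^{3\varpi_2+2-2\theta+\kappa_2-c_{\mathrm{in}}+\gamma-\sigma-\epsilon}$, i.e.\ the bound~\eqref{Lemma:Mutis1:2_case}, valid whenever $c_{22}>0$ irrespective of the other coefficients. The genuinely hard part is the precise choice of the $C_{22}$ test region and the control of the dyadic tail scales $\Lambda$ — making the contribution simultaneously bounded below uniformly in $t$ and insensitive to $\Lambda$ being large; the rest is routine.
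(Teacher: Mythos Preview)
Your approach differs substantially from the paper's. The paper does \emph{not} feed the concentration into the concave-test-function a~priori estimate~\eqref{Lemma:Apriori:1}; instead it runs a contradiction argument with the \emph{convex} test function $\vartheta(z)=(z-\tfrac{7}{4}\Omega_{\ell})_+$ in the evolution equation, obtaining lower bounds on $\partial_t\int\mathfrak{F}\vartheta=\mathfrak{Z}_1+\mathfrak{Z}_2+\mathfrak{Z}_3$ and showing that too large a set $\mathscr{P}_\ell^T$ would force $\int\mathfrak{F}\vartheta$ to exceed $\mathfrak{M}+\mathfrak{E}$.

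For the cases with $c_{22}=0$ your route is correct and in fact simpler: restricting all $C_{12}$ (resp.\ $C_{31}$) variables to $[\Omega_\ell,2\Omega_\ell)$ in~\eqref{Lemma:Apriori:1} and using $\int_{\mathscr{D}_i^{\Omega_\ell}}\mathfrak{F}\gtrsim\Omega_\ell^{-\sigma}$ gives the exponents $3\theta+3\varpi_1+\alpha+2-2\sigma$ and $3\theta+4\varpi_3+\alpha+3-3\sigma$, which are strictly larger than the stated $c_{\mathscr{P}}$. This is a genuine simplification over the paper's integration-and-contradiction argument.

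For $c_{22}>0$, however, you have a real gap. Your plan---pigeonhole the tail mass to a time-dependent dyadic scale $\Lambda(t)$, place $\omega_{\mathrm{Med}},\omega_{\mathrm{Sup}}$ there while $\omega_{\mathrm{Inf}}\in\mathscr{D}_i^{\Omega_\ell}$, and retain the $\mathfrak{R}'$-term from~\eqref{Lemma:Apriori:E4}---requires the net $\Lambda$-exponent of the integrand to be nonnegative \emph{after} squaring the pigeonholed mass (which may be as small as $\Lambda^{-1-c_{\mathrm{in}}}$ up to logarithms), and you have not verified this; the logarithmic losses by themselves already threaten the clean $\Omega_\ell^{-\epsilon}$ saving you budget for. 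The paper sidesteps this entirely by a different mechanism: with the convex $\vartheta$ it restricts $\omega\in\mathscr{D}_{\rho(t)}^{\Omega_\ell}$ (so $\vartheta(\omega)=0$, since those subdomains lie below $\tfrac{7}{4}\Omega_\ell$) and $\omega_1,\omega_2\ge 2\Omega_\ell$, uses the difference $\mathfrak{R}(\omega_1+\omega_2-\omega)-\mathfrak{R}(\omega+\omega_2-\omega_1)\gtrsim(\omega_1-\omega)\,\omega_2^{\kappa_2}$, and arrives at the \emph{multiplicative} differential inequality
\[
\partial_t\!\int_{\mathbb{R}_+}\!\mathfrak{F}\,\vartheta\ \gtrsim\ c_{22}\,\Omega_\ell^{\,3\varpi_2+2-2\theta+\kappa_2-c_{\mathrm{in}}+\gamma}\Big(\!\int_{\mathscr{D}_{\rho(t)}^{\Omega_\ell}}\!\mathfrak{F}\,\omega\Big)\cdot\!\int_{\mathbb{R}_+}\!\mathfrak{F}\,\vartheta
\]
on $\mathscr{P}_\ell^T$. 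Gr\"onwall then gives $\int\mathfrak{F}\vartheta\gtrsim\Omega_\ell^{-c_{\mathrm{in}}}\exp\!\big(\mathrm{const}\cdot|\mathscr{P}_\ell^T|\,\Omega_\ell^{3\varpi_2+2-2\theta+\kappa_2-c_{\mathrm{in}}+\gamma-\sigma}\big)$, and comparison with $\int\mathfrak{F}\vartheta\le\mathfrak{M}+\mathfrak{E}$ yields the claimed bound. The point is that the tail enters \emph{linearly}, through $\int\mathfrak{F}\vartheta$ itself, so no pigeonhole on $\Lambda$ is ever needed, and the concentrated mass at $\Omega_\ell$ is consumed only once.
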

	\begin{proof}

We now prove by contradiction that, for some constant \( \mathcal{C}_1 > 0 \) independent of \( \ell \) and \( \mathfrak{F} \), there cannot exist an infinite set  
\[
\mathscr{U}_o = \{\ell_1, \ell_2, \dots\}
\]
such that:
\begin{itemize}
	\item for each element \( \ell_j \) in \( \mathscr{U}_o \), there exists a subset \( \mathscr{P}_{\ell_j}'^{T} \subset \mathscr{P}_{\ell_j}^{T} \);
	\item for each time \( t \) in \( \mathscr{P}_{\ell_j}'^{T} \), there exists  
	\( \rho(t) \in \{2^{\Upsilon_{\ell_j} - 1} - 1,\, \dots,\, 2^{\Upsilon_{\ell_j}} - 1\} \) satisfying
	\begin{equation} \label{Lemma:Mutis:E1a}
		\begin{aligned}
			& \int_0^{T} \mathrm{d}t\, 4^{-1}\, \theta\, C_\omega^{-1} C_{\mathfrak{P}}^{3}\, c_{12}\, 
			\lvert \Omega_{\ell_j} \rvert^{3\theta + 3\varpi_1 + 1}
			\left[ \int_{\mathbb{R}_+} \mathrm{d}\omega\,
			\mathfrak{F}\, \omega\,
			\chi_{\big\{\omega \in \mathscr{D}_{\rho(t)}^{\Omega_{\ell_j}}\big\}} \right]^2
			\chi_{\mathscr{P}_{\ell_j}'^{\mathscr{T}_{\ell_j}}}(t) \\[0.4em]
			& \quad + \int_0^{T} \mathrm{d}t\, \tfrac{1}{2}\, c_{31}\, C_{\mathfrak{Q}}^{3}\, \theta\, C_\omega^{-1} C_{\mathfrak{P}}\, 
			\lvert \Omega_{\ell_j} \rvert^{3\theta + 4\varpi_3 + 1}
			\left[ \int_{\mathbb{R}_+} \mathrm{d}\omega\,
			\mathfrak{F}\, \omega\,
			\chi_{\big\{\omega \in \mathscr{D}_{\rho(t)}^{\Omega_{\ell_j}}\big\}} \right]^3
			\chi_{\mathscr{P}_{\ell_j}'^{\mathscr{T}_{\ell_j}}}(t)
			> \mathcal{C}_1\, \Omega_{\ell_j}^{\epsilon},
		\end{aligned}
	\end{equation}
	when \( c_{22} = 0 \), and
	\begin{equation} \label{Lemma:Mutis:E1}
		\begin{aligned}
			& \int_0^{T} \mathrm{d}t\,  \chi_{\mathscr{P}_{\ell_j}'^{\mathscr{T}_{\ell_j}}}(s)
			\Bigg[
			2^{\kappa_2 - 5 - c_{\mathrm{in}}}\, c_{22}\, C_{\omega}^{2\theta}\, C_{\mathfrak{R}}^{2}\,
			C_{\mathfrak{R}'}  
			\int_{\mathscr{D}_{\rho(s)}^{\Omega_{\ell_j}}} \mathrm{d}\omega\, \mathfrak{F}(s,\omega)\, \omega \\[0.4em]
			& \quad \times \Omega_{\ell_j}^{3\varpi_2 + 2 - 2\theta + \kappa_2 - c_{\mathrm{in}} + \gamma}
			\Bigg]
			> \mathcal{C}_1\, \Omega_{\ell_j}^{\epsilon},
		\end{aligned}
	\end{equation}
	when \( c_{22} \ne 0 \).
\end{itemize}

Suppose that the above assertion holds. Then there exists a constant \( C' \in \mathbb{N} \) such that for all \( \ell > C' \), and for all 
\( \rho \in \left\{ 2^{\Upsilon_{\ell} - 1} - 1,\, \dots,\, 2^{\Upsilon_{\ell}} - 1 \right\} \),
	\begin{equation} \label{Lemma:Mutis:E2a}
	\begin{aligned}
		& \int_0^{T} \mathrm{d}t\, 4^{-1}\, \theta\, C_\omega^{-1} C_{\mathfrak{P}}^{3}\, c_{12}\, 
		\lvert \Omega_{\ell_j} \rvert^{3\theta + 3\varpi_1 + 1}
		\left[ \int_{\mathbb{R}_+} \mathrm{d}\omega\,
		\mathfrak{F}\, \omega\,
		\chi_{\big\{\omega \in \mathscr{D}_{\rho(t)}^{\Omega_{\ell_j}}\big\}} \right]^2
		\chi_{\mathscr{P}_{\ell_j}'^{\mathscr{T}_{\ell_j}}}(t) \\[0.4em]
		& \quad + \int_0^{T} \mathrm{d}t\, \tfrac{1}{2}\, c_{31}\, C_{\mathfrak{Q}}^{3}\, \theta\, C_\omega^{-1} C_{\mathfrak{P}}\, 
		\lvert \Omega_{\ell_j} \rvert^{3\theta + 4\varpi_3 + 1}
		\left[ \int_{\mathbb{R}_+} \mathrm{d}\omega\,
		\mathfrak{F}\, \omega\,
		\chi_{\big\{\omega \in \mathscr{D}_{\rho(t)}^{\Omega_{\ell_j}}\big\}} \right]^3
		\chi_{\mathscr{P}_{\ell_j}'^{\mathscr{T}_{\ell_j}}}(t)
		\le \mathcal{C}_1\, \Omega_{\ell_j}^{\epsilon},
	\end{aligned}
\end{equation}
when \( c_{22} = 0 \), and
\begin{equation} \label{Lemma:Mutis:E2}
	\begin{aligned}
		& \int_0^{T} \mathrm{d}t\,  \chi_{\mathscr{P}_{\ell_j}'^{\mathscr{T}_{\ell_j}}}(s)
		\Bigg[
		2^{\kappa_2 - 5 - c_{\mathrm{in}}}\, c_{22}\, C_{\omega}^{2\theta}\, C_{\mathfrak{R}}^{2}\,
		C_{\mathfrak{R}'}  
		\int_{\mathscr{D}_{\rho(s)}^{\Omega_{\ell_j}}} \mathrm{d}\omega\, \mathfrak{F}(s,\omega)\, \omega \\[0.4em]
		& \quad \times \Omega_{\ell_j}^{3\varpi_2 + 2 - 2\theta + \kappa_2 - c_{\mathrm{in}} + \gamma}
		\Bigg]
		\le \mathcal{C}_1\, \Omega_{\ell_j}^{\epsilon},
	\end{aligned}
\end{equation}
when \( c_{22} \ne 0 \).

Combined with~\eqref{Sec:DD:3}, the bound \eqref{Lemma:Mutis:E2a} implies
\begin{equation} \label{Lemma:Mutis:E3a}
	c_{12}\, \left| \mathscr{P}_{\ell}^T \right|\, \Omega_{\ell}^{-2\sigma+3\theta + 3\varpi_1 + 1}
	+ c_{31}\, \left| \mathscr{P}_{\ell}^T \right|  \Omega_{\ell}^{-3\sigma+3\theta + 4\varpi_3 + 1}
		\le C''\, \Omega_{\ell}^{\epsilon},
\end{equation}
for some constant \( C'' > 0 \).

Combined with~\eqref{Sec:DD:3}, the bound \eqref{Lemma:Mutis:E2} implies
\begin{equation} \label{Lemma:Mutis:E3}
	c_{22}\, \left| \mathscr{P}_{\ell}^T \right|\, \Omega_{\ell}^{-\sigma+3\varpi_2 + 2 - 2\theta + \kappa_2 - c_{\mathrm{in}} + \gamma} 
	\le C''\, \Omega_{\ell}^{\epsilon},
\end{equation}
for some constant \( C'' > 0 \).

We consider the following cases:
\begin{itemize}
	\item If \( c_{12} > 0 \), we bound
	\begin{equation} \label{Lemma:Mutis:E4}
		\left| \mathscr{P}_{\ell}^T \right| \le C_{\mathscr{P}}\, \Omega_{\ell}^{\epsilon + 2\sigma -  3\theta - 3\varpi_1 - 1},
	\end{equation}
	for some universal constant \( C_{\mathscr{P}} \) independent of \( T, T^* \), and the other parameters.
	
	\item If \( c_{22} > 0 \), we bound
	\begin{equation} \label{Lemma:Mutis:E5}
		\left| \mathscr{P}_{\ell}^T \right| \le C_{\mathscr{P}}\, \Omega_{\ell}^{\epsilon + \sigma- 3\varpi_2 - 2 + 2\theta - \kappa_2 + c_{\mathrm{in}} -\gamma},
	\end{equation}
	for some universal constant \( C_{\mathscr{P}} \) independent of \( T, T^* \), and the other parameters.
	
	\item If \( c_{31} > 0 \), we bound
	\begin{equation} \label{Lemma:Mutis:E6}
		\left| \mathscr{P}_{\ell}^T \right| \le C_{\mathscr{P}}\, \Omega_{\ell}^{\epsilon + 3\sigma -  3\theta - 4\varpi_3 - 1},
	\end{equation}
	for some universal constant \( C_{\mathscr{P}} \) independent of \( T, T^* \), and the other parameters.
\end{itemize}

Combining the cases, we obtain \eqref{Lemma:Mutis1:1}.

Now, returning to \eqref{Lemma:Mutis:E1}, suppose for contradiction that \eqref{Lemma:Mutis:E1} holds. Then there must exist a time 
\( \mathscr{T}_{\ell_j} \in [0, T] \) such that
\begin{equation} \label{Lemma:Mutis:E8a}
	\begin{aligned}
		& \int_0^{\mathscr{T}_{\ell_j}} \mathrm{d}t\, 4^{-1}\, \theta\, C_\omega^{-1} C_{\mathfrak{P}}^{3}\, c_{12}\, 
		\lvert \Omega_{\ell_j} \rvert^{3\theta + 3\varpi_1 + 1}
		\left[ \int_{\mathbb{R}_+} \mathrm{d}\omega\,
		\mathfrak{F}\, \omega\,
		\chi_{\big\{\omega \in \mathscr{D}_{\rho(t)}^{\Omega_{\ell_j}}\big\}} \right]^2
		\chi_{\mathscr{P}_{\ell_j}'^{\mathscr{T}_{\ell_j}}}(t) \\[0.4em]
		& \quad + \int_0^{T} \mathrm{d}t\, \tfrac{1}{2}\, c_{31}\, C_{\mathfrak{Q}}^{3}\, \theta\, C_\omega^{-1} C_{\mathfrak{P}}\, 
		\lvert \Omega_{\ell_j} \rvert^{3\theta + 4\varpi_3 + 1}
		\left[ \int_{\mathbb{R}_+} \mathrm{d}\omega\,
		\mathfrak{F}\, \omega\,
		\chi_{\big\{\omega \in \mathscr{D}_{\rho(t)}^{\Omega_{\ell_j}}\big\}} \right]^3
		\chi_{\mathscr{P}_{\ell_j}'^{\mathscr{T}_{\ell_j}}}(t)
		= \mathcal{C}_1\, \Omega_{\ell_j}^{\epsilon},
	\end{aligned}
\end{equation}
when \( c_{22} = 0 \), and
\begin{equation} \label{Lemma:Mutis:E8}
\begin{aligned}
& \int_0^{\mathscr{T}_{\ell_j}} \mathrm{d}s\,  \chi_{\mathscr{P}_{\ell_j}'^{\mathscr{T}_{\ell_j}}}(s)
\Bigg[
2^{\kappa_2 - 5 - c_{\mathrm{in}}}\, c_{22}\, C_{\omega}^{2\theta}\, C_{\mathfrak{R}}^{2}\,
C_{\mathfrak{R}'}  
\int_{\mathscr{D}_{\rho(s)}^{\Omega_{\ell_j}}} \mathrm{d}\omega\, \mathfrak{F}(s,\omega)\, \omega \\[0.4em]
& \quad \times \Omega_{\ell_j}^{3\varpi_2 + 2 - 2\theta + \kappa_2 - c_{\mathrm{in}} + \gamma}
\Bigg]
= \mathcal{C}_1\, \Omega_{\ell_j}^{\epsilon},
\end{aligned}
\end{equation}
when \( c_{22} \ne 0 \).
		
		\subsubsection*{Step 1. \textit{The test function}}

We now define the function
\begin{equation}\label{Lemma:TestFunc:E3}
	\vartheta(t,z) := (z - 7\Omega_{\ell_j}/4)_+ 
	= \max\{\, z - 7\Omega_{\ell_j}/4,\; 0 \,\}.
\end{equation}

Finally, using $\vartheta$ as a test function, and proceeding similarly as in~\eqref{Lemma:Apriori:E8a}, \eqref{Lemma:Apriori:E2}, and \eqref{Lemma:Apriori:E12a}, we find \begin{equation}\label{Lemma:Supersolu:E9}
	\partial_t \!\left( \int_{\mathbb{R}_+} \mathrm{d}\omega\, \mathfrak{F}\,\vartheta \right)
	= \mathfrak{Z}_1 + \mathfrak{Z}_2 + \mathfrak{Z}_3,
\end{equation}
		where
		\begin{equation} \label{Lemma:Growth2:E2}
			\begin{aligned}
				\mathfrak{Z}_1 :=\; & 2c_{12} \iint_{\omega_1 > \omega_2} \mathrm{d}\omega_1 \mathrm{d}\omega_2\, \mathfrak{P}(\omega_1) \mathfrak{P}(\omega_2)\, f(\omega_1) f(\omega_2) \\
				& \quad \times \Big[ \mathfrak{P}(\omega_1 + \omega_2) \big( \vartheta(\omega_1 + \omega_2) - \vartheta(\omega_1) - \vartheta(\omega_2) \big) \\
				& \qquad\quad - \mathfrak{P}(\omega_1 - \omega_2) \big( \vartheta(\omega_1) - \vartheta(\omega_1 - \omega_2) - \vartheta(\omega_2) \big) \Big] \\
				& + c_{12} \iint_{\omega_1 = \omega_2} \mathrm{d}\omega_1 \mathrm{d}\omega_2\, \mathfrak{P}(\omega_1)^2 f(\omega_1)^2\, \mathfrak{P}(2\omega_1)\, \big[ \vartheta(2\omega_1) - 2\vartheta(\omega_1) \big],
			\end{aligned}
		\end{equation}
		
		\begin{equation} \label{Lemma:Growth2:E3}
			\begin{aligned}
				\mathfrak{Z}_2 :=\; & c_{22} \iiint_{\mathbb{R}_+^3} \mathrm{d}\omega_1\,\mathrm{d}\omega_2\,\mathrm{d}\omega\, f_1 f_2 f\,   \\
				& \times \Big\{ 
				[-\vartheta(\omega_{\text{Sup}}) - \vartheta(\omega_{\text{Inf}}) + \vartheta(\omega_{\text{Med}}) + \vartheta(\omega_{\text{Sup}} + \omega_{\text{Inf}} - \omega_{\text{Med}})] \\
				& \qquad \times \mathfrak{R}_o\prod_{x \in \{\omega_{\text{Sup}}, \omega_{\text{Inf}}, \omega_{\text{Med}}, \omega_{\text{Sup}} + \omega_{\text{Inf}} - \omega_{\text{Med}}\}}\mathfrak{R}(x)\, |k_{\text{Inf}}| \\
				& + [-\vartheta(\omega_{\text{Sup}}) - \vartheta(\omega_{\text{Med}}) + \vartheta(\omega_{\text{Inf}}) + \vartheta(\omega_{\text{Sup}} + \omega_{\text{Med}} - \omega_{\text{Inf}})] \\
				& \qquad \times \mathfrak{R}_o\prod_{x \in \{\omega_{\text{Sup}}, \omega_{\text{Med}}, \omega_{\text{Inf}}, \omega_{\text{Sup}} + \omega_{\text{Med}} - \omega_{\text{Inf}}\}}\mathfrak{R}(x)\, |k_{\text{Inf}}| \\
				& + [-\vartheta(\omega_{\text{Inf}}) - \vartheta(\omega_{\text{Med}}) + \vartheta(\omega_{\text{Sup}}) + \vartheta(\omega_{\text{Inf}} + \omega_{\text{Med}} - \omega_{\text{Sup}})] \\
				& \qquad \times \mathfrak{R}_o \mathbf{1}_{\omega_{\text{Inf}} + \omega_{\text{Med}} - \omega_{\text{Sup}} \ge 0} \prod_{x \in \{\omega_{\text{Sup}}, \omega_{\text{Med}}, \omega_{\text{Inf}}, \omega_{\text{Inf}} + \omega_{\text{Med}} - \omega_{\text{Sup}}\}}\mathfrak{R}(x) \\
				& \qquad \times \min \left\{ |k|(\omega_{\text{Sup}}), |k|(\omega_{\text{Inf}}), |k|(\omega_{\text{Med}}), |k|(\omega_{\text{Inf}} + \omega_{\text{Med}} - \omega_{\text{Sup}}) \right\}
				\Big\}
			\end{aligned}
		\end{equation}
		
		and
		\begin{equation} \label{Lemma:Growth2:E4}
			\begin{aligned}
				\mathfrak{Z}_3 :=\; & 3c_{31} \iiint_{\omega_1 > \omega_2 + \omega_3} \mathrm{d}\omega_1\, \mathrm{d}\omega_2\, \mathrm{d}\omega_3\, \mathfrak{Q}(\omega_1) \mathfrak{Q}(\omega_2) \mathfrak{Q}(\omega_3)\, f(\omega_1) f(\omega_2) f(\omega_3) \\
				& \times \Big[ \mathfrak{Q}(\omega_1 + \omega_2 + \omega_3) \big(\vartheta(\omega_1 + \omega_2 + \omega_3) - \vartheta(\omega_1) - \vartheta(\omega_2) - \vartheta(\omega_3) \big) \\
				& \quad - \mathfrak{Q}(\omega_1 - \omega_2 - \omega_3) \big( \vartheta(\omega_1) - \vartheta(\omega_1 - \omega_2 - \omega_3) - \vartheta(\omega_2) - \vartheta(\omega_3) \big) \Big] \\
				& + c_{31} \iint_{\omega_1 = \omega_2 + \omega_3} \mathrm{d}\omega_1\, \mathrm{d}\omega_2\, \mathfrak{Q}(\omega_1) \mathfrak{Q}(\omega_2) \mathfrak{Q}(\omega_3)\, f(\omega_1) f(\omega_2) f(\omega_3) \\
				& \quad \times \mathfrak{Q}(3\omega_1) \big[ \vartheta(3\omega_1) - 3\vartheta(\omega_1) \big] \\
				& + c_{31} \iiint_{\mathbb{R}_+^3 \setminus \left( \{\omega_1 > \omega_2 + \omega_3\} \cup \{\omega_2 > \omega_1 + \omega_3\} \cup \{\omega_3 > \omega_1 + \omega_2\} \right)} \mathrm{d}\omega_1\, \mathrm{d}\omega_2\, \mathrm{d}\omega_3 \\
				& \quad \times \mathfrak{Q}(\omega_1 + \omega_2 + \omega_3)\, \mathfrak{Q}(\omega_1) \mathfrak{Q}(\omega_2) \mathfrak{Q}(\omega_3)\, f(\omega_1) f(\omega_2) f(\omega_3) \\
				& \quad \times \big[ \vartheta(\omega_1 + \omega_2 + \omega_3) - \vartheta(\omega_1) - \vartheta(\omega_2) - \vartheta(\omega_3) \big].
			\end{aligned}
		\end{equation}

		\subsubsection*{Step 2. \textit{Controlling} \(\mathfrak{Z}_1\)}

Similarly as in~\eqref{Lemma:Apriori:E8b}, we can write
\begin{equation}\label{Lemma:Supersolu:E6}
	\begin{aligned}
		& \mathfrak{P}(\omega_1 + \omega_2) \left( \vartheta(\omega_1 + \omega_2) - \vartheta(\omega_1) - \vartheta(\omega_2) \right) \\
		&\quad -\ \mathfrak{P}(\omega_1 - \omega_2) \left( \vartheta(\omega_1) - \vartheta(\omega_1 - \omega_2) - \vartheta(\omega_2) \right) \\
		=\ & \left[\mathfrak{P}(\omega_1 + \omega_2) - \mathfrak{P}(\omega_1 - \omega_2)\right] 
		\int_0^{\omega_2} \int_0^{\omega_1 } \mathrm{d}\zeta\, \mathrm{d}\zeta_0\, \partial_{\omega}^2\vartheta(\zeta + \zeta_0) \\
		& +\ \mathfrak{P}(\omega_1 - \omega_2) 
		\int_0^{\omega_2} \int_{\omega_1 - \omega_2}^{\omega_1} \mathrm{d}\zeta\, \mathrm{d}\zeta_0\, \partial_{\omega}^2\vartheta(\zeta + \zeta_0) \ \ge \ 0.
	\end{aligned}
\end{equation}
Here, the second derivative $\partial_{\omega}^2 \vartheta$ is understood in the weak sense.

Next, we  estimate

\begin{equation*}
	\begin{aligned}
		\mathfrak{Z}_1 \ \ge\ & 2c_{12} \iint_{ \mathbb{R}_+^2} 
		\mathrm{d}\omega_1\, \mathrm{d}\omega_2\, 
	\bar{\mathfrak{P}}(\omega_1)\, 	\bar{\mathfrak{P}}(\omega_2)\, \mathfrak F(\omega_1)\, \mathfrak F(\omega_2)  
		\chi_{\left\{ \omega_1\ge \omega_2;\omega_1,\omega_2 \in \mathscr{D}_{\rho(t)}^{\Omega_{\ell_j}} \right\}}\chi_{\mathscr{P}_{\ell_j}'^{\mathscr{T}_{\ell_j}}}(t) \\
		& \quad \times \Big[ \mathfrak{P}(\omega_1 + \omega_2) \big( \vartheta(\omega_1 + \omega_2) - \vartheta(\omega_1) - \vartheta(\omega_2) \big) \\
		& \qquad\quad - \mathfrak{P}(\omega_1 - \omega_2) \big( \vartheta(\omega_1) - \vartheta(\omega_1 - \omega_2) - \vartheta(\omega_2) \big) \Big] \\
		\ge\ & 2c_{12} \iint_{\mathbb{R}_+^2 } 
		\mathrm{d}\omega_1\, \mathrm{d}\omega_2\, 
		\tilde{\mathfrak{P}}(\omega_1)\, \tilde{\mathfrak{P}}(\omega_2)\, \mathfrak F_1 \mathfrak F_2\,\omega_1\omega_2\chi_{\left\{ \omega_1\ge \omega_2;\omega_1,\omega_2 \in \mathscr{D}_{\rho(t)}^{\Omega_{\ell_j}} \right\}} \chi_{\mathscr{P}_{\ell_j}'^{\mathscr{T}_{\ell_j}}}(t) \\
		& \quad \times \Big[ \mathfrak{P}(\omega_1 + \omega_2) \big( \vartheta(\omega_1 + \omega_2) - \vartheta(\omega_1) - \vartheta(\omega_2) \big) \\
	& \qquad\quad - \mathfrak{P}(\omega_1 - \omega_2) \big( \vartheta(\omega_1) - \vartheta(\omega_1 - \omega_2) - \vartheta(\omega_2) \big) \Big],
	\end{aligned}
\end{equation*}
where we recall that $\mathfrak F_1$, $\mathfrak F_2$ denote  $\mathfrak F(\omega_1)$, $\mathfrak F(\omega_2).$


Using the inequality \( \tilde{\mathfrak{P}}(\omega) \ge C_{\mathfrak{P}} \omega^{\varpi_1} \), we can estimate \( \mathfrak{Z}_1 \) from below as
\begin{equation} \label{Lemma:Mutiscale1:E1}
	\begin{aligned}
		\mathfrak{Z}_1 \ge\ & 2c_{12}   C_{\mathfrak{P}}^2 
		\iint_{\mathbb{R}_+^2} 
		\mathrm{d}\omega_1\, \mathrm{d}\omega_2\,
		\mathfrak{F}_1\,\cdot\omega_1^{\varpi_1+1}\, \mathfrak{F}_2\,\cdot\omega_2^{\varpi_1+1}\,  
		\chi_{\left\{ \omega_1\ge \omega_2;\omega_1,\omega_2 \in \mathscr{D}_{\rho(t)}^{\Omega_{\ell_j}} \right\}} \,
		\chi_{\mathscr{P}_{\ell_j}'^{\mathscr{T}_{\ell_j}}}(t) \\[0.5em]
		& \quad \times \Big[ \mathfrak{P}(\omega_1 + \omega_2) \big( \vartheta(\omega_1 + \omega_2) - \vartheta(\omega_1) - \vartheta(\omega_2) \big) \\[0.5em]
	& \qquad\quad - \mathfrak{P}(\omega_1 - \omega_2) \big( \vartheta(\omega_1) - \vartheta(\omega_1 - \omega_2) - \vartheta(\omega_2) \big) \Big].
	\end{aligned}
\end{equation}

We now write
\begin{equation*} 
	\begin{aligned}
		& 
		\mathfrak{P}(\omega_1 + \omega_2) \big( \vartheta(\omega_1 + \omega_2) - \vartheta(\omega_1) - \vartheta(\omega_2) \big)  - \mathfrak{P}(\omega_1 - \omega_2) \big( \vartheta(\omega_1) - \vartheta(\omega_1 - \omega_2) - \vartheta(\omega_2) \big) \\
		=\ &
		 \mathfrak{P}(\omega_1 + \omega_2)  \vartheta(\omega_1+\omega_2),
	\end{aligned}
\end{equation*}
which, in combination with~\eqref{Lemma:Mutiscale1:E1}, yields
\begin{equation} \label{Lemma:Mutiscale1:E1:1}
	\begin{aligned}
		\mathfrak{Z}_1 \ge\ & 
		2c_{12}   C_{\mathfrak{P}}^2 
		\iint_{\mathbb{R}_+^2} 
		\mathrm{d}\omega_1\, \mathrm{d}\omega_2\,
		\mathfrak{F}_1  \,\cdot\omega_1^{\varpi_1+1}\, \mathfrak{F}_2\,\cdot\omega_2^{\varpi_1+1}\,  
	\chi_{\left\{ \omega_1\ge \omega_2;\omega_1,\omega_2 \in \mathscr{D}_{\rho(t)}^{\Omega_{\ell_j}} \right\}} \,
		\chi_{\mathscr{P}_{\ell_j}'^{\mathscr{T}_{\ell_j}}}(t) \\[0.5em]
		& \times 
	  \mathfrak{P}(\omega_1 + \omega_2)  \vartheta(\omega_1+\omega_2).
	\end{aligned}
\end{equation}

Since  on the domain of integration
\[
\vartheta(\omega_1+\omega_2)  = \omega_1+\omega_2 -7\Omega_\ell/4 \ge \Omega_\ell/4 \ge \omega_2/8,
\]
we continue to estimate~\eqref{Lemma:Mutiscale1:E1:1} as
\begin{equation} \label{Lemma:Mutiscale1:E1:2}
	\begin{aligned}
		\mathfrak{Z}_1 \ge\ & 
		4^{-1}c_{12}   C_{\mathfrak{P}}^2 
		\iint_{\mathbb{R}_+^2} 
		\mathrm{d}\omega_1\, \mathrm{d}\omega_2\,
		\mathfrak{F}_1\,\cdot\omega_1^{\varpi_1+1}\,  \mathfrak{F}_2  \,\cdot\omega_2^{\varpi_1+2}\,
		\chi_{\left\{ \omega_1\ge \omega_2;\omega_1,\omega_2 \in \mathscr{D}_{\rho(t)}^{\Omega_{\ell_j}} \right\}}  \,
		\chi_{\mathscr{P}_{\ell_j}'^{\mathscr{T}_{\ell_j}}}(t) \\[0.5em]
		&\times  \mathfrak{P}(\omega_1 + \omega_2).
	\end{aligned}
\end{equation}

Thanks to Assumption~X, we can then deduce that  
\begin{equation} \label{Lemma:Multiscale1:E1:3}
	\begin{aligned}
		\mathfrak{Z}_1 \ge\ &
		4^{-1}\theta\, C_\omega^{-1} C_{\mathfrak{P}}^{3}\, c_{12}\, \lvert \Omega_\ell \rvert^{3\theta + 3\varpi_1 + 1}
		\left[ \int_{\mathbb{R}_+} \mathrm{d}\omega \,
		\mathfrak{F}\, \omega\,
		\chi_{\big\{\omega \in \mathscr{D}_{\rho(t)}^{\Omega_{\ell_j}}\big\}} \right]^2
		\chi_{\mathscr{P}_{\ell_j}'^{\mathscr{T}_{\ell_j}}}(t).
	\end{aligned}
\end{equation}

		\subsubsection*{Step 3. \textit{Controlling} \(\mathfrak{Z}_2\)}

			Using \eqref{Sec:DDM:6},	similar  as in~\eqref{Lemma:Apriori:E3}, we have the estimate
		\begin{equation} \label{Lemma:Mutiscale1:E3}
			\begin{aligned}
				& \left[ -\vartheta(\omega_{\text{Inf}}) - \vartheta(\omega_{\text{Med}}) + \vartheta(\omega_{\text{Sup}}) + \vartheta(\omega_{\text{Inf}} + \omega_{\text{Med}} - \omega_{\text{Sup}}) \right] \\
				=\, & \int_{0}^{\omega_{\text{Sup}} - \omega_{\text{Inf}}} \mathrm{d}\xi_1 \int_{0}^{\omega_{\text{Sup}} - \omega_{\text{Med}}} \mathrm{d}\xi_2\, 
				\partial_{\omega}^2 \vartheta(\xi_1 + \xi_2 + \omega_{\text{Inf}}) \ \ge\, 0,
			\end{aligned}
		\end{equation}
		since \(\partial_{\omega}^2 \vartheta \ge 0\).
		Similar  as in~\eqref{Lemma:Apriori:E4}, we also have the estimate
		\begin{equation}\label{Lemma:Mutiscale1:E4}
			\begin{aligned}
				& \left[ -\vartheta(\omega_{\text{Sup}}) - \vartheta(\omega_{\text{Inf}}) + \vartheta(\omega_{\text{Med}}) + \vartheta(\omega_{\text{Sup}} + \omega_{\text{Inf}} - \omega_{\text{Med}}) \right] \mathfrak{R}(\omega_{\text{Sup}} + \omega_{\text{Inf}} - \omega_{\text{Med}}) \\
				& + \left[ -\vartheta(\omega_{\text{Sup}}) - \vartheta(\omega_{\text{Med}}) + \vartheta(\omega_{\text{Inf}}) + \vartheta(\omega_{\text{Sup}} + \omega_{\text{Med}} - \omega_{\text{Inf}}) \right] \mathfrak{R}(\omega_{\text{Sup}} + \omega_{\text{Med}} - \omega_{\text{Inf}}) \\
				=\, & \int_{0}^{\omega_{\text{Med}} - \omega_{\text{Inf}}} \mathrm{d}\xi \int_{0}^{\omega_{\text{Sup}} - \omega_{\text{Med}}} \mathrm{d}\xi_0\, 
				\partial_{\omega}^2 \vartheta(\omega_{\text{Inf}} + \xi + \xi_0) \\
				& \quad \times \left[ \mathfrak{R}(\omega_{\text{Sup}} - \omega_{\text{Inf}} + \omega_{\text{Med}}) - \mathfrak{R}(\omega_{\text{Sup}} + \omega_{\text{Inf}} - \omega_{\text{Med}}) \right] \\
				& + \int_{0}^{\omega_{\text{Med}} - \omega_{\text{Inf}}} \mathrm{d}\xi \int_{0}^{\omega_{\text{Med}} - \omega_{\text{Inf}}} \mathrm{d}\xi_0\, 
				\mathfrak{R}(\omega_{\text{Sup}} - \omega_{\text{Inf}} + \omega_{\text{Med}})\, \partial_{\omega}^2 \vartheta(\omega_{\text{Inf}} + \xi + \xi_0) \\
				\ge\, & \int_{0}^{\omega_{\text{Med}} - \omega_{\text{Inf}}} \mathrm{d}\xi \int_{0}^{\omega_{\text{Med}} - \omega_{\text{Inf}}} \mathrm{d}\xi_0\, 
				\mathfrak{R}(\omega_{\text{Sup}} - \omega_{\text{Inf}} + \omega_{\text{Med}})\, \partial_{\omega}^2 \vartheta(\omega_{\text{Inf}} + \xi + \xi_0) \\
				\ge\ & 0.
			\end{aligned}
		\end{equation}

	Using \eqref{Lemma:Mutiscale1:E3} and \eqref{Lemma:Mutiscale1:E4}, we restrict the domain of integration of \( \mathfrak{Z}_2 \) to \( \omega_2 \ge \omega_1 \ge \omega \), and we bound
	\begin{equation} \label{Lemma:Mutiscale1:E5}
		\begin{aligned}
			\mathfrak{Z}_2 
			\ge\ &   c_{22} \iiint_{\mathbb{R}_+^3} \mathrm{d}\omega_1\, \mathrm{d}\omega_2\, \mathrm{d}\omega\,
			f_1 f_2 f\, \mathbf{1}_{ \omega_2 \ge \omega_1 \ge \omega} 
			\\
			& \times \Big\{ [-\vartheta(\omega) - \vartheta(\omega_1) + \vartheta(\omega_2) + \vartheta(\omega + \omega_1 - \omega_2)] \\
			& \qquad \times \mathfrak{R}_o \mathfrak{R}(\omega)\mathfrak{R}(\omega_2)\mathfrak{R}(\omega_1)\mathfrak{R}(\omega + \omega_1 - \omega_2)\min\{|k_3|, |k|\} \mathbf{1}_{\omega + \omega_1 - \omega_2>0}\\
			& + [-\vartheta(\omega) - \vartheta(\omega_2) + \vartheta(\omega_1) + \vartheta(\omega + \omega_2 - \omega_1)] \\
			& \qquad \times \mathfrak{R}_o \mathfrak{R}(\omega)\mathfrak{R}(\omega_2)\mathfrak{R}(\omega_1)\mathfrak{R}(\omega + \omega_2 - \omega_1)\, |k| \\
			& + [-\vartheta(\omega_2) - \vartheta(\omega_1) + \vartheta(\omega) + \vartheta(\omega_2 + \omega_1 - \omega)] \\
			& \qquad \times \mathfrak{R}_o \mathfrak{R}(\omega)\mathfrak{R}(\omega_2)\mathfrak{R}(\omega_1)\mathfrak{R}(\omega_2 + \omega_1 - \omega)\, |k| \Big\},
		\end{aligned}
	\end{equation}
	where \( |k_3| \) denotes the wavevector associated with \( \omega + \omega_2 - \omega_1 \).

We can now bound \eqref{Lemma:Mutiscale1:E5} from below as follows:
\begin{equation*}
	\begin{aligned}
		\mathfrak{Z}_2 
		\ge\ & c_{22} \iiint_{\mathbb{R}_+^3} \mathrm{d}\omega_1\, \mathrm{d}\omega_2\, \mathrm{d}\omega\,
		\frac{\mathfrak{F}(\omega)}{|k|} \cdot \frac{\mathfrak{F}(\omega_2)}{|k_2|} \cdot \frac{\mathfrak{F}(\omega_1)}{|k_1|}\,
		\mathbf{1}_{\omega_2 \ge \omega_1 \ge \omega} \\
	& \times \Big\{ [-\vartheta(\omega) - \vartheta(\omega_1) + \vartheta(\omega_2) + \vartheta(\omega + \omega_1 - \omega_2)] \\
	& \qquad \times \mathfrak{R}_o \mathfrak{R}(\omega)\mathfrak{R}(\omega_2)\mathfrak{R}(\omega_1)\mathfrak{R}(\omega + \omega_1 - \omega_2)\min\{|k_3|, |k|\} \mathbf{1}_{\omega + \omega_1 - \omega_2>0}\\
	& + [-\vartheta(\omega) - \vartheta(\omega_2) + \vartheta(\omega_1) + \vartheta(\omega + \omega_2 - \omega_1)] \\
	& \qquad \times\mathfrak{R}_o  \mathfrak{R}(\omega)\mathfrak{R}(\omega_2)\mathfrak{R}(\omega_1)\mathfrak{R}(\omega + \omega_2 - \omega_1)\, |k| \\
	& + [-\vartheta(\omega_2) - \vartheta(\omega_1) + \vartheta(\omega) + \vartheta(\omega_2 + \omega_1 - \omega)] \\
	& \qquad \times\mathfrak{R}_o  \mathfrak{R}(\omega)\mathfrak{R}(\omega_2)\mathfrak{R}(\omega_1)\mathfrak{R}(\omega_2 + \omega_1 - \omega)\, |k| \Big\},
	\end{aligned}
\end{equation*}

which can be rewritten as
\begin{equation} \label{Lemma:Mutiscale1:E6}
	\begin{aligned}
		\mathfrak{Z}_2 
		\ge\ & c_{22} \iiint_{\mathbb{R}_+^3} \mathrm{d}\omega_1\, \mathrm{d}\omega_2\, \mathrm{d}\omega\,
		\frac{\mathfrak{F}\cdot\omega\, \mathfrak{F}_1\cdot\omega_1\, \mathfrak{F}_2\cdot\omega_2}{|k_2|\, |k_1|}\, \mathbf{1}_{\omega_2 \ge \omega_1 \ge \omega}\,
		\tilde{\mathfrak{R}}(\omega)\, \tilde{\mathfrak{R}}(\omega_1)\, \tilde{\mathfrak{R}}(\omega_2) \\
		& \times \Big\{
		\mathfrak{R}_o [-\vartheta(\omega) - \vartheta(\omega_1) + \vartheta(\omega_2) + \vartheta(\omega + \omega_1 - \omega_2)]\,
		\mathfrak{R}(\omega + \omega_1 - \omega_2)\, \frac{\min\{|k_3|, |k|\}}{|k|} \mathbf{1}_{\omega + \omega_1 - \omega_2>0}\\
		& + \mathfrak{R}_o [-\vartheta(\omega) - \vartheta(\omega_2) + \vartheta(\omega_1) + \vartheta(\omega + \omega_2 - \omega_1)]\,
		\mathfrak{R}(\omega + \omega_2 - \omega_1) \\
		& +\mathfrak{R}_o  [-\vartheta(\omega_2) - \vartheta(\omega_1) + \vartheta(\omega) + \vartheta(\omega_2 + \omega_1 - \omega)]\,
		\mathfrak{R}(\omega_2 + \omega_1 - \omega)
		\Big\},
	\end{aligned}
\end{equation}
where we write \( \mathfrak{F}_1 := \mathfrak{F}(\omega_1) \), \( \mathfrak{F}_2 := \mathfrak{F}(\omega_2) \), and \( \mathfrak{F} := \mathfrak{F}(\omega) \).

Under our original assumptions, namely that
\[
\omega(k) \ge C_{\omega} |k|^{1/\theta} \quad \text{and} \quad \tilde{\mathfrak{R}}(\omega) \ge C_{{\mathfrak{R}}} \omega^{\varpi_2},
\]
we obtain the following lower bound for \eqref{Lemma:Mutiscale1:E6}
\begin{equation} \label{Lemma:Mutiscale1:E7}
	\begin{aligned}
		\mathfrak{Z}_2 
		\ge\ & c_{22} C_{\omega}^{2\theta}    C_{{\mathfrak{R}}}^3 
		\iiint_{\mathbb{R}_+^3}
		\mathrm{d}\omega_1\, \mathrm{d}\omega_2\, \mathrm{d}\omega\,
		\Omega_{\ell_j}^{3\varpi_2   - 2\theta+2}\,
		\mathfrak{F}\, \mathfrak{F}_1\cdot\omega_2\, \mathfrak{F}_2\\[0.5em]
		& \times \chi_{\left\{ \omega_2\ge \omega_1 \ge 	2\Omega_{\ell_j} \right\}} \,
	\chi_{\left\{  \omega \in \mathscr{D}_{\rho(t)}^{\Omega_{\ell_j}} \right\}} \,
	\chi_{\mathscr{P}_{\ell_j}'^{\mathscr{T}_{\ell_j}}}(t) \\[0.5em]
		& \times \Big\{ 
		[-\vartheta(\omega) - \vartheta(\omega_2) + \vartheta(\omega_1) + \vartheta(\omega + \omega_2 - \omega_1)]\, 
		\mathfrak{R}(\omega + \omega_2 - \omega_1)\omega_2^\gamma \\[0.25em]
		& \quad + 
		[-\vartheta(\omega_2) - \vartheta(\omega_1) + \vartheta(\omega) + \vartheta(\omega_2 + \omega_1 - \omega)]\, 
		\mathfrak{R}(\omega_2 + \omega_1 - \omega)(\omega_2 + \omega_1 - \omega)^\gamma
		\Big\},
	\end{aligned}
\end{equation}
where we have eliminated the quantity containing $	[-\vartheta(\omega) - \vartheta(\omega_1) + \vartheta(\omega_2) + \vartheta(\omega + \omega_1 - \omega_2)]\,
\mathfrak{R}(\omega + \omega_1 - \omega_2)\, \frac{\min\{|k_3|, |k|\}}{|k|} \mathbf{1}_{\omega + \omega_1 - \omega_2>0}$ and we have used the fact that $0  <  \ 3\varpi_2 + 2 - 2\theta$ in \eqref{X4}.

Since \(  \omega \in \mathscr{D}_{\rho(t)}^{\Omega_{\ell_j}} \), we deduce that 
\[
\vartheta(\omega)  = 0.
\]
Combining this with \eqref{Lemma:Mutiscale1:E7}, we obtain
\begin{equation} \label{Lemma:Mutiscale1:E8}
	\begin{aligned}
		\mathfrak{Z}_2 
		\ge\ & c_{22} C_{\omega}^{2\theta}     C_{{\mathfrak{R}}}^3 
		\iiint_{\mathbb{R}_+^3}
		\mathrm{d}\omega_1\, \mathrm{d}\omega_2\, \mathrm{d}\omega\,
		\Omega_{\ell_j}^{3\varpi_2   - 2\theta+2}
		\, \mathfrak{F} \, \mathfrak{F}_1\cdot\omega_2\, \mathfrak{F}_2\\[0.5em]
		& \times \chi_{\left\{ \omega_2 \ge \omega_1 \ge	2\Omega_{\ell_j} \right\}} \,
		\chi_{\left\{ \omega \in \mathscr{D}_{\rho(t)}^{\Omega_{\ell_j}} \right\}} \,
		\chi_{\mathscr{P}_{\ell_j}'^{\mathscr{T}_{\ell_j}}}(t) \\[0.5em]
		& \times \Big\{ 
		\left[ -\vartheta(\omega_2) + \vartheta(\omega_1)  + \vartheta(\omega + \omega_2 - \omega_1) \right] 
		\mathfrak{R}(\omega + \omega_2 - \omega_1) \omega_2^\gamma\\[0.25em]
		& \qquad + 
		\left[ -\vartheta(\omega_2) - \vartheta(\omega_1)  + \vartheta(\omega_1 + \omega_2 - \omega) \right] 
		\mathfrak{R}(\omega_1 + \omega_2 - \omega) (\omega_2 + \omega_1 - \omega)^\gamma
		\Big\},
	\end{aligned}
\end{equation}
which  can be bounded as
\begin{equation} \label{Lemma:Mutiscale1:E8:1}
	\begin{aligned}
		\mathfrak{Z}_2 
		\ge\ & c_{22} C_{\omega}^{2\theta}    C_{{\mathfrak{R}}}^2 
		\iiint_{\mathbb{R}_+^3}
		\mathrm{d}\omega_1\, \mathrm{d}\omega_2\, \mathrm{d}\omega\,
		\Omega_{\ell_j}^{3\varpi_2   - 2\theta+2}
		\, \mathfrak{F} \, \mathfrak{F}_1\cdot\omega_2\, \mathfrak{F}_2\\[0.5em]
		& \times \chi_{\left\{ \omega_2 \ge \omega_1 \ge 	2\Omega_{\ell_j} \right\}} \,
		\chi_{\left\{  \omega \in \mathscr{D}_{\rho(t)}^{\Omega_{\ell_j}} \right\}} \,
		\chi_{\mathscr{P}_{\ell_j}'^{\mathscr{T}_{\ell_j}}}(t) \\[0.5em]
		& \times  
		\left[ -\vartheta(\omega_2)- \vartheta(\omega_1) + \vartheta(\omega_1 + \omega_2 - \omega) \right] 
		\Big[\mathfrak{R}(\omega_1 + \omega_2 - \omega) (\omega_2 + \omega_1 - \omega)^\gamma - \mathfrak{R}(\omega + \omega_2 - \omega_1) \omega_2^\gamma\Big].
	\end{aligned}
\end{equation}

We next establish a bound, over the domain of integration of \eqref{Lemma:Mutiscale1:E8:1},  
\begin{equation*}
	\begin{aligned}
		& \left[ -\vartheta(\omega_2) - \vartheta(\omega_1) + \vartheta(\omega_1 + \omega_2 - \omega) \right] 
		\Big[ \mathfrak{R}(\omega_1 + \omega_2 - \omega) - \mathfrak{R}(\omega + \omega_2 - \omega_1) \Big] \omega_2^\gamma \\
		\ge\ & (7\Omega_{\ell_j}/4 - \omega)
		\Big[ \mathfrak{R}(\omega_1 + \omega_2 - \omega) - \mathfrak{R}(\omega + \omega_2 - \omega_1) \Big] \omega_2^\gamma  \\
		\ge\ & \frac{\Omega_{\ell_j}}{4}
		\int_{\omega + \omega_2 - \omega_1}^{\omega_1 + \omega_2 - \omega} \mathrm{d}\xi\, \mathfrak{R}'(\xi)\, \omega_2^\gamma  \ 
		\ge\    \frac{\omega}{8}
		\int_{\omega_2}^{\omega_1 + \omega_2 - \omega} \mathrm{d}\xi\, \mathfrak{R}'(\xi)\, \omega_2^\gamma \\
		\ge\ &	C_{\mathfrak{R}'}\, (\omega_1 - \omega)\frac{\omega}{8}\, (2\omega_2)^{\kappa_2} \omega_2^\gamma  \ 
		\ge\    C_{\mathfrak{R}'}\,  \frac{\omega_1\omega}{16}\, (2\omega_2)^{\kappa_2} \omega_2^\gamma  \\
		\ge\ &  C_{\mathfrak{R}'}\, 2^{-4+\kappa_2}\, \omega_1\omega\, \omega_2^{\gamma+\kappa_2},
	\end{aligned}
\end{equation*}

which, in combination with \eqref{Lemma:Mutiscale1:E8:1}, yields
\begin{equation} \label{Lemma:Mutiscale1:E8:2}
	\begin{aligned}
		\mathfrak{Z}_2 
		\ge\ & 2^{\kappa_2-4} c_{22} C_{\omega}^{2\theta}   C_{\mathfrak{R}}^2 
		C_{\mathfrak{R}'} 
		\iiint_{\mathbb{R}_+^3}
		\mathrm{d}\omega_1\, \mathrm{d}\omega_2\, \mathrm{d}\omega\, 
		\Omega_{\ell_j}^{3\varpi_2 + 2 - 2\theta + \kappa_2+\gamma}\,
		\mathfrak{F}\cdot\omega\, \mathfrak{F}_1\cdot\omega_1\, \mathfrak{F}_2 \\[0.5em]
		& \times \chi_{\left\{ \omega_2 \ge \omega_1 \ge \Omega_{\ell_j} \right\}} \,
		\chi_{\left\{ \omega \in \mathscr{D}_{\rho(t)}^{\Omega_{\ell_j}} \right\}} \,
		\chi_{\mathscr{P}_{\ell_j}'^{\mathscr{T}_{\ell_j}}}(t)\, \vartheta(\omega_2) \\[0.75em]
		\ge\ &  2^{\kappa_2-4} c_{22} C_{\omega}^{2\theta}   C_{\mathfrak{R}}^2 
		C_{\mathfrak{R}'}  
		\int_{\mathbb{R}_+}
		\mathrm{d}\omega\, 
		\Omega_{\ell_j}^{3\varpi_2 + 2 - 2\theta + \kappa_2+\gamma}\,
		\mathfrak{F}\omega\, 
		\chi_{\left\{ \omega \in \mathscr{D}_{\rho(t)}^{\Omega_{\ell_j}} \right\}} \,
		\chi_{\mathscr{P}_{\ell_j}'^{\mathscr{T}_{\ell_j}}}(t)\, 
		\left( \int_{\mathbb{R}_+} \mathrm{d}\omega\, \mathfrak{F}\, \vartheta(\omega) \right)^2,
	\end{aligned}
\end{equation}
where we have used the fact that $\gamma+\kappa_2\ge 0$ in \eqref{X4}.

By \eqref{Lemma:Apriori2:1}, we bound
\begin{equation*}
	\begin{aligned}
		\int_{\mathbb{R}_+} \mathrm{d}\omega\, \mathfrak{F}(t,\omega)\, \vartheta(\omega) 
		\ge \int_{\mathbb{R}_+} \mathrm{d}\omega\, \mathfrak{F}(0,\omega)\, \vartheta(\omega) \\
		\ge \frac{1}{2} \int_{[2\Omega_\ell, \infty)} \mathrm{d}\omega\, \mathfrak{F}(0,\omega)\, \omega\, C_{\mathrm{in}} \
		\ge \frac{1}{2}(2\Omega_\ell)^{-c_{\mathrm{in}}},
	\end{aligned}
\end{equation*}
which implies
\begin{equation} \label{Lemma:Mutiscale1:E8:3}
	\begin{aligned}
		\mathfrak{Z}_2 
		\ge\ & 2^{\kappa_2-5-c_{\mathrm{in}}} c_{22} C_{\omega}^{2\theta}   C_{\mathfrak{R}}^2 
		C_{\mathfrak{R}'}  
		\int_{\mathbb{R}_+}
		\mathrm{d}\omega\, 
		\Omega_{\ell_j}^{3\varpi_2 + 2 - 2\theta + \kappa_2 - c_{\mathrm{in}}+\gamma}\,
		\mathfrak{F}\omega \\[0.5em]
		&\times \chi_{\left\{ \omega \in \mathscr{D}_{\rho(t)}^{\Omega_{\ell_j}} \right\}} \,
		\chi_{\mathscr{P}_{\ell_j}'^{\mathscr{T}_{\ell_j}}}(t)\,
		\left( \int_{\mathbb{R}_+} \mathrm{d}\omega\, \mathfrak{F}\, \vartheta(\omega) \right).
	\end{aligned}
\end{equation}

		\subsubsection*{Step 4. \textit{Controlling}  \(\mathfrak{Z}_3\)}	

		Similar as in~\eqref{Lemma:Apriori:E13a}, we can bound
	\begin{equation}
		\begin{aligned}\label{Lemma:Supersolu:E7}
			& \mathfrak{Q}(\omega_1 + \omega_2 + \omega_3) \left(\vartheta(\omega_1 + \omega_2 + \omega_3) - \vartheta(\omega_1) - \vartheta(\omega_2) - \vartheta(\omega_3) \right) \\
			& \quad - \mathfrak{Q}(\omega_1 - \omega_2 - \omega_3) \left( \vartheta(\omega_1) - \vartheta(\omega_1 - \omega_2 - \omega_3) - \vartheta(\omega_2) - \vartheta(\omega_3) \right) \\
			=\ & \left[\mathfrak{Q}(\omega_1 + \omega_2 + \omega_3) - \mathfrak{Q}(\omega_1 - \omega_2 - \omega_3)\right] \\
			& \quad \times \left( \int_0^{\omega_1 - \omega_2 - \omega_3} \int_0^{\omega_2 + \omega_3} \mathrm{d}\zeta\, \mathrm{d}\zeta_0\, \partial_{\omega}^2 \vartheta(\zeta + \zeta_0) 
			+ \int_0^{\omega_2} \int_0^{\omega_3} \mathrm{d}\zeta\, \mathrm{d}\zeta_0\, \partial_{\omega}^2 \vartheta(\zeta + \zeta_0) \right) \\
			& \quad + \mathfrak{Q}(\omega_1 + \omega_2 + \omega_3) \int_{\omega_1 - \omega_2 - \omega_3}^{\omega_1} \int_0^{\omega_2 + \omega_3} 
			\mathrm{d}\zeta\, \mathrm{d}\zeta_0\, \partial_{\omega}^2 \vartheta(\zeta + \zeta_0) \ge 0,
		\end{aligned}
	\end{equation}
	since \(\partial_{\omega}^2 \vartheta(\zeta + \zeta_0) \ge 0\), and noting that 
	\(\mathfrak{Q}(\omega_1 + \omega_2 + \omega_3) - \mathfrak{Q}(\omega_1 - \omega_2 - \omega_3) \ge 0\).

Using~\eqref{Lemma:Supersolu:E7}, we can restrict the domain of integration onto the domain of $\omega_1,\omega_2, \omega_3 \in \mathscr{D}_{\rho(t)}^{\Omega_{\ell_j}} $   and  estimate
\begin{equation*} 
	\begin{aligned}
		\mathfrak{Z}_3 \ \ge\; & 3c_{31} \iiint_{ \mathbb{R}_+^3} 
		\mathrm{d}\omega_1\, \mathrm{d}\omega_2\, \mathrm{d}\omega_3\, 
		\mathfrak{Q}(\omega_1)\, \mathfrak{Q}(\omega_2)\, \mathfrak{Q}(\omega_3)\, f(\omega_1)\, f(\omega_2)\, f(\omega_3) \,
		\chi_{\left\{\omega_1, \omega_2, \omega_3 \in \mathscr{D}_{\rho(t)}^{\Omega_{\ell_j}} \right\}} \\
		& \times 	\chi_{\mathscr{P}_{\ell_j}'^{\mathscr{T}_{\ell_j}}}(t) 
	  \mathfrak{Q}(\omega_1 + \omega_2 + \omega_3) 
	  \vartheta(\omega_1 + \omega_2 + \omega_3)   \\
		\ge\; & 3c_{31} \iiint_{ \mathbb{R}_+^3 } 
		\mathrm{d}\omega_1\, \mathrm{d}\omega_2\, \mathrm{d}\omega_3\, 
		\tilde{\mathfrak{Q}}(\omega_1)\, \tilde{\mathfrak{Q}}(\omega_2)\, \tilde{\mathfrak{Q}}(\omega_3)\, 
		\mathfrak{F}_1\cdot\omega_1\, \mathfrak{F}_2\cdot\omega_2\, \mathfrak{F}_3\cdot \omega_3  \\
		& \times \chi_{\left\{ \omega_1,\omega_2, \omega_3 \in \mathscr{D}_{\rho(t)}^{\Omega_{\ell_j}} \right\}} 
			\chi_{\mathscr{P}_{\ell_j}'^{\mathscr{T}_{\ell_j}}}(t)
		  \mathfrak{Q}(\omega_1 + \omega_2 + \omega_3) 
		 \vartheta(\omega_1 + \omega_2 + \omega_3),
	\end{aligned}
\end{equation*}
noticing that $\vartheta(\omega_1-\omega_2-\omega_3) =\vartheta(\omega_1) =\vartheta(\omega_2) =\vartheta(\omega_3)=0$. 

Using the inequality \( \tilde{\mathfrak{Q}}(\omega) \ge C_{\mathfrak{Q}} \omega^{\varpi_3} \), we can now estimate \( \mathfrak{Z}_3 \) as  
\begin{equation} \label{Lemma:Multiscale1:E2}
	\begin{aligned}
		\mathfrak{Z}_3 \ge\; & 3c_{31}\, C_{\mathfrak{Q}}^{3} 
		\iiint_{\mathbb{R}_+^3} 
		\mathrm{d}\omega_1\, \mathrm{d}\omega_2\, \mathrm{d}\omega_3\, 
		\mathfrak{F}_1\cdot \omega_1^{1+\varpi_3}\, 
		\mathfrak{F}_2\cdot \omega_2^{1+\varpi_3}\, 
		\mathfrak{F}_3\cdot \omega_3^{1+\varpi_3} \\
		& \times 
		\chi_{\big\{\omega_1, \omega_2, \omega_3 \in \mathscr{D}_{\rho(t)}^{\Omega_{\ell_j}}\big\}} 
		\chi_{\mathscr{P}_{\ell_j}'^{\mathscr{T}_{\ell_j}}}(t)
		\Big[
		\mathfrak{Q}(\omega_1 + \omega_2 + \omega_3)\,
		\big(\omega_1 + \omega_2 + \omega_3 - \tfrac{7}{4}\Omega_\ell\big)
		\Big].
	\end{aligned}
\end{equation}

We now establish a bound over the domain of integration, using the equality in~\eqref{Lemma:Supersolu:E7},  
\begin{equation*}
	\begin{aligned}
		& \mathfrak{Q}(\omega_1 + \omega_2 + \omega_3) 
		\big(\omega_1 + \omega_2 + \omega_3 - \tfrac{7}{4}\Omega_\ell\big)  \\[0.5em]
		\ge\; & \tfrac{1}{6}\theta\, C_\omega^{-1} C_{\mathfrak{P}} (\omega_1 + \omega_2 + \omega_3)^{3\theta + \varpi_3}\, |\Omega_\ell| 
		\ \ge\ \tfrac{1}{6}\theta\, C_\omega^{-1} C_{\mathfrak{P}} (\omega_1 + \omega_2 + \omega_3)^{3\theta + \varpi_3 + 1},
	\end{aligned}
\end{equation*}
which, in combination with \eqref{Lemma:Multiscale1:E2}, gives

	\begin{equation} \label{Lemma:Mutiscale1:E2:1}
		\begin{aligned}
			\mathfrak{Z}_3 \ \ge\; & 
			\frac12 c_{31} C_{\mathfrak{Q}}^{3}   \theta\, C_\omega^{-1} C_{\mathfrak{P}} |\Omega_\ell|^{3\theta + 4\varpi_3 + 1}
			\left[ \int_{\mathbb{R}_+} \mathrm{d}\omega \,
		\mathfrak{F}\, \omega\,
		\chi_{\big\{\omega \in \mathscr{D}_{\rho(t)}^{\Omega_{\ell_j}}\big\}} \right]^3
		\chi_{\mathscr{P}_{\ell_j}'^{\mathscr{T}_{\ell_j}}}(t).
		\end{aligned}
	\end{equation}

	\subsubsection*{Step 5. \textit{Putting together the controls of $\mathfrak{Z}_1$, $\mathfrak{Z}_2$, and $\mathfrak{Z}_3$}}

We first consider the case \( c_{22} = 0 \).  
Combining \eqref{Lemma:Supersolu:E9}, \eqref{Lemma:Multiscale1:E1:3}, and \eqref{Lemma:Mutiscale1:E2:1}, we obtain the following estimate:
\begin{equation*} \label{Lemma:Supersolu:E9:1}
	\begin{aligned}
		\partial_t \!\left( \int_{\mathbb{R}_+} \mathrm{d}\omega\, \mathfrak{F}\, \vartheta \right)
		\ge\; & 4^{-1}\, \theta\, C_\omega^{-1} C_{\mathfrak{P}}^{3}\, c_{12}\, \lvert \Omega_\ell \rvert^{3\theta + 3\varpi_1 + 1}
		\left[ \int_{\mathbb{R}_+} \mathrm{d}\omega\,
		\mathfrak{F}\, \omega\,
		\chi_{\big\{\omega \in \mathscr{D}_{\rho(t)}^{\Omega_{\ell_j}}\big\}} \right]^2
		\chi_{\mathscr{P}_{\ell_j}'^{\mathscr{T}_{\ell_j}}}(t) \\[0.4em]
		& + \tfrac{1}{2}\, c_{31}\, C_{\mathfrak{Q}}^{3}\, \theta\, C_\omega^{-1} C_{\mathfrak{P}}\, 
		\, \lvert \Omega_\ell \rvert^{3\theta + 4\varpi_3 + 1}
		\left[ \int_{\mathbb{R}_+} \mathrm{d}\omega\,
		\mathfrak{F}\, \omega\,
		\chi_{\big\{\omega \in \mathscr{D}_{\rho(t)}^{\Omega_{\ell_j}}\big\}} \right]^3
		\chi_{\mathscr{P}_{\ell_j}'^{\mathscr{T}_{\ell_j}}}(t).
	\end{aligned}
\end{equation*}
Integrating both sides of the above inequalities yields  
\begin{equation} \label{Lemma:Supersolu:E9:1}
	\begin{aligned}
		\mathfrak{M} + \mathfrak{E} 
		\ge\; & \int_{\mathbb{R}_+} \mathrm{d}\omega\, \mathfrak{F}(\mathscr{T}_{\ell_j})\, \vartheta \\[0.3em]
		\ge\; & \int_0^{\mathscr{T}_{\ell_j}} \mathrm{d}t\,4^{-1}\, \theta\, C_\omega^{-1} C_{\mathfrak{P}}^{3}\, c_{12}\, 
		\lvert \Omega_\ell \rvert^{3\theta + 3\varpi_1 + 1}
		\left[ \int_{\mathbb{R}_+} \mathrm{d}\omega\,
		\mathfrak{F}\, \omega\,
		\chi_{\big\{\omega \in \mathscr{D}_{\rho(t)}^{\Omega_{\ell_j}}\big\}} \right]^2
		\chi_{\mathscr{P}_{\ell_j}'^{\mathscr{T}_{\ell_j}}}(t) \\[0.4em]
		& + \int_0^{\mathscr{T}_{\ell_j}} \mathrm{d}t\,\tfrac{1}{2}\, c_{31}\, C_{\mathfrak{Q}}^{3}\, \theta\, C_\omega^{-1} C_{\mathfrak{P}}\, 
		\lvert \Omega_\ell \rvert^{3\theta + 4\varpi_3 + 1}
		\left[ \int_{\mathbb{R}_+} \mathrm{d}\omega\,
		\mathfrak{F}\, \omega\,
		\chi_{\big\{\omega \in \mathscr{D}_{\rho(t)}^{\Omega_{\ell_j}}\big\}} \right]^3
		\chi_{\mathscr{P}_{\ell_j}'^{\mathscr{T}_{\ell_j}}}(t) \\[0.4em]
		=\; & \mathcal{C}_1\, \Omega_{\ell_j}^{\epsilon} \;\to\; \infty,
	\end{aligned}
\end{equation}
as \( j \to \infty \).  
This yields a contradiction.

	We next consider the case \( c_{22} \ne 0 \).  
Combining \eqref{Lemma:Supersolu:E9} and \eqref{Lemma:Mutiscale1:E8:3}, we obtain the following estimate:
\begin{equation} \label{Lemma:Mutiscale1:E9}
	\begin{aligned}
		\partial_t\left( \int_{\mathbb{R}^{+}} \mathrm{d}\omega\, \mathfrak{F} \vartheta \right)
		\ge\; &  2^{\kappa_2-5-c_{\mathrm{in}}} c_{22} C_{\omega}^{2\theta}   C_{\mathfrak{R}}^2 
		C_{\mathfrak{R}'}  
		\int_{\mathbb{R}_+}
		\mathrm{d}\omega\, 
		\Omega_{\ell_j}^{3\varpi_2 + 2 - 2\theta + \kappa_2 - c_{\mathrm{in}}+\gamma}\,
		\mathfrak{F}\omega \\[0.5em]
		&\times \chi_{\left\{ \omega \in \mathscr{D}_{\rho(t)}^{\Omega_{\ell_j}} \right\}} \,
		\chi_{\mathscr{P}_{\ell_j}'^{\mathscr{T}_{\ell_j}}}(t)\,
		\left( \int_{\mathbb{R}_+} \mathrm{d}\omega\, \mathfrak{F}\, \vartheta(\omega) \right).
	\end{aligned}
\end{equation}

Solving the differential inequality \eqref{Lemma:Mutiscale1:E9} yields
\begin{equation} \label{Lemma:Mutiscale1:E10}
	\begin{aligned}
		\int_{\mathbb{R}^{+}} \mathrm{d}\omega\, \mathfrak{F}(t,\omega)\, \vartheta(\omega)
		\ge\ & \int_{\mathbb{R}^{+}} \mathrm{d}\omega\, \mathfrak{F}(0,\omega)\, \vartheta(\omega) \\
		&\times \exp\Bigg(
		\int_0^t \mathrm{d}s\, \chi_{\mathscr{P}_{\ell_j}'^{\mathscr{T}_{\ell_j}}}(s) \Bigg[2^{\kappa_2-5-c_{\mathrm{in}}} c_{22} C_{\omega}^{2\theta}   C_{\mathfrak{R}}^2 
		C_{\mathfrak{R}'}  
		\int_{\mathscr{D}_{\rho(s)}^{\Omega_{\ell_j}}} \mathrm{d}\omega\, \mathfrak{F}(s,\omega)\, \omega \\
		&\quad \times \Omega_{\ell_j}^{3\varpi_2 + 2 - 2\theta + \kappa_2 - c_{\mathrm{in}}+\gamma}
		\Bigg] \Bigg).
	\end{aligned}
\end{equation}

Using~\eqref{Theorem1:4} and ~\eqref{Lemma:Apriori2:1}, we obtain the bound, for $j$ sufficiently large,
\begin{equation}
	\label{Lemma:Mutiscale1:E11}
	\int_{\mathbb{R}_+} \mathrm{d}\omega\, \mathfrak{F}(0,\omega)\, \vartheta(\omega)   
	\geq \frac{1}{20} \int_{2\Omega_{\ell_j}}^\infty \mathrm{d}\omega\, \mathfrak{F}(0,\omega)\, \omega   
	\geq \frac{1}{20} C_{\mathrm{ini}} \left(2\Omega_{\ell_j} \right)^{-c_{\mathrm{in}}}.
\end{equation}

Substituting~\eqref{Lemma:Mutiscale1:E11} into~\eqref{Lemma:Mutiscale1:E10}, we obtain the following estimate for \( 0 \le t \le \mathscr{T}_{\ell_j} \):  
\begin{equation}\label{Lemma:Multiscale1:E12}
	\begin{aligned}
		\int_{\mathbb{R}^{+}} \mathrm{d}\omega\, \mathfrak{F}(t,\omega)\, \vartheta(\omega)
		\ge\; & \tfrac{1}{20}\, C_{\mathrm{ini}}\, (2\Omega_{\ell_j})^{-c_{\mathrm{in}}}
		\exp\Bigg(
		\int_0^t \mathrm{d}s\, \chi_{\mathscr{P}_{\ell_j}'^{\mathscr{T}_{\ell_j}}}(s)
		\Bigg[ \\[0.4em]
		& \quad 2^{\kappa_2 - 5 - c_{\mathrm{in}}}\, c_{22}\, C_{\omega}^{2\theta}\, C_{\mathfrak{R}}^{2}\,
		C_{\mathfrak{R}'} 
		\int_{\mathscr{D}_{\rho(s)}^{\Omega_{\ell_j}}} \mathrm{d}\omega\, 
		\mathfrak{F}(s,\omega)\, \omega \\[0.4em]
		& \quad \times \Omega_{\ell_j}^{3\varpi_2 + 2 - 2\theta + \kappa_2 - c_{\mathrm{in}} + \gamma}
		\Bigg] \Bigg).
	\end{aligned}
\end{equation}

Combining~\eqref{Lemma:Mutis:E8} and~\eqref{Lemma:Multiscale1:E12}, we obtain  
\begin{equation}\label{Lemma:Mutiscale1:E12a}
	\begin{aligned}
		\int_{\mathbb{R}^{+}} \mathrm{d}\omega\, \mathfrak{F}(\mathscr{T}_{\ell_j},\omega)\, \vartheta(\omega)
		\ge\ & \tfrac{1}{20} C_{\mathrm{ini}} \left(2\Omega_{\ell_j} \right)^{-c_{\mathrm{in}}}
		\exp\Big( \mathcal{C}_1\, \Omega_{\ell_j}^{\epsilon} \Big).
	\end{aligned}
\end{equation}

We next bound, for \( 0 \le t \le \mathscr{T}_{\ell_j} \),
\begin{equation}\label{Lemma:Multiscale1:E12c} 
	\begin{aligned}
		\int_{\mathbb{R}^{+}} \mathrm{d}\omega\, \mathfrak{F}(t,\omega)\, \vartheta(\omega)
		\le\ &   \mathfrak{M} + \mathfrak{E}.
	\end{aligned}
\end{equation}

Inequalities \eqref{Lemma:Mutiscale1:E12a} and \eqref{Lemma:Multiscale1:E12c} yield
\begin{equation}\label{Lemma:Multiscale1:E13}
	\begin{aligned}
		\mathfrak{M} + \mathfrak{E}  
		\ge\ & \tfrac{1}{20} C_{\mathrm{ini}} \left(2\Omega_{\ell_j} \right)^{-c_{\mathrm{in}}}
		\exp\left( \mathcal{C}_1\, \Omega_{\ell_j}^{\epsilon} \right)
		\longrightarrow \infty \quad \text{as } j \to \infty.
	\end{aligned}
\end{equation}

We obtain a contradiction, and hence the assumption leading to~\eqref{Lemma:Mutis:E1a}-\eqref{Lemma:Mutis:E1} cannot hold. This completes the proof of the proposition.

	\end{proof}
	
	\section{Estimating $\mathscr{N}_{\ell}^T$ and $\mathscr{M}_{\ell}^T$}\label{Sec:Second}
	
	\subsection{The general estimate}\label{Sec:DM}

For a constant \( \lambda > 0 \), we define 
\begin{equation}\label{TPump}
	\begin{aligned}
		\Theta_{\lambda,\Omega_{\ell}}^2 &:= \left\{ t \in \Theta \,\middle|\, 
		\forall i \in \{0, \dots, \mathscr{O}_{\Omega_{\ell}} - 1\},\ 
		\int_{\mathscr{D}_{i}^{\Omega_{\ell}}} \mathrm{d}\omega\, \mathfrak{F}(t)\, \omega
		< (1 - \lambda) \int_{[\Omega_{\ell}, \infty)} \mathrm{d}\omega\, \mathfrak{F}(t)\, \omega \right\}, \\[6pt]
		\Theta_{\lambda,\Omega_{\ell}}^1 &:= \Theta \setminus \Theta_{\lambda,\Omega_{\ell}}^2,
	\end{aligned}
\end{equation}
recalling that $\Theta$ is defined in \eqref{Theta}.

	\begin{proposition}
		\label{Propo:Collision}
We assume Assumptions X and Y.	There exists a universal constant \( \mathcal{C}_{22}^o > 0 \) such that the following estimate holds for \( 0 < \Delta_\ell < \frac{1}{10} \) and   
\(
\lambda = 1 - \frac{1}{2^{\sigma}} 
\):
	\begin{equation}\label{Propo:Collision:1}
		\begin{aligned}
			\mathfrak{M} + \mathfrak{E} \ \ge\ 
			&\, c_{12} \mathcal{C}_{22}^o 
			\int_{\Theta_{\lambda,\Omega_{\ell}}^2} \mathrm{d}t 
				\iint_{[\Omega_{\ell}, \infty)^2} \mathrm{d}\omega_1\, \mathrm{d}\omega_2\,
			\mathfrak{F}_1 \omega_1\, \mathfrak{F}_2 \omega_2\,
			(\omega_1+\omega_2)^{3\theta+\varpi_1+\alpha-2}\omega_1^{\varpi_1+1}\omega_2^{\varpi_1+1}   \\[0.5em]
			&+ c_{22} \mathcal{C}_{22}^o \lambda^4\Delta_\ell^2\Omega_\ell^{4\varpi_2- 2 +\alpha+\gamma}		
			\int_{\Theta_{\lambda,\Omega_{\ell}}^2} \mathrm{d}t 
			\left( \int_{[\Omega_{\ell}, \infty)} \mathrm{d}\omega\, \mathfrak{F} \omega \right)^3 \\[0.5em]
			&+ c_{31}\mathcal{C}_{22}^o	\int_{\Theta_{\lambda,\Omega_{\ell}}^2} \mathrm{d}t 
			\iiint_{[\Omega_{\ell}, \infty)^3} \mathrm{d}\omega_1\, \mathrm{d}\omega_2\, \mathrm{d}\omega_3\,
			(\omega_1+\omega_2+\omega_3)^{3\theta+\varpi_3+\alpha-2}
			\omega_1^{\varpi_3}\omega_2^{\varpi_3}\omega_3^{\varpi_3} \\[0.5em]
			&\quad \times \mathfrak{F}_1 \omega_1\, \mathfrak{F}_2 \omega_2\, \mathfrak{F}_3 \omega_3\, 
			(\omega_1 \omega_2 + \omega_2 \omega_3 + \omega_3 \omega_1).
	\end{aligned}
	\end{equation}
	
	\end{proposition}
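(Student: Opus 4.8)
The plan is to prove Proposition~\ref{Propo:Collision} by choosing the test function $\Xi(\omega) = ((\omega - \Omega_\ell)_+ + 1)^\alpha$ in the weak formulation and running the same machinery as in Lemma~\ref{lemma:Apriori}, but now keeping track of the set $\Theta^2_{\lambda,\Omega_\ell}$ on which the energy is genuinely spread out across the subdomains $\mathscr{D}_i^{\Omega_\ell}$ rather than concentrated in one of them. First I would apply Lemma~\ref{lemma:Apriori} with $R = \Omega_\ell$, which yields the dissipation inequality $\eqref{Lemma:Apriori:1}$ with the domains of integration restricted to $[\Omega_\ell,\infty)$; I then restrict the time integral to $\Theta^2_{\lambda,\Omega_\ell}\subset\Theta$ (so that the energy is conserved on this set and the bound $\mathfrak{M}+\mathfrak{E}$ on the left is legitimate). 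The three terms on the right of $\eqref{Lemma:Apriori:1}$ correspond exactly to the three terms claimed in $\eqref{Propo:Collision:1}$, so the task reduces to bounding each of the three weight factors from below.

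For the $C_{12}$ term, the plan is to use $\mathfrak{P}(\omega_1+\omega_2)\gtrsim (\omega_1+\omega_2)\tilde{\mathfrak{P}}(\omega_1+\omega_2)\Gamma(\omega_1+\omega_2)/(\omega_1+\omega_2)$ together with $\Gamma(\omega)\sim \omega^{2\theta}/\omega^{1-\theta}$ and $\tilde{\mathfrak{P}}(\omega)\gtrsim \omega^{\varpi_1}$, and to combine the Taylor-type factor $\alpha(1-\alpha)\omega_1\omega_2/(\omega_1+\omega_2)^{2-\alpha}$ with $\mathfrak{P}_1\mathfrak{P}_2 \gtrsim \omega_1^{\varpi_1+1}\omega_2^{\varpi_1+1}\cdot(\text{powers of }\omega)$; writing everything in terms of $\mathfrak{F}_i\omega_i = f_i\Gamma_i\omega_i$ absorbs the $\Gamma$ factors and produces the exponent $3\theta+\varpi_1+\alpha-2$ on $(\omega_1+\omega_2)$. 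The $C_{31}$ term is handled identically with $\mathfrak{Q}$ in place of $\mathfrak{P}$, producing exponent $3\theta+\varpi_3+\alpha-2$ and the symmetric factor $\omega_1\omega_2+\omega_2\omega_3+\omega_3\omega_1$. These are essentially bookkeeping once the a priori bound is in hand.

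The $C_{22}$ term is where the set $\Theta^2_{\lambda,\Omega_\ell}$ and the parameters $\lambda$ and $\Delta_\ell$ enter in an essential way, and this is the main obstacle. Here the integrand in $\eqref{Lemma:Apriori:1}$ has the factor $\alpha(1-\alpha)(\omega_{\mathrm{Med}}-\omega_{\mathrm{Inf}})^2/(2\omega_{\mathrm{Med}}-\omega_{\mathrm{Inf}})^{2-\alpha}$ together with $|k_{\mathrm{Inf}}|$ and the four $\mathfrak{R}$-factors and $\mathfrak{R}_o$; the difficulty is that when all three of $\omega,\omega_1,\omega_2$ lie in the \emph{same} subdomain $\mathscr{D}_i^{\Omega_\ell}$, the gap $\omega_{\mathrm{Med}}-\omega_{\mathrm{Inf}}$ can be as small as $O(\Delta_\ell)$ but no smaller is guaranteed, whereas if they lie in different subdomains the gap is at least of order $\Delta_\ell$. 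The strategy is: on $\Theta^2_{\lambda,\Omega_\ell}$, no single subdomain carries more than a $(1-\lambda)$-fraction of the energy $\int_{[\Omega_\ell,\infty)}\mathfrak{F}\omega$, so a pigeonhole/combinatorial argument forces a positive fraction (quantified by $\lambda$) of the triple mass $(\int \mathfrak{F}\omega)^3$ to be supported on triples $(\omega,\omega_1,\omega_2)$ with $\omega_{\mathrm{Med}}-\omega_{\mathrm{Inf}}\gtrsim \Delta_\ell$ and $\omega_{\mathrm{Sup}}-\omega_{\mathrm{Med}}\lesssim \Omega_\ell$ (but also bounded below on a suitable sub-collection); on this sub-collection one lower-bounds $(\omega_{\mathrm{Med}}-\omega_{\mathrm{Inf}})^2/(2\omega_{\mathrm{Med}}-\omega_{\mathrm{Inf}})^{2-\alpha}\gtrsim \Delta_\ell^2\Omega_\ell^{\alpha-2}$, uses $|k_{\mathrm{Inf}}| = |k|(\omega_{\mathrm{Inf}})\sim \Omega_\ell^{\theta}$, bounds the four $\mathfrak{R}$-factors below by $\Omega_\ell^{\varpi_2+1}$ powers via $\tilde{\mathfrak{R}}(\omega)\gtrsim\omega^{\varpi_2}$ and $\Gamma(\omega)/|k|\sim \omega^{\varpi_2}\cdots$, and bounds $\mathfrak{R}_o = \max\{\cdots\}^\gamma\gtrsim\Omega_\ell^\gamma$. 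Collecting exponents: four $\mathfrak{R}$-factors contribute roughly $\Omega_\ell^{4\varpi_2}$ (after extracting the $\omega_i$'s into $\mathfrak{F}_i\omega_i$), the $|k_{\mathrm{Inf}}|$ contributes $\Omega_\ell^\theta$ which combines with a $\Gamma$-type factor $\Omega_\ell^{-\theta}$, the $\mathfrak{R}_o$ contributes $\Omega_\ell^\gamma$, and the Taylor factor contributes $\Delta_\ell^2\Omega_\ell^{\alpha-2}$, giving the stated $\lambda^4\Delta_\ell^2\Omega_\ell^{4\varpi_2-2+\alpha+\gamma}$ after carefully tracking the powers of $\lambda$ from the pigeonhole step. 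The delicate point requiring care is the exact combinatorial lower bound (how many disjoint pairs of well-separated subdomains one can extract under the $(1-\lambda)$-flatness condition, which is exactly where the power $\lambda^4$ comes from and where the constraint $\Delta_\ell < 1/10$, i.e.\ $\Upsilon_\ell\ge 4$, is used), and I would isolate this as a combinatorial sublemma before plugging the resulting geometric lower bounds into $\eqref{Lemma:Apriori:1}$.
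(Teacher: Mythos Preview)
Your proposal is correct and follows essentially the same approach as the paper: start from the a~priori estimate \eqref{Lemma:Apriori:1} with $R=\Omega_\ell$, restrict time to $\Theta^2_{\lambda,\Omega_\ell}$, handle the $C_{12}$ and $C_{31}$ terms by direct substitution of the lower bounds on $\tilde{\mathfrak{P}},\tilde{\mathfrak{Q}},\Gamma$, and treat the $C_{22}$ term via a combinatorial subdomain argument exploiting the $(1-\lambda)$-flatness condition. The paper implements your ``combinatorial sublemma'' by an explicit greedy construction of three index sets $\Lambda^y_t,\Lambda^{z,1}_t,\Lambda^{z,2}_t$ (split into two cases according to whether the maximal subdomain carries more or less than $\lambda/1000$ of the energy), each carrying a $\gtrsim\lambda$-fraction of $\int_{[\Omega_\ell,\infty)}\mathfrak{F}\omega$ and pairwise separated by at least one empty overlapping subdomain, which is precisely the mechanism you anticipate.
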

	
\begin{proof}

We  divide the proof into smaller steps.

\textbf{Step 1: Applying \eqref{Lemma:Apriori:1}.} In this step, we will use \eqref{Lemma:Apriori:1} to obtain some preliminary bounds for $\int_{\Theta_{\lambda,\Omega_{\ell}}^2} \mathrm{d}t 
\left( \int_{[\Omega_{\ell}, \infty)} \mathrm{d}\omega\, \mathfrak{F} \omega \right)^2$ and $\int_{\Theta_{\lambda,\Omega_{\ell}}^2} \mathrm{d}t 
\left( \int_{[\Omega_{\ell}, \infty)} \mathrm{d}\omega\, \mathfrak{F} \omega \right)^3$.

From \eqref{Lemma:Apriori:1}, we obtain the bound:
\begin{equation}\label{Lemma:SecondEstimate:E1}
	\begin{aligned}
		\mathfrak{M} + \mathfrak{E} \ \ge\ 
		& \, c_{12} \int_{\Theta_{\lambda,\Omega_{\ell}}^2} \mathrm{d}t \iint_{[\Omega_{\ell}, \infty)^2} \mathrm{d}\omega_1\, \mathrm{d}\omega_2\,
		\tilde{\mathfrak{P}}_1\, \tilde{\mathfrak{P}}_2\, \mathfrak{F}_1 \omega_1\, \mathfrak{F}_2 \omega_2\, {\mathfrak{P}}(\omega_1 + \omega_2)\, 
		\frac{\alpha(1-\alpha)\omega_1 \omega_2}{(\omega_1 + \omega_2)^{2-\alpha}} \\[0.5em]
		& + c_{22} \int_{\Theta_{\lambda,\Omega_{\ell}}^2} \mathrm{d}t \iiint_{[\Omega_{\ell}, \infty)^3} \mathrm{d}\omega_1\, \mathrm{d}\omega_2\, \mathrm{d}\omega\,
		\mathfrak{F}_1 \omega_1\, \mathfrak{F}_2 \omega_2\, \mathfrak{F} \omega\, \frac{\alpha(1-\alpha)(\omega_{\mathrm{Med}} - \omega_{\mathrm{Inf}})^2}{(2\omega_{\mathrm{Med}} - \omega_{\mathrm{Inf}})^{2-\alpha}\, |k_{\mathrm{Sup}}|\, |k_{\mathrm{Med}}|}  \\[0.5em]
		& \quad \times \tilde{\mathfrak{R}}(\omega_{\mathrm{Sup}})\, \tilde{\mathfrak{R}}(\omega_{\mathrm{Inf}})\, \tilde{\mathfrak{R}}(\omega_{\mathrm{Med}})\,
		\mathfrak{R}(\omega_{\mathrm{Sup}} - \omega_{\mathrm{Inf}} + \omega_{\mathrm{Med}}) \bar{\mathfrak{R}}_o\\[0.5em]
		& + c_{31} \int_{\Theta_{\lambda,\Omega_{\ell}}^2} \mathrm{d}t \iiint_{[\Omega_{\ell}, \infty)^3} \mathrm{d}\omega_1\, \mathrm{d}\omega_2\, \mathrm{d}\omega_3\,
		\mathfrak{Q}(\omega_1 + \omega_2 + \omega_3)\, \tilde{\mathfrak{Q}}(\omega_1)\, \tilde{\mathfrak{Q}}(\omega_2)\, \tilde{\mathfrak{Q}}(\omega_3) \\[0.5em]
		& \quad \times \mathfrak{F}_1 \omega_1\, \mathfrak{F}_2 \omega_2\, \mathfrak{F}_3 \omega_3\,
	\alpha(1-\alpha)	\frac{\omega_1 \omega_2 + \omega_2 \omega_3 + \omega_3 \omega_1}{3(\omega_1 + \omega_2 + \omega_3)^{2-\alpha}}.
	\end{aligned}
\end{equation}
Similar to \eqref{Propo:Glo:E6c:1}-\eqref{Propo:Glo:E6c:2}, we obtain the following bounds for \( \omega_1, \omega_2 \ge 1 \):
\begin{equation}\label{Lemma:SecondEstimate:E1:1}
	\begin{aligned}
		\tilde{\mathfrak{P}}_1\, \tilde{\mathfrak{P}}_2\, \omega_1\, \omega_2\, {\mathfrak{P}}(\omega_1 + \omega_2)
		&\gtrsim (\omega_1+\omega_2)^{3\theta+\varpi_2}\omega_1^{\varpi_1+1}\omega_2^{\varpi_1+1}.
	\end{aligned}
\end{equation}

Furthermore, we have
\begin{equation}\label{Lemma:SecondEstimate:E1:2}
	\begin{aligned}
		&\bar{\mathfrak{R}}_o\,\bar{\mathfrak{R}}(\omega_{\mathrm{Sup}})\, 
		\bar{\mathfrak{R}}(\omega_{\mathrm{Inf}})\, 
		\bar{\mathfrak{R}}(\omega_{\mathrm{Med}})\,
		\mathfrak{R}(\omega_{\mathrm{Sup}} - \omega_{\mathrm{Inf}} + \omega_{\mathrm{Med}})\,
		\frac{1}{|k_{\mathrm{Sup}}||k_{\mathrm{Med}}|}  \\
		&\gtrsim (\omega_{\mathrm{Sup}} - \omega_{\mathrm{Inf}} + \omega_{\mathrm{Med}})^{2\theta+\varpi_2+\gamma}
		\omega_{\mathrm{Sup}}^{\varpi_2+1-\theta}
		\omega_{\mathrm{Inf}}^{\varpi_2+1}
		\omega_{\mathrm{Med}}^{\varpi_2+1-\theta}.
	\end{aligned}
\end{equation}

Finally,
\begin{equation}\label{Lemma:SecondEstimate:E1:3}
	\begin{aligned}
	&	\tilde{\mathfrak{Q}}_1\, \tilde{\mathfrak{Q}}_2\, \tilde{\mathfrak{Q}}_3\, \omega_1\, \omega_2\, \omega_3\, {\mathfrak{Q}}(\omega_1 + \omega_2 + \omega_3)
		&\gtrsim (\omega_1+\omega_2+\omega_3)^{3\theta+\varpi_3}
		\omega_1^{\varpi_3+1}\omega_2^{\varpi_3+1}\omega_3^{\varpi_3+1}.
	\end{aligned}
\end{equation}

Combining \eqref{Lemma:SecondEstimate:E1}-\eqref{Lemma:SecondEstimate:E1:3}, we arrive at the bound:
\begin{equation}\label{Lemma:SecondEstimate:E4}
	\begin{aligned}
		\mathfrak{M} + \mathfrak{E} \ \ge\ 
		& \, c_{12} \mathcal{C}_{22}^o 
		\int_{\Theta_{\lambda,\Omega_{\ell}}^2} \mathrm{d}t 
		\int_{\Theta_{\lambda,\Omega_{\ell}}^2} \mathrm{d}t \iint_{[\Omega_{\ell}, \infty)^2} \mathrm{d}\omega_1\, \mathrm{d}\omega_2\,
		  \mathfrak{F}_1 \omega_1\, \mathfrak{F}_2 \omega_2\, 
		(\omega_1+\omega_2)^{3\theta+\varpi_1+\alpha-2}\omega_1^{\varpi_1+1}\omega_2^{\varpi_1+1}   \\[0.5em]
		& + c_{22} \int_{\Theta_{\lambda,\Omega_{\ell}}^2} \mathrm{d}t \iiint_{[\Omega_{\ell}, \infty)^3} \mathrm{d}\omega_1\, \mathrm{d}\omega_2\, \mathrm{d}\omega\,
		\mathfrak{F}_1 \omega_1\, \mathfrak{F}_2 \omega_2\, \mathfrak{F} \omega\, \frac{(\omega_{\mathrm{Med}} - \omega_{\mathrm{Inf}})^2}{(2\omega_{\mathrm{Med}} - \omega_{\mathrm{Inf}})^{2-\alpha} }\\[0.5em]
		& \quad \times  (\omega_{\mathrm{Sup}} - \omega_{\mathrm{Inf}} + \omega_{\mathrm{Med}})^{2\theta+\varpi_2+\gamma}
		\omega_{\mathrm{Sup}}^{\varpi_2-\theta}\omega_{\mathrm{Inf}}^{\varpi_2}\omega_{\mathrm{Med}}^{\varpi_2-\theta} \\[0.5em]
		& + c_{31}\mathcal{C}_{22}^o\iiint_{[\Omega_{\ell}, \infty)^3} \mathrm{d}\omega_1\, \mathrm{d}\omega_2\, \mathrm{d}\omega_3\,
		(\omega_1+\omega_2+\omega_3)^{3\theta+\varpi_3+\alpha-2}
		\omega_1^{\varpi_3}\omega_2^{\varpi_3}\omega_3^{\varpi_3} \\[0.5em]
		& \quad \times \mathfrak{F}_1 \omega_1\, \mathfrak{F}_2 \omega_2\, \mathfrak{F}_3 \omega_3\, (\omega_1 \omega_2 + \omega_2 \omega_3 + \omega_3 \omega_1).
	\end{aligned}
\end{equation}

\textbf{ Step 2: Subdomain inclusions.} In this step, we will prove some subdomain inclusions using the setting of Section \ref{Sec:DDM}.
 
We  define
\begin{equation}\label{Sec:DM:2}
	\mathcal{D}^{\Omega_{\ell}}_{i,j,l} := \mathcal{D}^{\Omega_{\ell}}_{i} \times \mathcal{D}^{\Omega_{\ell}}_{j} \times \mathcal{D}^{\Omega_{\ell}}_{l}, 
	\quad i, j, l = 0, \dots, \mathscr{O}_{\Omega_{\ell}} - 1,
\end{equation}

\begin{equation}\label{Sec:DM:3}
	\begin{aligned}
		\mathfrak{S}_{i}^{\Omega_{\ell}} &= \{ i - 1, i, i + 1 \}, && \quad i = 1, \dots, \mathscr{O}_{\Omega_{\ell}} - 2, \\
		\mathfrak{S}_{0}^{\Omega_{\ell}} &= \{ 0, 1 \}, && 
		\mathfrak{S}_{\mathscr{O}_{\Omega_{\ell}} - 1}^{\Omega_{\ell}} = \{ \mathscr{O}_{\Omega_{\ell}} - 2, \mathscr{O}_{\Omega_{\ell}} - 1 \},
	\end{aligned}
\end{equation}

\begin{equation}\label{Sec:DM:4}
	\mathfrak{U}_{\Omega_{\ell}} := \left\{ (\omega, \omega_{1}, \omega_{2}) \in [\Omega_{\ell}, \infty)^3 : 
	|\omega_{\mathrm{Med}} - \omega_{\mathrm{Inf}}| \ge 2\Delta_{\ell} \right\},
\end{equation}

and
\begin{equation}\label{Sec:DM:5}
	\mathfrak{U}_{\Omega_{\ell}}' := \left\{ (\omega, \omega_{1}, \omega_{2}) \in [\Omega_{\ell}, \infty)^3 : 
	|\omega_{\mathrm{Med}} - \omega_{\mathrm{Inf}}| \ge \Delta_{\ell} \right\}.
\end{equation}
	
We also set
\begin{equation}\label{Propo:SecondEstimate:2}
	\mathfrak{U}''_{\Omega_{\ell}} =
	\left[
	\bigcup_{\substack{
			\mathscr{O}_{\Omega_{\ell}} \ge \max\{i, j, l\} -1 \\
			> \mathrm{mid}\{i, j, l\} \ge \min\{i, j, l\} 
	}}
	\mathcal{D}^{\Omega_{\ell}}_{i,j,l}
	\right],
\end{equation}
and aim to establish the containment
\begin{equation}\label{Propo:SecondEstimate:3}
		\mathfrak{U}_{\Omega_{\ell}} \subset 
	\mathfrak{U}''_{\Omega_{\ell}} \subset 
		\mathfrak{U}'_{\Omega_{\ell}}.
\end{equation}

To this end, consider a point 
\[
\left( \omega_2, \omega_{1}, \omega \right) \in 	\mathfrak{U}_{\Omega_{\ell}}.
\]
Without loss of generality, we may assume
\[
\omega = \omega_{\mathrm{Inf}} \left( \omega_2, \omega_1, \omega \right)
< \omega_1 = \omega_{\mathrm{Med}} \left( \omega_2, \omega_1, \omega \right)
\le \omega_2 = \omega_{\mathrm{Sup}} \left( \omega_2, \omega_1, \omega \right).
\]

By the definition of $	\mathfrak{U}_{\Omega_{\ell}}$, there exist indices $i$, $j$, and $l$ such that
\[
\omega_2 \in \mathcal D_i^{\Omega_{\ell}}, \quad
\omega_1 \in \mathcal D_j^{\Omega_{\ell}}, \quad
\omega \in \mathcal D_l^{\Omega_{\ell}}, \quad
\text{with } l \ge j \ge i.
\]
Furthermore, since
\[
|\omega_1-\omega| =  |\omega_{\mathrm{Med}} - \omega_{\mathrm{Inf}}| \ge 2 \Delta_{\ell},
\]
it follows that \( l > j + 1 \). Hence, the triple \( (i,j,l) \) satisfies the condition in the union defining $	\mathfrak{U}''_{\Omega_{\ell}}$, and therefore
\[
\left( \omega_2, \omega_1, \omega \right) \in 	\mathfrak{U}''_{\Omega_{\ell}}.
\]
This establishes that
\[
	\mathfrak{U}_{\Omega_{\ell}} \subset 
	\mathfrak{U}''_{\Omega_{\ell}}.
\]

Next, consider a point 
\[
\left( \omega_2, \omega_1, \omega \right) \in \mathfrak{U}''_{\Omega_{\ell}}.
\]
Suppose that 
\[
\omega_2 \in \mathcal D_i^{\Omega_{\ell}}, \quad 
\omega_1 \in \mathcal D_j^{\Omega_{\ell}}, \quad 
\omega \in \mathcal D_l^{\Omega_{\ell}},
\]
with
\[
\mathscr{O}_{\Omega_{\ell}} > l -1 > j \ge i.
\]
Then, by construction,
\[
\left| \omega_{\mathrm{Med}} - \omega_{\mathrm{Inf}} \right| \ge \Delta_{\ell}.
\]
Therefore,
\[
\mathfrak{U}''_{\Omega_{\ell}} \subset \mathfrak{U}'_{\Omega_{\ell}},
\]
which completes the proof of \eqref{Propo:SecondEstimate:3}.

\textbf{Step 3: Construction of the special subdomains.} In this step, we aim to construct special subdomains $\mathcal{O}^{y},  \mathcal{O}^{z,1},  \text{and}  \mathcal{O}^{z,2}$ satisfying estimates \eqref{Propo:SecondEstimate:34a1}-\eqref{Propo:SecondEstimate:34a}.

Define the index set
\begin{equation}\label{Propo:SecondEstimate:6a}
\mathscr{G} := \left\{ (i, j, l) \in \mathbb{Z}^3 \,\middle|\, i, j, l \geq 0,\quad \mathscr{O}_{\Omega_{\ell}} > l-1 > j \ge i  \right\}.
\end{equation}

Let \( t \in \Theta_{\lambda, \Omega_{\ell}}^2 \), and let \( i_t^o \in \{0, \dots, \mathscr{O}_{\Omega_{\ell}} - 1\} \) be an index such that
\begin{equation}\label{Propo:SecondEstimate:6}
	\int_{\mathcal D_{i_t^o}^{\Omega_{\ell}}} \mathrm{d}\omega\, \mathfrak F(t) \omega
	= \max_{i \in \{0, \dots, \mathscr{O}_{\Omega_{\ell}} - 1\}} 
	\left\{ \int_{\mathcal D_i^{\Omega_{\ell}}} \mathrm{d}\omega\, \mathfrak F(t)\omega \right\}.
\end{equation}

We distinguish between two cases:
\begin{equation*}
	\int_{\mathcal D_{i_t^o}^{\Omega_{\ell}}} \mathrm{d}\omega\, \mathfrak F(t) \omega
	< \frac{\lambda}{1000} \int_{[\Omega_{\ell},\infty)} \mathrm{d}\omega\, \mathfrak F(t)\omega \ \ \ \mbox{ and } 	\int_{\mathcal D_{i_t^o}^{\Omega_{\ell}}} \mathrm{d}\omega\, \mathfrak F(t) \omega
	\ge \frac{\lambda}{1000} \int_{[\Omega_{\ell},\infty)} \mathrm{d}\omega\, \mathfrak F(t)\omega.  
\end{equation*}

\medskip
\noindent
\textit{Case (A):} Assume that
\begin{equation}\label{Propo:SecondEstimate:18}
	\int_{\mathcal D_{i_t^o}^{\Omega_{\ell}}} \mathrm{d}\omega\, \mathfrak F(t) \omega
	< \frac{\lambda}{1000} \int_{[\Omega_{\ell},\infty)} \mathrm{d}\omega\, \mathfrak F(t)\omega.
\end{equation}

The analysis of this case is divided into two steps.

\textit{Case (A) -- Step (a): Construction of the first subdomain.}

Let \( \mathfrak{V}_t \) denote the collection of subsets of 
\(\{0, \dots, \mathscr{O}_{\Omega_{\ell}} - 1\}\) satisfying the following properties:

\begin{itemize}
	\item If \(\mathfrak{I}_t \in \mathfrak{V}_t\), then \(i_t^o \in \mathfrak{I}_t\).
	
	\item For \(\mathfrak{I}_t = \{i_1, \dots, i_n\} \in \mathfrak{V}_t\), each \(i_j \in \mathfrak{I}_t\) satisfies
	\begin{equation}\label{Propo:SecondEstimate:6a}
		\mathcal D_{i_j}^{\Omega_{\ell}} \cap 
		\bigcup_{i \in \mathfrak{I}_t \setminus \{i_j\}} \mathscr{D}_i^{\Omega_{\ell}} = \emptyset.
	\end{equation}
	
	\item There exists a constant \(10 > L_* \geq 3\), independent of \(t\), such that for every \(\mathfrak{I}_t \in \mathfrak{V}_t\),
	\begin{equation}\label{Propo:SecondEstimate:7}
		L_* \int_{\bigcup_{i \in \mathcal{I}_t} \mathcal D_i^{\Omega_{\ell}}} \mathrm{d}\omega\, \mathfrak F(t) \omega
		\geq 
		\int_{\bigcup_{i \in \mathcal{I}_t} \mathscr{D}_i^{\Omega_{\ell}}} \mathrm{d}\omega\, \mathfrak F(t)\omega.
	\end{equation}
	
	\item For all \(\mathfrak{I}_t \in \mathfrak{V}_t\),
	\begin{equation}\label{Propo:SecondEstimate:8}
		\int_{\bigcup_{i \in \mathcal{I}_t} \mathscr D_i^{\Omega_{\ell}}} \mathrm{d}\omega\, \mathfrak F(t) \omega
		< 
		(1 - \lambda) \int_{[\Omega_{\ell},\infty)} \mathrm{d}\omega\, \mathfrak F(t)\omega.
	\end{equation}
\end{itemize}

From  \eqref{Propo:SecondEstimate:6}, it follows that
\begin{equation}\label{Propo:SecondEstimate:9}
	3 \int_{\mathcal D_{i_t^o}^{\Omega_{\ell}}} \mathrm{d}\omega\, \mathfrak F(t) \omega
	\geq 
	\int_{\mathscr{D}_{i_t^o}^{\Omega_{\ell}}} \mathrm{d}\omega\, \mathfrak F(t)\omega,
\end{equation}
which implies that the collection \(\mathfrak{V}_t\) is nonempty, since the singleton set \(\{i_t^o\}\) satisfies all conditions.

Next, we select \(\Lambda^z_t = \{i_1, \dots, i_n\} \in \mathfrak{V}_t\) such that:

\begin{itemize}
	\item The first index is \(i_1 = i_t^o\).
	
	\item For each \(j = 2, \dots, n\),
	\begin{equation}\label{Propo:SecondEstimate:10:1}
		\int_{\mathcal D_{i_j}^{\Omega_{\ell}}} \mathrm{d}\omega\, \mathfrak F(t)\omega
		= 
		\max_{i \in \{0, \dots, \mathscr{O}_{\Omega_{\ell}} - 1\} \setminus 
			\left( \mathfrak {S}_{i_1}^{\Omega_{\ell}} \cup \cdots \cup \mathfrak {S}_{i_{j-1}}^{\Omega_{\ell}} \right)} 
		\left\{ \int_{\mathcal D_i^{\Omega_{\ell}}} \mathrm{d}\omega\, \mathfrak F(t)\omega \right\},
	\end{equation}
	where the sets \(\mathfrak {S}_i^{\Omega_{\ell}}\) are defined as in \eqref{Sec:DM:3}.
	
	\item Moreover, for every 
	\[
	j \in \{0, \dots, \mathscr{O}_{\Omega_{\ell}} - 1\} \setminus 
	\left( \mathfrak {S}_{i_1}^{\Omega_{\ell}} \cup \cdots \cup \mathfrak {S}_{i_n}^{\Omega_{\ell}} \right),
	\]
	we have
	\begin{equation}\label{Propo:SecondEstimate:10}
		\int_{\bigcup_{i \in \Lambda^z_t \cup \{j\}} \mathscr{D}_i^{\Omega_{\ell}}} \mathrm{d}\omega\, \mathfrak F(t)\omega
		\geq 
		(1 - \lambda) \int_{[\Omega_{\ell},\infty)} \mathrm{d}\omega\, \mathfrak F(t)\omega.
	\end{equation}
\end{itemize}

We now define the domains
\[
\mathcal{O}^y := [\Omega_{\ell}, \infty) \setminus \bigcup_{i \in \Lambda^z_t} \mathscr{D}_i^{\Omega_{\ell}} 
= \bigcup_{i \in \Lambda^y_t} \mathcal D_i^{\Omega_{\ell}}, 
\quad \text{and} \quad
\mathcal{O}^z := \bigcup_{i \in \Lambda^z_t} \mathcal D_i^{\Omega_{\ell}}.
\]

By inequality \eqref{Propo:SecondEstimate:8}, it follows that
\begin{equation}\label{Propo:SecondEstimate:11}
	\int_{\mathcal{O}^y} \mathrm{d}\omega\, \mathfrak F(t) \omega
	\geq 
	\lambda \int_{[\Omega_{\ell},\infty)} \mathrm{d}\omega\, \mathfrak F(t)\omega.
\end{equation}

The set \(\mathcal{O}^y\) constitutes our first subdomain. Next, we proceed to construct the second and third subdomains.

\textit{Case (A) -- Step (b): Construction of the second and third subdomains.}

Let \( i_t^{*} \in \Lambda^y_t \) be an index satisfying
\[
\mathcal D_{i_t^{*}}^{\Omega_{\ell}} \cap 
\bigcup_{i \in \Lambda^z_t} \mathscr{D}_i^{\Omega_{\ell}} = \emptyset,
\]
and such that
\begin{equation}\label{Propo:SecondEstimate:12}
	\int_{\mathcal D_{i_t^{*}}^{\Omega_{\ell}}} \mathrm{d}\omega\, \mathfrak F(t) \omega
	= \max_{i \in \Lambda^y_t} \left\{ \int_{\mathcal D_i^{\Omega_{\ell}}} \mathrm{d}\omega\, \mathfrak F(t)\omega \right\}.
\end{equation}

\smallskip

We now introduce two disjoint subsets \( \mathcal{Z}_1 \) and \( \mathcal{Z}_2 \) such that
\[
\mathcal{Z}_1 \cup \mathcal{Z}_2 = \mathscr{D}_{i_t^{*}}^{\Omega_{\ell}}, \quad
\mathcal{Z}_1 \cap \mathcal{Z}_2 = \emptyset, \quad
\mathcal{Z}_2 \subset \bigcup_{i \in \Lambda^y_t} \mathscr{D}_i^{\Omega_{\ell}}, \quad
\mathcal{Z}_2 \cap \bigcup_{i \in \Lambda^z_t} \mathscr{D}_i^{\Omega_{\ell}} = \emptyset.
\]

By the construction of the set \( \Lambda^z_t \), we have
\begin{equation}\label{Propo:SecondEstimate:13}
	L_* \int_{\bigcup_{i \in \Lambda^z_t} \mathcal D_i^{\Omega_{\ell}}} \mathrm{d}\omega\, \mathfrak F(t) \omega
	\ge 
	\int_{\bigcup_{i \in \Lambda^z_t} \mathscr{D}_i^{\Omega_{\ell}}} \mathrm{d}\omega\, \mathfrak F(t)\omega,
\end{equation}
and from the choice of \( i_t^{*} \) it follows that
\begin{equation}\label{Propo:SecondEstimate:14}
	L_* \int_{\mathcal D_{i_t^{*}}^{\Omega_{\ell}}} \mathrm{d}\omega\, \mathfrak F(t) \omega
	\ge 
	3 \int_{\mathcal D_{i_t^{*}}^{\Omega_{\ell}}} \mathrm{d}\omega\, \mathfrak F(t)\omega 
	\ge 
	\int_{\mathcal{Z}_2} \mathrm{d}\omega\, \mathfrak F(t)\omega.
\end{equation}

Using inequalities \eqref{Propo:SecondEstimate:13}, \eqref{Propo:SecondEstimate:14}, and \eqref{Propo:SecondEstimate:10}, we conclude that
\begin{equation}\label{Propo:SecondEstimate:15}
	\begin{aligned}
		L_* \int_{\bigcup_{i \in \Lambda^z_t \cup \{i_t^{*}\}} \mathcal D_i^{\Omega_{\ell}}} \mathrm{d}\omega\, \mathfrak F(t) \omega
		&\ge 
		\int_{\bigcup_{i \in \Lambda^z_t} \mathscr{D}_i^{\Omega_{\ell}}} \mathrm{d}\omega\, \mathfrak F(t) \omega
		+ \int_{\mathcal{Z}_2} \mathrm{d}\omega\, \mathfrak F(t)\omega \\
		&\ge 
		\int_{\bigcup_{i \in \Lambda^z_t \cup \{i_t^{*}\}} \mathscr{D}_i^{\Omega_{\ell}}} \mathrm{d}\omega\, \mathfrak F(t)\omega \\
		&\ge 
		(1 - \lambda) \int_{[\Omega_{\ell},\infty)} \mathrm{d}\omega\, \mathfrak F(t)\omega.
	\end{aligned}
\end{equation}

We deduce from \eqref{Propo:SecondEstimate:18} that

\begin{equation}\label{Propo:SecondEstimate:18a}
	\int_{\mathcal D_{i_t^{*}}^{\Omega_{\ell}}} \mathrm{d}\omega\, \mathfrak F(t) \omega
	< \frac{\lambda}{1000} \int_{[\Omega_{\ell},\infty)} \mathrm{d}\omega\, \mathfrak F(t)\omega.
\end{equation}

Inequality \eqref{Propo:SecondEstimate:18a}, combined with \eqref{Propo:SecondEstimate:15}, yields the estimate
\begin{equation}\label{Propo:SecondEstimate:19}
	L_* \int_{\bigcup_{i \in \Lambda^z_t} \mathcal D_i^{\Omega_{\ell}}} \mathrm{d}\omega\, \mathfrak F(t)  \omega
	+ \frac{\lambda L_*}{1000} \int_{[\Omega_{\ell},\infty)} \mathrm{d}\omega\, \mathfrak F(t)\omega  
	\ge (1 - \lambda) \int_{[\Omega_{\ell},\infty)} \mathrm{d}\omega\, \mathfrak F(t)\omega,
\end{equation}
which implies
\begin{equation}\label{Propo:SecondEstimate:20}
	L_* \int_{\bigcup_{i \in \Lambda^z_t} \mathcal D_i^{\Omega_{\ell}}} \mathrm{d}\omega\, \mathfrak F(t) \omega
	\ge \left(1 - \lambda - \frac{\lambda L_*}{1000} \right) \int_{[\Omega_{\ell},\infty)} \mathrm{d}\omega\, \mathfrak F(t)\omega.
\end{equation}

Therefore, we conclude that
\begin{equation}\label{Propo:SecondEstimate:21}
	\int_{\mathcal{O}^z} \mathrm{d}\omega\, \mathfrak F(t)\omega 
	\ge \frac{1 - 2\lambda}{L_*} \int_{[\Omega_{\ell},\infty)} \mathrm{d}\omega\, \mathfrak F(t)\omega
	\ge \frac{\lambda}{10 L_*} \int_{[\Omega_{\ell},\infty)} \mathrm{d}\omega\, \mathfrak F(t)\omega.
\end{equation}

As a result, we obtain
\begin{equation}\label{Propo:SecondEstimate:22}
	\int_{\mathcal{O}^z} \mathrm{d}\omega\, \mathfrak F(t) \omega
	\ge \frac{\lambda}{10 L_*} \int_{[\Omega_{\ell},\infty)} \mathrm{d}\omega\, \mathfrak F(t)\omega.
\end{equation}

We now proceed to further divide the domain \( \mathcal{O}^{z}\) into two subdomains. To do so, we repeat the construction from \eqref{Propo:SecondEstimate:10:1}--\eqref{Propo:SecondEstimate:10}, but replace the constant \( 1 - \lambda \) with \( \frac{\lambda}{20 L_*} \). Let
\[
 \Lambda^{z,1}_t = \{j_1, \dots, j_p\} \subset \Lambda^z_t
\]
denote the subset satisfying the following:

\begin{itemize}
	\item The first index is defined by \( j_1 = i_t^o \).
	
	\item For \( l = 2, \dots, p \),
	\begin{equation}\label{Propo:SecondEstimate:22:1}
		\int_{\mathcal D_{j_l}^{\Omega_{\ell}}} \mathrm{d}\omega\, \mathfrak F(t)\omega
		= 
		\max_{i \in \{0, \dots, \mathscr{O}_{\Omega_{\ell}} - 1\} 
			\setminus \left( \mathfrak {S}_{j_1}^{\Omega_{\ell}} \cup \cdots \cup \mathfrak {S}_{j_{l-1}}^{\Omega_{\ell}} \right)} 
		\left\{ \int_{\mathcal D_i^{\Omega_{\ell}}} \mathrm{d}\omega\, \mathfrak F(t)\omega \right\},
	\end{equation}
	where the notation \( \mathfrak {S}_{j}^{\Omega_{\ell}} \) is as introduced in \eqref{Sec:DM:3}.
	
	\item Moreover, for all 
	\[
	j \in \{0, \dots, \mathscr{O}_{\Omega_{\ell}} - 1\} 
	\setminus \left( \mathfrak {S}_{j_1}^{\Omega_{\ell}} \cup \cdots \cup \mathfrak {S}_{j_p}^{\Omega_{\ell}} \right),
	\]
	we have
	\begin{equation}\label{Propo:SecondEstimate:22:2}
		\int_{\bigcup_{i \in  \Lambda^{z,1}_t \cup \{j\}} \mathscr{D}_i^{\Omega_{\ell}}} \mathrm{d}\omega\, \mathfrak F(t)\omega
		\ge 
		\frac{\lambda}{20 L_*} \int_{[\Omega_{\ell},\infty)} \mathrm{d}\omega\, \mathfrak F(t)\omega,
	\end{equation}
	and additionally,
	\begin{equation}\label{Propo:SecondEstimate:22:3}
		\int_{\bigcup_{i \in  \Lambda^{z,1}_t} \mathcal D_i^{\Omega_{\ell}}} \mathrm{d}\omega\, \mathfrak F(t)\omega 
		< 
		\frac{\lambda}{20 L_*} \int_{[\Omega_{\ell},\infty)} \mathrm{d}\omega\, \mathfrak F(t)\omega.
	\end{equation}
\end{itemize}

It follows from \eqref{Propo:SecondEstimate:6} and \eqref{Propo:SecondEstimate:18} that this set is nonempty.

We now define
\[
 \Lambda^{z,2}_t :=  \Lambda^{z}_t \setminus  \Lambda^{z,1}_t, \quad 
\mathcal{O}^{z,1} := \bigcup_{i \in  \Lambda^{z,1}_t} \mathcal D_i^{\Omega_{\ell}}, \quad 
\mathcal{O}^{z,2}  := \bigcup_{i \in \Lambda^{z,2}_t} \mathcal D_i^{\Omega_{\ell}}.
\]

It follows from \eqref{Propo:SecondEstimate:22} and \eqref{Propo:SecondEstimate:22:3} that
\begin{equation}\label{Propo:SecondEstimate:22:4}
	\int_{\mathcal{O}^{z,2}} \mathrm{d}\omega\, \mathfrak F(t)\omega 
	\ge 
	\frac{\lambda}{20 L_*} \int_{[\Omega_{\ell}, \infty)} \mathrm{d}\omega\, \mathfrak F(t)\omega.
\end{equation}

Repeating the same argument as in 
\eqref{Propo:SecondEstimate:13}--\eqref{Propo:SecondEstimate:22}, with the constant \( 1 - \lambda \) replaced by \( \frac{\lambda}{20 L_*} \), we find
\begin{equation}\label{Propo:SecondEstimate:22:5}
	\int_{\mathcal{O}^{z,1}} \mathrm{d}\omega\, \mathfrak F(t) \omega
	\ge 
	\frac{1}{L_*} \left( \frac{\lambda}{20 L_*}  - \frac{\lambda L_*}{1000} \right) \int_{[\Omega_{\ell},\infty)} \mathrm{d}\omega\, \mathfrak F(t)\omega
	\ge 
	\frac{\lambda}{200L_*^2} \int_{[\Omega_{\ell},\infty)} \mathrm{d}\omega\, \mathfrak F(t)\omega.
\end{equation}

Finally, at the conclusion of this case, we have constructed three non-overlapping subdomains: \( \mathcal{O}^{y} \), \( \mathcal{O}^{z,1} \), and \( \mathcal{O}^{z,2} \), along with the three key inequalities \eqref{Propo:SecondEstimate:11}, \eqref{Propo:SecondEstimate:22:4}, and \eqref{Propo:SecondEstimate:22:5}.

Now consider \( i \in \Lambda^y_t \), \( j \in \Lambda^{z,1}_t \), and \( l \in \Lambda^{z,2}_t \), with all indices distinct. Then it is clear that
\begin{equation}\label{Propo:SecondEstimate:22:5a}
	\min\{i, j, l\} + 2 < 
	\min\big( \{i, j, l\} \setminus \{\min\{i, j, l\}\} \big) + 1 < 
	\max\{i, j, l\}.
\end{equation}

Using the three non-overlapping subdomains  
\( \mathcal{O}^{y} \), \( \mathcal{O}^{z,1} \), and \( \mathcal{O}^{z,2} \),  
together with the index condition~\eqref{Propo:SecondEstimate:22:5a}, we obtain the estimate
\[
\begin{aligned}
	& \sum_{(i,j,l) \in \mathscr{G}} 
	\iiint_{\mathcal D_{i,j,l}^{\Omega_{\ell}}} 
	\mathrm{d}\omega_1 \, \mathrm{d}\omega_2 \, \mathrm{d}\omega \,
	\mathfrak{F}_1 \omega_1 \, \mathfrak{F}_2 \omega_2 \, \mathfrak{F} \omega\,
	\mathbf{1}_{\omega + \omega_1 - \omega_2 \geq 0} \cdot 
	\frac{1}{\Delta_\ell^2} \cdot 
	\frac{1}{\Omega_\ell^{4\varpi_2+\gamma}} 
	\\[0.5em]
	& \quad \times 
	\frac{(\omega_{\mathrm{Med}} - \omega_{\mathrm{Inf}})^2}{(2\omega_{\mathrm{Med}} - \omega_{\mathrm{Inf}})^{2-\alpha}}	
	\cdot (\omega_{\mathrm{Sup}} - \omega_{\mathrm{Inf}} + \omega_{\mathrm{Med}})^{2\theta+\varpi_2+\gamma}
	\omega_{\mathrm{Sup}}^{\varpi_2-\theta}
	\omega_{\mathrm{Inf}}^{\varpi_2}
	\omega_{\mathrm{Med}}^{\varpi_2-\theta} \\[0.75em]
		&\gtrsim \sum_{(i,j,l) \in \mathscr{G}} 
	\iiint_{\mathcal D_{i,j,l}^{\Omega_{\ell}}} 
	\mathrm{d}\omega_1 \, \mathrm{d}\omega_2 \, \mathrm{d}\omega \,
	\mathfrak{F}_1 \omega_1 \, \mathfrak{F}_2 \omega_2 \, \mathfrak{F} \omega\,
	\mathbf{1}_{\omega + \omega_1 - \omega_2 \geq 0} \cdot 
	\frac{1}{\Delta_\ell^2} \cdot 
	\frac{1}{\Omega_\ell^{4\varpi_2+\gamma}} 
	\\[0.5em]
	& \quad \times 
	\frac{(\omega_{\mathrm{Med}} - \omega_{\mathrm{Inf}})^2}{(2\omega_{\mathrm{Med}} - \omega_{\mathrm{Inf}})^{2-\alpha}}	
	\cdot 
	\omega_{\mathrm{Sup}}^{2\varpi_2+\theta+\gamma}
	\omega_{\mathrm{Inf}}^{\varpi_2}
	\omega_{\mathrm{Med}}^{\varpi_2-\theta} \\[0.75em]
	&\gtrsim \sum_{(i,j,l) \in \mathscr{G}} 
	\iiint_{\mathcal D_{i,j,l}^{\Omega_{\ell}}} 
	\mathrm{d}\omega_1 \, \mathrm{d}\omega_2 \, \mathrm{d}\omega \,
	\mathfrak{F}_1 \omega_1 \, \mathfrak{F}_2 \omega_2 \, \mathfrak{F} \omega\,
	\mathbf{1}_{\omega + \omega_1 - \omega_2 \geq 0}    
	\cdot \frac{1}
	{  (2\omega_{\mathrm{Med}} - \omega_{\mathrm{Inf}})^{2-\alpha}} \\[0.75em]
	&\gtrsim \sum_{(i,j,l) \in \mathscr{G}} 
	\iiint_{\mathcal D_{i,j,l}^{\Omega_{\ell}}} 
	\mathrm{d}\omega_1 \, \mathrm{d}\omega_2 \, \mathrm{d}\omega \,
	\mathfrak{F}_1 \omega_1 \, \mathfrak{F}_2 \omega_2 \, \mathfrak{F} \omega\,
	\mathbf{1}_{\omega + \omega_1 - \omega_2 \geq 0}    
	\cdot  \frac{1}
	{  (2\omega_{\mathrm{Med}} - \omega_{\mathrm{Inf}})^{2-\alpha}} \\[0.75em]
	&\geq 
	C_0 	
	\int_{\mathcal{O}^{y}} \mathrm{d}\omega\, \mathfrak{F} \omega \cdot
	\int_{\mathcal{O}^{z,1}} \mathrm{d}\omega\, \mathfrak{F} \omega \cdot
	\int_{\mathcal{O}^{z,2}} \mathrm{d}\omega\, \mathfrak{F} \omega\cdot \frac{1}
	{  (2\omega_{\mathrm{Med}} - \omega_{\mathrm{Inf}})^{2-\alpha}} ,
\end{aligned}
\]
since $2\varpi_2+\theta+\gamma\ge 0$ (see \eqref{X4}), for some constant \( C_0 > 0 \) independent of the parameters and may vary from line to line. In the above estimates, we use the fact that when \( {2\theta+\varpi_2+\gamma} \ge 0 \),
\[
(\omega_{\mathrm{Sup}} - \omega_{\mathrm{Inf}} + \omega_{\mathrm{Med}})^{2\theta+\varpi_2+\gamma} 
\ge 
\omega_{\mathrm{Sup}}^{2\theta+\varpi_2+\gamma},
\]
and when \( {2\theta+\varpi_2+\gamma} < 0 \),
\[
(\omega_{\mathrm{Sup}} - \omega_{\mathrm{Inf}} + \omega_{\mathrm{Med}})^{2\theta+\varpi_2+\gamma} 
\ge 
(2\,\omega_{\mathrm{Sup}})^{2\theta+\varpi_2+\gamma}.
\]

Noting that \( \Omega_\ell \le \omega_{\mathrm{Med}} \le 2\Omega_\ell \), we estimate
\[
\begin{aligned}
	& \sum_{(i,j,l) \in \mathscr{G}} 
	\iiint_{\mathcal D_{i,j,l}^{\Omega_{\ell}}} 
	\mathrm{d}\omega_1 \, \mathrm{d}\omega_2 \, \mathrm{d}\omega \,
	\mathfrak{F}_1 \omega_1 \, \mathfrak{F}_2 \omega_2 \, \mathfrak{F} \omega\,
	\mathbf{1}_{\omega + \omega_1 - \omega_2 \geq 0} \cdot 
	\frac{1}{\Delta_\ell^2} \cdot 
	\frac{1}{\Omega_\ell^{4\varpi_2-2+\alpha+\gamma}} 
	\\[0.5em]
	& \quad \times 
	\frac{(\omega_{\mathrm{Med}} - \omega_{\mathrm{Inf}})^2}{(2\omega_{\mathrm{Med}} - \omega_{\mathrm{Inf}})^{2-\alpha}}	
	\cdot (\omega_{\mathrm{Sup}} - \omega_{\mathrm{Inf}} + \omega_{\mathrm{Med}})^{2\theta+\varpi_2+\gamma}
	\omega_{\mathrm{Sup}}^{\varpi_2-\theta}
	\omega_{\mathrm{Inf}}^{\varpi_2}
	\omega_{\mathrm{Med}}^{\varpi_2-\theta} \\[0.75em]
	&\geq 
	C_0 	
	\int_{\mathcal{O}^{y}} \mathrm{d}\omega\, \mathfrak{F} \omega \cdot
	\int_{\mathcal{O}^{z,1}} \mathrm{d}\omega\, \mathfrak{F} \omega \cdot
	\int_{\mathcal{O}^{z,2}} \mathrm{d}\omega\, \mathfrak{F} \omega,
\end{aligned}
\]
for some constant \( C_0 > 0 \) independent of the parameters, which may vary from line to line.

Combining this with the key inequalities  
\eqref{Propo:SecondEstimate:11}, \eqref{Propo:SecondEstimate:22:4}, and \eqref{Propo:SecondEstimate:22:5}, we obtain the estimate
\begin{equation}\label{Propo:SecondEstimate:34}
	\begin{aligned}
		& \int_{[\Omega_{\ell},\infty)} \mathrm{d}\omega\, \mathfrak{F} \omega 
		\int_{[\Omega_{\ell},\infty)} \mathrm{d}\omega\, \mathfrak{F}  \omega 
		\int_{[\Omega_{\ell},\infty)} \mathrm{d}\omega\, \mathfrak{F} \omega   \\[0.5em]
		& \leq    
		\sum_{(i,j,l) \in \mathscr{G}} 
		\iiint_{\mathcal D_{i,j,l}^{\Omega_{\ell}}} 
		\mathrm{d}\omega_1\, \mathrm{d}\omega_2\, \mathrm{d}\omega\, 
		\mathfrak{F}_1 \omega_1\, \mathfrak{F}_2 \omega_2\, \mathfrak{F} \omega\,
		\mathbf{1}_{\omega + \omega_1 - \omega_2 \geq 0}\cdot 
		\frac{1}{\Omega_\ell^{4\varpi_2-2+\alpha+\gamma}}  \\[0.5em]
		& \quad \times 
		\frac{C_0}{\lambda^4\Delta_\ell^2} 
		\cdot  
		\frac{(\omega_{\mathrm{Med}} - \omega_{\mathrm{Inf}})^2}{(2\omega_{\mathrm{Med}} - \omega_{\mathrm{Inf}})^{2-\alpha}}	
		\cdot (\omega_{\mathrm{Sup}} - \omega_{\mathrm{Inf}} + \omega_{\mathrm{Med}})^{2\theta+\varpi_2+\gamma}
		\omega_{\mathrm{Sup}}^{\varpi_2-\theta}
		\omega_{\mathrm{Inf}}^{\varpi_2}
		\omega_{\mathrm{Med}}^{\varpi_2-\theta},
	\end{aligned}
\end{equation}
for some constant \( C_0 > 0 \) independent of the parameters, which may vary from line to line.
Noting that \( \Omega_\ell \le \omega_{\mathrm{Med}}, \omega_{\mathrm{Inf}} \le 2\Omega_\ell \), we estimate
\[
\begin{aligned}
	& \sum_{(i,j,l) \in \mathscr{G}} 
	\iiint_{\mathcal D_{i,j,l}^{\Omega_{\ell}}} 
	\mathrm{d}\omega_1 \, \mathrm{d}\omega_2 \, \mathrm{d}\omega \,
	\mathfrak{F}_1 \omega_1 \, \mathfrak{F}_2 \omega_2 \, \mathfrak{F} \omega\,
	\mathbf{1}_{\omega + \omega_1 - \omega_2 \geq 0} \cdot 
	\frac{1}{\Delta_\ell^2} \cdot 
	\frac{1}{\Omega_\ell^{4\varpi_2-2+\theta+\alpha+\gamma}} 
	\\[0.5em]
	& \quad \times 
	\frac{(\omega_{\mathrm{Med}} - \omega_{\mathrm{Inf}})^2}{(2\omega_{\mathrm{Med}} - \omega_{\mathrm{Inf}})^{2-\alpha}}	
	\cdot (\omega_{\mathrm{Sup}} - \omega_{\mathrm{Inf}} + \omega_{\mathrm{Med}})^{2\theta+\varpi_2+\gamma}
	\omega_{\mathrm{Sup}}^{\varpi_2-\theta}
	\omega_{\mathrm{Inf}}^{\varpi_2}
	\omega_{\mathrm{Med}}^{\varpi_2-\theta} \\[0.75em]
	&\geq 
	C_0 	
	\int_{\mathcal{O}^{y}} \mathrm{d}\omega\, \mathfrak{F} \omega \cdot
	\int_{\mathcal{O}^{z,1}} \mathrm{d}\omega\, \mathfrak{F} \omega \cdot
	\int_{\mathcal{O}^{z,2}} \mathrm{d}\omega\, \mathfrak{F} \omega,
\end{aligned}
\]
for some constant \( C_0 > 0 \) independent of the parameters, which may vary from line to line.

Combining this with the key inequalities  
\eqref{Propo:SecondEstimate:11}, \eqref{Propo:SecondEstimate:22:4}, and \eqref{Propo:SecondEstimate:22:5}, we obtain the estimate
\begin{equation}\label{Propo:SecondEstimate:34}
	\begin{aligned}
		& \int_{[\Omega_{\ell},\infty)} \mathrm{d}\omega\, \mathfrak{F} \omega 
		\int_{[\Omega_{\ell},\infty)} \mathrm{d}\omega\, \mathfrak{F}  \omega 
		\int_{[\Omega_{\ell},\infty)} \mathrm{d}\omega\, \mathfrak{F} \omega   \\[0.5em]
		& \leq    
		\sum_{(i,j,l) \in \mathscr{G}} 
		\iiint_{\mathcal D_{i,j,l}^{\Omega_{\ell}}} 
		\mathrm{d}\omega_1\, \mathrm{d}\omega_2\, \mathrm{d}\omega\, 
		\mathfrak{F}_1 \omega_1\, \mathfrak{F}_2 \omega_2\, \mathfrak{F} \omega\,
		\mathbf{1}_{\omega + \omega_1 - \omega_2 \geq 0}  \\[0.5em]
		& \quad \times 
		\frac{C_0}{\lambda^4\Delta_\ell^2\Omega_\ell^{4\varpi_2-2+\theta+\gamma}} 
		\cdot  
		\frac{(\omega_{\mathrm{Med}} - \omega_{\mathrm{Inf}})^2}{(2\omega_{\mathrm{Med}} - \omega_{\mathrm{Inf}})^{2-\alpha}}	
		\cdot (\omega_{\mathrm{Sup}} - \omega_{\mathrm{Inf}} + \omega_{\mathrm{Med}})^{2\theta+\varpi_2+\gamma}
		\omega_{\mathrm{Sup}}^{\varpi_2-\theta}
		\omega_{\mathrm{Inf}}^{\varpi_2}
		\omega_{\mathrm{Med}}^{\varpi_2-\theta},
	\end{aligned}
\end{equation}
for some constant \( C_0 > 0 \) independent of the parameters, which may vary from line to line.

\textit{Case (B):} In this case, we assume that
\begin{equation}\label{Propo:SecondEstimate:22:6}
	\int_{\mathcal D_{i_t^o}^{\Omega_{\ell}}} \mathrm{d}\omega\, \mathfrak F(t) \omega
	\geq \frac{\lambda}{1000} \int_{[\Omega_{\ell},\infty)} \mathrm{d}\omega\, \mathfrak F(t)\omega.
\end{equation}

We set 
\[
\mathcal{O}^{y}  := \mathcal D_{i_t^o}^{\Omega_{\ell}}, \quad \text{and} \quad \Lambda^y := \{ i_t^o \},
\]
so that
\begin{equation}\label{Propo:SecondEstimate:22:7}
	\int_{\mathcal{O}^{y} } \mathrm{d}\omega\, \mathfrak F(t) \omega
	\geq \frac{\lambda}{1000} \int_{[\Omega_{\ell},\infty)} \mathrm{d}\omega\, \mathfrak F(t)\omega.
\end{equation}

By our assumption,
\[
\int_{\mathscr{D}_{i_t^o}^{\Omega_{\ell}}} \mathrm{d}\omega\, \mathfrak F(t) \omega
< (1 - \lambda) \int_{[\Omega_{\ell},\infty)} \mathrm{d}\omega\, \mathfrak F(t)\omega,
\]
which implies
\begin{equation}\label{Propo:SecondEstimate:22:8}
	\lambda \int_{[\Omega_{\ell},\infty)} \mathrm{d}\omega\, \mathfrak F(t) \omega
	\leq \int_{[\Omega_{\ell},\infty) \setminus \mathscr{D}_{i_t^o}^{\Omega_{\ell}}} \mathrm{d}\omega\, \mathfrak F(t)\omega.
\end{equation}

We define
\[
[\Omega_{\ell},\infty) \setminus \mathscr{D}_{i_t^o}^{\Omega_{\ell}} 
= \bigcup_{i \in \Lambda^z_t} \mathcal D_i^{\Omega_{\ell}},
\]
and partition the index set as
\[
\Lambda^{z'}_t := \{ i \in \Lambda^z_t \mid i > i_t^o \}, \quad
\Lambda^{z''}_t := \{ i \in \Lambda^z_t \mid i < i_t^o \}.
\]

Since, by \eqref{Propo:SecondEstimate:22:8},
\[
\lambda \int_{[\Omega_{\ell},\infty)} \! \mathrm{d}\omega\, \mathfrak F(t) \omega
\leq \int_{\bigcup_{i \in \Lambda^{z'}_t}\mathcal{D}_i^{\Omega_{\ell}}} \!  \mathrm{d}\omega\, \mathfrak F(t)\omega
+ \int_{\bigcup_{i \in \Lambda^{z''}_t}\mathcal{D}_i^{\Omega_{\ell}}} \!  \mathrm{d}\omega\, \mathfrak F(t)\omega,
\]
it follows that either
\begin{equation}\label{Propo:SecondEstimate:22:9}
	\frac{\lambda}{2} \int_{[\Omega_{\ell},\infty)} \mathrm{d}\omega\, \mathfrak F(t) \omega
	\leq \int_{\bigcup_{i \in \Lambda^{z'}_t}\mathcal{D}_i^{\Omega_{\ell}}} \mathrm{d}\omega\, \mathfrak F(t)\omega,
\end{equation}
or
\begin{equation}\label{Propo:SecondEstimate:22:10}
	\frac{\lambda}{2} \int_{[\Omega_{\ell},\infty)} \mathrm{d}\omega\, \mathfrak F(t) \omega
	\leq \int_{\bigcup_{i \in\Lambda^{z''}_t}\mathcal{D}_i^{\Omega_{\ell}}}\mathrm{d}\omega\, \mathfrak F(t)\omega.
\end{equation}

If \eqref{Propo:SecondEstimate:22:9} holds, we set
\[
\Lambda^{z,1}_t := \Lambda^{z'}_t = \{ i \in \mathscr{Z}_t \mid i > i_t^o \}, \quad
\Lambda^{z,2}_t := \Lambda^{z,1}_t,
\]
and define
\[
\bigcup_{i \in \Lambda^{z,1}_t} \mathcal{D}_i^{\Omega_{\ell}}
= \bigcup_{i \in \Lambda^{z,2}_t} \mathcal{D}_i^{\Omega_{\ell}}
=: \mathcal{O}^{z,1} = \mathcal{O}^{z,2}.
\]

If instead \eqref{Propo:SecondEstimate:22:10} holds, we set
\[
\Lambda^{z,1}_t := \Lambda^{z''}_t = \{ i \in \Lambda^{z}_t \mid i < i_t^o \}, \quad
\Lambda^{z,2}_t := \Lambda^{z,1}_t,
\]
and define
\[
\bigcup_{i \in \Lambda^{z,1}_t} \Omega_i^{\Omega_{\ell}} =: \mathcal{O}^{z,1}, \quad
\bigcup_{i \in \Lambda^{z,2}_t} \Omega_i^{\Omega_{\ell}} =: \mathcal{O}^{z,2}.
\]

In both cases, we have
\begin{equation}\label{Propo:SecondEstimate:22:11}
	C_0 \lambda \int_{[\Omega_{\ell},\infty)} \mathrm{d}\omega\, \mathfrak F(t) \omega
	\leq \int_{\mathcal{O}^{z,1}} \mathrm{d}\omega\, \mathfrak F(t)\omega, \quad
	C_0 \lambda \int_{[\Omega_{\ell},\infty)} \mathrm{d}\omega\, \mathfrak F(t) \omega
	\leq \int_{\mathcal{O}^{z,2}} \mathrm{d}\omega\, \mathfrak F(t)\omega,
\end{equation}
where \( C_0 > 0 \) is a universal constant that may vary between estimates.

Finally, at the conclusion of this case, we identify three subdomains:
\[
\mathcal{O}^{y}, \quad \mathcal{O}^{z,1}, \quad \text{and} \quad \mathcal{O}^{z,2},
\]
along with the key inequalities \eqref{Propo:SecondEstimate:22:7} and \eqref{Propo:SecondEstimate:22:11}.

Consider indices \( i \in \Lambda^y \), \( j \in \Lambda^{z,1} \), and \( l \in \Lambda^{z,2} \). Then, it follows that
\begin{equation}\label{Propo:SecondEstimate:22:12}
	\max\{i, j, l\} - 1 > \max\big( \{i, j, l\} \setminus \{\max\{i, j, l\}\} \big).
\end{equation}

We now utilize the subdomains 
\( \Lambda^y \), \( \Lambda^{z,1} \), and \( \Lambda^{z,2} \), 
along with the index inequality \eqref{Propo:SecondEstimate:22:12}, to derive the following bound, using \eqref{Propo:SecondEstimate:6a}:
\begin{equation}\label{Propo:SecondEstimate:34a1}
	\begin{aligned}
		& \sum_{(i,j,l) \in \mathscr{G}} 
		\iiint_{\mathcal D_{i,j,l}^{\Omega_{\ell}}} 
		\mathrm{d}\omega_1\, \mathrm{d}\omega_2\, \mathrm{d}\omega\, 
		\mathfrak{F}_1 \omega_1\, \mathfrak{F}_2 \omega_2\, \mathfrak{F} \omega\,
		\mathbf{1}_{\omega + \omega_1 - \omega_2 \geq 0} \\[0.5em]
		& \quad \times \frac{C_0}{\lambda \Delta_\ell^2\Omega_\ell^{4\varpi_2-2+\alpha+\gamma}} 
		\cdot  
		\frac{(\omega_{\mathrm{Med}} - \omega_{\mathrm{Inf}})^2}{(2\omega_{\mathrm{Med}} - \omega_{\mathrm{Inf}})^{2-\alpha}}	
		\cdot (\omega_{\mathrm{Sup}} - \omega_{\mathrm{Inf}} + \omega_{\mathrm{Med}})^{2\theta+\varpi_2+\gamma}
		\omega_{\mathrm{Sup}}^{\varpi_2-\theta}
		\omega_{\mathrm{Inf}}^{\varpi_2}
		\omega_{\mathrm{Med}}^{\varpi_2-\theta} \\[0.75em]
			& \ge\ 
		\int_{\mathcal{O}^{y}} \mathrm{d}\omega\, \mathfrak{F} \omega \cdot
		\int_{\mathcal{O}^{z,1}} \mathrm{d}\omega\, \mathfrak{F} \omega \cdot
		\int_{\mathcal{O}^{z,2}} \mathrm{d}\omega\, \mathfrak{F} \omega.
	\end{aligned}
\end{equation}

Combining this with the key inequalities \eqref{Propo:SecondEstimate:22:7} and \eqref{Propo:SecondEstimate:22:11}, we obtain
\begin{equation}\label{Propo:SecondEstimate:34a}
	\begin{aligned}
		\left( \int_{[\Omega_{\ell},\infty)} \mathrm{d}\omega\, \mathfrak{F}\omega \right)^3 
		&\le    
		\sum_{(i,j,l) \in \mathscr{G}} 
		\iiint_{\mathcal D_{i,j,l}^{\Omega_{\ell}}} 
		\mathrm{d}\omega_1\, \mathrm{d}\omega_2\, \mathrm{d}\omega\, 
		\mathfrak{F}_1 \omega_1\, \mathfrak{F}_2 \omega_2\, \mathfrak{F} \omega\,
		\mathbf{1}_{\omega + \omega_1 - \omega_2 \geq 0} \\[0.5em]
		&\quad \times 
		\frac{C_0}{\lambda^4\Delta_\ell^2\Omega_\ell^{4\varpi_2-2+\alpha+\gamma}} \\[0.5em]
		&\quad \times
		\frac{(\omega_{\mathrm{Med}} - \omega_{\mathrm{Inf}})^2}{(2\omega_{\mathrm{Med}} - \omega_{\mathrm{Inf}})^{2-\alpha}}	
		\cdot (\omega_{\mathrm{Sup}} - \omega_{\mathrm{Inf}} + \omega_{\mathrm{Med}})^{2\theta+\varpi_2+\gamma}
		\omega_{\mathrm{Sup}}^{\varpi_2-\theta}
		\omega_{\mathrm{Inf}}^{\varpi_2}
		\omega_{\mathrm{Med}}^{\varpi_2-\theta},
	\end{aligned}
\end{equation}
where \( C_0 > 0 \) is a universal constant that may vary between estimates.

\textbf{Step 3: The final estimate}

Combining \eqref{Propo:SecondEstimate:34}, \eqref{Propo:SecondEstimate:34a}, and \eqref{Lemma:SecondEstimate:E4}, we obtain the following estimate:
\begin{equation}\label{Propo:SecondEstimate:35}
	\begin{aligned}
		\mathfrak{M} + \mathfrak{E} \ \ge\ 
		&\, c_{12} \mathcal{C}_{22}^o 
		\int_{\Theta_{\lambda,\Omega_{\ell}}^2} \mathrm{d}t 
		\iint_{[\Omega_{\ell}, \infty)^2} \mathrm{d}\omega_1\, \mathrm{d}\omega_2\,
		\mathfrak{F}_1 \omega_1\, \mathfrak{F}_2 \omega_2\,
		(\omega_1+\omega_2)^{3\theta+\varpi_1+\alpha-2}\omega_1^{\varpi_1+1}\omega_2^{\varpi_1+1}   \\[0.5em]
		&+ c_{22} \mathcal{C}_{22}^o \lambda^4\Delta_\ell^2\Omega_\ell^{4\varpi_2-2+\alpha+\gamma}		
		\int_{\Theta_{\lambda,\Omega_{\ell}}^2} \mathrm{d}t 
		\left( \int_{[\Omega_{\ell}, \infty)} \mathrm{d}\omega\, \mathfrak{F} \omega \right)^3 \\[0.5em]
		&+ c_{31}\mathcal{C}_{22}^o
		\int_{\Theta_{\lambda,\Omega_{\ell}}^2} \mathrm{d}t 
		\iiint_{[\Omega_{\ell}, \infty)^3} \mathrm{d}\omega_1\, \mathrm{d}\omega_2\, \mathrm{d}\omega_3\,
		(\omega_1+\omega_2+\omega_3)^{3\theta+\varpi_3+\alpha-2}
		\omega_1^{\varpi_3}\omega_2^{\varpi_3}\omega_3^{\varpi_3} \\[0.5em]
		&\quad \times \mathfrak{F}_1 \omega_1\, \mathfrak{F}_2 \omega_2\, \mathfrak{F}_3 \omega_3\, 
		(\omega_1 \omega_2 + \omega_2 \omega_3 + \omega_3 \omega_1),
	\end{aligned}
\end{equation}
for some universal constant \( \mathcal{C}_{22}^o > 0 \).

This completes the proof of the proposition.

\end{proof}

	\subsection{Set estimates}
	
	\begin{proposition} 
		\label{Lemma:Mutiscale2} We assume Assumptions X and Y. Let \( T^* \) be as in~\eqref{T0}.  We choose $0\le T<T^*$ if $T^*>0$ and $ T=0$ if $T^*=0$. By the definitions in 
	~Assumption X,~\eqref{sigma}, and~\eqref{epsilon}, the following statements hold for the sets defined in \eqref{Sec:DD:4} :
		
		\begin{itemize}
			\item[(i)] There exists a universal constant \( C_\mathscr{N} > 0 \), independent of the parameters 
			\[
			\epsilon,\,  \sigma,\, \theta,\, \varpi_1,\, \varpi_2,\, \varpi_3,\,  T,\, T^*,
			\]
			such that the set $\mathscr{N}_{\ell}^T$ defined in \eqref{Sec:DD:4} satisfies the estimate
			\begin{equation}\label{Lemma:Mutiscale2:1}
				\left| \mathscr{N}_{\ell}^T \right| 
				\le C_\mathscr{N} (\mathfrak{M} + \mathfrak{E})\, \Omega_\ell^{-c_\mathscr{N}},
			\end{equation}
			where the exponent \( c_\mathscr{N} \) is given as follows:
			
			\begin{itemize}
				\item If \( c_{12}>0 \), \( c_{22}=c_{31}=0 \), then
				\begin{equation}\label{Lemma:Mutiscale2:2}
					c_\mathscr{N} := -(2\sigma  - 3\varpi_{1}-3\theta-\alpha).
				\end{equation}
				
				\item If \( c_{31}>0 \), \( c_{12}=c_{22}=0 \), then
				\begin{equation}\label{Lemma:Mutiscale2:3}
					c_\mathscr{N} := -(3\sigma - 4\varpi_{3}-3\theta-\alpha).
				\end{equation}
				
				\item If \( c_{22}>0 \), \( c_{12}=c_{31}=0 \), then
				\begin{equation}\label{Lemma:Mutiscale2:4}
					c_\mathscr{N} := -\frac{4\varpi_2+ \alpha+\gamma}{2}+3\sigma.
				\end{equation}
				
				\item If \( c_{12}>0 \), \( c_{31}>0 \), \( c_{22}=0 \), then
				\begin{equation}\label{Lemma:Mutiscale2:5}
					c_\mathscr{N} := \min\left\{ 
					-(2\sigma  - 3\varpi_{1}-3\theta-\alpha),\, 
					-\frac{4\varpi_2+ \alpha+\gamma}{2} +3\sigma
					\right\}.
				\end{equation}
				
				\item If \( c_{12}>0 \), \( c_{22}>0 \), \( c_{31}=0 \), then
				\begin{equation}\label{Lemma:Mutiscale2:6}
					c_\mathscr{N} := \min\left\{ 
					-(2\sigma  - 3\varpi_{1}-3\theta-\alpha),\, 
					-(3\sigma - 4\varpi_{3}-3\theta-\alpha) 
					\right\}.
				\end{equation}
				
				\item If \( c_{22}>0 \), \( c_{31}>0 \), \( c_{12}=0 \), then
				\begin{equation}\label{Lemma:Mutiscale2:7}
					c_\mathscr{N} := \min\left\{ 
					-(3\sigma - 4\varpi_{3}-3\theta-\alpha),\, 
					-\frac{4\varpi_2+ \alpha+\gamma}{2} +3\sigma
					\right\}.
				\end{equation}
				
				\item If \( c_{12}, c_{22}, c_{31} > 0 \), then
				\begin{equation}\label{Lemma:Mutiscale2:8}
					c_\mathscr{N} := \min\left\{ 
					-(2\sigma   - 3\varpi_{1}-3\theta-\alpha),\ 
					-(3\sigma - 4\varpi_{3}-3\theta-\alpha),\ 
					-\frac{4\varpi_2+ \alpha+\gamma}{2} +3\sigma
					\right\}.
				\end{equation}
			\end{itemize}
			
			
			\item[(ii)] Define
			\[
			c_{\mathscr{N}, \mathscr{P}} := \min\left\{ c_\mathscr{N},\, c_\mathscr{P} \right\},
			\]
			where $c_\mathscr{P} $ is defined in Proposition \ref{Lemma:Mutis1}.
			Then, there exists a constant \( \mathfrak{C}_{\mathscr{M},\mathscr{Q}} > 0 \) such that for all \( \ell > \mathfrak{C}_{\mathscr{M},\mathscr{Q}} \), with \( \ell \in \mathbb{N} \), and for all \( T \in [0, T^*) \), the following bound holds:
			\begin{equation} \label{Lemma:Mutiscale2:11}
				\left| 
				\bigcup_{i=\ell}^\infty \left( \mathscr{M}_i^T \setminus \mathscr{Q}_i^T \right) 
				\right| 
				\le C_{\mathscr{M},\mathscr{Q}}\, \Omega_\ell^{-c_{\mathscr{N}, \mathscr{P}}},
			\end{equation}
			where \(\mathfrak{C}_{\mathscr{M},\mathscr{Q}} , C_{\mathscr{M},\mathscr{Q}} > 0 \) are  universal constants, independent of the parameters 
			\( \epsilon, \theta,\) \(  \sigma,\) \(  \varpi_{1}, \varpi_{2}, \varpi_{3}, T, T^* \).
			
			\item[(iii)] Finally, for all \( \sigma > 0 \) satisfying \eqref{sigma}, there exits \( \mathfrak{C}_\mathscr{A}(\sigma) > 0 \) such that for all \( \ell > \mathfrak{C}_\mathscr{A}(\sigma) \), we have
			\begin{equation} \label{Lemma:Mutiscale3:1}
				\left| \mathscr{M}_\ell^T \right| 
				\le C_{\mathscr{M}} \left( \Omega_\ell \right)^{-c_{\mathscr{N}, \mathscr{P}}},
			\end{equation}
			where \( C_{\mathscr{M}} > 0 \) is a universal constant, independent of the parameters 
			\( \epsilon, \theta,\) \(  \sigma,\)  \(  \varpi_{1}, \varpi_{2}, \varpi_{3}, T, T^* \).
		\end{itemize}

	\end{proposition}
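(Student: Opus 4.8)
\textbf{Proof proposal for Proposition~\ref{Lemma:Mutiscale2}.}

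The plan is to derive all three parts from the two global estimates now available: the a~priori bound in Proposition~\ref{Propo:Collision} (which controls time integrals of high-frequency moments over the ``pumping'' set $\Theta^2_{\lambda,\Omega_\ell}$) together with the bound on $\left|\mathscr{P}^T_\ell\right|$ from Proposition~\ref{Lemma:Mutis1}. For part~(i), first I would observe that by the definition of $\mathscr{N}^T_\ell$ in~\eqref{Sec:DD:4}, a time $t\in\mathscr{N}^T_\ell$ lies in $\mathscr{M}^T_\ell$ (so $\int_{[\Omega_\ell,\infty)}\mathfrak{F}(t)\,\omega\gtrsim \mathfrak E\Omega_\ell^{-\sigma}$ holds up to the energy-normalization on $\Theta$, since $t\in\Theta$) but lies in none of the $\mathscr{M}^T_{\ell,i}$, which forces every overlapping subdomain $\mathscr{D}^{\Omega_\ell}_i$ to carry only a small share, $\int_{\mathscr{D}^{\Omega_\ell}_i}\mathfrak{F}(t)\,\omega < c_o\Omega_{\ell+1}^{-\sigma}=c_o\,2^{-\sigma}\Omega_\ell^{-\sigma}$; choosing $\lambda=1-2^{-\sigma}$ as in Proposition~\ref{Propo:Collision} this is exactly the condition $t\in\Theta^2_{\lambda,\Omega_\ell}$, so $\mathscr{N}^T_\ell\subset\Theta^2_{\lambda,\Omega_\ell}\cap[0,T]$. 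Then I insert the lower bound $\int_{[\Omega_\ell,\infty)}\mathfrak{F}(t)\,\omega\gtrsim\Omega_\ell^{-\sigma}$ into the right-hand side of~\eqref{Propo:Collision:1}: on the $c_{12}$ term one uses $\iint\mathfrak F_1\omega_1\mathfrak F_2\omega_2(\dots)\gtrsim\Omega_\ell^{3\theta+3\varpi_1+\alpha-2}\big(\int\mathfrak F\omega\big)^2\gtrsim\Omega_\ell^{3\theta+3\varpi_1+\alpha-2-2\sigma}$; on the $c_{22}$ term one gets $\lambda^4\Delta_\ell^2\Omega_\ell^{4\varpi_2-2+\alpha+\gamma}\big(\int\mathfrak F\omega\big)^3\gtrsim\Delta_\ell^2\,\Omega_\ell^{4\varpi_2-2+\alpha+\gamma-3\sigma}$; on the $c_{31}$ term similarly $\gtrsim\Omega_\ell^{3\theta+4\varpi_3+\alpha-2-3\sigma}$. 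Since the left side is $\mathfrak M+\mathfrak E$, each surviving term yields $\left|\mathscr N^T_\ell\right|\le C(\mathfrak M+\mathfrak E)\Omega_\ell^{\,2\sigma-3\theta-3\varpi_1-\alpha}$ (resp.\ with the $c_{31}$, $c_{22}$ exponents), and the $\Delta_\ell$ power in the $c_{22}$ case is absorbed because $\Delta_\ell=\Omega_\ell 2^{-\Upsilon_\ell}$ with $\Upsilon_\ell\le\ell\epsilon/16$ so $\Delta_\ell^{-2}\lesssim\Omega_\ell^{\epsilon/8}$, and $\epsilon,\sigma$ are chosen small via~\eqref{epsilon}, \eqref{sigma} so the resulting exponent is strictly negative — giving~\eqref{Lemma:Mutiscale2:2}--\eqref{Lemma:Mutiscale2:8} after taking the minimum over whichever $c_{ij}$ are nonzero (and invoking Assumption~Y to know at least one is).

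For part~(ii), the key algebraic identity is the telescoping/covering relation $\mathscr{M}^T_i\setminus\mathscr{Q}^T_i\subset\mathscr{N}^T_i\cup\mathscr{P}^T_i$, which follows directly from the definitions in~\eqref{Sec:DD:4}: if $t\in\mathscr{M}^T_i$ and $t\notin\mathscr{Q}^T_i$ (so none of the ``left-half'' indices $i'\le 2^{\Upsilon_i-1}-2$ is triggered), then either no $\mathscr{M}^T_{i,i'}$ fires at all ($t\in\mathscr{N}^T_i$) or some ``right-half'' index fires ($t\in\mathscr{P}^T_i$). Hence $\left|\bigcup_{i\ge\ell}(\mathscr M^T_i\setminus\mathscr Q^T_i)\right|\le\sum_{i\ge\ell}\big(\left|\mathscr N^T_i\right|+\left|\mathscr P^T_i\right|\big)\le C\sum_{i\ge\ell}\Omega_i^{-c_{\mathscr N,\mathscr P}}= C\sum_{i\ge\ell}2^{-i\,c_{\mathscr N,\mathscr P}}$, a geometric series summing to $C'\,2^{-\ell c_{\mathscr N,\mathscr P}}=C'\Omega_\ell^{-c_{\mathscr N,\mathscr P}}$ since $c_{\mathscr N,\mathscr P}>0$ by construction of $\sigma$ and $\epsilon$. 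Here I must be a little careful with which formula for $c_{\mathscr P}$ applies in each of the active-operator regimes, but in every case $c_{\mathscr N,\mathscr P}=\min\{c_{\mathscr N},c_{\mathscr P}\}$ is positive by the inequalities~\eqref{sigma}, \eqref{epsilon}, which is exactly why those constants were chosen with the $\tfrac15$ and the factor $10$ margins. Part~(iii) is then immediate: $\mathscr{M}^T_\ell = (\mathscr M^T_\ell\setminus\mathscr Q^T_\ell)\cup\mathscr Q^T_\ell$, and $\mathscr Q^T_\ell=\bigcup_{i=0}^{2^{\Upsilon_\ell-1}-2}\mathscr M^T_{\ell,i}\subset\bigcup_{i'\ge\ell+1}\mathscr M^T_{i'}$ after identifying each overlapping subdomain $\mathscr D^{\Omega_\ell}_i$ on the left half with (a superset of) $[\Omega_{\ell+1},\infty)$-type windows at the next scale — more precisely, $t\in\mathscr M^T_{\ell,i}$ for $i$ in the left half means $\int_{\mathscr D^{\Omega_\ell}_i}\mathfrak F(t)\,\omega\ge c_o\Omega_{\ell+1}^{-\sigma}$ with $\mathscr D^{\Omega_\ell}_i\subset[\Omega_\ell,2\Omega_\ell)=[\Omega_\ell,\Omega_{\ell+1})$, so this mass still sits in $[\Omega_\ell,\infty)$ but below $\Omega_{\ell+1}$, which feeds into the scale-$(\ell+1)$ count; iterating, $\left|\mathscr M^T_\ell\right|\le\sum_{i\ge\ell}\left|\mathscr M^T_i\setminus\mathscr Q^T_i\right|\le C\Omega_\ell^{-c_{\mathscr N,\mathscr P}}$ by part~(ii) applied at each scale and summing the geometric tail.

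The main obstacle I anticipate is part~(i), specifically making the passage from the ``pumping'' hypothesis $t\in\Theta^2_{\lambda,\Omega_\ell}$ and the energy lower bound $\int_{[\Omega_\ell,\infty)}\mathfrak F(t)\,\omega\gtrsim\Omega_\ell^{-\sigma}$ into a genuinely pointwise-in-$t$ lower bound for each of the three polynomial integrands in~\eqref{Propo:Collision:1}. For the $c_{12}$ and $c_{31}$ pieces this is clean because the integrands are (up to positive powers of $\omega_j/\Omega_\ell$, all $\gtrsim1$ on $[\Omega_\ell,\infty)$ when the relevant exponent is $\ge0$, and $\gtrsim 2^{\text{(exponent)}}$ when $<0$) bounded below by $\Omega_\ell^{\text{(const)}}(\int\mathfrak F\omega)^{\text{power}}$; one must simply check the sign of each exponent using~\eqref{X4} and keep track of the worst case. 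For the $c_{22}$ piece the subtlety is the factor $(\omega_{\mathrm{Med}}-\omega_{\mathrm{Inf}})^2/(2\omega_{\mathrm{Med}}-\omega_{\mathrm{Inf}})^{2-\alpha}$ together with $\Delta_\ell^2$: the restriction to $\Theta^2_{\lambda,\Omega_\ell}$ guarantees, via the subdomain-construction machinery already packaged inside the proof of Proposition~\ref{Propo:Collision}, that on a positive-measure piece of the $\omega$-integration the gap $\omega_{\mathrm{Med}}-\omega_{\mathrm{Inf}}$ is $\gtrsim\Delta_\ell$, which is precisely what makes $\Delta_\ell^2\cdot(\text{gap}^2)\gtrsim\Delta_\ell^4$-type bounds work out — but I need to quote that cleanly from~\eqref{Propo:Collision:1} itself rather than re-derive it, so the argument is really just: plug the lower bound for $\int\mathfrak F\omega$ into the already-proved~\eqref{Propo:Collision:1}, divide, absorb $\Delta_\ell^{-2}\lesssim\Omega_\ell^{\epsilon/8}$, and read off the exponent. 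Everything else — the set inclusions, the geometric summation, the identification of left-half subdomains with next-scale windows — is bookkeeping, and the numerical conditions~\eqref{sigma} and~\eqref{epsilon} were evidently reverse-engineered to make every exponent land on the correct side of zero.
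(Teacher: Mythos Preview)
Your treatment of parts~(i) and~(ii) is essentially the paper's: show $\mathscr{N}^T_\ell\subset\Theta^2_{\lambda,\Omega_\ell}$ with $\lambda=1-2^{-\sigma}$, insert the $\mathscr{M}^T_\ell$ lower bound into~\eqref{Propo:Collision:1}, and read off the exponent; then cover $\mathscr{M}^T_i\setminus\mathscr{Q}^T_i\subset\mathscr{N}^T_i\cup\mathscr{P}^T_i$ and sum geometrically. One small point: for the $c_{22}$ exponent you should use the \emph{other} branch of $\Upsilon_\ell$ in~\eqref{Sec:DD:0}, namely $\Upsilon_\ell\le\tfrac{\ell}{4}(4\varpi_2+\alpha+\gamma)$, so that $2^{2\Upsilon_\ell}\le\Omega_\ell^{(4\varpi_2+\alpha+\gamma)/2}$; your $\epsilon/8$ bound gives a different (and not the stated) exponent.

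The real gap is part~(iii). Your iteration hinges on the inclusion $\mathscr{Q}^T_\ell\subset\mathscr{M}^T_{\ell+1}$, which you yourself flag as hand-wavy --- and it does not follow from the geometry: the subdomains $\mathscr{D}^{\Omega_\ell}_i$ indexed by $\mathscr{Q}^T_\ell$ reach down to roughly $3\Omega_\ell/2$, not $2\Omega_\ell=\Omega_{\ell+1}$, so mass concentrated there says nothing about $\int_{[\Omega_{\ell+1},\infty)}\mathfrak F$. Worse, even granting that inclusion, the sets $\mathscr{M}^T_i$ are \emph{not} nested in $i$ (larger $i$ means smaller domain \emph{and} smaller threshold), so your telescoping leaves the residual $\bigcap_{i\ge\ell}\mathscr{M}^T_i$, which is exactly what must be controlled and which your argument never touches.

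The paper handles~(iii) by a genuinely different and much more substantial route. It proves the inclusion $\mathscr{M}^T_\ell\subset\bigcup_{i\ge\ell}(\mathscr{M}^T_i\setminus\mathscr{M}^T_{i+1})$ by contradiction: assuming a time $t$ lies in every $\mathscr{M}^T_i$ for $i\ge\ell$, one first uses the monotonicity~\eqref{Lemma:Apriori2:1} to propagate the tail bound $\int_{[\Omega_i,\infty)}\mathfrak F\,\omega\gtrsim\Omega_i^{-\sigma}$ forward to all later times, then runs a pigeonhole argument across dyadic shells and subdomains to find, for an unbounded set of scales, a specific $\mathscr{D}^{\Omega_\ell}_i$ carrying mass $\gtrsim\Omega_\ell^{-\sigma}/\Upsilon_\ell$ on a set of times of measure $\gtrsim\Omega_\ell^{-\sigma}$. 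Feeding this into the same differential-inequality machinery that proved Proposition~\ref{Lemma:Mutis1} (the test function $\vartheta(\omega)=(\omega-7\Omega_\ell/4)_+$ and the resulting growth/exponential bounds) forces $\mathfrak{M}+\mathfrak{E}\gtrsim\Omega_\ell^{\epsilon/4}\to\infty$, a contradiction. None of this --- the time-propagation, the pigeonhole, or the re-run of the $\vartheta$-argument --- appears in your sketch, and it is the heart of the proof.
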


	\begin{proof}
		We divide the proof into several parts.
		
		{\bf Part 1: Proof of~\eqref{Lemma:Mutiscale2:1}.  }

To this end, we apply Proposition~\ref{Propo:Collision}, 
where \( \sigma \) is defined in~\eqref{sigma}. This yields the estimate
\begin{equation} \label{Propo:Mutiscale2:E1}
	\begin{aligned}
		\mathfrak{M} + \mathfrak{E} \ \ge\ 
		&\, c_{12} \mathcal{C}_{22}^o 
		\int_{0}^{T} \mathrm{d}t \,\chi_{\mathscr{N}_{\ell}^T}(t) 
		\iint_{[\Omega_{\ell}, \infty)^2} \mathrm{d}\omega_1\, \mathrm{d}\omega_2\,
		\mathfrak{F}_1 \omega_1\, \mathfrak{F}_2 \omega_2\,
		(\omega_1+\omega_2)^{3\theta+\varpi_1+\alpha-2}\omega_1^{\varpi_1+1}\omega_2^{\varpi_1+1}   \\[0.5em]  
		& + c_{22} \Delta_\ell^2\Omega_\ell^{4\varpi_2-2+\alpha+\gamma}	
				\int_{0}^{T} \mathrm{d}t \,\chi_{\mathscr{N}_{\ell}^T}(t) 
		\left( \int_{[\Omega_{\ell}, \infty)} \mathrm{d}\omega\, \mathfrak{F} \omega \right)^3 \\[0.5em]
		& + c_{31}\mathcal{C}_{22}^o 
		\int_{0}^{T} \mathrm{d}t \,\chi_{\mathscr{N}_{\ell}^T}(t) 
			\iiint_{[\Omega_{\ell}, \infty)^3} \mathrm{d}\omega_1\, \mathrm{d}\omega_2\, \mathrm{d}\omega_3\,
		(\omega_1+\omega_2+\omega_3)^{3\theta+\varpi_3+\alpha-2}
		\omega_1^{\varpi_3}\omega_2^{\varpi_3}\omega_3^{\varpi_3} \\[0.5em]
		&\quad \times \mathfrak{F}_1 \omega_1\, \mathfrak{F}_2 \omega_2\, \mathfrak{F}_3 \omega_3\, 
		(\omega_1 \omega_2 + \omega_2 \omega_3 + \omega_3 \omega_1).
	\end{aligned}
\end{equation}

We consider the following cases.

\begin{itemize}
	
	\item If \( c_{12} > 0 \), then using~\eqref{Sec:DD:2}, we obtain, for \( t \in \mathscr{N}_{\ell}^T \) and recalling the definitions in Assumption~X,
	\[
	\int_{[\Omega_{\ell}, \Omega_{\ell+1})} \mathrm{d}\omega\, \mathfrak{F}(t, \omega) 
	\ge c_o\, \Omega_{\ell+1}^{-\sigma},
	\]
	which implies	
	\begin{equation} \label{Propo:Mutiscale2:E2}
		\begin{aligned}
			\mathfrak{M} + \mathfrak{E} \ \ge\ 
			c_{12} \mathcal{C}_{22}^o\, \Omega_\ell^{3\theta+3\varpi_1+\alpha}
			\int_{0}^{T} \mathrm{d}t\, \chi_{\mathscr{N}_{\ell}^T}(t) 
			\left( c_o\, \Omega_{\ell+1}^{-\sigma} \right)^2,
		\end{aligned}
	\end{equation}
	and consequently,
	\begin{equation} \label{Propo:Mutiscale2:E3}
		\begin{aligned}
			C_0 \left( \mathfrak{M} + \mathfrak{E} \right)\, \Omega_{\ell}^{2\sigma - 3\varpi_{1}-3\theta-\alpha} 
			\ \ge\ \left| \mathscr{N}_{\ell}^T \right|,
		\end{aligned}
	\end{equation}
	for a universal constant \( C_0 > 0 \), which may vary from line to line.

	\item If \( c_{31} > 0 \), then using~\eqref{Sec:DD:2}, we bound
	\begin{equation} \label{Propo:Mutiscale2:E4}
		\begin{aligned}
			\mathfrak{M} + \mathfrak{E} \ \ge\ 
			c_{31} \mathcal{C}_{22}^o\Omega_\ell^{3\theta+4\varpi_3+\alpha}
			\int_{0}^{T} \mathrm{d}t\, \chi_{\mathscr{N}_{\ell}^T}(t) 
			\left( c_o\, \Omega_{\ell+1}^{-\sigma} \right)^3,
		\end{aligned}
	\end{equation}
	which implies
	\begin{equation} \label{Propo:Mutiscale2:E5}
		\begin{aligned}
			C_0 \left( \mathfrak{M} + \mathfrak{E} \right)\, \Omega_{\ell}^{3\sigma - 4\varpi_{3}-3\theta-\alpha} 
			\ \ge\ \left| \mathscr{N}_{\ell}^T \right|,
		\end{aligned}
	\end{equation}
	for a universal constant \( C_0 > 0 \), which may vary from line to line.
	
	\item If \( c_{22} > 0 \), then using~\eqref{Sec:DD:2}, we bound
	\begin{equation} \label{Propo:Mutiscale2:E6}
		\begin{aligned}
			\mathfrak{M} + \mathfrak{E} \ \ge\ 
			c_{22} \mathcal{C}_{22}^o \lambda^4 
			\cdot \Delta_\ell^2\Omega_\ell^{4\varpi_2-2+ \alpha+\gamma}
			\int_{0}^{T} \mathrm{d}t\, \chi_{\mathscr{N}_{\ell}^T}(t) 
			\left( c_o\, \Omega_{\ell}^{-\sigma} \right)^3,
		\end{aligned}
	\end{equation}
	which implies
	\begin{equation*}
		\begin{aligned}
			\frac{C_0 \left( \mathfrak{M} + \mathfrak{E} \right) }{\Delta_{\ell}^2\Omega_\ell^{-3\sigma+4\varpi_2-2+ \alpha+\gamma}}
			\ \ge\ \left| \mathscr{N}_{\ell}^T \right|,
		\end{aligned}
	\end{equation*}
	for a universal constant \( C_0 > 0 \), which may vary from line to line.
	
	Using~\eqref{Sec:DD:0}, we bound
	\begin{equation*}
		C_0 \left( \mathfrak{M} + \mathfrak{E} \right)\, \Omega_\ell^{3\sigma-(4\varpi_2+ \alpha+\gamma)} \cdot 2^{2\Upsilon_\ell }
		\ \ge\ \left| \mathscr{N}_{\ell}^T \right|,
	\end{equation*}
	which implies
	\begin{equation} \label{Propo:Mutiscale2:E7}
		C_0 \left( \mathfrak{M} + \mathfrak{E} \right)\, \Omega_\ell^{3\sigma-\frac{4\varpi_2+ \alpha+\gamma}{2}} 
		\ \ge\ \left| \mathscr{N}_{\ell}^T \right|.
	\end{equation}
	
\end{itemize}

Combining all the cases discussed above, we conclude the bound~\eqref{Lemma:Mutiscale2:1}.

			{\bf Part 2: Proof of~\eqref{Lemma:Mutiscale2:11}.  }	

	To this end, we combine~\eqref{Lemma:Mutis1:1} and~\eqref{Lemma:Mutiscale2:1} to estimate
	\begin{equation*}
		\begin{aligned}
			\left| \mathscr{N}_\ell^T \cup \mathscr{P}_{\ell}^T \right|  
			&\le C_{\mathscr{P}}\, \Omega_{\ell}^{-c_{\mathscr{P}}} 
			+ C_{\mathscr{N}}\, \Omega_{\ell}^{-c_{\mathscr{N}}} 
			\le C_0\, \Omega_{\ell}^{-c_{\mathscr{N},\mathscr{P}}},
		\end{aligned}
	\end{equation*}
	where \( C_0 > 0 \) is a universal constant that may vary from line to line.
	
	Now, we bound
	\begin{equation*}
		\left\vert \bigcup_{i=\ell}^\infty \left( \mathscr{M}_i^T \setminus \mathscr{Q}_i^T \right) \right\vert 
		\le \left\vert \bigcup_{i=\ell}^\infty \left( \mathscr{N}_i^T \cup \mathscr{P}_i^T \right) \right\vert
		\le \sum_{i=\ell}^{\infty} \left| \mathscr{N}_i^T \cup \mathscr{P}_i^T \right| 
		\le C_0 \sum_{i=\ell}^{\infty} \Omega_i^{-c_{\mathscr{N},\mathscr{P}}} 
		\le 2{C_0}\, \Omega_{\ell}^{-c_{\mathscr{N},\mathscr{P}}},
	\end{equation*}
	which completes the proof of~\eqref{Lemma:Mutiscale2:11}.

%
%
%
%
%
%
	
				{\bf Part 3: Proof of~\eqref{Lemma:Mutiscale3:1}.  }	

We aim to prove that for all   \( \sigma > 0 \) satisfying \eqref{sigma}, there exits  \( \mathfrak{C}_\mathscr{A} > 0 \) such that for all \( \ell > \mathfrak{C}_\mathscr{A}(\sigma) \), the following inclusion holds:
\begin{equation} \label{Lemma:Mutiscale3:E1}
	\mathscr{M}_{\ell}^T \subset \bigcup_{i=\ell}^\infty \left( \mathscr{M}_i^T \setminus \mathscr{Q}_i^T \right).
\end{equation}
By definition, we observe that
\[
\bigcup_{i=\ell}^\infty \left( \mathscr{M}_i^T \setminus \mathscr{M}_{i+1}^T \right) \subset \bigcup_{i=\ell}^\infty \left( \mathscr{M}_i^T \setminus \mathscr{Q}_i^T \right).
\]
Hence, to prove \eqref{Lemma:Mutiscale3:E1}, it suffices to establish the existence of  \( \mathfrak{C}_\mathscr{A}(\sigma) > 0 \) such that for all \( \ell > \mathfrak{C}_\mathscr{A}(\sigma) \), we have
\begin{equation} \label{Lemma:Mutiscale3:E2}
	\mathscr{M}_{\ell}^T \setminus \left( \bigcup_{i=\ell}^\infty \left( \mathscr{M}_i^T \setminus \mathscr{M}_{i+1}^T \right) \right) = \emptyset.
\end{equation}

We proceed by contradiction. Suppose there exist $\sigma$ satisfying  \eqref{sigma} and a sequence
\[
\ell(\sigma) = \ell_0(\sigma) < \ell_1(\sigma) < \ell_2(\sigma) < \cdots
\]
such that for each \( j \in \mathbb{N} \), the set
\[
\mathscr{M}_{\ell_j(\sigma)}^T \setminus \left( \bigcup_{i=\ell_j(\sigma)}^\infty \left( \mathscr{M}_i^T \setminus \mathscr{M}_{i+1}^T \right) \right)
\]
is nonempty.

We divide the proof into several steps.

\textit{Step 1:}  
Let \( t(\sigma) \in \mathscr{M}_{\ell(\sigma)}^T \setminus \left( \bigcup_{i =\ell(\sigma)}^\infty \left(  \mathscr{M}_i^T \setminus  \mathscr{M}_{i+1}^T \right) \right) \subset [0, T] \).  
We claim that
\begin{equation} \label{Lemma:Mutiscale3:E3}
	\int_{[\Omega_i, \infty)} \mathrm{d}\omega \, \mathfrak{F}\left( t(\sigma), \omega \right)\, \omega \geq c_o\, \Omega_i^{-\sigma}, 
\end{equation}
for all \( i \ge \ell(\sigma) \), where \( c_o > 0 \) is a fixed constant.

To prove this, suppose for contradiction that there exists \( i_0 \ge \ell(\sigma) \) such that the bound \eqref{Lemma:Mutiscale3:E3} fails to hold at \( i = i_0 \). However, since \( t(\sigma) \in \mathscr{M}_{\ell(\sigma)}^T \), the condition \eqref{Lemma:Mutiscale3:E3} must hold at least for \( i = \ell(\sigma) \).

Therefore, we may assume that \eqref{Lemma:Mutiscale3:E3} holds for all indices
\[
i = \ell(\sigma), \, \ell(\sigma)+1, \, \dots, \, i_0 - 1,
\]
and fails only at \( i = i_0 \). This would imply that
\[
t(\sigma) \in \mathscr{M}_{i_0 - 1}^T \setminus \mathscr{M}_{i_0}^T,
\]
which contradicts the assumption that \( t(\sigma) \notin \mathscr{M}_i^T \setminus \mathscr{M}_{i+1}^T \) for all \( i \ge \ell(\sigma) \).

We conclude that \eqref{Lemma:Mutiscale3:E3} must indeed hold for all \( i \ge \ell(\sigma) \).

\textit{Step 2:}  
We replace the index sequence
\[
\ell(\sigma) = \ell_0(\sigma) < \ell_1(\sigma) < \cdots
\]
with
\[
\ell(\sigma) = \ell_j(\sigma) < \ell_{j+1}(\sigma) < \cdots
\]
for \( j \) sufficiently large, so that we may assume \( \ell(\sigma) \) is arbitrarily large.

It follows from \eqref{Lemma:Apriori2:1} and \eqref{Lemma:Mutiscale3:E3} that
\begin{equation} \label{Lemma:Mutiscale3:E4}
	\int_{[4\Omega_i, \infty)} \mathrm{d}\omega \, \mathfrak{F}(t, \omega)\, \omega \geq \frac{c_o}{3} \, \Omega_i^{-\sigma}, \quad \text{for all } t > t(\sigma),
\end{equation}
and for all \( i \ge \ell(\sigma) \).

\medskip
\noindent
Next, define
\[
t_* = \sup_{0 < \sigma < \frac15\min\left\{  \frac{4\varpi_{3}+3\theta+\alpha}{3},\ \frac{3\varpi_{1}+3\theta+\alpha}{2},\ \frac{4\varpi_2+ \alpha+\gamma}{6},\    3\varpi_2 + 2 - 2\theta + \kappa_2 - c_{\mathrm{in}} + \gamma  \right\}} t(\sigma),
\]
so that \( 0 \le t_*  < T^* \). Then, for all \(T^*> t > t_* \) and \( i \ge \ell(\sigma) \), we have
\begin{equation} \label{Lemma:Multiscale3:E15}
	\int_{[4\Omega_i, \infty)} \mathrm{d}\omega \, \mathfrak{F}(t, \omega)\, \omega 
	\ge \frac{c_0}{3} \, \Omega_i^{-\sigma} 
	= \frac{c_0}{4^{-\sigma} \cdot 3} (4\Omega_i)^{-\sigma}.
\end{equation}

\medskip
\noindent
Fix \( \sigma \), and let \( \ell^* = \ell^*(\sigma) \) be such that
\begin{equation} \label{Lemma:Mutiscale3:E16}
	\int_{[\Omega_j, \infty)} \mathrm{d}\omega\, \mathfrak{F}(t, \omega)\, \omega > c_o' \, \Omega_j^{-\sigma},
\end{equation}
for all \( j > \ell^*(\sigma) \) and for all \( t \in (t_*, T^*) \), where \( c_o' > 0 \) is independent of \( j \) and \( t \).

\medskip
\noindent
Now fix \( t_* < T_1 < T^* \), and let \( j > \ell^*(\sigma) \). For \( t \in (t_*, T_1) \), we aim to prove the existence of an unbounded set \( \mathfrak{Y}(t) \subset \mathbb{N} \) such that for all \( j \in \mathfrak{Y}(t) \),
\begin{equation} \label{Lemma:Mutiscale3:E17}
	\int_{[\Omega_j, \Omega_{j+1})} \mathrm{d}\omega\, \mathfrak{F}(t, \omega)\, \omega 
	> \frac{2^\sigma - 1}{2^\sigma} \cdot \frac{c_o'}{2} \Omega_j^{-\sigma},
\end{equation}
while for \( j \notin \mathfrak{Y}(t) \), the inequality \eqref{Lemma:Mutiscale3:E17} fails to hold.

This can be shown via a contradiction argument.  
Suppose, to the contrary, that there exists \( \mathfrak{M}_0 > 0 \) such that for all \( j \ge \mathfrak{M}_0 \),
\[
\int_{[\Omega_{j}, \Omega_{j+1})} \mathrm{d}\omega\, \mathfrak{F}(t, \omega)\, \omega 
\le \frac{2^\sigma - 1}{2^\sigma} \cdot \frac{c_o'}{2} \, \Omega_j^{-\sigma}.
\]
Then summing over all such \( j \), we obtain
\[
\int_{\bigcup_{j = \mathfrak{M}_0}^\infty [\Omega_{j}, \Omega_{j+1})} 
\mathrm{d}\omega\, \mathfrak{F}(t, \omega)\, \omega 
\le \frac{2^\sigma - 1}{2^\sigma} \cdot \frac{c_o'}{2} \sum_{j = \mathfrak{M}_0}^\infty \Omega_j^{-\sigma}.
\]
Since \( \Omega_j = \Omega_{\mathfrak{M}_0} \cdot 2^{j - \mathfrak{M}_0} \), we have
\[
\Omega_j^{-\sigma} = \Omega_{\mathfrak{M}_0}^{-\sigma} \cdot 2^{-(j - \mathfrak{M}_0)\sigma},
\]
and hence,
\[
\sum_{j = \mathfrak{M}_0}^\infty \Omega_j^{-\sigma} 
= \Omega_{\mathfrak{M}_0}^{-\sigma} \sum_{j = 0}^\infty 2^{-j\sigma} 
= \Omega_{\mathfrak{M}_0}^{-\sigma} \cdot \frac{1}{1 - 2^{-\sigma}} 
= \Omega_{\mathfrak{M}_0}^{-\sigma} \cdot \frac{2^\sigma}{2^\sigma - 1}.
\]
Therefore,
\[
\int_{\bigcup_{j = \mathfrak{M}_0}^\infty [\Omega_{j}, \Omega_{j+1})} 
\mathrm{d}\omega\, \mathfrak{F}(t, \omega)\, \omega 
\le \Omega_{\mathfrak{M}_0}^{-\sigma} \cdot \frac{c_o'}{2}.
\]

On the other hand, recall from earlier (cf. \eqref{Lemma:Mutiscale3:E16}) that
\[
\int_{[\Omega_{\mathfrak{M}_0}, \infty)} \mathrm{d}\omega\, \mathfrak{F}(t, \omega)\, \omega 
> c_o' \, \Omega_{\mathfrak{M}_0}^{-\sigma},
\]
which yields a contradiction, since
\[
\int_{[\Omega_{\mathfrak{M}_0}, \infty)} \mathrm{d}\omega\, \mathfrak{F}(t, \omega)\, \omega 
>  c_o' \, \Omega_{\mathfrak{M}_0}^{-\sigma} > \Omega_{\mathfrak{M}_0}^{-\sigma} \cdot \frac{c_o'}{2} \ge \int_{[\Omega_{\mathfrak{M}_0}, \infty)} \mathrm{d}\omega\, \mathfrak{F}(t, \omega)\, \omega.
\]
Thus, our assumption must be false. Therefore, there exists an unbounded set \( \mathfrak{Y}(t) \subset \mathbb{N} \) such that for all \( j \in \mathfrak{Y}(t) \), the inequality \eqref{Lemma:Mutiscale3:E17} holds.

\textit{Step 3:}

For \( \ell \in \mathbb{N} \), define \( \mathfrak{X}_{\ell} \subset (t_*, T_1) \) as the set of all times \( t \) such that \( \ell \in \mathfrak{Y}(t) \), that is,
\[
\bigcup_{\ell \in \mathbb{N}} \mathfrak{X}_{\ell} = (t_*, T_1).
\]

We claim that there exists \( \ell_0 \in \mathbb{N} \) such that
\[
\left| \mathfrak{X}_{\ell_0} \right| \ge |T_1 - t_*| \cdot 2^{-\ell_0  \sigma} \cdot \frac{1}{2} \cdot \frac{2^ \sigma - 1}{2^ \sigma}.
\]
We proceed by contradiction. Suppose that
\[
\left| \mathfrak{X}_{\ell} \right| \le |T_1 - t_*| \cdot 2^{-\ell  \sigma} \cdot \frac{1}{2} \cdot \frac{2^ \sigma - 1}{2^ \sigma}
\quad \text{for all } \ell \in \mathbb{N}.
\]
Then summing over all \( \ell \in \mathbb{N} \), we find
\[
\left| \bigcup_{\ell \in \mathbb{N}} \mathfrak{X}_{\ell} \right| 
\le \frac{1}{2} \cdot \frac{2^ \sigma - 1}{2^ \sigma} \cdot |T_1 - t_*| \cdot \sum_{\ell=0}^\infty 2^{-\ell  \sigma}
= \frac{1}{2} \cdot \frac{2^ \sigma - 1}{2^ \sigma} \cdot |T_1 - t_*| \cdot \frac{1}{1 - 2^{- \sigma}}.
\]
Simplifying the geometric sum gives
\[
\left| \bigcup_{\ell \in \mathbb{N}} \mathfrak{X}_{\ell} \right| 
\le |T_1 - t_*| \cdot \frac{1}{2} < |T_1 - t_*| 
= \left| \bigcup_{\ell \in \mathbb{N}} \mathfrak{X}_{\ell} \right|,
\]
which is a contradiction. Therefore, there must exist some \( \ell_0 \in \mathbb{N} \) such that
\[
\left| \mathfrak{X}_{\ell_0} \right| 
\ge |T_1 - t_*| \cdot 2^{-\ell_0  \sigma} \cdot \frac{1}{2} \cdot \frac{2^ \sigma - 1}{2^ \sigma}.
\]

Since for each \( t \in (t_*, T_1) \), the set \( \mathfrak{Y}(t) \) is unbounded, it follows that the subset \( \{ \ell \in \mathfrak{Y}(t) \mid \ell > \ell_0 \} \) is also unbounded. Consequently,
\[
\bigcup_{\ell \in \mathbb{N},\, \ell > \ell_0} \mathfrak{X}_\ell = (t_*, T_1).
\]
Applying the same argument as before, we deduce the existence of \( \ell_1 > \ell_0 \) such that
\[
|\mathfrak{X}_{\ell_1}| \ge |T_1 - t_*| \cdot 2^{-\ell_1  \sigma} \cdot \frac{1}{2} \cdot \frac{2^ \sigma - 1}{2^ \sigma}.
\]

Repeating this procedure inductively, we construct an unbounded sequence \( \{ \ell_0, \ell_1, \ell_2, \dots \} \subset \mathbb{N} \) such that for each \( \ell \) in the sequence,
\[
|\mathfrak{X}_{\ell}| \ge |T_1 - t_*| \cdot 2^{-\ell  \sigma} \cdot \frac{1}{2} \cdot \frac{2^ \sigma - 1}{2^ \sigma}.
\]

Now, let \( \ell \in \{ \ell_0, \ell_1, \dots \} \) be large enough. By the pigeonhole principle and \eqref{Lemma:Mutiscale3:E17}, there exists a set \( \mathscr{D}_{i}^{\Omega_\ell} \subset \left[ \Omega_{\ell}, \Omega_{\ell+1} \right) \) such that for all \( t \in \mathfrak{X}_\ell \),
\begin{equation} \label{Lemma:Mutiscale3:E18}
	\int_{\mathscr{D}_{i}^{{\Omega_\ell}}} \mathrm{d}\omega \, \mathfrak{F}(t, \omega)\omega 
	> \frac{2^\sigma - 1}{2^\sigma} \cdot \frac{c_o'}{100} \cdot \frac{\Omega_\ell^{-\sigma}}{\Upsilon_\ell}
	\ge c_o'' \, \Omega_\ell^{-\sigma -  \min\left\{ \frac{1}{4} \left( 4\varpi_2+ \alpha+\gamma \right),\ \frac{\epsilon}{16} \right\} },
\end{equation}
for some constant \( c_o'' > 0 \) (which may vary line to line). This implies the following bounds:
	\begin{equation} \label{Lemma:Mutis:E19a}
	\begin{aligned}
		& \int_0^{T} \mathrm{d}t\, 4^{-1}\, \theta\, C_\omega^{-1} C_{\mathfrak{P}}^{3}\, c_{12}\, 
		\lvert \Omega_{\ell} \rvert^{3\theta + 3\varpi_1 + 1}
		\left[ \int_{\mathbb{R}_+} \mathrm{d}\omega\,
		\mathfrak{F}\, \omega\,
		\chi_{\big\{\omega \in \mathscr{D}_{i}^{\Omega_{\ell}}\big\}} \right]^2
		\chi_{\mathfrak{X}_\ell}(t) \\[0.4em]
		& \quad + \int_0^{T} \mathrm{d}t\, \tfrac{1}{2}\, c_{31}\, C_{\mathfrak{Q}}^{3}\, \theta\, C_\omega^{-1} C_{\mathfrak{P}}\, 
		\lvert \Omega_{\ell} \rvert^{3\theta + 4\varpi_3 + 1}
		\left[ \int_{\mathbb{R}_+} \mathrm{d}\omega\,
		\mathfrak{F}\, \omega\,
		\chi_{\big\{\omega \in \mathscr{D}_{i}^{\Omega_{\ell}}\big\}} \right]^3
	\chi_{\mathfrak{X}_\ell}(t)\\
		&> \mathcal{C}_1'' \left[
		c_{12} \Omega_{\ell}^{3\theta + 3\varpi_1 + 1-2\sigma-\frac{\epsilon}{4}} +
		c_{31} \Omega_{\ell}^{3\theta + 4\varpi_3 + 1 - 3\sigma -\frac{\epsilon}{4}}
		\right]\cdot|T_1 - t_*| \cdot  \Omega_{\ell}^{-\sigma} \\
		&> \mathcal{C}_1' \Omega_{\ell}^{\frac{\epsilon}{4}},
	\end{aligned}
\end{equation}
when \( c_{22} = 0 \), and
\begin{equation} \label{Lemma:Mutis:E19}
	\begin{aligned}
		& \int_0^{T} \mathrm{d}t\,  \chi_{\mathfrak{X}_\ell}(s)
		\Bigg[
		2^{\kappa_2 - 5 - c_{\mathrm{in}}}\, c_{22}\, C_{\omega}^{2\theta}\, C_{\mathfrak{R}}^{2}\,
		C_{\mathfrak{R}'}  
		\int_{\mathscr{D}_{\rho(s)}^{\Omega_{\ell_j}}} \mathrm{d}\omega\, \mathfrak{F}(s,\omega)\, \omega \\[0.4em]
		& \quad \times \Omega_{\ell}^{3\varpi_2 + 2 - 2\theta + \kappa_2 - c_{\mathrm{in}} + \gamma}
		\Bigg]
	\\
	&> \mathcal{C}_1'' \left[
	c_{22} \Omega_{\ell}^{3\varpi_2 + 2 - 2\theta + \kappa_2 - c_{\mathrm{in}} + \gamma}
	\right]\cdot|T_1 - t_*| \cdot  \Omega_{\ell}^{-\sigma} \\
	&> \mathcal{C}_1' \Omega_{\ell}^{\frac{\epsilon}{4}},
	\end{aligned}
\end{equation}
when \( c_{22} \ne 0 \)
for some constants \( \mathcal{C}_1'', \mathcal{C}_1' > 0 \). It is important to note that the above bound makes use of the inequality \eqref{epsilon}.

There exists \( T_1 < T \) such that

	\begin{equation} \label{Lemma:Mutis:E20a}
	\begin{aligned}
		& \int_0^{T_1} \mathrm{d}t\, 4^{-1}\, \theta\, C_\omega^{-1} C_{\mathfrak{P}}^{3}\, c_{12}\, 
		\lvert \Omega_{\ell} \rvert^{3\theta + 3\varpi_1 + 1}
		\left[ \int_{\mathbb{R}_+} \mathrm{d}\omega\,
		\mathfrak{F}\, \omega\,
		\chi_{\big\{\omega \in \mathscr{D}_{i}^{\Omega_{\ell}}\big\}} \right]^2
		\chi_{\mathfrak{X}_\ell}(t) \\[0.4em]
		& \quad + \int_0^{T} \mathrm{d}t\, \tfrac{1}{2}\, c_{31}\, C_{\mathfrak{Q}}^{3}\, \theta\, C_\omega^{-1} C_{\mathfrak{P}}\, 
		\lvert \Omega_{\ell} \rvert^{3\theta + 4\varpi_3 + 1}
		\left[ \int_{\mathbb{R}_+} \mathrm{d}\omega\,
		\mathfrak{F}\, \omega\,
		\chi_{\big\{\omega \in \mathscr{D}_{i}^{\Omega_{\ell}}\big\}} \right]^3
		\chi_{\mathfrak{X}_\ell}(t)
	\\
		& = \mathcal{C}_1' \Omega_{\ell}^{\frac{\epsilon}{4}},
	\end{aligned}
\end{equation}
when \( c_{22} = 0 \), and
\begin{equation} \label{Lemma:Mutis:E20}
	\begin{aligned}
		& \int_0^{T_1} \mathrm{d}t\,  \chi_{\mathfrak{X}_\ell}(s)
		\Bigg[
		2^{\kappa_2 - 5 - c_{\mathrm{in}}}\, c_{22}\, C_{\omega}^{2\theta}\, C_{\mathfrak{R}}^{2}\,
		C_{\mathfrak{R}'}  
		\int_{\mathscr{D}_{i}^{\Omega_{\ell}}} \mathrm{d}\omega\, \mathfrak{F}(s,\omega)\, \omega \\[0.4em]
		& \quad \times \Omega_{\ell}^{3\varpi_2 + 2 - 2\theta + \kappa_2 - c_{\mathrm{in}} + \gamma}
		\Bigg]
	\\
		&= \mathcal{C}_1' \Omega_{\ell}^{\frac{\epsilon}{4}},
	\end{aligned}
\end{equation}

Similarly to~\eqref{Lemma:Supersolu:E9:1} and \eqref{Lemma:Multiscale1:E12}, we estimate

\begin{equation} \label{Lemma:Supersolu:E21a}
	\begin{aligned}
		\mathfrak{M} + \mathfrak{E} 
		\ge\; & \int_{\mathbb{R}_+} \mathrm{d}\omega\, \mathfrak{F}(T_1)\, \vartheta \\[0.3em]
		\ge\; & \int_0^{T_1} \mathrm{d}t\,4^{-1}\, \theta\, C_\omega^{-1} C_{\mathfrak{P}}^{3}\, c_{12}\, 
		\lvert \Omega_\ell \rvert^{3\theta + 3\varpi_1 + 1}
		\left[ \int_{\mathbb{R}_+} \mathrm{d}\omega\,
		\mathfrak{F}\, \omega\,
		\chi_{\big\{\omega \in \mathscr{D}_{i}^{\Omega_{\ell}}\big\}} \right]^2
			\chi_{\mathfrak{X}_\ell}(t) \\[0.4em]
		& + \int_0^{T_1} \mathrm{d}t\,\tfrac{1}{2}\, c_{31}\, C_{\mathfrak{Q}}^{3}\, \theta\, C_\omega^{-1} C_{\mathfrak{P}}\, 
		\lvert \Omega_\ell \rvert^{3\theta + 4\varpi_3 + 1}
		\left[ \int_{\mathbb{R}_+} \mathrm{d}\omega\,
		\mathfrak{F}\, \omega\,
		\chi_{\big\{\omega \in \mathscr{D}_{i}^{\Omega_{\ell}}\big\}} \right]^3
		\chi_{\mathfrak{X}_\ell}(t) \\[0.4em]
		\ge\; & \mathcal{C}_1'\, \Omega_{\ell}^{\frac\epsilon4},
	\end{aligned}
\end{equation}
and
\begin{equation}\label{Lemma:Multiscale1:E21}
	\begin{aligned}
		\int_{\mathbb{R}^{+}} \mathrm{d}\omega\, \mathfrak{F}(t,\omega)\, \vartheta(\omega)
		\ge\; & \tfrac{1}{20}\, C_{\mathrm{ini}}\, (2\Omega_{\ell})^{-c_{\mathrm{in}}}
		\exp\Bigg(
		\int_0^t \mathrm{d}s\, 	\chi_{\mathfrak{X}_\ell}(t)(s)
		\Bigg[ \\[0.4em]
		& \quad 2^{\kappa_2 - 5 - c_{\mathrm{in}}}\, c_{22}\, C_{\omega}^{2\theta}\, C_{\mathfrak{R}}^{2}\,
		C_{\mathfrak{R}'} 
		\int_{\mathscr{D}_{i}^{\Omega_{\ell}}} \mathrm{d}\omega\, 
		\mathfrak{F}(s,\omega)\, \omega \\[0.4em]
		& \quad \times \Omega_{\ell}^{3\varpi_2 + 2 - 2\theta + \kappa_2 - c_{\mathrm{in}} + \gamma}
		\Bigg] \Bigg).
	\end{aligned}
\end{equation}

By the same argument as in~\eqref{Lemma:Supersolu:E9:1} and \eqref{Lemma:Multiscale1:E13}, we reach a contradiction as \( \ell \to \infty \). Hence,~\eqref{Lemma:Mutiscale3:E1} holds, and consequently,~\eqref{Lemma:Mutiscale3:1} follows.

\end{proof}

\section{Energy cascade}\label{Sec:Third}
\begin{proposition} 
	\label{Propo:Cascade1} We assume Assumptions X and Y. 
	
	\begin{itemize}
		\item[(I)]
Suppose
	\begin{equation}\label{Propo:Cascade1:1} 
		0<	c_{\mathrm{ini}} <  		\frac15\min\left\{  \frac{4\varpi_{3}+3\theta+\alpha}{3},\ \frac{3\varpi_{1}+3\theta+\alpha}{2},\ \frac{4\varpi_2+ \alpha+\gamma}{6},  3\varpi_2 + 2 - 2\theta + \kappa_2 - c_{\mathrm{in}} + \gamma \right\},
	\end{equation}

	then $$	T^* =0.$$
	
			\item[(II)] Suppose only that 	\begin{equation} \label{Propo:Cascade1:2}
	0 < c_{\mathrm{in}} <  3\varpi_2 + 2 - 2\theta + \kappa_2   + \gamma.
			\end{equation}  Then
			\[
			T^* < \infty.
			\]
	
		\end{itemize} 
\end{proposition}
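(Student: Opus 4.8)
\textbf{Proof strategy for Proposition~\ref{Propo:Cascade1}.}
The plan is to argue by contradiction in both cases, using the set estimates from Proposition~\ref{Lemma:Mutiscale2} together with the mass and energy bounds. For part~(I), suppose $T^* > 0$. Then by hypothesis \eqref{Propo:Cascade1:1} the exponent $c_{\mathrm{in}}$ satisfies \eqref{Theorem1:3}, so we may choose $\sigma > 0$ as in \eqref{sigma} and $\epsilon > 0$ as in \eqref{epsilon}. The key observation is that if $T^* > 0$, one can pick $0 < T < T^*$, and on $[0,T]$ the total energy is conserved, i.e.\ $[0,T] \subset \Theta$ (see \eqref{Theta}); moreover, since the energy is concentrated in high frequencies by the a priori lower bound \eqref{Theorem1:4} together with \eqref{Lemma:Apriori2:1}, for every $t \in [0,T]$ the tail $\int_{[\Omega_\ell,\infty)} \mathrm{d}\omega\, \mathfrak{F}(t,\omega)$ is bounded below by a positive multiple of $\Omega_\ell^{-\sigma}$ for all $\ell$ sufficiently large (this uses $c_{\mathrm{in}} < \sigma$ being incompatible, so rather one uses that energy conservation forces $\int_{[\Omega_\ell,\infty)}\mathrm{d}\omega\,\mathfrak F\,\omega$ to stay $\gtrsim \mathfrak E$ minus a vanishing piece, hence $\int_{[\Omega_\ell,\infty)}\mathrm{d}\omega\,\mathfrak F \gtrsim \Omega_\ell^{-1}\mathfrak E \gg c_o\Omega_\ell^{-\sigma}$ once $\sigma < 1$). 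Consequently $[0,T] \subset \mathscr{M}_\ell^T$ for all large $\ell$, so $\left|\mathscr{M}_\ell^T\right| = T$ for all large $\ell$. On the other hand, \eqref{Lemma:Mutiscale3:1} gives $\left|\mathscr{M}_\ell^T\right| \le C_{\mathscr M}\,\Omega_\ell^{-c_{\mathscr N,\mathscr P}}$, and by the choice of $\sigma$ and $\epsilon$ (unraveling \eqref{Lemma:Mutis1:1_case}--\eqref{Lemma:Mutis1:7_case} and \eqref{Lemma:Mutiscale2:2}--\eqref{Lemma:Mutiscale2:8}) one checks that $c_{\mathscr N,\mathscr P} > 0$. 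Letting $\ell \to \infty$ forces $T \le 0$, contradicting $T > 0$. Hence $T^* = 0$.

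For part~(II), we only assume the weaker bound \eqref{Propo:Cascade1:2}, which is exactly \eqref{Theorem1:5}; under this hypothesis we still have $0 < c_{\mathrm{in}} < 3\varpi_2 + 2 - 2\theta + \kappa_2 + \gamma$, so in the branch where $c_{22} \ne 0$ the quantity $3\varpi_2 + 2 - 2\theta + \kappa_2 - c_{\mathrm{in}} + \gamma > 0$, and we may again choose $\sigma, \epsilon > 0$ so that \eqref{sigma}--\eqref{epsilon} hold (now only the third line of \eqref{sigma} and the second of \eqref{epsilon} are the binding constraints; when $c_{22} = 0$ we instead use the $c_{12}$ or $c_{31}$ constraints, which are unconditional). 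Suppose for contradiction that $T^* = \infty$. Then for every $T > 0$ we have $[0,T] \subset \Theta$, and exactly as in part~(I), for all large $\ell$ the interval $[0,T]$ is contained in $\mathscr{M}_\ell^T$, so $\left|\mathscr M_\ell^T\right| = T$. But \eqref{Lemma:Mutiscale3:1} gives $\left|\mathscr M_\ell^T\right| \le C_{\mathscr M}\,\Omega_\ell^{-c_{\mathscr N,\mathscr P}}$ with $c_{\mathscr N,\mathscr P} > 0$, where crucially the constant $C_{\mathscr M}$ is \emph{independent of $T$ and $T^*$} (this independence is asserted in Proposition~\ref{Lemma:Mutiscale2}). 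Fixing $\ell$ large enough that $C_{\mathscr M}\,\Omega_\ell^{-c_{\mathscr N,\mathscr P}} < 1$, say, and then taking $T = 2$, we get $2 = \left|\mathscr M_\ell^{2}\right| \le C_{\mathscr M}\Omega_\ell^{-c_{\mathscr N,\mathscr P}} < 1$, a contradiction. Hence $T^* < \infty$.

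\textbf{Main obstacle.} The delicate point in both parts is justifying that $[0,T] \subset \mathscr{M}_\ell^T$ for all sufficiently large $\ell$ when the energy is conserved on $[0,T]$. This requires showing that energy conservation plus the initial high-frequency tail bound \eqref{Theorem1:4} (propagated forward via the monotonicity \eqref{Lemma:Apriori2:1}) forces $\int_{[\Omega_\ell,\infty)}\mathrm{d}\omega\,\mathfrak{F}(t,\omega) \ge c_o\,\Omega_\ell^{-\sigma}$ uniformly in $t \in [0,T]$. One extracts from $\int_{[0,\infty)}\mathrm{d}\omega\,\mathfrak F(t,\omega)\,\omega = \mathfrak E$ and $\int_{[0,\infty)}\mathrm{d}\omega\,\mathfrak F(t,\omega) \le \mathfrak M$ (from \eqref{Lemma:Apriori2:2}) that a fixed fraction of the energy must live above $\Omega_\ell$ — indeed $\int_{[0,R)}\mathrm{d}\omega\,\mathfrak F\,\omega \le R\,\mathfrak M$, so $\int_{[\Omega_\ell,\infty)}\mathrm{d}\omega\,\mathfrak F\,\omega \ge \mathfrak E - \Omega_\ell\mathfrak M$ is useless for large $\ell$; instead one must use the forward-propagated tail bound \eqref{Theorem1:4}: from \eqref{Lemma:Apriori2:1} with $R = \Omega_\ell$ we get $\int_{[\Omega_\ell,\infty)}\mathrm{d}\omega\,\mathfrak F(t,\omega)(\omega - \Omega_\ell)$ non-decreasing in $t$, hence $\ge C_{\mathrm{in}}\Omega_\ell^{-c_{\mathrm{in}}}/2$ for all $t\ge 0$ (absorbing the shift as in \eqref{Lemma:Mutiscale1:E11}), and then $\int_{[\Omega_\ell,\infty)}\mathrm{d}\omega\,\mathfrak F(t,\omega) \ge \frac{1}{2\cdot 4\Omega_\ell}\int_{[4\Omega_\ell,\infty)}\mathrm{d}\omega\,\mathfrak F(t,\omega)\,\omega \gtrsim \Omega_\ell^{-1-c_{\mathrm{in}}}$, which dominates $c_o\Omega_\ell^{-\sigma}$ once $\sigma > 1 + c_{\mathrm{in}}$ fails — so in fact one works with $\int_{\mathscr D_i^{\Omega_\ell}}$ directly as in the proof of \eqref{Lemma:Mutiscale3:1} rather than the crude bound. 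This is precisely the mechanism already deployed in Part~3 of the proof of Proposition~\ref{Lemma:Mutiscale2}, so the argument here amounts to assembling those ingredients and checking the sign $c_{\mathscr N,\mathscr P} > 0$ from the explicit formulas, which follows directly from \eqref{sigma} and \eqref{epsilon}.
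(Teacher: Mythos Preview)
Your overall architecture is right: propagate the initial tail bound \eqref{Theorem1:4} forward via the monotonicity \eqref{Lemma:Apriori2:1} to force $\mathscr{M}_\ell^T = [0,T]$, then confront this with the upper bound \eqref{Lemma:Mutiscale3:1}. But the step you flag as the ``main obstacle'' is genuinely unresolved in your write-up, and the paper's resolution is much simpler than the mechanism you are reaching for.

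For Part~(I), the missing idea is to \emph{set $\sigma = c_{\mathrm{in}}$}. The hypothesis \eqref{Propo:Cascade1:1} is precisely the statement that $c_{\mathrm{in}}$ lies in the admissible range \eqref{sigma} for $\sigma$. With this choice, the propagated tail bound gives
\[
\int_{[\Omega_\ell,\infty)}\mathrm{d}\omega\,\omega\,\mathfrak F(t,\omega)\ \ge\ \int_{\mathbb R_+}\mathrm{d}\omega\,(\omega-\Omega_\ell)_+\,\mathfrak F(t,\omega)\ \ge\ \int_{\mathbb R_+}\mathrm{d}\omega\,(\omega-\Omega_\ell)_+\,\mathfrak F(0,\omega)\ \ge\ \tfrac12 C_{\mathrm{in}}(2\Omega_\ell)^{-c_{\mathrm{in}}},
\]
which is exactly the lower bound $c_o\,\Omega_\ell^{-\sigma}$ defining $\mathscr{M}_\ell^T$ (with $c_o = 2^{-1-c_{\mathrm{in}}}C_{\mathrm{in}}$), uniformly in $\ell$. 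No conversion between weighted and unweighted integrals is needed, and the detour through $\int_{\mathscr D_i^{\Omega_\ell}}$ you suggest is unnecessary. Then $T\le C_{\mathscr M}\Omega_\ell^{-c_{\mathscr N,\mathscr P}}\to 0$.

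For Part~(II) your argument has a real error: the claim that $[0,T]\subset\mathscr M_\ell^T$ ``for all large $\ell$'' is generally false here. Under only \eqref{Propo:Cascade1:2}, $c_{\mathrm{in}}$ may exceed the admissible range for $\sigma$, so one must pick some small $\sigma=\varrho$ with $\varrho<c_{\mathrm{in}}$. The propagated tail then gives only $\int\omega\mathfrak F\gtrsim\Omega_\ell^{-c_{\mathrm{in}}}\ll\Omega_\ell^{-\varrho}$ for large $\ell$, so membership in $\mathscr M_\ell^T$ fails asymptotically with any fixed $c_o$. The fix is to work at a \emph{single} scale $\ell=\mathcal N_0$: choose $c_o=\mathcal C_*^o$ so that $C_{\mathrm{in}}\Omega_{\mathcal N_0}^{-c_{\mathrm{in}}}=\mathcal C_*^o\Omega_{\mathcal N_0}^{-\varrho}$, which makes $\mathscr M_{\mathcal N_0}^T=[0,T]$ for every $T<T^*$. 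Since $C_{\mathscr M}$ in \eqref{Lemma:Mutiscale3:1} is independent of $T$, this yields $T^*\le C_{\mathscr M}\Omega_{\mathcal N_0}^{-c_{\mathscr N,\mathscr P}}<\infty$. You cannot send $\ell\to\infty$ here, and you do not need to.
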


\begin{proof} We first prove (I). Choose  $ T\in[0,T^*)$.
We begin by selecting parameters \( c_{\mathrm{ini}} = \sigma > 0 \) and \( \epsilon \) such that
\begin{equation*}
	\begin{aligned}
		& 0 < c_{\mathrm{ini}} = \sigma < 	\frac15\min\left\{  \frac{4\varpi_{3}+3\theta+\alpha}{3},\ \frac{3\varpi_{1}+3\theta+\alpha}{2},\ \frac{4\varpi_2+ \alpha+\gamma}{6},  3\varpi_2 + 2 - 2\theta + \kappa_2 - c_{\mathrm{in}} + \gamma \right\}, 
	\end{aligned}
\end{equation*}

and 
\begin{equation*}
	\begin{aligned}
		10(\epsilon + \sigma)  &< 3\theta + 3\varpi_1 + 1, \\
	10(\epsilon + \sigma)   &< 3\varpi_2 + 2 - 2\theta + \kappa_2 - c_{\mathrm{in}} +\gamma, \\
	10(\epsilon + \sigma)   &< 3\theta + 4\varpi_3 + 1.
	\end{aligned}
\end{equation*}

Let us choose the test function \(\Xi(\omega) = (\omega - \Omega_\ell)_+\). Then, we obtain the following estimate, following~\eqref{Lemma:Apriori2:1}:
\begin{equation*}
	\begin{aligned}
		\int_{[\Omega_\ell, \infty)} \, \mathrm{d}\omega \omega\, \mathfrak{F}(t, \omega)
		&\ge \int_{\mathbb{R}^+} \, \mathrm{d}\omega (\omega - \Omega_\ell)_+\, \mathfrak{F}(t, \omega)\,  \\
		&\ge \int_{\mathbb{R}^+} \, \mathrm{d}\omega (\omega - \Omega_\ell)_+\, \mathfrak{F}(0, \omega)\,  \\
		&\ge \frac{1}{2} \int_{[2\Omega_\ell, \infty)}\, \mathrm{d}\omega  \omega\, \mathfrak{F}(0, \omega) \\
		&\ge \frac{1}{2} C_{\mathrm{in}} (2\Omega_\ell)^{-c_{\mathrm{in}}},
	\end{aligned}
\end{equation*}
for all time \( t \) prior to the  time   \( T^* \).  
This implies that
\begin{equation*}
	\int_{[\Omega_\ell, \infty)} \, \mathrm{d}\omega \omega\, \mathfrak{F}(t, \omega)\, 
	\ge 2^{-1 - c_{\mathrm{in}}} C_{\mathrm{in}}\, \Omega_\ell^{-c_{\mathrm{in}}},
	\qquad \text{for all } t \in [0, T].
\end{equation*}

Hence, applying equation~\eqref{Sec:DD:2} with the choices \( c_o = 2^{-1- c_{\mathrm{in}}} C_{\mathrm{in}} \) and \( \sigma = c_{\mathrm{in}} \), we conclude that
\begin{equation} \label{Propo:Cascade1:E1}
	\mathscr{M}_\ell^T = [0, T],
\end{equation}
for all $\ell$ sufficiently large.

By equation~\eqref{Lemma:Mutiscale3:1}, we obtain the following upper bound:
\begin{equation} \label{Propo:Cascade1:E2}
	\left| \mathscr{M}_\ell^T \right| 
	\le C_{\mathscr{M}} \left( \Omega_\ell \right)^{-c_{\mathscr{N}, \mathscr{P}}},
\end{equation}
for all 
\[
T \in \left[0, T^*\right).
\]

Combining \eqref{Propo:Cascade1:E1} and \eqref{Propo:Cascade1:E2}, we deduce that
\begin{equation*}
	T^* \leq C_{\mathscr{M}} \left( \Omega_\ell \right)^{-c_{\mathscr{N}, \mathscr{P}}} \longrightarrow 0,
\end{equation*}
as \( \ell \to \infty \).

We next prove (II). Choose \( T \in [0, T^*) \). Let \( \varrho \) be a sufficiently small constant satisfying
	\begin{equation}\label{Propo:Cascade1:1} \begin{aligned}
		0 < \varrho < &	\frac15\min\left\{  \frac{4\varpi_{3}+3\theta+\alpha}{3},\ \frac{3\varpi_{1}+3\theta+\alpha}{2},\ \frac{4\varpi_2+ \alpha+\gamma}{6},  3\varpi_2 + 2 - 2\theta + \kappa_2 - c_{\mathrm{in}} + \gamma \right\} .\end{aligned}
	\end{equation}
	
	Let \( \mathcal{N}_0 \) be a sufficiently large natural number. We choose a constant \( \mathcal{C}_*^o > 0 \) such that
	\begin{equation} \label{Propo:Cascade1:2} 
		\int_{[\Omega_{\mathcal{N}_0}, \infty)} \mathrm{d}\omega\, \mathfrak{F}(0, \omega)\, \omega \ge C_{\mathrm{in}}\, \Omega_{\mathcal{N}_0}^{-c_{\mathrm{in}}} = \mathcal{C}_*^o \Omega_{\mathcal{N}_0}^{-\varrho}.
	\end{equation}
	
	From \eqref{Propo:Cascade1:1}, there exists \( \varepsilon > 0 \) such that
	\begin{equation*}
		\begin{aligned}
			10(\varepsilon + \sigma)  &< 3\theta + 3\varpi_1 + 1, \\
		10(\varepsilon + \sigma)   &< 3\varpi_2 + 2 - 2\theta + \kappa_2 - c_{\mathrm{in}} +\gamma, \\
		10(\varepsilon + \sigma)   &< 3\theta + 4\varpi_3 + 1
		\end{aligned}
	\end{equation*}
	
	We set \( \sigma = \varrho \) and \( \epsilon = \varepsilon \).  
	Applying equation~\eqref{Sec:DD:2} with the choices \( c_o = 2^{-1 + c_{\mathrm{in}}} C_{\mathrm{in}} \) and \( \sigma = c_{\mathrm{in}} \), we conclude that
	\begin{equation*}
		\mathscr{M}_{\mathcal{N}_0}^T = [0, T],
	\end{equation*}
	for all
	\[
	T \in [0, T^*).
	\]

By equation~\eqref{Lemma:Mutiscale3:1}, we also have the estimate
\begin{equation*}
	\left| \mathscr{M}_{\mathcal{N}_0}^T \right| 
	\le C_{\mathscr{M}} \left( \Omega_{\mathcal{N}_0} \right)^{-c_{\mathscr{N}, \mathscr{P}}},
\end{equation*}
valid for all
\[
T \in \left[0, T^*\right).
\]

Therefore, we deduce the upper bound:
\begin{equation*}
	T^* \leq C_{\mathscr{M}} \left( \Omega_{\mathcal{N}_0} \right)^{-c_{\mathscr{N}, \mathscr{P}}}.
\end{equation*}

\end{proof}

\section{Proof of the main Theorem \ref{Theorem1}}\label{Sec:Proof}
The global existence result is established by Proposition~\ref{Propo:Glo}. Moreover, Item~(i) of the main theorem is a direct consequence of Item~(I) of Proposition~\ref{Propo:Cascade1}, while Item~(ii) follows from Item~(II) of Proposition~\ref{Propo:Cascade1}.

\bibliographystyle{plain}

\bibliography{WaveTurbulence}

\end{document}